\documentclass[reprint,amsfonts, amssymb, amsmath,  showkeys,prx, superscriptaddress, twocolumn,longbibliography,nofootinbib,aps]{revtex4-2}

\usepackage{silence}
\usepackage{graphicx} % Required for inserting images
\usepackage{tabularx}
\usepackage{amsmath}
\usepackage{amsthm}
\usepackage{amssymb}
\usepackage{bm}
\usepackage{braket}
\usepackage[left=2cm, right=2cm]{geometry}
\usepackage[colorlinks=true,citecolor=blue,linkcolor=magenta]{hyperref}
\usepackage{xcolor}
\usepackage[at]{easylist}
\usepackage{ulem}
\usepackage[nolist]{acronym}
\usepackage{xparse}
\usepackage{booktabs}
\usepackage{multirow}
\usepackage{mleftright}
\usepackage{csquotes}
\usepackage{enumitem}
\usepackage{pifont}
\usepackage{makecell}
\usepackage{varwidth}
\usepackage{placeins}  % to use \FloatBarrier
\usepackage{etoolbox}
\usepackage{layouts}
\usepackage{ragged2e}
\usepackage{mathtools}

\AtBeginEnvironment{easylist}{\ListProperties(Start1=1)}

\newcommand{\E}{\operatorname{\mathbb{E}}}
\newcommand{\tr}{{\rm Tr}}
\newcommand{\G}{\operatorname{\mathbb{G}}}
\newcommand{\SO}{\operatorname{\mathbb{SO}}}
\newcommand{\SPIN}{\operatorname{\mathbb{SPIN}}}
\newcommand{\SP}{\operatorname{\mathbb{SP}}}
\newcommand{\Cov}{\operatorname{\mathrm{Cov}}}
\renewcommand{\O}{O}  % omit hat
\newcommand{\cmark}{\ding{51}}
\newcommand{\xmark}{\ding{55}}

\newcommand{\omark}{\textbf{\texttt{o}}}

\newcommand{\Var}{{\rm Var}}
\newcommand{\Tr}{{\rm Tr}}

\renewcommand{\vec}[1]{\boldsymbol{#1}}

\newcommand{\id}{\openone}
\newcommand{\ad}{^{\dagger}}

\newcommand{\shotsobs}{n_{\mathrm{s}}^{\mathrm{obs}}}
\newcommand{\shotsker}{n_{\mathrm{s}}^{\kappa}}

\newcommand{\CapRho}{\mathrm{P}} \newcommand{\CapRhoBold}{\bm{P}} \newcommand{\CapRhoBB}{\mathbb{P}}
\usepackage{customrhos}

\newcommand{\idxset}{\mathcal{V}}
\newcommand{\tobs}{\bm{O}}
\newcommand{\tobsmat}{O}
\newcommand{\tobsvec}{\vec{o}}
\newcommand{\tobselem}{o}
\newcommand{\tobsbb}{\mathbb{O}}
\newcommand{\tstate}{\CapRhoBold}
\newcommand{\tstatemat}{\CapRho}
\newcommand{\tstatevec}{\vec{\rho}}
\newcommand{\tstateelem}{\rho}
\newcommand{\tstatebb}{\CapRhoBB}
\newcommand{\Sperf}{S_{\mathrm{pair}}}
\newcommand{\Simp}{S_{\neg\mathrm{pair}}}
\newcommand{\Smix}{S_{\mathrm{mix}}}
\newcommand{\poly}{\operatorname{poly}}
\newcommand{\ketbra}[2]{|#1\rangle\!\langle #2|}

\newcommand{\BC}{\mathcal{B}}
\newcommand{\CC}{\mathcal{C}}
\newcommand{\DC}{\mathcal{D}}

\newcommand{\HC}{\mathcal{H}}

\newcommand{\MC}{\mathcal{M}}
\newcommand{\NC}{\mathcal{N}}
\newcommand{\OC}{\mathcal{O}}
\newcommand{\PC}{\mathcal{P}}

\newcommand{\SC}{\mathcal{S}}
\newcommand{\TC}{\mathcal{T}}
\newcommand{\UC}{\mathcal{U}}
\newcommand{\VC}{\mathcal{V}}

\newcommand{\llangle}{\langle\!\langle}
\newcommand{\rrangle}{\rangle\!\rangle}
\newcommand{\kket}[1]{| #1 \rrangle}

\newcommand{\bbrakket}[2]{\llangle #1 | #2 \rrangle}

\newtheorem{theorem}{Theorem}
\newtheorem{corollary}{Corollary}[section]
\newtheorem{lemma}{Lemma}[section]
\newtheorem{proposition}{Proposition}[section]
\newtheorem{definition}{Definition}[section]

\newtheorem{condition}{Condition}

\renewcommand{\emph}[1]{%
  \ifdim\fontdimen1\font>0pt
    \textup{#1}% if italic font: emphasize with upright
    \else
  \textit{#1}% if upright font: emphasize with italic
  \fi
}
\renewcommand{\em}{%
  \ifdim\fontdimen1\font>0pt
    \upshape % if already italic, switch to upright
  \else
    \itshape % otherwise, switch to italic
  \fi
}

\usepackage[caption=false]{subfig}

\makeatletter
\DeclareDocumentCommand\todo{g}{%
\def\@message{\IfNoValueTF{#1}{TODO}{TODO: #1}}
\textbf{\textcolor[HTML]{FF8811}{\@message}}
\@latex@warning{\@message}{}{}}
\makeatother

\makeatletter 
    
\renewcommand\onecolumngrid{% <<<<<<
\do@columngrid{one}{\@ne}%
\def\set@footnotewidth{\onecolumngrid}% <<<<<<<<<<<<<<<<
\def\footnoterule{\kern-6pt\hrule width 1.5in\kern6pt}%
}

\renewcommand\twocolumngrid{% <<<<<<
        \def\footnoterule{% restore rule
        \dimen@\skip\footins\divide\dimen@\thr@@
        \kern-\dimen@\hrule width.5in\kern\dimen@}
        \do@columngrid{mlt}{\tw@}
}%

\makeatother    
\begin{document}

\addtocontents{toc}{\protect\setcounter{tocdepth}{0}}  % Turn off table of contents for main text
\mleftright

\title{Provable and scalable quantum Gaussian processes for  quantum learning}
\author{Jonas Jäger}
\affiliation{Theoretical Division, Los Alamos National Laboratory, Los Alamos, NM 87545, USA}
\affiliation{Department of Computer Science and Institute of Applied Mathematics, University of British Columbia, Vancouver, V6T 1Z4 B.C., Canada}
\affiliation{Stewart Blusson Quantum Matter Institute, Vancouver, V6T 1Z4 B.C., Canada}

\author{Paolo Braccia}
\affiliation{Theoretical Division, Los Alamos National Laboratory, Los Alamos, NM 87545, USA}

\author{Pablo Bermejo}
\affiliation{Information Sciences, Los Alamos National Laboratory, Los Alamos, NM 87545, USA}
\affiliation{Donostia International Physics Center, Paseo Manuel de Lardizabal 4, E-20018 San Sebasti\'an, Spain}
\affiliation{Department of Applied Physics, Gipuzkoa School of Engineering, University of the Basque
Country (UPV/EHU), Plaza Europa 1, 20018 San Sebastián, Spain}

\author{Manuel~G.~Algaba}
\affiliation{Theoretical Division, Los Alamos National Laboratory, Los Alamos, NM 87545, USA}
\affiliation{IQM Quantum Computers, Georg-Brauchle-Ring 23-25, 80992 Munich, Germany}
\affiliation{PhD Programme in Condensed Matter Physics, Nanoscience and Biophysics,
Doctoral School, Universidad Autónoma de Madrid}

\author{Diego Garc\'ia-Mart\'in}
\affiliation{Department for Quantum Information and Computation at Kepler (QUICK),\\ Johannes Kepler University, Linz, Austria }
\affiliation{Information Sciences, Los Alamos National Laboratory, Los Alamos, NM 87545, USA}

\author{M. Cerezo}
\thanks{cerezo@lanl.gov}
\affiliation{Information Sciences, Los Alamos National Laboratory, Los Alamos, NM 87545, USA}
\affiliation{Quantum Science Center, Oak Ridge, TN 37931, USA}

\begin{abstract}

Despite rapid recent advances in quantum machine learning, the field is in many ways stuck. Existing approaches can exhibit serious limitations, and we still lack learning frameworks that are simple, interpretable, scalable, and naturally suited to quantum data. To address this, here we introduce quantum Gaussian processes, a Bayesian framework for learning from quantum systems through priors over unknown quantum transformations. We show that, under suitable conditions, unitary quantum stochastic processes define Gaussian processes, thereby enabling regression, classification, and Bayesian optimization directly on quantum data. The key ingredient in this framework is sufficient knowledge of a quantum process's structure and symmetries to define an informative prior through its corresponding quantum kernel, effectively injecting a strong, physics-informed inductive bias into the learning model. We then prove that matchgate, or free-fermionic, evolutions give rise to provable and scalable quantum Gaussian processes, providing the first family in our framework where the unknown unitary acts non-trivially on all qubits. Finally, we demonstrate accurate long-range extrapolation, phase-diagram learning in many-body systems, and sample-efficient Bayesian optimization in a quantum sensing task. Our results identify quantum Gaussian processes as a promising route toward simpler and more structured forms of quantum learning.

\end{abstract}

\maketitle

    \begin{acronym}
        \acro{dla}[DLA]{dynamical Lie algebra}
        \acro{gp}[GP]{Gaussian process}
            \acroplural{gp}[GPs]{Gaussian processes}
        \acro{hs}[HS]{Hilbert-Schmidt}
        \acro{ml}[ML]{machine learning}
        \acro{qgp}[QGP]{quantum Gaussian process}
            \acroplural{qgp}[QGPs]{quantum Gaussian processes}
        \acro{qml}[QML]{quantum machine learning}
        \acro{rkhs}[RKHS]{reproducing kernel Hilbert space}
    \end{acronym}

\section{Introduction}

Quantum machine learning~\cite{dunjko2018machine,gujju2024quantum,wang2024comprehensive,chang2025primer} has grown rapidly over the past decade, but its conceptual landscape is still mainly shaped by two broad approaches. One builds learning algorithms from quantum linear-algebraic primitives~\cite{dunjko2018machine,ivashkov2026qkan}, such as state preparation, linear-system solvers, and related data-access models~\cite{kerenidis2016quantum}. The other takes inspiration from modern machine learning and seeks to learn directly with parametrized quantum circuits~\cite{cerezo2020variationalreview,bharti2021noisy}. Both routes have led to important insights, but both also come with serious limitations. In the first case, strong promises about data access models can make the resulting advantages fragile, and in some cases vulnerable to dequantization~\cite{tang2019quantum, tang2021quantum, tang2022dequantizing, cotler2021revisiting_refs}. In the second, the flexibility of variational models comes at the price of difficult optimization~\cite{bittel2021training,you2021exponentially,fontana2022nontrivial,anschuetz2022beyond,anschuetz2021critical,larocca2021diagnosing}, limited interpretability~\cite{steinmuller2022explainable,pira2024interpretability,heese2025explaining,gil2024opportunities}, curses of dimensionality~\cite{larocca2024review,mcclean2018barren,cerezo2020cost,ragone2023lie,fontana2023theadjoint}, and a poor understanding of when these methods should be expected to scale. Taken together, these issues suggest that quantum machine learning may benefit from simpler, more structured frameworks that leverage physics-informed inductive bias, especially in settings where the data are themselves quantum.

\begin{figure*}
\centering
\includegraphics[width=\linewidth]{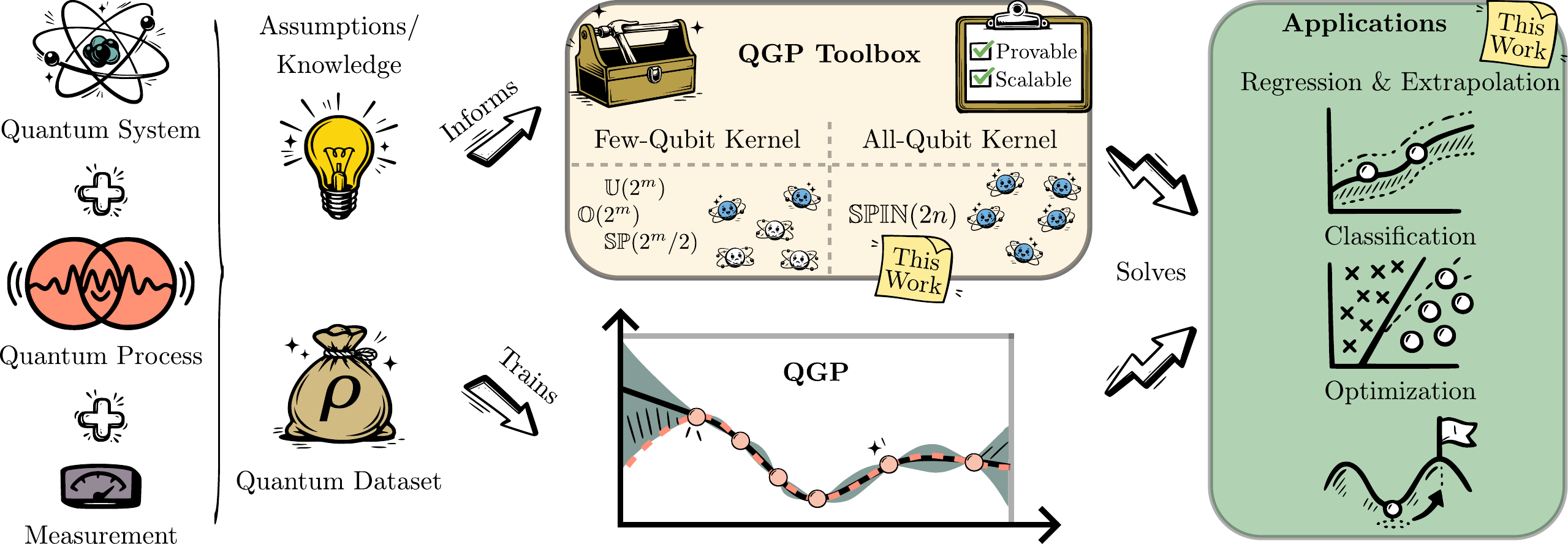}
\caption{\textbf{Schematic representation of our quantum Gaussian process (QGP) framework.} Information about an unknown quantum process and measurement operator, together with access to a quantum dataset, is used to select a kernel from a QGP toolbox. Previously known examples correspond to few-qubit unitary, orthogonal, and symplectic kernels, while this work adds an all-qubit matchgate, or free-fermionic, kernel. The resulting quantum Gaussian process then serves as a Bayesian surrogate that can be used for regression and extrapolation, classification, and optimization.}
\label{fig:schematic}
\end{figure*}

In this work we take a step in that direction by turning to Gaussian processes, a non-parametric Bayesian framework used for regression, classification, and optimization~\cite{rasmussen_GaussianProcessesMachine_2006,shahriari2015taking,bartok2010gaussian,kamath2018neural,deSpiegeleer2018machine,aigrain2023gaussian,deisenroth2011pilco,morales2017remote,macleod2020selfdriving,romero2013navigating}. But our aim is not to simply swap GPs into quantum learning as convenient surrogates. 
While recent studies have either focused on the asymptotic emergence of GPs in quantum models without leveraging them for scalable applications~\cite{garcia2023deep,garcia2024architectures,rad2023deep,girardi2024trained,melchor2025quantitative}, or applied GP regression with heuristically chosen quantum kernels~\cite{otten2020quantum,dai2022quantum,rapp2024quantum,guo2024benchmarking}, we take a holistic approach.
In Bayesian inference, the holy grail is to use the right prior for the right problem, and this is exactly the principle that guides our approach. We ask when a quantum stochastic process is itself naturally described by a Gaussian process, so that the prior, kernel, and learning task all arise from the structure of the unknown dynamics. As summarized in Fig.~\ref{fig:schematic}, partial knowledge about an unknown quantum process and a measurement of interest can be used to define a quantum kernel and, with it, a Bayesian prior. We then show how this viewpoint leads to a concrete and scalable toolbox for quantum learning. In particular, we collect the previously known few-qubit QGP priors and prove a new all-qubit family based on matchgate, or free-fermionic, evolutions. This yields a unified framework for regression, classification, and Bayesian optimization on quantum data.

\section{Results}

\subsection{Gaussian processes}

A stochastic process $f$ is a distribution over functions, and can be defined as a collection of random variables labeled by the points of a (possibly infinite) index set $\SC$. We say that $f$ is a~\acf{gp}, denoted $f\sim\mathcal{GP}(\mu,\kappa)$, if for any finite subset $\TC\subseteq\SC$ the random vector $\bm{f}_{\TC}=(f(t))_{t\in\TC}$ follows a multivariate normal distribution~\cite{rasmussen_GaussianProcessesMachine_2006,murphy_ProbabilisticMachineLearning_2023}
\begin{equation}\label{eq:GP1}
  \bm{f}_{\TC} \sim \mathcal{N}\!\bigl( \bm{\mu}_\TC,\, \bm{\Sigma}_{\TC,\TC} \bigr)\,.
\end{equation}
Here, $\bm{\mu}_\TC$ is the mean vector with entries $\mu_t$, and $\bm{\Sigma}_{\TC,\TC}$ is the covariance (kernel) matrix with entries $\kappa(t,t')$.

The pair $(\mu,\kappa)$ is the GP prior, and encodes prior statistical knowledge about $f$ before seeing any data. For any $t\in\SC$, we have $f(t)\sim \mathcal{N}\!\bigl(\mu(t),\,\kappa(t,t)\bigr)$. We can then improve our prediction of how $f(t)$ behaves by incorporating information from observations. For a finite $\TC\subseteq\SC$, suppose we record values
\begin{equation}\label{eq:GP2}
\bm{y}_{\TC} \;=\; \bm{f}_{\TC} + \bm{\varepsilon},\qquad
\bm{\varepsilon}\sim\mathcal{N}(\mathbf{0},\,R),
\end{equation}
where we assume that observations come with normally distributed noise with covariance $R$ (for independent and identically distributed noise, $R=\sigma^2\openone$). This leads to the joint Gaussian
\begin{equation}\label{eq:joint_GP}
\begin{bmatrix}
f(t) \\[2pt]
\bm{y}_{\TC}
\end{bmatrix}
\sim
\mathcal{N}\!\left(
\begin{bmatrix}
\mu(t) \\[2pt]
\bm{\mu}_{\TC}
\end{bmatrix},
\begin{bmatrix}
\kappa(t,t) & \bm{k}_{t\TC}^{\!\top} \\
\bm{k}_{t\TC} & K_{\TC\TC}+R
\end{bmatrix}
\right)\,,
\end{equation}
with $\bm{k}_{t\TC} = \bigl[\kappa(t,s)\bigr]_{s\in\TC}$ and $K_{\TC\TC}=\bigl[\kappa(s,s')\bigr]_{s,s'\in\TC}$. Conditioning then yields the posterior for $f(t)$
\begin{equation}\label{eq:GP-posterior}
\mathrm{Pr}\left(f(t) \middle| \bm{y}_{\TC} \right) = \mathcal{N}\left( 
\mu_{\text{P}}(t),
\sigma_{\text{P}}^{2}(t)
\right),
\end{equation}
where
\begin{align}
\mu_{\text{P}}(t)
&= \mu(t) + \bm{k}_{t\TC}^{\!\top}\cdot\bigl(K_{\TC\TC}+R\bigr)^{-1}\cdot \bigl(\bm{y}_{\TC}-\bm{\mu}_{\TC}\bigr),\label{eq:post-mean}\\
\sigma_{\text{P}}^{2}(t)
&= \kappa(t,t) - \bm{k}_{t\TC}^{\!\top}\cdot\bigl(K_{\TC\TC}+R\bigr)^{-1}\cdot\bm{k}_{t\TC}.\label{eq:post-var}
\end{align}
Equation~\eqref{eq:post-var} shows that the posterior variance is smaller than the prior one. In practice, the posterior mean $\mu_{\text{P}}$ serves as a predictor, while the posterior variance $\sigma^2_{\text{P}}$ quantifies uncertainty and can guide where to sample next. This uncertainty-aware feedback is what underlies active learning and Bayesian optimization.

\subsection{Quantum stochastic processes and regression problems}

Let $\HC=(\mathbb{C}^2)^{\otimes n}$ be an $n$-qubit Hilbert space with associated space of operators $\BC=\HC\otimes \HC^*$, and let $\SC=\{\rho(t)\}_t\subseteq\BC$ be a set of quantum states indexed by some label $t$ (not necessarily scalar). Then, let ${\G}\subseteq \mathbb{U}(2^n)$ be a unitary group acting on $\HC$, and denote by $d\mu$ the Haar measure over ${\G}$.

Next, consider the quantum stochastic process
\begin{equation}\label{eq:stochastic-process}
    f_U(t)=\Tr[U\rho(t)U\ad O]\,, 
\end{equation}
for some Hermitian operator $O\in\BC$, $\rho(t)\in\SC$, and $U$ some unitary from ${\G}$. We assume that the unknown transformation belongs to ${\G}$, and use $d\mu$ to encode our prior uncertainty over the possible unitaries (i.e., $U$ is as likely as any other unitary from this group). We then take a state in $\SC$, evolve it through $U$, and measure the expectation value of $O$. Once a concrete learning instance is fixed, there is a single unknown unitary $U$, and the GP framework is used to infer its action from data\footnote{In particular, using the framework does not require one to literally sample Haar-random unitaries in practice, but rather to know enough about the relevant symmetries in $U$ to establish a prior and specify the corresponding mean and kernel.}.

This naturally leads to the following regression task: \textit{During an initial phase, we are given repeated access to a finite training set $\TC\subseteq\SC$ as well as to a unitary $U$ belonging to ${\G}$. In this phase, which lasts no more than $\OC(\poly(n))$ time, we can send states through $U$, make measurements, and record outputs. Once this phase is over, we no longer have access to $U$ and the goal is to estimate the value $f_U(t)$ for all $\rho(t)\in\SC$.}

There are many ways to solve the previous problem, and their efficiency depends on how much information about the task is used. At the lowest level, one could perform full process tomography of $U$. But this requires time and memory scaling as $\OC(4^{n})$~\cite{chang2025primer}, and ignores the knowledge that $U$ comes from ${\G}$ and that we only care about states in $\SC$ and a single observable $O$. If instead $f_U(t)$ forms a quantum GP (QGP) with corresponding $\mu$ and $\kappa$, we can estimate the expectation values $\Tr[U\rho(t)U\ad O]$ during the initial data acquisition phase, build a GP model on the noisy observations $\bm{y}_{\TC}$, and perform predictions via Eqs.~\eqref{eq:GP-posterior}--\eqref{eq:post-var}. Hence, effectively leveraging the power of QGPs therefore requires two ingredients: 
\begin{enumerate}
    \item Provable QGP: Showing that $f_U\sim\mathcal{GP}(\mu,\kappa)$;
    \item Scalable QGP: The ability to estimate with $\OC(\poly(n))$ shots all relevant quantities in Eqs.~\eqref{eq:GP-posterior}--\eqref{eq:post-var} needed to perform an informed prediction.
\end{enumerate}

For instance, Ref.~\cite{garcia2023deep} showed that if ${\G}=\mathbb{U}(2^n)$ and $O$ is a traceless Pauli operator, then $f_U(t)$ forms a GP with 
\begin{equation}\label{eq:kernel-Ud}
 \mu_t=0\,, \quad\text{and}\quad \kappa(t,t')=\frac{1}{4^n}\lVert O \rVert^2_2\Tr[\rho(t)\rho(t')]\,,
\end{equation}
where $\kappa$ is, up to a factor, the fidelity kernel. This satisfies the first condition but not the second. Indeed, while $\Tr[\rho(t)\rho(t')]$ can be estimated to precision $\varepsilon\in\OC(1/\poly(n))$ with $\OC(\poly(n))$ measurements, we have that $\lVert O\rVert_2^2=2^n$ for a Pauli operator, so the kernel entries are of order $\OC(1/2^n)$ and are dominated by the noise matrix $R$, whose entries are of order $\OC(1/\poly(n))$. Clearly, restricting the unitary to act on only $m\in\OC(\log(n))$ qubits restores scalability~\cite{garcia2023deep}. Similar few-qubit constructions exist for $\mathbb{O}(2^m)$~\cite{garcia2023deep} and $\SP(2^m/2)$~\cite{garcia2024architectures}. In the next section we show that one can go beyond this few-qubit regime and obtain QGPs that are both provable and scalable even when the unknown unitary acts non-trivially on all qubits.

\subsection{Provable and scalable quantum Gaussian processes}

\begin{table*}[t]
    \centering
    \setlength{\tabcolsep}{6pt}  
    \begin{tabular}{cccccc}
        \toprule
        \textbf{Group $\G$} & \textbf{Acting on} & \textbf{Observable}  & $\mu_t$ & $\kappa(t,t')$ & \textbf{Proved in:} \\ \midrule
        Unitary $\mathbb{U}(2^m)$ & $m \in \OC(\log(n))$ &  $O\in \mathbb{C}\mathfrak{u}(2^m)$ & $0$ & $\frac{\langle \rho(t),\rho(t')\rangle_\mathfrak{u}\lVert O \rVert^2_2}{4^m}$ & Ref.~\cite{garcia2023deep} \\
        Orthogonal $\mathbb{O}(2^m)$ & $m \in \OC(\log(n))$ &  $O\in \mathbb{C} \mathfrak{o}(2^m)$ & $0$ & $\frac{\langle \rho(t),\rho(t')\rangle_\mathfrak{o}\lVert O \rVert^2_2}{4^{m-1/2}}$ & Ref.~\cite{garcia2023deep} \\
        Symplectic $\SP(2^m/2)$ & $m \in \OC(\log(n))$ &  $O\in \mathbb{C}\mathfrak{sp}(2^m/2)$ & $0$ & $\frac{\langle \rho(t),\rho(t')\rangle_\mathfrak{sp}\lVert O \rVert^2_2}{4^{m-1/2}}$ & Ref.~\cite{garcia2024architectures}\\
        Matchgate $\SPIN(2n)$ & All qubits & $O\in\BC_m$ \& $\min\{m,n-m\}\in\OC(1)$ & $0$ & $\frac{ m!\left\langle \rho(t),\rho(t') \right \rangle_{m}\lVert O \rVert^2_2}{(2n)^m}$ & This work  \\ \bottomrule
    \end{tabular}
    \caption{\textbf{Provable and scalable QGPs.} We present the groups known to lead to QGPs, as well as the conditions that make them scalable. For $\mathbb{U}(2^m)$, $\mathbb{O}(2^m)$ and $\SP(2^m/2)$, the transformation acts locally on a subset of $m\in\OC(\log(n))$ qubits, and the observable $O$ must belong to the subspaces of operator space spanned by $\mathbb{C}\mathfrak{g}$ for $\mathfrak{g}=\mathfrak{u}(2^m),\mathfrak{o}(2^m),\mathfrak{sp}(2^m/2)$, respectively. The matchgate case is quite different as the unitary $U$ acts on all qubits, while the measurement operator must be expressed as a linear combination of products of $m$ distinct Majoranas, with $\min\{m,n-m\}\in\OC(1)$.
    }
    \label{tab:groups_for_kernels}
\end{table*}

We now turn to the central question raised above: \textit{Can one obtain QGPs that are both provable and scalable when the unknown unitary acts non-trivially on all qubits?} Our main theoretical contribution is to show that the answer is yes for matchgate, or free-fermionic, circuits. As summarized in Table~\ref{tab:groups_for_kernels}, this is the first family in our framework where the unknown unitary acts on the full $n$-qubit system while still leading to a scalable kernel.

To state the result, we recall that under the Jordan-Wigner transformation the Majorana operators on $n$ fermionic modes can be expressed as $n$-qubit Pauli operators through
\begin{equation}\nonumber
                \begin{aligned}
                    C_1 &= X_1, \!\!&\!\! C_3 &= Z_1X_2, \!\!\!&\!\!\! &\ldots, \!\!&\!\! C_{2n-1} &= Z_1 \cdots Z_{n-1} X_n\,,\\
        C_2 &= Y_1, \!\!&\!\! C_4 &= Z_1Y_2, \!\!\!&\!\!\! &\ldots, \!\!&\!\! C_{2n} &= Z_1 \cdots Z_{n-1} Y_n\,.
    \end{aligned}
\end{equation}
where $X_i$, $Y_i$ and $Z_i$ denote Pauli operators on qubit $i$, and $\{C_\mu,C_\nu\}=2\delta_{\mu\nu}$. Matchgates are obtained by exponentiating quadratic Hamiltonians of the form $U=e^{-\sum_{\mu,\nu}h_{\mu\nu}C_\mu C_\nu}$, with $h$ a real anti-symmetric matrix. These unitaries form a representation of the Lie group $\SPIN(2n)$, the double cover of the special orthogonal group $\SO(2n)$\footnote{Note that while we work with the Lie group $\SPIN(2n)$, all results will hold for Bogoliubov transformations, or general matchgates in the representation of $\mathbb{O}(2n)$. }. In particular,
\begin{equation} \label{eq:Majorana-action}
    U C_\mu U\ad = \sum_{\nu=1}^{2n} Q_{\mu\nu} C_\nu\,,
\end{equation}
where $Q\in\SO(2n)$.

Under the action of matchgates the operator space decomposes into invariant subspaces
\begin{equation}\label{eq:irreps-Majoranas}
    \BC = \bigoplus_{m = 0}^{2n} \BC_m\,,
\end{equation}
where $\BC_m$ is spanned by products of $m$ distinct Majorana operators and satisfies $\dim(\BC_m) = \binom{2n}{m}$. We then define the overlap between two operators projected onto $\BC_m$ as
\begin{equation}
    \Braket{A, B}_{m} = \Tr\left[A_m^\dagger B_m \right]\,,
\end{equation}
where $A_m$ and $B_m$ denote the projections of $A$ and $B$ onto $\BC_m$.
This subspace overlap induces a norm (its square is the module purity \cite{diaz2023showcasing}) as $\lVert A \rVert^2_m = \braket{A, A}_{m}$.

Next, consider the collection of random variables $f=\{f_U(t_1),f_U(t_2),\ldots\}$, where the randomness comes from the prior: $U$ is a matchgate circuit from the representation of $\SPIN(2n)$. As proved in the Supplemental Information (see also the Methods for a sketch of the proof), the moments of $f$ asymptotically match those of a multivariate Gaussian distribution, leading to the following theorem:
\begin{theorem}[Informal]\label{theo-1}
    There exist sets of states $\SC$ and measurement operators $O\in\BC_m$ ($m\neq 0,2n$) for which $f$ forms a GP with 
    \begin{equation}
        \mu(t)\!=\!0,\,\text{and}\,\,\, \kappa(t,t')=\tfrac{m!}{(2n)^m} \lVert O \rVert^2_m \left\langle \rho(t) , \rho(t') \right \rangle_{m}.\label{eq:theo:1-main}
    \end{equation}
\end{theorem}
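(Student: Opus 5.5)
The plan is to reduce everything to Haar integrals over $\SO(2n)$ acting on the antisymmetric tensor representation $\BC_m\cong\Lambda^m(\mathbb{R}^{2n})$, and then use the method of moments. Since matchgates preserve each $\BC_m$ in Eq.~\eqref{eq:irreps-Majoranas}, and distinct $\BC_m$ are Hilbert--Schmidt orthogonal, for $O\in\BC_m$ we have $f_U(t)=\Tr[U\rho_m(t)U\ad O]$. Here $\rho_m(t)$ is the projection of $\rho(t)$ onto $\BC_m$.

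\textbf{Step 1 (coordinates and mean).} Write $\rho_m(t)=\sum_{|I|=m} r_I(t)\,C_I$ and $O=\sum_{|J|=m}o_J C_J$, where $C_I$ is the ordered product of the Majoranas indexed by $I$. By Eq.~\eqref{eq:Majorana-action}, $UC_IU\ad=\sum_{|J|=m}\det(Q_{I,J})\,C_J$, where $Q_{I,J}$ is the $m\times m$ minor of $Q$. This gives
\begin{equation}\nonumber
f_U(t)=2^n\sum_{I,J}\overline{r_I(t)}^{\,*}\,o_J\,\det(Q_{I,J}),
\end{equation}
up to harmless phase conventions. It also gives $\lVert O\rVert_m^2=2^n\sum_J|o_J|^2$ and $\langle\rho,\rho'\rangle_m=2^n\sum_I\overline{r_I}r'_I$.

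For the mean, I will use invariance of the Haar measure under $Q\mapsto DQ$, with $D$ diagonal of determinant one. For $0<m<2n$, choose $D$ to flip the sign of one row in $I$ and one row outside $I$. This flips $\det Q_{I,J}$, so $\E[\det Q_{I,J}]=0$ and hence $\mu(t)=0$.

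\textbf{Step 2 (kernel).} The minors $\det Q_{I,J}$ are exactly the matrix entries of the representation $\Lambda^m Q$, which is orthogonal in the basis $\{e_I\}$. By Schur orthogonality,
\begin{equation}\nonumber
\E[\det Q_{I,J}\det Q_{I',J'}]=\delta_{II'}\delta_{JJ'}/\tbinom{2n}{m}.
\end{equation}
This holds whenever $\Lambda^m$ is irreducible, which is the case for $m\neq n$. For $m=n$ the representation splits into self-dual and anti-self-dual parts. There I would either assume $O$ and the states are compatible with this splitting, or simply use the $\mathbb{O}(2n)$ version, under which $\Lambda^n$ is irreducible, as the footnote permits.

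Substituting gives $\kappa(t,t')=\lVert O\rVert_m^2\langle\rho(t),\rho(t')\rangle_m/\binom{2n}{m}$. For $m\in\OC(1)$ we have $\binom{2n}{m}\sim(2n)^m/m!$, which yields Eq.~\eqref{eq:theo:1-main}. The regime $2n-m\in\OC(1)$ reduces to this one: multiplication by the parity operator $\prod_\mu C_\mu$ maps $\BC_m$ onto $\BC_{2n-m}$ and commutes with $\SPIN(2n)$. This explains the condition $\min\{m,n-m\}\in\OC(1)$.

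\textbf{Step 3 (Gaussianity).} The remaining task is to show that every joint moment $\E[\prod_{k=1}^{p}f_U(t_k)]$ converges to the Wick/Isserlis sum over pairings of the covariances above. Equivalently, all joint cumulants of order $\ge3$ must vanish as $n\to\infty$. Each moment is a sum of $\E[\prod_k\det Q_{I_k,J_k}]$, i.e.\ a polynomial of degree $pm$ in the entries of $Q$. I would evaluate it with the orthogonal Weingarten calculus:
\begin{equation}\nonumber
\E[Q_{i_1j_1}\cdots Q_{i_{2q}j_{2q}}]=\sum_{\sigma,\tau}\Delta_\sigma(\bm i)\Delta_\tau(\bm j)\,\mathrm{Wg}^{O}(\sigma^{-1}\tau),
\end{equation}
where $\mathrm{Wg}^{O}(\mathrm{id})\sim(2n)^{-q}$ and off-diagonal terms are suppressed by extra powers of $1/n$. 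The leading terms ($\sigma=\tau$) reproduce the behaviour of $Q\approx G/\sqrt{2n}$ with $G$ i.i.d.\ Gaussian. In that limit, I would check that the leading contributions organize into pairings of whole minors $(I_k,J_k)=(I_l,J_l)$, i.e.\ the Wick structure. Mixed pairings, where the $m$ indices of one minor are shared among several other minors, carry sign cancellations or fewer free index sums, and should be subleading.

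\textbf{Main obstacle.} The hard step is controlling the subleading Weingarten terms summed over the $\binom{2n}{m}^{p}$ index configurations weighted by $r_I(t_k)$ and $o_J$. Naively, the number of terms can overwhelm the $1/n$ suppression. This is exactly why the theorem only asserts existence of suitable $\SC$ and $O$. I would impose a delocalization-type condition, for instance $\max_J|o_J|^2/\sum_J|o_J|^2\to0$ together with an analogous bound on the states, or alternatively bounded-support coefficient vectors. I would then bound the higher cumulants by $\OC(1/n)$ times products of the norms $\lVert O\rVert_m$ and $\lVert\rho(t_k)\rVert_m$. I would first verify the fourth moment explicitly, to confirm that the mixed-pairing term is $\OC(1/n)$ relative to the Wick term, and then generalize by a graph/genus counting argument on the pairing diagrams.
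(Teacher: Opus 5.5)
Your Steps 1 and 2 are correct. Step 2 is in fact sharper than the paper: Schur orthogonality on $\Lambda^m$ gives the exact covariance $\lVert O\rVert_m^2\langle\rho,\rho'\rangle_m/\binom{2n}{m}$ for $m\neq n$, whereas the paper derives only the asymptotic prefactor $m!/(2n)^m$ and quotes exact values for $\BC_1,\BC_2$. The gap is in Step 3, which carries the content of the theorem.

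You assume that at leading order everything organizes into pairings of whole minors, with mixed pairings suppressed by sign cancellations or missing index sums. You then place the difficulty in the subleading Weingarten terms. This is backwards. In the paper's splitting, the non-pairwise contractions ($\Simp$) carry the same weight $(2n)^{-mk/2}$ as the Wick terms ($\Sperf$). Only $\Smix$ carries the extra factor $w_{\sigma,\pi}\in\OC(1/n)$. Whether $\Simp$ is negligible is decided entirely by $O$ and the states.

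Concretely, take $m=2$, $O=Z_1=-iC_1C_2$ and $\rho=\frac{1}{d}(\openone+Z_1)$. Then $f_U=\det Q_{\{1,2\},\{1,2\}}$ exactly. Hence $2n f_U\to g_{11}g_{22}-g_{12}g_{21}$ with $g_{ij}$ i.i.d.\ standard normal, which is a Laplace law with kurtosis $6$, even though its variance matches your kernel. There are no index sums at all here, so nothing suppresses the mixed pairing. This is also a bounded-support instance, so your second proposed condition fails.

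Now conjugate $O$ and $\rho$ by generic (e.g.\ typical Haar) matchgates. Their coefficients become $2\times 2$ minors of two rows of a delocalized orthogonal matrix. Each is $\tilde{\OC}(1/n)$ while their squares sum to one, so $\max_J|o_J|^2/\sum_J|o_J|^2\to 0$ for both. Yet the law of $f_U$ is unchanged, by Haar invariance. So your delocalization condition is not sufficient either. Any correct criterion must be $\SO(2n)$-invariant, because the distribution is.

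The missing ingredient is an invariant domination condition, which is the paper's Condition~2. For every non-pairwise Brauer element $\sigma$, the product $\operatorname{contr}_{\sigma^\top}(\tobs^{\otimes k})\operatorname{contr}_{\sigma}(\bigotimes_j\tstate^{(i_j)})$ must be $o(\Sperf)$. For $m=2$ (univariate case) this becomes the effective-rank condition $\frac{\lVert\tobsbb\rVert_2}{\lVert\tobsbb\rVert_1}\cdot\frac{\lVert\tstatebb\rVert_2}{\lVert\tstatebb\rVert_1}\in o(1)$. In the example above this product equals $1/2$, since both matrices have rank two. For a Gaussian state, $\tstatebb\propto\openone$ and the ratio is $1/\sqrt{2n}$.

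You also need a separate odd-moment argument for even $m\geq 4$, where there is no Wick term to compare against. For $m\equiv 0\pmod 4$, for instance, the third moment is governed by $\Tr[\tobsmat^3]\,\Tr[\tstatemat^3]$ for the symmetric reshaped matrices, and this does not vanish automatically.

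Because the statement is existential, the quickest repair within your framework is to take $m=1$:
\begin{itemize}
\item Minors are single entries, so every pairing pairs whole vectors.
\item Odd moments vanish under $Q\mapsto -Q$.
\item $Q^\top\tobsvec$ is uniform on the sphere of radius $\lVert O\rVert_1$. Hence each even moment equals $\lVert O\rVert_1^k\prod_{l=0}^{k/2-1}(2n+2l)^{-1}$ times the sum over pairings of products of the overlaps $\langle\rho_{i_j},\rho_{i_{j'}}\rangle_1$.
\end{itemize}
This is the Isserlis form up to a factor tending to $1$, and moment determinacy of the Gaussian then closes the argument.
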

Thus, once the prior is restricted to matchgate evolutions, the covariance between two data points is controlled by the overlap of the corresponding states inside the relevant Majorana sector $\BC_m$.

Theorem~\ref{theo-1} guarantees the provable QGP condition, but scalability is not automatic. Indeed, with coherent access and Bell measurements, we can evaluate these overlaps $\left\langle \rho(t), \rho(t') \right \rangle_{m}$ up to precision $\varepsilon$ with a shot number scaling as $\OC(\poly(\dim(\BC_m))/\varepsilon^2)$ shots (see Supp.~Info.). Consequently, to achieve efficient polynomial scaling in the total number of qubits $n$ while ensuring that the kernel entries remain of the same order as those in $R$, one simply needs $\min\{m,n-m\}\in\OC(1)$.

Table~\ref{tab:groups_for_kernels} compiles the currently known provable and scalable QGPs: the cases $\mathbb{U}(2^m)$ and $\mathbb{O}(2^m)$ of~\cite{garcia2023deep}, $\SP(2^m/2)$ from~\cite{garcia2024architectures}, and the matchgate $\SPIN(2n)$ derived here. We highlight the fact that matchgates constitute the only known case where the unitary acts on all qubits, rather than on a small subset of them. Moreover, we find that, as expected~\cite{ragone2023lie}, in all cases the observable $O$ must belong to some subspace $\BC_\lambda$ of operator space, and the kernels take the unified form
\begin{equation}
\kappa(t, t') = \left\langle \rho(t),\rho(t') \right \rangle_{\lambda}\lVert O \rVert^2_2/d_\lambda,
\end{equation}
where $d_\lambda$ denotes the leading-order scaling of $\dim(\BC_\lambda)$ for large $n$.
To finish, we also note that in Table~\ref{tab:FF_GP_scenarios} we present cases of state datasets and observables for which the expectation values do not form GPs, showing that care must be taken to guarantee that the conditions for provable QGPs are actually met.

\begin{figure*}[th]
\centering
\includegraphics[width=\linewidth]{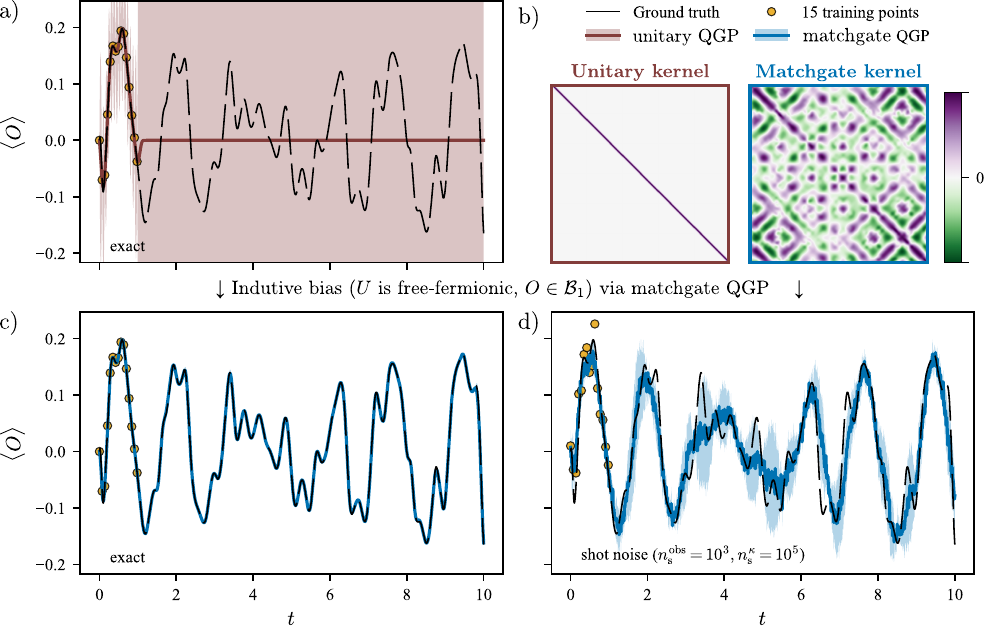}
 \caption{\textbf{Quantum Gaussian process extrapolation and the role of inductive bias.} 
We consider a 50-qubit regression task where the model is trained on 15 points in the interval $t\in[0,1]$ and then asked to extrapolate the signal out to $t\leq 10$. The unknown transformation is a random matchgate circuit from $\SPIN(2n)$, and the observable is a Pauli string in $\BC_1$. 
(a) If we ignore this structure and instead use the unitary kernel associated with $U\in\mathbb{U}(2^n)$, the QGP fails to extrapolate. (b) The corresponding kernel matrix is nearly band-diagonal, with rapidly vanishing off-diagonal entries, so long-range correlations are lost. When the correct inductive bias is used, namely $U\in\SPIN(2n)$ and $O\in\BC_1$, the resulting matchgate kernel retains strong off-diagonal structure and enables accurate extrapolation.
In the bottom row, we show extrapolation without shot noise (c) and with shot noise (d). Even for realistic budgets, the QGP remains robust, captures the overall signal, and reflects the additional uncertainty through wider posterior bands.
}
\label{fig:B1_extrapolation}
\end{figure*}

\subsection{Applications \& Experiments}\label{sec:applications}

Borrowing inspiration from the versatility of GPs in classical machine learning, we now demonstrate large-scale numerical simulations of our QGP framework across supervised learning and Bayesian optimization, spanning both synthetic and physical tasks. In all cases, we refer the reader to the Methods section for additional details on the experimental setups. Notably, while some of the experiments presented extend beyond strictly provable QGP regimes, the success of these simulations highlights the framework's value as a strong, practical heuristic.

\subsubsection{Regression}

We begin by showcasing QGPs for regression tasks. Each example highlights a different aspect of the framework. We start with a controlled extrapolation problem, where the main goal is to isolate the role of inductive bias. We then move to physically motivated many-body problems, where the task is to reconstruct phase diagrams and order parameters from extremely sparse quantum data. We focus here on QGP performance, and refer the reader to Supplemental Information for benchmarks against alternative regression methods. These then reveal shot efficiency of QGPs over non-data-driven approaches and establish that QGP active learning outperforms quantum learners that lack uncertainty quantification.

\paragraph{Extrapolation.}\label{sec:B1_extrapolation} Extrapolation is a particularly stringent test in supervised learning, since the model must make predictions well outside the region covered by the training data.

\begin{figure}[th]
    \centering{
    \includegraphics[width=1\linewidth]{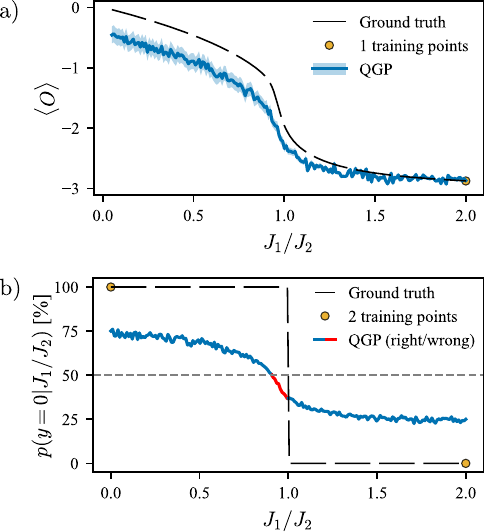}
    }
    \caption{    \textbf{QGPs for phase diagram prediction and classification in a bond-alternating XXX model.} This model exhibits a phase transition at $J_1=J_2$. 
    (a) We showcase the regression of the phase diagram using just one training point.
    (b) A QGP classifier achieves accurate phase classification, trained on a minimal set of a single training point per phase. The training labels are no longer quantitative order parameter values but qualitative binary phase labels.
    }
    \label{fig:xxx}
\end{figure}

Here we consider a 50-qubit synthetic dataset where 15 training states $\rho(t)$, with $t\in\SC=[0,1]$, are sent through an unknown matchgate circuit $U$ from $\SPIN(2n)$, and where we measure a Pauli observable $O\in\BC_1$. The task is to predict the signal out to $t\leq 10$, in a regime where the dynamics are not periodic and success is not trivially guaranteed. To emphasize the role of inductive bias, we first deliberately use the wrong prior and ``forget'' that $U$ is a matchgate, replacing the correct kernel by the unitary kernel of Eq.~\eqref{eq:kernel-Ud}, as if $U$ could have been any element of $\mathbb{U}(2^n)$. As shown in Fig.~\ref{fig:B1_extrapolation}, extrapolation is then unattainable. The QGP predictions collapse to the prior mean and the uncertainty rapidly grows once we leave the training region. This can be directly understood from the corresponding kernel matrix, whose off-diagonal entries quickly vanish and therefore fail to encode the long-range correlations needed for extrapolation.

Next, we impose the correct inductive bias, namely that $U$ belongs to the representation of $\SPIN(2n)$ and that $O\in\BC_1$, so that the process forms a GP as per Theorem~\ref{theo-1} with kernel given by Eq.~\eqref{eq:theo:1-main}. The corresponding kernel now captures meaningful similarities among data points over long time scales, and this is precisely what allows the QGP to extrapolate. In Fig.~\ref{fig:B1_extrapolation} we also study the effect of shot noise. In the noiseless case, the QGP extrapolates perfectly. More importantly, this behavior remains robust when we include realistic finite-shot estimates, taking $\shotsobs=10^3$ shots per training expectation value and $\shotsker=10^5$ shots per kernel entry. The additional noise broadens the uncertainty bands, but the overall trend of the signal is still correctly captured. Thus, the first example shows that successful interpolation can generally be achieved, here with only 15 training points in a narrow region, but extrapolation becomes impossible in the absence of a suitable inductive bias.

\paragraph{Phase transition in a one-dimensional XXX bond-alternating model.} Next, we move to a physically motivated problem and use QGP regression to recover the ground-state phase diagram of a bond-alternating XXX model~\cite{Kitazawa1996phase} on $n=50$ qubits. The target is the expectation value of the order-parameter proxy
\begin{equation}\label{eq:measurement-XXX}
    O=X_{\frac{n}{2}}X_{\frac{n}{2}+1} + Y_{\frac{n}{2}}Y_{\frac{n}{2}+1}+Z_{\frac{n}{2}}Z_{\frac{n}{2}+1}\,.
\end{equation}
Unlike the extrapolation example above, this observable has support on more than one invariant sector: the first two terms belong to $\BC_2$ in Eq.~\eqref{eq:irreps-Majoranas}, while the last term belongs to $\BC_4$. We therefore compute the matchgate kernel of Eq.~\eqref{eq:theo:1-main} separately on these two sectors and then sum the resulting kernels to perform the regression. While Theorem~\ref{theo-1} does not directly cover this combined observable, the resulting kernel is still positive semidefinite and hence defines a valid GP model.

The result is shown in Fig.~\ref{fig:xxx}(a). Using only an informative training point placed at $J_1=2J_2$, and shot budgets $\shotsobs=\shotsker=10^4$, the QGP correctly reconstructs the change in the order parameter across the transition near $J_1=J_2$, as predictions converge towards the prior for small $J_1/J_2$. This shows that the QGP framework can already be used in a realistic many-body setting, even when the observable is not confined to a single subspace.

\begin{figure*}[th]
    \centering
    \includegraphics[width=\linewidth]{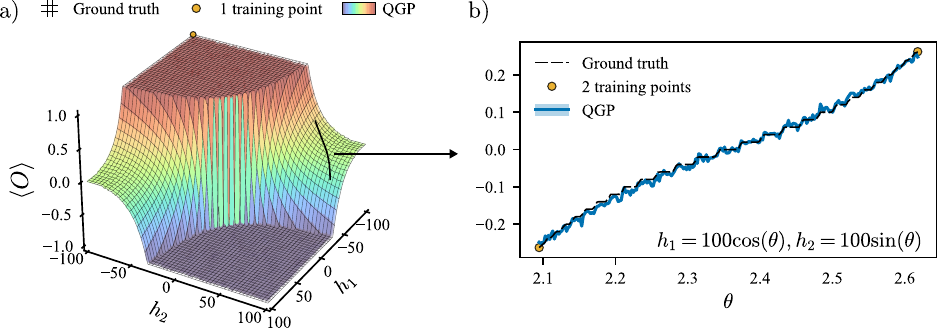}
\caption{\textbf{QGP learning the magnetization phases of an XXZ Hamiltonian.} 
(a) Regressed magnetization landscape for an $n=100$ XXZ chain, obtained from a single training point placed in the maximal magnetization plateau. The wireframe shows the true magnetization profile, while the colored surface is the QGP prediction. The training point is marked by a yellow circle.  (b) Regression along the black cut shown in the left panel. Using only the two extrema of the cut as training points, and shot budgets $n_{\rm s}^{\rm obs}=n_{\rm s}^{\kappa}=10^5$, the QGP reproduces the staircase structure of the magnetization plateaus.}
    \label{fig:xxz}
\end{figure*}

\paragraph{Phase transition in the XXZ model.} Finally, we ask whether the previous picture survives when both the system size and the complexity of the phase landscape are increased. To this end, we apply QGP regression to the magnetization diagram of a one-dimensional $n=100$ qubit XXZ chain with longitudinal and transversal fields $(h_1,h_2) \in\SC=[-100,100]^2$~\cite{cerezo2017factorization}. We take as observable the magnetization
\begin{equation}
    O=\frac{1}{n}\sum_k Z_k,
\end{equation}
which belongs to $\BC_2$.

The results are presented in Fig.~\ref{fig:xxz}. In the left panel, the training set consists of a single point chosen in the region of maximal magnetization, and we use shot budgets $\shotsobs=\shotsker=10^4$. Even under this extremely sparse supervision, the regressed magnetization landscape closely matches the true one. This behavior can be understood from the fact that when $h_1,h_2\rightarrow +\infty$, the ground state approaches $|0\rangle^{\otimes n}$, so kernel rows involving this state are already strongly tied to the magnetization profile itself. In other words, the chosen training point is not just convenient numerically, but physically informative. In the right panel, we further probe the transition between magnetization plateaus by restricting the phase diagram to the curve $h_1=100\cos(\theta)$, $h_2=100\sin(\theta)$ with $\theta\in[2\pi/3,5\pi/6]$. Taking the two extrema of this cut as training points, and increasing the shot budget to $\shotsobs=\shotsker=10^5$, the QGP accurately reproduces the staircase structure of the magnetization. This final regression example shows that QGPs remain scalable and effective at $n=100$ qubits and over a genuinely two-dimensional parameter landscape, even when trained on only one or two carefully chosen quantum data points.

\subsubsection{Classification}

Beyond regression, classification is the other primary task in supervised machine learning. Here we showcase how QGPs can directly classify the ground states of the aforementioned bond-alternating XXX model, without first having to learn the order parameter as in the previous section. The kernel is the same as before, but now the (noiseless) labels are binary classes $y\in\{0,1\}$ assigned according to which side of the critical point the state belongs to. 

To probe the limits of the QGP, we consider a minimal training set with a single training point per class, as shown in Fig.~\ref{fig:xxx}(b). Even in this extremely sparse regime, the model achieves a test accuracy of $95\%$. The only misclassifications occur close to the critical point, leading to a slight mismatch between the learned decision boundary and the true transition point at $J_1/J_2=1$. This is not unexpected, since the point $J_1/J_2=1$ sharply separates the two phases only in the thermodynamic limit, while for the finite system size considered here ($n=50$), the crossover already takes place over a finite region~\cite{bermejo2024quantum}. What makes this example particularly interesting is that in GP classification one first models a latent Gaussian variable via GP regression, and only then maps it through a sigmoid to obtain the Bernoulli probabilities for the class labels. An analysis of this latent QGP reveals a remarkably strong correlation with the order parameter proxy of Eq.~\eqref{eq:measurement-XXX}, with correlation coefficient $0.998$. Thus, the classifier is not only able to separate the phases from qualitative labels, but also recovers the underlying order-parameter structure in its latent representation.

\subsubsection{Optimization}\label{sec:bo}

Next, we validate the utility of QGPs as surrogate functions within Bayesian optimization (BO)~\cite{shahriari2015taking,Mockus1978,jonesEfficient1998}. This adds a new ingredient to the framework. In the previous tasks, the QGP was used to predict labels or observables from limited data. Here, instead, its uncertainty estimates determine where the next query should be made. By balancing exploration and exploitation, BO iteratively refines the surrogate only where needed, and can therefore locate optima using remarkably few evaluations of the objective function. We illustrate this in two settings: first in a synthetic two-dimensional task designed to isolate the optimization dynamics, and then in a quantum sensing application where the query budget is itself the central experimental resource. To keep the focus on the optimization mechanism, we neglect shot noise in the simulations.

\paragraph{Synthetic 2D optimization example.}\label{sec:bo:2D}

\begin{figure*}[tb]
    \centering
    \includegraphics[width=\linewidth]{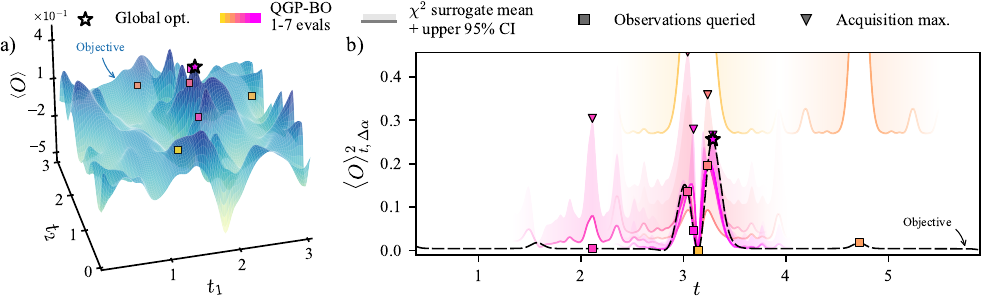}
    \caption{\textbf{Bayesian optimization using QGP surrogates (QGP-BO).}
(a) Synthetic two-parameter optimization landscape built from a 30-qubit matchgate dataset, with an objective function derived from the expectation values of an $\BC_1$ observable on states $\rho_{\bm{t}}$. The QGP surrogate uses the corresponding $\BC_1$ kernel. Arrows trace the BO trajectory from a random initial point to the global optimum (star).
(b) QGP-BO applied to a 55-qubit quantum sensing task. The black line denotes the true objective, while the colored curves show the surrogate mean together with the upper $95\%$ confidence bound at successive BO iterations. Squares denote queried observations, and triangles mark the maxima of the acquisition function that determine the next query. Despite broad flat regions in the objective landscape, the optimizer reaches the global optimum in only 7 evaluations. The elevated surrogate mean in early iterations is a consequence of the large initial uncertainty, which becomes strictly positive after squaring the Gaussian surrogate.}
    \label{fig:bo}
\end{figure*}

BO is particularly well suited for rugged objective functions with only a few parameters, where each function evaluation is expensive and where standard optimizers can be misled by local optima. To place our QGP-BO scheme in this regime, we construct a two-parameter optimization task with $\bm{t}\in[0,3]^2$ by extending the synthetic matchgate dataset used in the extrapolation example of Fig.~\ref{fig:B1_extrapolation} to two dimensions. The resulting objective is highly non-convex and contains many sub-optimal local optima.

A representative QGP-BO trajectory is shown in Fig.~\ref{fig:bo}. Starting from a random initial point, the optimizer reaches the global optimum in only 7 function evaluations, up to a resolution of $\Delta t = 0.06$. What is most notable here is not only the small number of evaluations, but also the fact that the optimization path avoids getting trapped in nearby local optima. This is precisely where the QGP surrogate becomes useful, as the posterior uncertainty keeps the acquisition function from collapsing too early onto locally promising, but globally sub-optimal, regions. Supplemental Info. provides a detailed optimization benchmark.

\paragraph{Quantum sensing application.}\label{sec:bo:QS}

We now turn to a practically motivated task where query efficiency is not just convenient but essential. Quantum sensing uses quantum systems to estimate physical quantities with precisions beyond those accessible to purely classical probes~\cite{giovannetti2006quantum,degen2017quantum,huerta2022inference,ijaz2024more}. Here we consider a variational magnetometry problem, where the goal is to optimize a parametrized probe state $\rho(t)$~\cite{koczor2020variational,beckey2020variational,kaubruegger2021quantum,meyer2020variational,thurtell2022optimizing,ijaz2024more} so as to maximize its sensitivity to a small change $\Delta \alpha$ in the magnetic field.

In particular, we consider a fixed-budget optimization task, where every evaluation used for probe-state optimization consumes the same resource later needed for the actual sensing protocol~\cite{ijaz2024more}. This makes BO a natural fit. It also makes QGP surrogates especially attractive, since the field evolution is a matchgate transformation and the chosen observable satisfies $O\in\BC_2$, so the surrogate can be built from the corresponding $\BC_2$ matchgate kernel. In contrast to the previous BO example, however, the objective is no longer the expectation value of the observable itself, but rather its sensitivity with respect to the field parameter $\alpha$. We therefore build a composite surrogate from the underlying QGP, and replace the absolute value of the sensitivity by its square, which preserves the location of the optimum.

The resulting QGP-BO trajectory is shown in Fig.~\ref{fig:bo} for a magnetometry task with $n=55$ qubits. The optimizer identifies the maximally sensitive probe state in only seven evaluations, reaching the global optimum with precision $\Delta t = 0.006$. By the time the optimum is found, the surrogate is still not an especially accurate global model of the underlying expectation value landscape. Yet this does not prevent BO from succeeding. The surrogate already contains enough information to point the acquisition function toward the right region and to jump efficiently across wide flat portions of the landscape. This highlights an important distinction between BO and regression: the goal of the surrogate is not to reconstruct the full function everywhere, but to supply enough local guidance to identify where improvement is most likely.

\section{Discussion}

In this work we introduced quantum Gaussian processes as a simple and structured framework for learning from quantum systems. Rather than approaching quantum learning through increasingly flexible parametric models, we showed how it can be formulated through Bayesian priors tailored to the underlying quantum dynamics, observables and datasets. 
This physics-informed treatment encodes strong inductive biases directly into the mathematical framework, allowing the models to generalize efficiently without requiring exhaustive data.
In particular, we established conditions under which quantum stochastic processes form Gaussian processes, proved that matchgate, or free-fermionic, evolutions give rise to a provable and scalable all-qubit QGP family, and demonstrated how the resulting kernels can be used for regression, classification, and Bayesian optimization on quantum data.

Several natural directions follow from this work. On the theory side, it will be important to broaden the range of quantum processes known to yield useful QGPs, and to better understand how far the present framework extends beyond the settings covered by our current theorems. On the practical side, a particularly promising direction is to use QGPs as surrogate models inside more standard variational and parametrized quantum-circuit workflows. In that context, Bayesian optimization could provide a principled way of navigating costly quantum objective landscapes with far fewer circuit evaluations, while the kernel itself could encode physically motivated inductive biases that are otherwise absent in generic variational ansätze.  Ultimately, this framework lays the groundwork for a broader paradigm of probabilistic and Bayesian quantum machine learning.

\section{Methods}\label{sec:methods}           

\begin{table*}[htbp]
                \setlength{\tabcolsep}{6pt}
                \begin{tabular}{@{}lcccccc@{}}
                \toprule
                \textbf{}      & \multicolumn{2}{c}{\textbf{Odd $m$}} &  & \multicolumn{3}{c}{\textbf{Even $m$}}                          \\ \cmidrule{2-3} \cmidrule{5-7} 
                \textbf{}      & \textbf{$m=1$} & \textbf{$m \geq 3$} &  & \textbf{$m = 2$} & \textbf{$m \equiv 0 \pmod 4$} & \textbf{$m \equiv 2 \pmod 4$} \\ \midrule
                \textbf{General}   
                    &   
                    \cmark
                    & 
                    \makecell[tl]{\begin{varwidth}[t]{\linewidth}
                    \begin{itemize}[leftmargin=5pt, topsep=0pt, partopsep=0pt, parsep=0pt, itemsep=0pt]
                    \item[\cmark] 
                     if Gaussian\\
                     pairings dominate 
                    \end{itemize}
                    \end{varwidth}
                    }
                    &
                    % This cell is intentionally left blank to have a blank column for horizontal spacing reasons
                    &
                    \makecell[tl]{\begin{varwidth}[t]{\linewidth}
                    \begin{itemize}[leftmargin=5pt, topsep=0pt, partopsep=0pt, parsep=0pt, itemsep=0pt]
                        \item[\cmark] if Gaussian\\
                     pairings dominate 
                    \end{itemize}
                    \end{varwidth}
                    }
                    &
                    \makecell[tl]{\begin{varwidth}[t]{\linewidth}
                    \begin{itemize}[leftmargin=5pt, topsep=0pt, partopsep=0pt, parsep=0pt, itemsep=0pt]
                        \item[\cmark] if Gaussian\\
                     pairings dominate \textbf{and}\\ if odd moments vanish
                    \end{itemize}
                    \end{varwidth}
                    }
                    &
                    \makecell[tl]{\begin{varwidth}[t]{\linewidth}
                    \begin{itemize}[leftmargin=5pt, topsep=0pt, partopsep=0pt, parsep=0pt, itemsep=0pt]
                        \item[\cmark] if Gaussian\\
                     pairings dominate\\
                     \textbf{and} if odd $(k\geq 5)$\\
                     moments vanish
                    \end{itemize}
                    \end{varwidth}
                    }
                    \\
                    \vspace{-6pt}
                    \\
                \multirow[t]{2}{*}{\makecell[tl]{\textbf{Special}\\\textbf{cases}}} &   &  && 
                \makecell[tl]{\begin{varwidth}[t]{\linewidth}
                    \begin{itemize}[leftmargin=5pt, topsep=0pt, partopsep=0pt, parsep=0pt, itemsep=0pt]
                            \item[\cmark] Gaussian states (*)
                            \item[\omark] Pauli observables 
                            \item[\cmark] Random observables \\ (Coeffs. $\sim \mathcal{N}(0,1)$)
                            \item[\cmark] Magic (extent) state
                        \end{itemize}
                \end{varwidth}}
                & 
                \makecell[tl]{\begin{varwidth}[t]{\linewidth}
                    \begin{itemize}[leftmargin=5pt, topsep=0pt, partopsep=0pt, parsep=0pt, itemsep=0pt]
                            \item[\omark] Gaussian states $(\BC_4)$
                            \item[\omark] Pauli observables $(\BC_4)$
                            \item[\xmark] Gaussian states and \\ Pauli observable $(\BC_4)$
                            \item[\cmark] \textit{Random fermionic state}\\\textit{and Pauli observable} $(\BC_4)$
                        \end{itemize}
                \end{varwidth}}
                &                      \\ \cmidrule{2-7}
                &\multicolumn{6}{c}{\makecell[tl]{\begin{varwidth}[t]{\linewidth}\begin{itemize}[leftmargin=5pt, topsep=0pt, partopsep=0pt, parsep=0pt, itemsep=0pt] \item[\cmark] For any fixed $m$, a special class of observables constructed from mutually disjoint Majorana products in $\BC_m$ \end{itemize}\end{varwidth}}} 
                \\\bottomrule
                \end{tabular}
                \caption{
                \textbf{Summary of QGPs from matchgate transformations in the large-$n$ limit.} 
                While ``\cmark'' (``\xmark'') indicate that such state and/or observable choices are shown to fulfill (violate) the GP conditions, ``\omark'' marks that such families of states or observables alone do not suffice to guarantee GP emergence.  \textit{Italics} imply that the conclusion is based on numerical evidence rather than a formal proof. Here, by Gaussian state we mean any state that we can obtain by applying a $U$ from $\SPIN(2n)$ to the all-zero state, whereas a fermionic state is any pure state with well-defined parity $Z^{\otimes n}$. (*) A set of Gaussian states provably forms a multivariate GP provided their subspace overlaps satisfy an asymptotic lower bound. For all remaining cases and subspaces $\BC_m$ ($m>1$), verifying that the states satisfy dominance conditions is required to yield a multivariate GP.
                }
                \label{tab:FF_GP_scenarios}
            \end{table*}

\subsection{Sketch of the proof of Theorem~\ref{theo-1}}

We sketch here the logic behind Theorem~\ref{theo-1}. The goal is to show that, for suitable families of states $\rho_i$ and observable $O$, the random variables
\begin{equation}
X_i = \operatorname{Tr}[U \rho_i U^\dagger O] ,
\end{equation}
with $U$ drawn from the Haar measure over the matchgate representation of $\SPIN(2n)$, have moments $\mathbb{E}_{U\sim \SPIN(2n)}[X_{i_1}\cdots X_{i_k}]$ that asymptotically coincide with those of a multivariate Gaussian distribution.

Unlike the classical neural network infinite-width correspondence~\cite{lee2018deep}, the proof does not follow from a central limit theorem. The reason is that the matrix elements of the orthogonal transformation induced by a matchgate are correlated. As such, a more direct approach must be taken, where one computes all infinitely many moments for random variables such as $X_i$. To do so, we exploit the decomposition of operator space into invariant Majorana sectors as in Eq.~\eqref{eq:irreps-Majoranas}, and restrict the observable to a single subspace $O\in \BC_m$. Here, one finds that expanding both $O$ and the projected states $\rho(t_{i_j})$ in the subspace turns them into anti-symmetric rank-$m$ tensors. For instance, we can define the $2n\times2n\times2n\times2n$ tensor $\tstate_{4}^{(i_j)}$, corresponding to projecting $\rho(t_{i_j})$ onto $\BC_4$, with entries
\begin{equation}
    [\tstate_{4}^{(i_j)}]_{\mu,\nu,\alpha,\beta}=-\tfrac{1}{\sqrt{24d}}\Tr[\rho(t_{i_j})C_\mu C_\nu C_\alpha C_\beta]\,, 
\end{equation}
if $\mu\neq\nu\neq\alpha\neq\beta$, and zero otherwise, and check that the anticommutation relations of Majoranas force $\tstate_4^{(i_j)}$ to be a rank-$4$ anti-symmetric tensor. Importantly, a direct application of Eq.~\eqref{eq:Majorana-action} shows that under the action of a matchgate unitary $U$, the tensor $\tstate_{4}^{(i_j)}$ transforms under $Q \in \SO(2n)$ as
\begin{align}\label{eq:tensor-transform}
    &[\tstate_{4}^{(i_j)}]_{\mu,\nu,\alpha,\beta}\nonumber\\
    &\quad=\sum_{\mu',\nu',\alpha',\beta'} q_{\mu,\mu'}q_{\nu,\nu'}q_{\alpha,\alpha'}q_{\beta,\beta'}[\tstate_{4}^{(i_j)}]_{\mu',\nu',\alpha',\beta'}\,.
\end{align}

From here, we find it extremely useful to work on the vectorization picture. For convenience, we recall that given an operator $A\in\BC=\HC\otimes\HC^*$ with expansion
\begin{align}
A=\sum_{i,j=1}^{d} A_{i,j}\ketbra{i}{j},
\end{align}
we define the vectorization map $\operatorname{Vec}:\HC\otimes\HC^*\to \HC^{\otimes 2}$, whose action on $A$ is:
\begin{align}
|A\rangle\!\rangle=\operatorname{Vec}[A]=\sum_{i,j} A_{i,j}\ket{i}\ket{j}.
\end{align}
This representation is particularly useful because it turns left and right multiplication of operators into ordinary matrix multiplication on the doubled Hilbert space. In particular, for any linear operators $A$, $B$, and $C$, one has
\begin{align}\label{eq:kroneker_vectorisation}
    |ABC\rangle\!\rangle = A\otimes C^\top |B\rangle\!\rangle,
\end{align}
where $C^\top$ denotes the transpose of $C$. As a direct consequence, the Hilbert--Schmidt inner product between operators becomes the standard inner product between their vectorizations,
\begin{align}
    \langle\!\langle A|B\rangle\!\rangle = \Tr[A^\dagger B].
\end{align}
The same construction naturally extends to channels. Given a linear map $\mathcal{M}:\HC\otimes\HC^*\to\HC\otimes\HC^*$ of the form
\begin{align}
\mathcal{M}(\cdot)=\sum_{i,j} A_i (\cdot) B_j^\dagger,
\end{align}
its vectorized representation is
\begin{align}
    M=\operatorname{Vec}\!\bigl[\Phi(\cdot)\bigr] = \sum_{i,j} A_i\otimes B_j^*,
\end{align}
where $B_j^*$ denotes the complex conjugate of $B_j$. Hence, vectorization recasts superoperators as operators.

After vectorization, we find that Eq.~\eqref{eq:tensor-transform} simply becomes 
\begin{equation}
    |\tstate_{4}^{(i_j)}\rangle\!\rangle\xrightarrow{U}Q^{\otimes 4}|\tstate_{4}^{(i_j)}\rangle\!\rangle\,,
\end{equation}
with generalization to the component in $\BC_m$ and, hence, corresponding projection tensor $\tstate$ of rank $m$
\begin{equation}
    |\tstate^{(i_j)}\rangle\!\rangle\xrightarrow{U}Q^{\otimes m}|\tstate^{(i_j)}\rangle\!\rangle\,.
\end{equation}
Then, we obtain
\begin{align}
&\E_{U\sim \SPIN(2n)}\left[ X_{i_1} X_{i_2} \cdots X_{i_k} \right] \nonumber\\
&\quad=\E_{Q\sim \SO(2n)}\left[ \langle\!\bra{\tobs}^{\otimes k} Q^{\otimes mk}\bigotimes_{j=1}^{k} \ket{\tstate^{(i_j)}}\!\rangle \right]\nonumber\\
&\quad= \langle\!\bra{\tobs}^{\otimes k} \E_{Q\sim \SO(2n)}\left[Q^{\otimes mk}\right]\bigotimes_{j=1}^{k} \ket{\tstate^{(i_j)}}\!\rangle \,,
\end{align}
where we can identify 
\begin{align}
    M^{(mk/2)}_{\SO(2n)}&=\E_{Q\sim \SO(2n)}\left[Q^{\otimes mk}\right]\nonumber\\
    &=\E_{Q\sim \SO(2n)}\left[Q^{\otimes mk/2}\otimes (Q^*)^{\otimes mk/2}\right]\,,
\end{align}
as 
the $k$-th moment operator (the vectorization of the $k$-th order twirl channel)~\cite{garcia2023deep,mele_IntroductionHaarMeasure_2024} for the standard representation of $\SO(2n)$. In particular, we have used the fact that for any $Q\in\SO(2n)$, then $Q=Q^*$. Putting it all together we find 
\begin{equation}\nonumber
\mathbb{E}_{U\sim \SPIN(2n)}[X_{i_1}\cdots X_{i_k}]
=
\langle\!\langle \tobs|^{\otimes k}
\,M^{(mk/2)}_{\SO(2n)}\,
\bigotimes_{j=1}^k |\tstate^{(i_j)}\rangle\!\rangle \,,
\end{equation}
and it directly follows that the moment vanishes immediately when $mk$ is odd, whereas it is non-zero if $mk$ is even. Hence the free-fermionic problem is reduced to understanding moment operators of $\SO(2n)$.

In the large-$n$ limit, the relevant moment operator admits an asymptotic Weingarten expansion over the Brauer algebra~\cite{garcia2023deep,braccia2026commutant}. Each Brauer diagram defines a pattern of contractions among the $k$ copies of the observable tensor and the $k$ copies of the state tensors. Here, one distinguishes Brauer elements by those leading to pairwise tensor matchings, where tensors are connected pairwise, and those giving rise to non-pairwise matchings, which couple three or more tensors at once. The pairwise matchings are precisely the contributions that generate products of covariances, whereas the non-pairwise ones encode deviations from Gaussianity.

The proof, therefore, reduces to showing that the pairwise tensor matchings dominate asymptotically. Concretely, this is ensured by a set of conditions derived in the Supplemental Information, where we find that the observable must have full support on the chosen subspace, and every non-pairwise contraction on the dataset must be subleading compared to the product of pairwise overlaps. We refer to this condition as ``Gaussian pairings dominating''.   Under these assumptions, the even moments take the form
\begin{align}
&\mathbb{E}_{U\sim \SPIN(2n)}[X_{i_1}\cdots X_{i_k}]\nonumber\\
&\quad=
\sum_{\sigma\in B_{k/2}}
\prod_{(j,j')\in \sigma}
\frac{m!}{(2n)^m}\,\|O\|_m^2\,
\langle \rho_{i_j},\rho_{i_{j'}}\rangle_m ,
\end{align}
which is exactly the Isserlis~\cite{isserlis1918formula} form for a zero-mean Gaussian process with ($k=2$) covariance
\begin{equation}
\kappa(\rho_i,\rho_j)
=
\frac{m!}{(2n)^m}\,\|O\|_m^2\,\langle \rho_i,\rho_j\rangle_m .
\end{equation}
This proves the even-moment part of the theorem.

At that stage, the only remaining obstruction is the odd moments. Here, we need to perform a specific case-by-case analysis depending on the value of $m$. This is precisely where the general case distinction summarized in Table~\ref{tab:FF_GP_scenarios} comes from. If $m$ is odd, then every odd moment vanishes trivially because $mk$ is odd. If $m=2$, the even-moment argument above is already enough once Gaussian pairings dominate. For higher even $m$, the third moment becomes the first nontrivial obstruction. After reshaping the anti-symmetric tensors $\tobs$ into square matrices $\tobsmat$, one finds a parity effect: when $m\equiv 2 \pmod 4$, the reshaped matrices are anti-symmetric, so $\operatorname{Tr}[O^3]=0$ and the third moment vanishes automatically; when $m\equiv 0 \pmod 4$, the reshaped matrices are symmetric, and odd moments do not vanish automatically, so extra structure is needed.

This is why Table~\ref{tab:FF_GP_scenarios} is organized first by the parity of $m$, and only then by concrete families of states and observables. The row labeled ``General'' states the sufficient conditions under which the moment argument closes. The row labeled ``Special cases'' collects explicit settings in which the Gaussian pairing domination check (for even moments) and the odd-moment analysis are either satisfiable, indefinite, or explicitly violated due to persistent non-Gaussian contributions. In this sense, we present Table~\ref{tab:FF_GP_scenarios} as a bookkeeping device that records, subspace by subspace, whether the general program above certifies QGP emergence.

A fully rigorous version of this argument is given in the Supplementary Information, where the asymptotic moment operator, the Brauer-algebra expansion, and the odd-moment analysis are derived explicitly.

        \subsection{Noisy kernel correction for QGPs}\label{sec:methods:noisy_kernels}

            Due to the fact that we can only extract information from quantum computers using a finite-shot budget,  the (symmetrically) measured training kernel matrix $\widetilde{K}_{\TC\TC}$ may not be positive semidefinite (PSD). Crucially, this phenomenon typically does not occur when using classical kernels, as these are computed exactly (up to numerical precision). Thus, the literature on PSD corrections in classical kernel methods is limited. In this section we review some standard methods, and discuss their use for QGPs.

            We begin by considering the geometric approach of finding the closest PSD matrix to the noisy kernel. Under the  Frobenius and spectral norms~\cite{boyd2004convex_refs}, this is optimally achieved by clipping negative eigenvalues of $\widetilde{K}_{\TC\TC}$ to zero. Despite the relative simplicity of this geometrically optimal method, it has the disadvantage that it leads to no uncertainty being propagated from the kernel. That its, the QGP assumes the clipped kernel to be correct with full certainty.

            An alternative standard approach to restore the PSD-ness of the kernel matrix is to shift its diagonal, and hence implicitly the entire eigenvalue spectrum, by the lowest (negative) eigenvalue $\lambda_{\min}$. This shift can be realized via adding $\lvert \lambda_{\min} \rvert I$ to the observation covariance $R$ in Eq.~\eqref{eq:joint_GP}. Hence, this PSD repair can be interpreted as increasing the observation variances independently by $\lvert \lambda_{\min} \rvert$, thereby establishing a proper propagation of uncertainty. However, it constitutes only a ``minimal uncertainty propagation'', just sufficient to ensure that the GP is well-defined. 

            Given the limitations of the previous approaches, we propose a PSD correction alternative to achieve robust uncertainty propagation. Specifically, one shifts the kernel diagonal (of $N$ training points) by the eigenvalue bound of the Wigner semicircle law  
            \begin{equation}
                \lambda_{\mathrm{Wigner}} = 2\sqrt{N}\sigma_{\kappa}
                ,
            \end{equation}
            where $\sigma^2_{\kappa}$ denotes the kernel entry measurement variance. 
            In contrast to the $\lambda_{\min}$-shift, this $\lambda_{\mathrm{Wigner}}$-shift approach is Bayesian. Shifting by $\lambda_{\min}$ is merely a point estimate (based on a single measured kernel matrix) and, hence, inherently frequentist. The $\lambda_{\mathrm{Wigner}}$-shift establishes the shift that is highly probable to restore the PSD property and becomes part of the generative prior. That is, when seen as an extension of $R$, it redefines the distribution of training observations (Eq.~\eqref{eq:GP2}) as 
            \begin{equation}\label{eq:GP2_wigner}
                \bm{y}_{\TC} \sim \NC\left( \bm{f}_\TC, \left(\sigma_{\rm obs}^2 + \lambda_{\mathrm{Wigner}} \right) I \right)
            \end{equation}
            before any measurements of either the observations $\bm{y}_\TC$ or the kernel matrix $K_{\TC\TC}$ have been performed. Clearly, the a priori uncertainty of training data is now a function of both shot budgets $\shotsobs$ and $\shotsker$.

            The Wigner semicircle law is applicable in the noisy kernel context when the noise is assumed to be additive, i.i.d., with zero mean and variance $\sigma^2_\kappa$. That is, $\widetilde{K} = K + W$ with $W$ is a corresponding Gaussian orthogonal ensemble (GOE). With high probability, $\lambda_{\rm Wigner}$ bounds the spectrum of $W$ (and of $\widetilde{K}$ from below via Weyl's inequality as $\lambda_{\min}(\widetilde{K}) \geq \lambda_{\min}(K) + \lambda_{\min}(W) \geq  \lambda_{\min}(W)$) and, almost surely, in the large-$N$ limit. Hence, both shifting techniques coincide asymptotically in the number of training points. However, for finite (potentially small) $N$, the $\lambda_{\rm Wigner}$-shift is more robust via a minimax prior with the lowest possible shift to fix the worst-case spectral noise of $\widetilde{K}$ with high probability.
            The Supplemental Information provides further details and a benchmark on the noisy kernel correction techniques.

        \subsection{Datasets and physical model systems}\label{sec:methods:models}

            This section defines the quantum datasets used in the main text, comprising both synthetic data and datasets constructed from the ground states of physical models.

            \subsubsection{Synthetic quantum datasets}\label{sec:methods:models:synthetic}

                The input states $\rho_t$ of the synthetic dataset are generated by sending the state $\ket{0}^{\otimes n}$ into a layer of single-qubit rotations $R_i(t)$, parametrized by $t\in \mathbb{R}$, and a subsequent ladder of nearest-neighbor CNOT gates. The rotation axes are initially picked at random along with scaling factors, i.e., $R_i(t) = \exp\left(-i\pi\alpha_i  t P_i\right)$ where $P_i$ is uniformly sampled from the set $\{X, Y, Z\}$ and $\alpha_i \sim \NC(0,1)$ for the $i$-th qubit. Two such layers are applied.                 No periodicity is present in the signal; conversely, the non-commensurable rotation frequencies in the quantum data (induced by $\alpha_i$) yield a highly non-periodic signal, making extrapolation stringent.

                A 2D extension (with $\bm{t} \in \mathbb{R}^2$) of this synthetic dataset is achieved through the additional parameter $t_2$ now dictating the entanglement in the input states $\rho_{\bm{t}}$, while the single-qubit rotations, as above, are parameterized by $t_1$. Concretely, instead of fixed CNOT gates, we apply a ladder of nearest-neighbor controlled-$Z$ rotations ($CR_Z$) that depend on $t_2$. This preparation is restricted to a single layer. 

            \subsubsection{Heisenberg bond-alternating XXX model}\label{sec:methods:models:xxx}

                The bond-alternating XXX model \cite{Kitazawa1996phase} corresponds to the following family of Hamiltonians
                \begin{align}\label{eq:xxx:hamiltonian}
                    H &= J_1\sum_{i=1,3,5,\ldots}\left(X_iX_{i+1}+Y_iY_{i+1}+Z_iZ_{i+1}\right)\\ &+ J_2\sum_{i=2,4,6,\ldots}\left(X_iX_{i+1}+Y_iY_{i+1}+Z_iZ_{i+1}\right)\,.
                \end{align}
                This model exhibits a phase transition (in the thermodynamic limit $n\rightarrow\infty$) when $J_1/J_2=1$. 
                We prepare a dataset $\DC=\{(\rho_i, y_i)\}_{i=1}^N$ of ground states $\rho_i$ and observations $y_i$ for $N=200$ different equispaced configurations $0<J_1/J_2\leq2$. The observations $y_i$ serving as the labels for the training of the QGP regression correspond to expectation values of the operator 
                \begin{equation}\label{eq:xxx:order_parameter_proxy}
                 O  = X_{\frac{n}{2}}X_{\frac{n}{2}+1} + Y_{\frac{n}{2}}Y_{\frac{n}{2}+1} + Z_{\frac{n}{2}}Z_{\frac{n}{2}+1},
                \end{equation}
                which is known to serve as a local proxy of the model's order parameter in the asymptotic limit. Notice that we can also express this operator as $\langle O \rangle = 2\langle {\rm SWAP}_{\frac{n}{2},\frac{n}{2}+1}\rangle - 1$.

            \subsubsection{Heisenberg XXZ model}\label{sec:methods:models:xxz}
                The specific XXZ model considered is represented by the Hamiltonian
                \begin{equation}\label{eq:xxz:hamiltonian}
                    H = \sum_{i=1}^n h^i Z_i-\sum_{i=1}^{n-1}J(X_iX_{i+1}+Y_iY_{i+1})+J_z Z_iZ_{i+1}\,,
                \end{equation}
                where $h^i$ is the magnetic field component at site $i$. Specifically, we consider alternating fields of the form
                \begin{equation}
                h^i=
                \begin{cases}
                    h_1 & \text{for}\; i \in \text{odd} \\ 
                    h_2  & \text{for}\; i \in \text{even}.
                \end{cases}
                \end{equation}
                Then, the observable corresponds to the total magnetization along $z$, which determines the magnetic ordering of the system
                \begin{equation}\label{eq:xxz:order_parameter}
                     O = \frac{1}{n}\sum_{i=1}^n  Z_i \,.
                \end{equation}
                The dataset, indexed by $(h_1,h_2)\in\mathbb{R}^2$, is created from finding the ground states of the model at different values of the magnetic fields.

            \subsubsection{Quantum sensing magnetometry}\label{sec:methods:models:qs}

                In magnetometry, a system interacts with a static magnetic field via the Hamiltonian
                \begin{equation}
                    H = \frac{\alpha}{2} \sum_{i=1}^n Z_i
                    ,
                \end{equation}
                where $\alpha$ represents the magnetic field magnitude. Our goal is to find the best probe state $\rho_t$ \cite{thurtell2022optimizing,ijaz2024more} that is maximally sensitive in the measured quantity with respect to changes in the magnetic field $\Delta \alpha$. In particular, we want to maximize the sensitivity when the observable of interest is 
                \begin{equation}\label{eq:bo:qs_observable}
                    \O = \sum_{i=1}^{n-1} X_iX_{i+1} \in \BC_2
                    .
                \end{equation}
                
                For the probe state, we consider squeezed spin states $\rho_t$ for $t \in [\pi/8, 15\pi/8]$, which can be easily prepared from $\rho_0 = \ketbra{0}{0}$ via one-axis twists \cite{kitagawa1993squeezed}, i.e., 
                \begin{equation}\label{eq:bo:probe_state}
                \rho_t=e^{t\sum_{i<j} X_i X_j}
                .
                \end{equation}
                The sensitivity of the probe state $\rho_t$ is quantified by the partial derivative magnitude
                \begin{equation}\label{eq:bo:qs_orig_objective}
                \left\lvert \frac{\partial \langle\O\rangle}{ \partial \alpha }\Big\rvert_{\alpha = \alpha_0} \right\rvert
                ,
                \end{equation}
                which is maximized (objective function) when identifying a desired probe state.

        \subsection{QGPs for classification}\label{sec:methods:gpc}

        To employ our QGP framework for classification tasks, we follow a standard approach that uses GPs for probabilistic classification \cite{rasmussen_GaussianProcessesMachine_2006}, in which predictions are obtained via class probabilities.
        For GP classification, we introduce a continuous latent function $z(t)$. This  formulates the discrete predictive distribution on the classification outcome $y \in \{0, 1\}$ for input $t$ as 
        \begin{equation}\label{eq:classification_posterior}
            \Pr(y \mid t) = \int \Pr\left(y \mid z \right) \Pr(z \mid t) dz
            ,
        \end{equation}
        where we omitted additional conditioning on the training dataset. For inputs $\TC$, this provides  binary labels $\bm{y}_{\TC}$, and here $z$ denotes the latent function at input $t$, i.e., $z = z(t)$.
        The first distribution in the integral of Eq.~\eqref{eq:classification_posterior} is a Bernoulli distribution over the class variable $y$ given the latent variable $z$ via a logistic sigmoid link function $\sigma(z) = 1/(1+\exp(-z))$ as
        \begin{equation}
            \Pr(y \mid z) = \sigma(z)^y (1 - \sigma(z))^{1-y}.
        \end{equation}
        The second distribution in the integral of Eq.~\eqref{eq:classification_posterior} is the predictive distribution of the latent variable. By marginalizing over the latent variables corresponding to the training inputs $\bm{z}_{\TC}$ this can be expanded as
        \begin{equation}
        \Pr(z \mid t) = \int \Pr(z \mid \bm{z}_{\TC}, t) \Pr(\bm{z}_{\TC} \mid \TC, \bm{y}_{\TC}) d\bm{z}_{\TC}
        ,
        \end{equation}
        where $\Pr(\bm{z}_{\TC} \mid \TC, \bm{y}_{\TC})$ is the posterior over the latent training variables \cite{bishop2006patternCORRECT}.
        While $\Pr(z \mid \bm{z}_{\TC}, t)$ is obtained via GP regression on the continuous latent function according to Eqs.~\eqref{eq:GP-posterior}-\eqref{eq:post-var}, a Laplace approximation of $\Pr(\bm{z}_{\TC} \mid \TC, \bm{y}_{\TC})$ yields a Gaussian distribution \cite{rasmussen_GaussianProcessesMachine_2006,bishop2006patternCORRECT}. 
        Now, with a Gaussian approximation for $\Pr(z \mid t)$, we can further approximate the integral in Eq.~\eqref{eq:classification_posterior}, which involves a standard approximation of the convolution of the logistic sigmoid and Gaussian \cite{bishop2006patternCORRECT}.
        Although our framework is implemented in Julia based on the \texttt{AbstractGPs.jl} package used for regression, the results were verified to match those of the GP classifier in the \texttt{scikit-learn} Python package \cite{pedregosa2011scikit}.

        In QGP regression, the derived quantum kernels feature a scaling factor dependent on the explicit physical observable $O$. Because this observable is absent in the classification setting, we instead renormalize the kernel function to appropriately scale the variance of the latent variable. This renormalization ensures that the latent function values fall within the active region of the logistic sigmoid link function, preventing the resulting probabilities from either remaining uncertain near $0.5$ or saturating prematurely at $0$ and $1$. Given the theoretical correspondence between the kernel scaling factor and the observable, this scaling can be physically interpreted as a property of a latent observable, whose expectation values define the latent continuous function. Importantly, the choice of this scaling factor does not alter the location of the decision boundary at $\Pr(y=0 \mid t) = 0.5$.
        Beyond supervised learning, the probabilistic formulation of QGPs naturally extends to unsupervised anomaly detection (one-class classification) by leveraging predictive uncertainties as anomaly scores~\cite{kemmler2013one}, a strategy that has proven effective for automated phase discovery in quantum systems~\cite{kottmann2020unsupervised}.

        \subsection{QGPs for optimization}\label{sec:methods:bo}

        In this section, we briefly outline the framework of BO and detail how QGPs naturally integrate into it as probabilistic surrogate models. For a comprehensive review of standard BO, we refer the reader to Refs.~\cite{shahriari2015taking}. BO is a surrogate optimization technique, which is based on the main assumption that the objective function $g$ is extremely costly to evaluate. For example, no analytical expression may be available as evaluations involve heavy simulations or even real experiments. Hence, we aim to maximize the objective function $g^* = \max g(\bm{t})$ and find the corresponding maximizer $\bm{t}^* \in \arg\max g(\bm{t})$ with as few objective function queries as possible. Global optimality is a common complementary goal of BO, besides evaluation cost efficiency.

        What makes BO ``Bayesian'' is the fact that one uses a surrogate function capable of quantifying its uncertainty, and which guides the optimization and iteratively update the probabilistic belief about the objective function as data from the proposed evaluations are observed (Bayes' theorem). Probabilistic surrogate functions can also adequately account for noisy evaluations, as in finite-shot or real experiments. Most commonly, GPs provide probabilistic surrogate functions by placing a probabilistic prior on the objective function. While we typically cannot hope for convexity/concavity in the objective function to facilitate optimization, we may very well rely on assumptions about the objective function or general task to impose an inductive bias in the surrogate modeling by encoding such in the prior of the GP.

        The so-called acquisition function $a(\bm{t})$  orchestrates the search for an optimum by proposing the next iterate to query via its maximum $\bm{t}_{\rm next} \in \arg\max_{\bm{t}} a(\bm{t})$. The key principle is that $a(\bm{t})$ quantifies an exploration-exploitation trade-off of the current surrogate function. Exploration aims to improve the model by querying in uncertain regions, a practice known as active learning. With an accurate surrogate function as a secondary goal, exploitation pushes toward the current model's optimum in accordance with the main goal of the optimization. 

        A standard acquisition function choice is the expected improvement \cite{Mockus1978,jonesEfficient1998}, given by
        \begin{equation}
            a_{\mathrm{EI}}(\bm{t}) = \E\left[I(\bm{t}, f, y^+)\right]\,,
        \end{equation}
        with the improvement (by at least $\xi$) over the best objective value observed thus far $y^+$ defined as $I(\bm{t}, f, y^+) = \max\{0, f(\bm{t}) - y^+ - \xi\}$. For a Gaussian surrogate ${f\sim\mathcal{GP}}$, the expectation can be analytically computed \cite{shahriari2015taking}.
        We use the expected improvement in the 2D synthetic optimization task with standard hyperparameter choice $\xi = 0.01$.
        For the quantum sensing task, squaring the QGP breaks its Gaussianity and yields a non-central $\chi^2$ surrogate.
        The closed-form mean and variance are derived from the underlying QGP \cite{johnson_ContinuousUnivariateDistributions_1994} as 
        \begin{align}
            \mu_{\chi^2} &= \mu_{\mathcal{GP}}^2 + \sigma_{\mathcal{GP}}^2 \,,\\
            \sigma^2_{\chi^2} &= 2 \sigma_{\mathcal{GP}}^2 (2\mu_{\mathcal{GP}}^2 + \sigma_{\mathcal{GP}}^2 )%
            .
        \end{align}
        To avoid complex numerical integration over the $\chi^2$ posterior, the simpler upper confidence bound acquisition function \cite{srinivas_GaussianProcessOptimization_2010} is adequate because it is solely defined in terms of the mean and variance as
        \begin{equation}
            a_{\mathrm{UCB}}(\bm{t}) = \mu(\bm{t}) + \beta\sigma(\bm{t})\,,
        \end{equation}
        with hyperparameter $\beta = 1.96$. While motivated by the 95\% confidence level of a Gaussian distribution, it loses this exact probabilistic interpretation in this $\chi^2$ regime and instead serves as an exploration-exploitation trade-off parameter.

        In both demonstrations, we optimize the acquisition function via grid search ($\Delta t = 0.06$ and $0.006$, respectively). This relies strictly on quantum kernel evaluations, which are considered computationally cheap relative to the objective function and can be cached across iterations. 
        Practical, potentially higher-dimensional, settings may benefit from a classical numerical optimization of the acquisition function. Additionally, we do not implement an automated stopping criterion and present the BO trajectories until the global optimum is reached within grid precision.

        In the quantum sensing problem, we are interested in sensing the presence of a weak magnetic field, i.e., perturbations around $\alpha_0 = 0$, which corresponds to Eq.~\eqref{eq:bo:qs_orig_objective} as an objective function. Since the expectation value is a function of both probe state twisting parameter $t$ and magnetic field $\alpha$, one could build a joint GP surrogate over $t$ and $\alpha$ and compute the derivative in $\alpha$, evaluated at $\alpha_0$, to obtain a GP in $t$ due to the linearity of differentiation. While the observable dependence with $t$ should be clearly modeled with a $\BC_2$ matchgate kernel, the dependence with $\alpha$ could be simply captured by a local linear kernel to model the first-order information in $\alpha$ of interest for the downstream differentiation. The joint kernel would then be of product form.
        We employ a simplified approach here: In the absence of the magnetic field ($\alpha = 0$) the observations for any $t$ vanish, since the observable and probe state (Eqs.~\eqref{eq:bo:qs_observable} and ~\eqref{eq:bo:probe_state}) commute and $\braket{0 | O |0} = 0$. Therefore, the expectation value for a small magnetic field (here $\Delta \alpha = 0.05$) directly provides the partial derivative of Eq.~\eqref{eq:bo:qs_orig_objective} evaluated in $\alpha_0 = 0$ under (forward) finite difference approximation. Consequently, the QGP surrogate can simply model the expectation values for $\alpha = \Delta \alpha$.

        \subsection{Numerical simulation}\label{sec:methods:numerical_simulation_details}

            In this section, we provide details on how the numerical experiments presented in the main text were performed. Specifically, we first describe the Matrix Product State (MPS) setup used to compute all exact quantities, such as the kernel entries and the training observation values employed in the QGP experiments. 
            We then explain how, starting from these  exact values, we simulated the shot noise that would arise when the required measurements are performed on a real quantum computer. In particular, we consider two distinct shot budgets per measurement, $\shotsobs$ and $\shotsker$, which control the precision of the observation measurements and the subspace-overlap measurements, respectively.
            
            \subsubsection{MPS framework}

                Since a thorough introduction to Tensor Networks and MPS methods is beyond the scope of this work, we refer the reader to the many introductory courses on the topic, see e.g.~\cite{orus2014practical}.
                
                Let us just recall that in the MPS framework, both the accuracy and the complexity of the computations are controlled by a single parameter known as the bond-dimension and denoted as $\chi$.
                All the operations between MPSs and MPOs, such as evolution and computation of expectation values can be performed through efficient algorithms that exploit local updates of the MPS tensors, and scale at worst as $\chi^3$. 
                By setting a maximum value threshold for $\chi$, one can approximate in a controlled way the calculations at hand. However, we stress that no such truncation was used in our numerical examples.
                
                In the synthetic regression example, we prepared a dataset of simple MPS states by applying a shallow $t$-dependent circuit to the all-zero MPS state. To implement the random matchgate circuit in combination with a $\BC_1$ Pauli measurement, the dataset states can directly be measured in the $2n$ Majorana (Pauli) operators forming a basis of $\BC_1$ via MPS-MPO contraction, and finally weighted by an initially randomly drawn coefficient vector of unit length.

                For the experiments where physical models were involved, we first prepared the ground state of the model at hand as an MPS by resorting to the density matrix renormalization group algorithm~\cite{schollwock2005density}. The latter works by sequentially updating the local tensors of a randomly initialized MPS sweeping from one side to the other until convergence in measured energy is attained. Then, observations are obtained by MPS-MPO contraction with the few Pauli operators appearing in the desired observable. 
                The subspace overlaps used to construct the kernels are obtained by two different techniques, depending on the subspace at hand, $\BC_2$ or $\BC_4$. In the former case, we measured all the Majorana (Pauli) operators in the subspace, obtaining vectors of coefficients of the states in $\BC_2$, and then built the kernel as the Gram matrix of those vectors. In the case of the subspace $\BC_4$ we directly extracted the subspace overlaps. To do so, we first promote the states to their density matrices and vectorize them to be encoded as MPSs, which squares the bond-dimension $\chi \to \chi^2$. Then, we compute the subspace overlap by computing the transition amplitude of the projector onto the subspace with respect to any ordered couple of states. The projector onto $B_4$ is conveniently encoded as an MPO of bond dimension $\chi=5$ which we developed in a similar way as for the projectors used in~\cite{braccia2024computing}.\footnote{Notice that this is not exclusive to the $B_4$ subspace, we developed MPO projectors onto any $\BC_m$ with bond dimension $\chi=\min(m,2n-m)+1$.}

                All the code we used to perform these simulation were built on top of the \texttt{ITensors.jl} Julia package~\cite{fishman2022itensor}.

            \subsubsection{Shot noise simulation} 

                While the MPS framework previously described allows us to compute all the exact quantities needed to perform our QGP experiments, we also considered the effect that performing those measurements on an actual quantum device would have. Specifically, while not considering the noise that could affect the preparation of the states and their evolution, we considered the statistical noise induced by measuring quantum expectation values with a finite number of shots. Particularly, we consider two different per-measurement shot budgets, $\shotsobs$ and $\shotsker$, respectively controlling the precision of the measurements of the observations and those of the subspace overlaps.

                For the latter, namely the kernel entries, we added shot-noise to the base exact overlaps using a worst-case Gaussian approximation. Namely, we add independent Gaussian fluctuations with a variance chosen from a conservative upper bound, assuming that the overlaps are measured via a Bell-measurement scheme.
                More precisely, we assume a uniform variance for all kernel-entry estimates given by
                \begin{equation}
                    \sigma_{\kappa}^2 = \frac{\binom{2n}{m}}{\shotsker}\,,
                \end{equation}
                for $n$ the number of qubits and $\BC_m$ the subspace at hand.
                Notice that every kernel entry is assumed to be measured with the same uncertainty, regardless of its actual value. Instead of modeling the variance of each overlap estimation, we assign to all entries the same shot-noise scale coming from the upper bound given by the ratio between the subspace dimension and the number of shots.
                Let us stress that this shot-noise model provides a simple and controlled way of injecting finite-shot noise into the kernel matrix.
                Ultimately then, the kernels used in our numerical experiments read
                \begin{equation}
                    \tilde{k}_{ij} = \tilde{k}_{ji} = k_{ij} + \varepsilon_{ij} \quad \text{with} \quad  \varepsilon_{ij} \sim \NC(0, \sigma^2_{\kappa})\,.
                \end{equation}

                As far as the observations are concerned, we instead employed a more refined shot-noise model. Namely, each Pauli measurement is treated as a binomial distribution over outcomes $\pm 1$. Operationally, for an exact expectation value $\mu \in [-1,1]$ and a finite shot budget $\shotsobs$, we model the corresponding estimator as the average of $\shotsobs$ independent single-shot outcomes distributed according to that expectation value. Indeed, a Pauli observable has only two possible outcomes, and therefore the variance depends on the underlying expectation value itself. In particular, measurements whose expectation value is close to $\pm 1$ are nearly deterministic and therefore almost noiseless, whereas measurements with expectation value close to $0$ are the noisiest.
                Practically, if $y_{\rm obs}$ is an exact Pauli expectation value, we replace it with
                \begin{equation}
                    \tilde y_{\mathrm{obs}} = \frac{2}{\shotsobs} X - 1\,, \quad X \sim \mathrm{Bin}\left(\shotsobs,\frac{1+y_{\mathrm{obs}}}{2}\right)\,,
                \end{equation}
                for $X$ the number of $+1$ outcomes in the $\shotsobs$ samples from the binomial distribution $\mathrm{Bin}\left(\shotsobs,\frac{1+y_{\mathrm{obs}}}{2}\right)$. The corresponding observation variances are given by
                \begin{equation}
                    \sigma_{\rm obs}^2 = \frac{1-y_{\mathrm{obs}}^2}{\shotsobs}\,.
                \end{equation}

    \section*{Data Availability}
        All data needed to evaluate the conclusions in this work are present in the main text and the supplementary materials. The raw data are deposited in the public code repository as specified below.

    \section*{Code Availability}
        The code implementing the QGP framework and to run the numerical simulations presented in this work is publicly available on GitHub at:
        \begin{center}
        \url{https://github.com/Jonas-Jaeger/provable-scalable-quantum-gaussian-processes}
        \end{center}

    \section*{Acknowledgments}
        We thank Andrea Palermo, Antonio Anna Mele, and Martín Larocca for useful discussions.

        J.J. was supported by the U.S. Department of Energy (DOE) through a quantum computing program sponsored by the Los Alamos National Laboratory (LANL) Information Science \& Technology Institute. 
        J.J. also acknowledges support from the Natural Sciences and Engineering Research Council (NSERC) of Canada, specifically the NSERC CREATE in Quantum Computing Program (grant number 543245).  
        P.Br. and P.Be. were supported by Laboratory Directed Research and Development (LDRD) program of LANL under project number 20260043DR. P.Be. acknowledges ongoing support from DIPC. 
        M.G.A. was supported by the U.S. Department of Energy (DOE) through a quantum computing program sponsored by the Los Alamos National Laboratory (LANL) Information Science \& Technology Institute.  
         D.G.-M. acknowledges financial support from the European Research Council (ERC) via the Starting grant q-shadows (101117138) and from the Austrian Science Fund (FWF) via the SFB BeyondC (10.55776/FG7). 
        M.C. was supported by LANL ASC Beyond Moore’s Law project.
       This work was also supported by the Quantum Science Center (QSC), a National Quantum Information Science Research Center of the U.S. Department of Energy (DOE).

    \section*{Author Contributions}
        The project was conceived by D.G.-M. and M.C.
        The theoretical results were derived and proven by J.J., P.Br., and P.Be., and checked by D.G.-M. and M.C. 
        The applications were implemented by J.J., P.Br., P.Be., and M.G.A., with leading efforts in algorithm implementations by J.J. and numerical simulations by P.Br. Extensions to classification and optimization were conceived by J.J. All authors analyzed and interpreted the data. All authors contributed to writing and revising the manuscript.

    \section*{Competing Interests}
        The authors declare no financial or non-financial competing interests.

    \bibliography{references,quantum}
    \makeatletter
\close@column@grid
\makeatother
\cleardoublepage
\newpage
\onecolumngrid
\renewcommand\appendixname{Supp. Info.}
\appendix

\renewcommand\figurename{Supplemental Figure}
\setcounter{figure}{0}
\setcounter{lemma}{0}
\setcounter{theorem}{0}
\setcounter{corollary}{0}

\renewcommand{\theHlemma}{app.\arabic{lemma}}
\renewcommand{\theHtheorem}{app.\arabic{theorem}}
\renewcommand{\theHcorollary}{app.\arabic{corollary}}

\numberwithin{figure}{section}

\newcounter{supfig}
\setcounter{supfig}{\value{figure}}

\newtheorem{supdefinition}{Supplemental Definition}
\newtheorem{supproposition}{Supplemental Proposition}
\newtheorem{suptheorem}{Supplemental Theorem}
\newtheorem{suplemma}{Supplemental Lemma}
\newtheorem{supcorollary}{Supplemental Corollary}

\section*{Supplemental Information for\texorpdfstring{\\}{ }``\textit{Provable and scalable quantum Gaussian processes for quantum learning}''}

Analogous to the main text, this supplemental information is divided into two parts. First, we present the theoretical framework and proofs for the matchgate quantum Gaussian processes (QGPs). Second, we provide additional details and numerical results regarding the applications of QGPs.

\section*{Contents}

\makeatletter

\newcommand\tableofsupplements{\@starttoc{app}}

\let\oldaddcontentsline\addcontentsline
\renewcommand{\addcontentsline}[3]{%
    \oldaddcontentsline{#1}{#2}{#3}%
    \def\@tempa{#1}\def\@tempb{toc}%
    \ifx\@tempa\@tempb
        \oldaddcontentsline{app}{#2}{#3}%
    \fi
}
\makeatother
\tableofsupplements

\FloatBarrier
\clearpage

    \section{Theory preliminaries}

        \subsection{Notation}\label{app:notation}

            A concise overview of the general notation used in this work is provided in the following:
            \begin{center}
            \begin{tabularx}{\textwidth}{l@{\hspace{1em}}X}
                \toprule
                \textbf{Symbol \& Expr.} & \textbf{Definition} \\
                \midrule
                $n$ & Number of qubits \\
                $m$ & Number of Majorana operators (index of matchgate/free-fermionic module $\BC_m$) \\
                $k$ & Moment order/ordinal \\
                $X_i, Y_i, Z_i$ & Pauli operators on $i$-th qubit \\
                $\openone$ & Identity (operator)\\
                $\DC = \{(\rho_i, y_i)\}_{i=1}^N$ & Training dataset of input states and observations \\
                $N$ & Number of training points (and input quantum states for proofs)\\
                $\rho_t$ & Input quantum state at time $t$ (e.g., for inference in QGP regression) \\
                $y_t$ & (True) Observed scalar output at time $t$ \\
                $\mu_H$ & Haar measure over group $\G$ \\
                $U$ & Unknown quantum evolution (Haar-random over group $\G$) \\
                $\O$ or $\hat{O}$& Observable for measurement \\
                $\G$ & Group of transformations (e.g., $\SPIN(2n)$, $\SO(d)$, etc.) \\
                $\kappa(\rho_t, \rho_{t'})$ or $\kappa_{\G}(\rho_t, \rho_{t'})$ & Kernel function for QGP, optionally with respect to group $\G$\\
                $C_\mu$ & Majorana fermionic operator ($\mu = 1, \ldots, 2n$) \\
                $C_{\idxset}$ & Product of $m$ Majorana operators for index set $\idxset \subseteq \{1,\dots,2n\}, \lvert \idxset \rvert = m$; order implicit \\
                $C_{\bm{\nu}}$ & Product of $m$ Majorana operators ordered by index vector $\bm{\nu} \in \{1,\dots,2n\}^m $ \\
                $\mathcal{B}$ & Complete linear operator space over $n$-qubits \\
                $\BC_m$ & Module (subspace) spanned by products of $m$ Majorana operators \\
                $A_{\BC_m}$ or $A_m$ & Projection of operator $A \in \BC$ onto module $\BC_m$ \\
                $\langle A, B \rangle_{{\BC}_m}$ or $\langle A, B \rangle_m$ & Module ($\BC_m$) overlap/inner product of $A, B \in \BC$\\
                $\|A\|_{\BC_m}^2$ or $\|A\|_m^2$ & Module purity/norm of $A \in \BC$ in $\BC_m$ \\
                $h_m$ & Prefactor $( -i )^{m(m-1)/2}$ to ensure Hermiticity of $m$-Majorana products \\
                $\delta_{\alpha, \beta}$ or $\delta_{\alpha}^{\beta}$ & Kronecker delta\\
                $\delta_{\alpha_1 \dots \alpha_m}^{\beta_1, \dots \beta_m}$ & Generalized (anti-symmetric) Kronecker delta of order $2m$ \\
                $S_\ell$ & Symmetric group consisting of all permutations of $\ell$ items \\
                $\frak{B}_{\ell}(\delta)$ or $\frak{B}_{\ell}$ & Brauer algebra spanned by all $\ell$ pairings of $2\ell$ items with (loop) dimension parameter $\delta$\\
                $R$ & Matchgate (spinor) representation of the special orthogonal group\\
                $F_\delta$ & (Orthogonal) Representation of the Brauer algebra $\frak{B}_{\ell}(\delta)$ on the $\ell$-fold tensor product Hilbert space of dimension $\delta^\ell$ (analogous definition for symmetric group) \\
                \bottomrule
            \end{tabularx}
            \end{center}
            
            \noindent
            More specifically, the derivations and proofs for the matchgate kernels involve precise notational definitions when arranging the coefficients of operator module projections for the states $\rho^{(i_j)}$ and observable $\O$ as compared here:
            
            \begin{center}
                \begin{tabularx}{\textwidth}{lcX}
                    \toprule
                     Symbol
                     & Vectorized & Definition \\
                     \midrule
                     $\tstate^{(i_j)}$,\, $\tobs$
                     & \xmark & Rank-$m$, $2n$-dimensional tensors of module frame (projection) coefficients\\
                     $\tstateelem_{\bm{\alpha}}^{(i_j)}, \, \tobselem_{\bm{\mu}}$ & \xmark & Scalar module frame coefficient at multi-index $\bm{\alpha}$ and $\bm{\mu}$ for the state $\rho_{i_j}$ and observable $O$, respectively\\
                     $\tstatemat^{(i_j)}, \, \tobsmat$  & \xmark & Matrix special case for $m=2$ or matrix reshape for $m>2$ of module frame coefficient tensors\\
                     $\tstatebb^{(i_j)}, \, \tobsbb$  & \xmark & Positive semidefinite squares $\tstatebb^{(i_j)} = \tstatemat^{(i_j)\top}\tstatemat^{(i_j)}$ and $\tobsbb = O^\top O$ of matrix special case for $m=2$ or matrix reshape for $m>2$ of module frame coefficient tensors\\
                     $\tstatevec^{(i_j)}, \, \tobsvec$  & \xmark & Vector special case for $m=1$ of module frame coefficient tensors\\
                     $\kket{\tstate^{(i_j)}}$,\, $\kket{\tobs}$& \cmark & $(2n)^m$-dimensional vectorizations of the module frame coefficient tensors\\
                     $\bigotimes_{j=1}^k \tstate^{(i_j)}$,\, $\tobs^{\otimes k}$ & \xmark & $k$-fold tensor products of module frame coefficient tensors \\
                     $\bigotimes_{j=1}^{k} \kket{\tstate^{(i_j)}}$,\, $\kket{\tobs}^{\otimes k}$ & \cmark & $(2n)^{mk}$-dimensional vectorizations of the $k$-fold tensor products of module frame coefficient tensors \\
                     $\MC_{\G}^{(t)}$ & \xmark & Twirl / moment superoperator of order $t$ over the group $\G$ with respect to the Haar measure $\mu_{\rm H}$ \\
                     $M_{\G}^{(t)}$ & \cmark & Moment operator of order $t$ over the group $\G$ with respect to the Haar measure $\mu_{\rm H}$ \\ 
                     \bottomrule
                \end{tabularx}
            \end{center}
            Note that for the univariate (i.e., single input state) case, the superscript indices are omitted, matching the notation for the single observable. Moreover, while the observable operator and its matrix representation of the module frame coefficient tensor $\tobs$ share the same symbol $O$, the difference should either be clear from context or the observable operator is explicitly denoted as $\hat{O}$.

    \subsection{The group of matchgate (or free-fermionic) transformations and its module decomposition}

        We begin by briefly recalling the definitions of the group of free-fermionic, or matchgate, unitaries~\cite{valiant2001quantum,jozsa2008matchgates}.
        The set
        \begin{equation}\label{eq:app:Majorana_ops}
            \begin{alignedat}{5}
                C_1      &= X_1 \,,           &\qquad C_3      &= Z_1 X_2\,,            &\qquad\ldots\qquad&& C_{2n-1} &= Z_1 \cdots Z_{n-1} X_n \,,\\
                C_2      &= Y_1 \,,           &\qquad C_4      &= Z_1 Y_2 \,,           &\qquad\ldots\qquad && C_{2n}   &= Z_1 \cdots Z_{n-1} Y_n\,,
                \end{alignedat}
        \end{equation}
        defines the $2n$ Majorana fermionic operators acting on the Hilbert space $\HC=(\mathbb{C}^2)^{\otimes n}$ of $n$ qubits (or $n$ fermionic modes under the Jordan-Wigner transformation).
        Majorana operators obey the anti-commutation relations $\left\lbrace C_\mu, C_\nu \right\rbrace = 2 \delta_{\mu\nu}$, yielding a Clifford algebra.
        
        The group of free-fermionic unitaries is a representation of the Lie group $\SPIN(2n) = e^{\mathfrak{g}}$, with associated \ac{dla} spanned by the products of two distinct Majorana operators,
        \begin{equation}
            \mathfrak{g} = \operatorname{span}_{\mathbb{R}}\left\lbrace C_\mu C_\nu  \right\rbrace_{1 \leq \mu < \nu \leq 2n} \cong \mathfrak{so}(2n)\,.
        \end{equation}
        This algebra is isomorphic to the special orthogonal Lie algebra $\mathfrak{so}(2n)$, and  it is generated through nested commutation by the set $\lbrace iZ_j\rbrace_{j = 1, \ldots, n}  \cup \lbrace iX_jX_{j+1} \rbrace_{j = 1, \ldots, n-1}$ (see e.g., ~\cite{diaz2023showcasing}), which corresponds to the typical generators for single- and two-qubit Pauli rotation gates in the context of matchgate circuits. The adjoint action of these circuits is a representation of the special orthogonal group $\SO(2n)$ (the matchgate representation), via the isomorphism
        \begin{equation}\label{eq:spinor_to_SO}
            U C_\mu U^\dagger = \sum_\nu q_{\mu \nu} C_\nu\,,
        \end{equation}
        with $q_{\mu\nu}$ the matrix elements of a matrix $Q \in \SO(2n)$ acting on the (real) vector space spanned by the $2n$ Majorana operators in Eq.~\eqref{eq:app:Majorana_ops}.

        Products of Majorana operators form an orthogonal basis with respect to the \ac{hs} inner product for the space of linear operators $\mathcal{B}$ acting on $\HC$. In particular, $\mathcal{B}$ decomposes into irreducible representations as~\cite{braccia2026commutant}
        \begin{equation} \label{eq:app:modules}
            \mathcal{B} = \bigoplus_{m = 0}^{2n} \mathcal{B}_m,
        \end{equation}
        where each of the $2n + 1$ subspaces $\mathcal{B}_m$ is spanned by the products of $m$ (distinct) Majorana operators, and $\mathcal{B}_0 = \operatorname{span}_{\mathbb{C}} \{I\}$. Hence, they have dimension $\dim(\BC_m)=\binom{2n}{m}$. 
        Importantly, for any $M_m \in \mathcal{B}_m$, one has that $VM_mV^\dagger \in \mathcal{B}_m$ for all $V\in \SPIN(2n)$ \cite{diaz2023showcasing}. Note that while the main text refers to these structures simply as subspaces, this supplemental material adopts the more rigorous algebraic term \emph{modules}.

        \subsection{Elementary properties of the modules\texorpdfstring{ $\BC_m$}{}}

        We consider two different spanning sets for each module $\mathcal{B}_m$, of which one is an orthonormal basis and one a frame. The technical difference is that the former is a set of linearly independent operators with respect to the \ac{hs} inner product, while the latter permits linear dependences. Even though both span the module subspace and are composed of products of $m$ distinct Majorana operators, they possess different properties\footnote{Note that for a single Majorana (module $\mathcal{B}_1$), both spanning sets are identical.} that can be beneficial depending on the specific use case.
        \medskip
        
        \noindent\textbf{Module basis.}
            We define the module basis as the following set of operators,
            \begin{equation}\label{supp:eq:module_basis}
                \left\lbrace 
                C_{\idxset }
                \right\rbrace_{\idxset }
                =
                \left\lbrace 
                \frac{1}{\sqrt{d}}
                h_m C_{\nu_1}  C_{\nu_2} \cdots  C_{\nu_m}
                \;\middle|\;
                \nu_1 < \nu_2 < \dots < \nu_m \in \VC \subseteq [2n] 
                \text{ with } \lvert \VC \rvert = m
                \right\rbrace\,,
            \end{equation}
            with $h_m = (-i)^{\frac{m\left( m - 1 \right)}{2}}$ and $d = 2^n$.
            This basis is orthonormal with respect to the \ac{hs} inner product, and is guaranteed to be Hermitian by the choice of the pre-factor $h_m$. This basis is minimal since its size matches the module dimension, $\mathrm{dim}\left(\mathcal{B}_m \right) = \binom{2n}{m}$. This is a consequence of the fact that we are only considering ordered products of Majorana operators. Indeed, permuting the Majorana operators within a product 
            can at most result in a minus sign due to the anti-commutation relations. 
            As the order is irrelevant, the basis elements can be identified by index sets $\VC \subseteq [2n] = \left\lbrace 1, \ldots, 2n \right\rbrace$ of size $\lvert \VC \rvert = m$, via the short-hand notation $C_{\idxset }$.
            \medskip
            
        \noindent\textbf{Module frame.}
            We define the module frame as the following multiset of operators,
            \begin{equation}\label{supp:eq:module_frame}
                \left\lbrace 
                C_{\bm{\nu}}
                \right\rbrace_{\bm{\nu}}
                =
                \left\lbrace 
                \begin{array}{ll}
                    \dfrac{1}{\sqrt{d \cdot m!}}
                    h_m 
                    C_{\nu_1}  C_{\nu_2} \cdots  C_{\nu_m} & \text{if all $\nu_j$ distinct}\\[1.25em]
                    0  & \mathrm{otherwise}
                \end{array}
                \;\middle|\;
                (\nu_1, \nu_2, \ldots, \nu_m) = \bm{\nu} \in [2n]^{m}
                \right\rbrace\,.
            \end{equation}
            While $h_m = (-i)^{\frac{m\left( m - 1 \right)}{2}}$ and $d = 2^n$ match the pre-factors in the basis definition above, an additional factor of ${1}/{\sqrt{m!}}$ appears in the definition of the module frame elements.             Again, this frame is guaranteed to be Hermitian by the choice of the pre-factor $h_m$. 
            In contrast to the basis, however, this frame is neither completely orthogonal nor normalized with respect to the \ac{hs} inner product and norm, and is technically to be understood as a multiset since duplicates may arise.

            The size of the frame is $(2n)^m$ since it contains an operator for every possible combination of $m$ Majorana operators. 
            These combinations are identified by index vectors $\bm{\nu} \in [2n]^{m}$ via the short-hand notation $C_{\bm{\nu}}$.
            The elements of the frame can be arranged in a rank-$m$ tensor of dimension $2n$ in each index, and so can the (scalar) projections of an operator onto these frame elements with respect to the \ac{hs} inner product. Crucially, these tensors are fully anti-symmetric in all their indices. 

            Because the module frame is overcomplete, a decomposition of an operator as a linear combination of the frame elements is not unique.
            As a convention, the following constraints are imposed on the coefficients of such decompositions: coefficients of frame elements involving products of the same Majorana operators must have equal magnitude, and coefficients associated with zero operators in the module frame are set to zero. These otherwise non-unique decompositions then yield completely anti-symmetric rank-$m$ tensors.

        \medskip
        
        Next, we recall the definitions of projections onto the modules, module overlaps (i.e., inner products in the module) and module purities (i.e., norms in the module). These definitions are formulated with respect to both the module basis and frame. 
        
        \noindent\textbf{Module projection.}
            Consider the space of linear operators $\mathcal{B}$ acting on $\mathbb{C}^{d}$ with $d = 2^n$, which can be decomposed into the modules $\mathcal{B}_m$ as in Eq.~\eqref{eq:app:modules}.
            Given an operator $A \in \mathcal{B}$, the orthogonal projection onto the modules is denoted as $A_{\mathcal{B}_m}$, or more concisely, $A_m$.
            The projection onto the module expanded using the module basis or module frame is given by
            \begin{equation}\label{supp:eq:projection_via_basis}
                A_{\mathcal{B}_m} 
                = 
                \sum_{\substack{\VC \subseteq [2n] \\\lvert \VC \rvert = m}} 
                a_{\idxset } C_{\idxset }
                \qquad
                \text{with}
                \qquad
                a_{\idxset }
                = 
                \Braket{C_{\idxset }, A} 
                =
                \tr\left[ C_{\idxset } A \right]
                \,,
            \end{equation}
            and
        \begin{equation}\label{supp:eq:projection_via_frame}
                A_{\mathcal{B}_m} 
                = \sum_{\bm{\nu}\in [2n]^m}
                a_{\bm{\nu}} C_{\bm{\nu}}
                \qquad
                \text{with}
                \qquad
                a_{\bm{\nu}} 
                = 
                \Braket{C_{\bm{\nu}}, A} 
                =
                \tr\left[ C_{\bm{\nu}} A \right]
                \,,
            \end{equation}
            respectively,
            where $\Braket{\cdot, \cdot}$ denotes the \ac{hs} inner product. Recall that $C_{\idxset }$ with the index set $\VC$ relates to the product of Majorana operators in the unique ascending order $\nu_1 < \cdots < \nu_m \in \VC$. In contrast, the index vector $\bm{\nu}$ determines the order of the Majorana operators explicitly.
            Note that obtaining the coefficients for the frame as $a_{\bm{\nu}} = \tr\left[ C_{\bm{\nu}} A \right]$ implicitly enforces that all coefficients corresponding to index vectors $\bm{\nu}$ with non-distinct entries are zero.
            Importantly, the sums in Eq.~\eqref{supp:eq:projection_via_basis} and Eq.~\eqref{supp:eq:projection_via_frame} run over $\binom{2n}{m}$ and $(2n)^m$ terms, respectively. Note as well that the coefficients with respect to both module basis and frame are real if and only if the decomposed or projected operator is Hermitian.
            
            The coefficients obtained by projections via the module basis relate to the coefficients of the module frame as
            \begin{equation}
                a_{\bm{\nu}} 
                = 
                \begin{cases}
                    \frac{1}{\sqrt{m!}}\,\mathrm{sgn}(\pi) \,a_{\left\lbrace \nu_1, \nu_2, \ldots, \nu_m \right\rbrace}    
                    & \text{if } \nu_1, \nu_2, \ldots, \nu_m \text{ distinct}\\
                    0 & \text{otherwise}\,,
                \end{cases}
            \end{equation}
            where $\mathrm{sgn}(\pi) \in \lbrace -1, 1 \rbrace$ denotes the sign of the permutation $\pi$ of the index vector $\bm{\nu} = \pi(\bm{\nu}')$,  with respect to the reference index vector $\bm{\nu}'$ with the same but ordered components $\nu_1' < \nu_2' < \cdots < \nu_m'$.
            Equivalently, the generalized (anti-symmetric) Kronecker delta $\delta$ yields a concise definition by combining both cases of distinct and non-distinct indices as
            \begin{equation} \label{eq:app:levi-civita}
                a_{\bm{\nu}} 
                = 
                \frac{1}{\sqrt{m!}} 
                a_{\left\lbrace \nu_{1}, \nu_{2}, \ldots, \nu_{m} \right\rbrace}   
                \delta_{\nu_1\phantom{'} \nu_2 \ldots \nu_m}^{\nu'_1 \nu'_2\dots\nu'_m} 
                \,,
            \end{equation}
            where $\bm{\nu}'$ is the ordered reference $\nu_{1}' < \nu_2' < \cdots < \nu_{{m}}'$.
            The generalized Kronecker delta is a rank-$m$ tensor and evaluates its components $\delta_{\alpha_1 \dots \alpha_m}^{\beta_1 \cdots \beta_m}$ as $1$ ($-1$) if the index values are distinct integers and $\alpha_1 \dots \alpha_m$ is an even (odd) permutation of $\beta_1 \cdots \beta_m$, and zero otherwise.
            Formally, generalized Kronecker delta can be defined in terms of the symmetric group $S_m$ of degree $m$ or an $m \times m$ determinant and reads as
            \begin{equation}\label{eq:generalized_antisym_kronecker}
                \delta_{\alpha_1 \dots \alpha_m}^{\beta_1 \cdots \beta_m} 
                = 
                \sum_{\pi \in S_m} \operatorname{sign}(\pi) \delta_{\alpha_{\sigma(1)}}^{\beta_1} \cdots \delta_{\alpha_{\sigma(m)}}^{\beta_m}
                = 
                \begin{vmatrix}
                \delta^{\beta_1}_{\alpha_1} & \dots & \delta^{\beta_1}_{\alpha_m} \\
                \vdots & \ddots & \vdots \\
                \delta^{\beta_m}_{\alpha_1} & \dots & \delta^{\beta_m}_{\alpha_m}
                \end{vmatrix}
                \,,
            \end{equation}
            reducing to the standard (two-index) Kronecker delta, which is $1$ if $\alpha = \beta$ and otherwise zero, denoted as $\delta_{\alpha}^{\beta} = \delta_{\alpha, \beta}$.
            
            Vice versa, the module frame coefficients determine the coefficients of projections in the module basis as 
            \begin{equation}
                a_{\idxset } = \frac{1}{\sqrt{m!}} \sum_{\pi \in S_m} \mathrm{sgn}(\pi) a_{\pi(\VC)}\,,
            \end{equation}
            where $\pi \in S_m$ denotes the possible permutations (on $m$ indices) such that $\pi(\VC)$ provides an index vector of the index set $\VC \subseteq \lbrace 1, \ldots, 2n\rbrace$ ($\lvert \VC \rvert = m$) in a certain fixed order. In common special cases, e.g., when the module frame coefficients were obtained by projecting into the module using all frame operators, the single frame coefficient with the index vector $\bm{\nu}'$ corresponding to ordered indices $\nu_1' < \nu_2' < \cdots < \nu_m' \in \VC$ determines the basis coefficient
            \begin{equation}
                a_{\idxset } = \sqrt{m!} \, a_{\bm{\nu}'} 
                \,.
            \end{equation}

        \medskip
            
        \noindent\textbf{Module overlap (module inner product).} The module overlap between two operators $A, B\in\BC$ is defined as the HS inner products of their module projections, i.e.,
            \begin{equation}\label{supp:eq:module_overlap}
                \Braket{A, B}_{\mathcal{B}_m} 
                =
                \Braket{A_{m}, B_{m}}
                =
                \tr\left[ A_{m}^\dagger B_{m} \right]\,,
            \end{equation}
            where $\Braket{\cdot, \cdot}$ denotes the \ac{hs} inner product. 
            Utilizing the decompositions of the projected operators into the orthonormal module basis and frame, given in Eqs.~\eqref{supp:eq:projection_via_basis} and \eqref{supp:eq:projection_via_frame} respectively, the module overlaps become
            \begin{equation}\label{supp:eq:module_overlap_coeffs}
                \Braket{A, B}_{\mathcal{B}_m} 
                =
                \Braket{\bm{a}, \bm{b}}
                =
                \sum_{\idxset } a_{\idxset}^* b_{\idxset}
                = 
                \sum_{\bm{\nu}} a_{\bm{\nu}}^* b_{\bm{\nu}}\,,
            \end{equation}
            where $\Braket{\cdot, \cdot}$ denotes the standard scalar product on complex vectors.            Notice that the sum $\sum_{\vec{\nu}}\vec{a}_{\vec{\nu}}^* b_{\vec{\nu}}$ corresponds to the full contraction of the coefficient tensors $\vec{a}_{\vec{\nu}}$ and $\vec{b}_{\vec{\nu}}$.

            Equivalently, the module overlap can be expressed as\begin{equation}\label{supp:eq:module_overlap_basis_op}
                \Braket{A, B}_{\mathcal{B}_m} 
                =
                \sum_{\idxset }\tr\left[C_{\idxset }^{\otimes 2} \left(A^\dagger \otimes B\right) \right]
                = 
                \tr\left[\left(\sum_{\idxset }C_{\idxset }^{\otimes 2}\right) \left(A^\dagger \otimes B\right) \right]
                = 
                \Braket{\sum_{\idxset }C_{\idxset }^{\otimes 2},\; A^\dagger \otimes B}_{\!\mathrm{HS}}
                .
            \end{equation}
            Analogously, in terms of the module frame operators,
            \begin{equation}\label{supp:eq:module_overlap_frame_op}
                \Braket{A, B}_{\mathcal{B}_m} 
                =\sum_{\bm{\nu}}\tr\left[C_{\bm{\nu}}^{\otimes 2} \left(A^\dagger \otimes B\right) \right]
                = 
                \tr\left[\left(\sum_{\bm{\nu}}C_{\bm{\nu}}^{\otimes 2}\right) \left(A^\dagger \otimes B\right) \right]
                = 
                \Braket{\sum_{\bm{\nu}}C_{\bm{\nu}}^{\otimes 2},\; A^\dagger \otimes B}_{\!\mathrm{HS}}
                .
            \end{equation}
            Note that, despite their direct connection to state overlaps, module overlaps can attain negative values. For instance, consider the two single-qubit states $\rho_{\pm} = \tfrac{1}{2}\left(I \pm X\right)$, which are orthogonal. 
            However, the projection onto $\BC_1$ does not preserve orthogonality and even results in a negative module overlap, as $\Braket{\rho_+, \rho_-}_{\BC_1} = -\tfrac{1}{2}$. 

        \medskip
            
        \noindent\textbf{Module purities (module norm).} The module purity of an operator $A\in\BC$ is defined as the squared HS inner norm of its module projection, i.e.,
            \begin{equation}\label{supp:eq:module_purity}
                \left\lVert A \right\rVert_{\mathcal{B}_m}^2 
                =
                \left\lVert A_{m} \right\rVert^2 
                =
                \Braket{A, A}_{\mathcal{B}_m}
                =
                \tr\left[ A_{m}^\dagger A_{m} \right]\,,
            \end{equation}
            where $\left\lVert \cdot \right\rVert $ denotes the \ac{hs} norm. 
            Alternative expressions for the module purity follow immediately from Eqs.~\eqref{supp:eq:module_overlap_coeffs}, \eqref{supp:eq:module_overlap_basis_op} and \eqref{supp:eq:module_overlap_frame_op}. Notably, since
            \begin{equation}\label{supp:eq:module_purities_coeffs}
                \left\lVert A \right\rVert_{\mathcal{B}_m}^2 
                =
                \lVert \bm{a} \lVert^2 
                =
                \sum_{\idxset } \lvert a_{\idxset} \rvert^2
                = 
                \sum_{\bm{\nu}} \lvert a_{\bm{\nu}} \rvert^2,
            \end{equation}
            the module frame is a Parseval frame, i.e., the norm of the projection coefficients in the module frame matches the norm of the operator in the module subspace. This implies that the module purity is upper bounded for quantum states as 
            $\left\lVert \rho \right\rVert_{\mathcal{B}_m}^2\leq 1$.

            Furthermore, we can apply the Cauchy-Schwarz inequality to obtain an upper bound for the module overlap,
            \begin{equation}
                \left|\Braket{A, B}_{\mathcal{B}_m} \right|\leq \left\lVert A \right\rVert_{\mathcal{B}_m}^2 \left\lVert B \right\rVert_{\mathcal{B}_m}^2  \,.
            \end{equation}
            In particular, for quantum states $\rho,\rho'$, the absolute value of their module overlap is bounded as $\Braket{\rho, \rho'}_{\mathcal{B}_m} \leq d_m/d$.

            \subsection{The Haar measure and twirl superoperator}

             The normalized Haar measure $\mu_{\mathrm{H}}$ on a semisimple Lie group $\G$       is uniquely defined by its left-and-right-invariance property,
        \begin{equation}\label{app:eq:left_right_invariance_haar}
            \int_{\G} f(U) d\mu_{\mathrm{H}}(U) 
            = \int_{\G} f(V U) d\mu_{\mathrm{H}}(U) 
            = \int_{\G} f(U V) d\mu_{\mathrm{H}}(U)\,, 
        \end{equation}
        for any integrable function $f(U)$ and $V \in \G$.
        For the sake of brevity, we usually omit the label $\mu_{\mathrm{H}}$.

            The $t$-th moment superoperator $\MC_{\G}^{(t)} (\cdot)$ induced by the adjoint action under the Haar measure, referred to as the twirl, is given by~\cite{mele_IntroductionHaarMeasure_2024}
            \begin{equation}
                \MC_{\G}^{(t)} (\cdot) = \E_{U \sim_{\mu_{\rm H}} \G} \left[ U^{\otimes t} (\cdot) (U\ad)^{\otimes t} \right]
                .
            \end{equation}
            This linear superoperator is a map between linear operators on $\HC^{\otimes t}$, and can be vectorized to yield the (vectorized) moment operator $M_{\G}^{(t)}$, as
             \begin{equation}
                M_{\G}^{(t)}
                =
                \E_{U\sim_{\mu_{\rm H}} \G}\!\left[ U^{\otimes t} \otimes (U^*)^{\otimes t} \right] \label{eq:app:moment-operator}
                .
            \end{equation}
            Th moment operator $M_{\G}^{(t)}$ acts on vectors of dimension ${2^{2nt}}$, which are the vectorizations of linear operators acting on $\mathcal{H}^{\otimes t}$.

\clearpage
        \section{Theoretical framework for matchgate Gaussian processes}\label{sec:supp:theoretical_framework}

        In this section, we develop the framework for analyzing the emergence of  Gaussian processes under random matchgate unitaries. In particular, we provide the proof for Theorem~\ref{thm:supp:GP}.

        We  begin by introducing the random variables\footnote{Note the common convention of using upper-case letters to denote random variables, which should not be confused with upper-case letters used to denote operators or matrices. The random variables studied here have scalar-valued realizations.} 
        \begin{equation} \label{eq:app:random_variables}
            X_{i} = \tr\left[U \rho_i U^\dagger \O\right]\,,
        \end{equation}
        with $U$ sampled from the Haar measure on $\SPIN(2n)$~\cite{braccia2025optimal}.
        Our goal is to analyze the distributions of          the expectation values in Eq.~\eqref{eq:app:random_variables} over quantum states $\{\rho_i\}_i$ that have been evolved under some free-fermionic transformation $U$, and determine conditions for which they converge to Gaussian processes under the prior that $U$ had been uniformly sampled from $\SPIN(2n)$, i.e. according to the latter's Haar measure.
        Hence, we are interested in computing the $k$-th moments
        \begin{equation}\label{eq:app:k_moment}
                \E_{U\sim \SPIN(2n)}\left[ X_{i_1} X_{i_2} \cdots X_{i_k} \right] 
                = \E_{U\sim \SPIN(2n)}\left[ \prod_{j=1}^k \tr\left[U \rho_{i_j} U^\dagger \O\right] \right]\,,
            \end{equation}
             corresponding to the index multiset $\lbrace i_j \rbrace_{j=1}^k$ identifying the random variables $X_{i_j}$ (and states $\rho_{i_j}$). We will consider both the univariate (single input state) and multivariate (multiple input states) cases.
           For univariate distributions, where a single random variable $X$ is analyzed, the moment of order $k$ reduces to
            $\E\left[X^k\right]$. For multivariate distributions of $N$ random variables or, equivalently, $N$-dimensional random vectors, 
            we will collect the random variables in a length-$N$ vector  $\left( X_1, X_2, \dots , X_N \right)^\top$.

        We restrict our derivations to an observable $\O$ from a specific module $\BC_m$ ($m \neq 0, 2n$) , i.e., a linear combination of products of $m$ distinct Majorana operators. Consequently, the primary mathematical objects in the derivation of this theoretical framework are the module frame projection coefficients for the states $\rho_i$ and the observable $O$ (as defined in Eq.~\eqref{supp:eq:projection_via_frame}). These coefficients are organized into rank-$m$ tensors, denoted by $\tstate^{(i)}$ and $\tobs$, respectively. Each of the $m$ tensor indices, collected in the index vector $\bm{\nu}$, corresponds to a single Majorana index within the frame operator product $C_{\bm{\nu}}$. For the cases of tensor rank $m=2$ and $m=1$, these coefficient tensors simplify to matrices (denoted $\tstatemat^{(i)}$ and $\tobsmat$) and vectors (denoted $\tstatevec^{(i)}$ and $\tobsvec$), respectively. Further details regarding notation of tensors and their form variants are provided in Supp.~Info.~\ref{app:notation}.
        
        The derivation is split into four parts:
        \begin{enumerate}
            \item We derive general expressions for the moments in Eq.~\eqref{eq:app:k_moment}, connecting them to the standard representation of the special orthogonal group $\SO(2n)$.
           Importantly, we either obtain vanishing moments or recover the (vectorized) $mk/2$-th moment operator with respect to $\SO(2n)$ \textrightarrow{}~Supp.~Info.~\ref{app:theory:1}.

            \item We employ asymptotic Weingarten calculus \cite{mele_IntroductionHaarMeasure_2024,garcia2023deep,garcia2024architectures} to analyze the moment operator in terms of Brauer algebras and in the asymptotic large-$n$ limit \textrightarrow{}~Supp.~Info.~\ref{app:theory:2}.

            \item We establish formal conditions on the observable $\O$ and input states $\rho_i$ under which the random variables $X_i$ converge to a Gaussian process (Theorem~\ref{thm:supp:GP}) or not (Theorem~\ref{thm:supp:no_GP}) \textrightarrow{}~Supp.~Info.~\ref{app:theory:3}.

            \item We provide equivalent and sufficient formulations of the conditions established in the previous step, as well as general implications and simplifications of these conditions for certain modules $\BC_m$ \textrightarrow{}~Supp.~Info.~\ref{app:theory:4}.
        \end{enumerate}
        Supplemental Info.~\ref{app:examples_families_GPs} identifies concrete families of states and observables that obey or do not obey Gaussian processes under Haar-random matchgate transformations.

        \subsection{Vectorization reveals the moment operator of the special orthogonal group}\label{app:theory:1}

            We will now connect the moments in Eq.~\eqref{eq:app:k_moment} to the standard representation of $\SO(2n)$, as stated in the following useful lemma. This will allow us to leverage well-known Weingarten-calculus results from the special orthogonal group in the next subsections.

            \begin{lemma}
                Consider an 
                observable $\O \in \mathcal{B}_m$ and a set of input states $\left\lbrace \rho_i \right\rbrace_{i=1}^N$. The $k$-th moments of the joint distribution of random variables $X_i= \tr[U \rho_i U^\dagger \O]$ are given by 
\begin{equation}\label{eq:multi_majo_multi_moment_op_vectorizations}
\E_{U\sim \SPIN(2n)}\left[ X_{i_1} X_{i_2} \cdots X_{i_k} \right]  =  \langle\!\bra{\tobs}^{\otimes k} \mathbb{E}_{Q\sim \SO(2n)}\left[Q^{\otimes mk}\right]\bigotimes_{j=1}^{k} \ket{\tstate^{(i_j)}}\!\rangle \,,
                \end{equation}
                where $\ket{\tobs}\!\rangle=\sum_{\bm{\mu}}  \tobselem_{\bm{\mu}} \Ket{\bm{\mu}}$ and $\ket{\tstate^{(i_j)}}\!\rangle=\sum_{\bm{\alpha}} \tstateelem^{(i_j)}_{\bm{\alpha}}  \ket{\bm{\alpha}}$ are the vectorization (in $\mathbb{R}^{2nm}$) of the tensors $\tobs$ and $\tstate^{(i_j)}$ containing the module frame projection coefficients $\tobselem_{\vec{\mu}}$ and $\tstateelem^{(i_j)}_{\vec{\alpha}}$, respectively.
            \end{lemma}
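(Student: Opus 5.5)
The plan is to reduce each factor $X_{i_j}$ to a contraction of two coefficient tensors, push the matchgate action onto the state tensor via Eq.~\eqref{eq:spinor_to_SO}, and then pull the Haar average inside the $k$-fold product.

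\textbf{Step 1: single factor.} Since $\O\in\BC_m$, we have $\tr[U\rho U^\dagger \O]=\langle U\rho U^\dagger,\O\rangle_{\BC_m}$ (both operators are Hermitian). Moreover, $U\rho U^\dagger$ restricted to $\BC_m$ equals $U\rho_m U^\dagger$, because conjugation by $U$ preserves each module and is HS-unitary, so it commutes with the orthogonal projections. By the Parseval property of the module frame, Eq.~\eqref{supp:eq:module_overlap_coeffs}, we get
\[
X_i=\sum_{\bm{\nu}} o_{\bm{\nu}}\,\tr\!\left[C_{\bm{\nu}}\,U\rho_i U^\dagger\right],
\]
with all coefficients real.

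\textbf{Step 2: action of $U$ on frame coefficients.} Write $\tr[C_{\bm{\nu}}U\rho U^\dagger]=\tr[U^\dagger C_{\bm{\nu}}U\,\rho]$. Applying Eq.~\eqref{eq:spinor_to_SO} to $U^\dagger$ factor by factor gives
\[
U^\dagger C_{\nu_1}\cdots C_{\nu_m}U=\sum_{\bm{\alpha}} q'_{\nu_1\alpha_1}\cdots q'_{\nu_m\alpha_m}\,C_{\alpha_1}\cdots C_{\alpha_m},
\]
where $q'$ are the entries of the orthogonal matrix $Q'$ associated with $U^\dagger$ (so $Q'=Q^\top$). The prefactors $h_m/\sqrt{d\,m!}$ match on both sides. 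The sum also runs over index vectors with repeated entries, so we must check these do not contribute. For a fixed $\bm{\nu}$ with distinct entries, the rows $\nu_1,\dots,\nu_m$ of $Q'$ are orthonormal. The terms with $\alpha_a=\alpha_b$ for some $a\neq b$ collapse, using $C_\alpha^2=\openone$ and anticommutation, to lower-degree products. Summing over the repeated index produces $\sum_\alpha q'_{\nu_a\alpha}q'_{\nu_b\alpha}=\delta_{\nu_a\nu_b}=0$, so these terms vanish and only the frame elements survive. For $\bm{\nu}$ with a repeated entry, both sides are zero, since the frame element is defined to be zero and the coefficient $o_{\bm{\nu}}$ vanishes anyway.

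This repeated-index cancellation is the only delicate point. The cleanest way to handle it is to use the generalized Kronecker delta of Eq.~\eqref{eq:generalized_antisym_kronecker}, which expresses everything through the antisymmetric projection, together with the fact that $Q'^{\otimes m}$ commutes with that projection. The upshot is
\[
\tr[C_{\bm{\nu}}U\rho U^\dagger]=\sum_{\bm{\alpha}}(Q'^{\otimes m})_{\bm{\nu},\bm{\alpha}}\,\rho_{\bm{\alpha}},
\]
so that
\[
X_i=\langle\!\langle \tobs|\,Q'^{\otimes m}\,|\tstate^{(i)}\rangle\!\rangle .
\]

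\textbf{Step 3: products and averaging.} For a product of $k$ factors, multiply the $k$ scalars to obtain
\[
\prod_{j=1}^k X_{i_j}=\langle\!\langle \tobs|^{\otimes k}\,Q'^{\otimes mk}\,\bigotimes_{j=1}^k|\tstate^{(i_j)}\rangle\!\rangle .
\]
It remains to average and pull the expectation inside by linearity. The map $U\mapsto Q'$ is a continuous surjective homomorphism $\SPIN(2n)\to\SO(2n)$ composed with inversion. The pushforward of Haar measure under a surjective homomorphism of compact groups is the Haar measure, and Haar measure is inversion-invariant. Hence $\E_U[Q'^{\otimes mk}]=\E_{Q\sim\SO(2n)}[Q^{\otimes mk}]$, which yields Eq.~\eqref{eq:multi_majo_multi_moment_op_vectorizations}.

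If the matchgate representation is a projective or double-cover representation, the kernel of the homomorphism is $\{\pm\openone\}$, which acts trivially under conjugation, so it does not affect the argument.
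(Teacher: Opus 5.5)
Your proof is correct and takes the same route as the paper: expand $O$ in the module frame, transport each Majorana product factor-wise with Eq.~\eqref{eq:spinor_to_SO}, recognize the products of $q$'s as entries of $Q^{\otimes m}$, tensor $k$ copies, and pull the Haar average inside. You are more careful than the paper on three points it passes over. You conjugate by $U^\dagger$ rather than $U$. You cancel the repeated-index terms; the antisymmetrizer commuting with $Q^{\otimes m}$ is the right way to make this rigorous. You also justify that Haar measure on $\SPIN(2n)$ pushes forward to Haar measure on $\SO(2n)$.
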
 
            \begin{proof}
                Employing the frame decomposition of the observable 
                $\O = \O_m = \sum_{\bm{\mu}} \tobselem_{\bm{\mu}} C_{\bm{\mu}}$ in Eq.~\eqref{eq:app:k_moment}
                yields
                \begin{equation}
                    \E_{U\sim \SPIN(2n)}\left[ X_{i_1} X_{i_2} \cdots X_{i_k} \right]
                    = 
                    \E_{U\sim \SPIN(2n)}\left[ \prod_{j=1}^k \sum_{\bm{\mu} \in [2n]^m}
                    \tobselem_{\bm{\mu}} 
                    \tr\left[\rho_{i_j} U 
                    C_{\bm{\mu}} 
                    U^\dagger \right] \right]
                    .
                \end{equation}
                The matchgate representation, as defined in Eq.~\eqref{eq:spinor_to_SO}, now directly relates the above expression to the moments over the special orthogonal group $\SO(2n)$, as follows
                \begin{equation}
                    UC_{\mu_1}C_{\mu_2}\cdots C_{\mu_m}U^\dagger = UC_{\mu_1}U^\dagger UC_{\mu_2}^\dagger U \cdots C_{\mu_m}U^\dagger = \left(\sum_{\alpha_1=1}^{2n} q_{\mu_1 \alpha_1} C_{\alpha_1}\right) \left(\sum_{\alpha_2=1}^{2n} q_{\mu_2 \alpha_2} C_{\alpha_2}\right)\cdots \left(\sum_{\alpha_m=1}^{2n} q_{\mu_m \alpha_m} C_{\alpha_m}\right) \,,
                \end{equation}
                which implies             \begin{equation}\label{eq:multi_moment_mulit_majo_spinor_to_SO}
                    \E_{U\sim \SPIN(2n)}\left[ X_{i_1} X_{i_2} \cdots X_{i_k} \right]
                    = 
                    \E_{Q\sim \SO(2n)}\left[ \prod_{j=1}^k \left(\sum_{
                    \bm{\mu}, \bm{\alpha} \in [2n]^m} \tobselem_{\bm{\mu}} \, q_{\mu_1 \alpha_1} \cdots q_{\mu_m \alpha_m} \tr\left[\rho_{i_j} C_{\bm{\alpha}} \right] \right) \right]\,.                    
                \end{equation}
                The expectation values $\tr\left[\rho_{i_j} C_{\bm{\alpha}} \right]$
                are the coefficients in the projection of the state $\rho_{i_j}$ onto the module $\mathcal{B}_m$ via the module frame (see  Eq.~\eqref{supp:eq:projection_via_frame}), which we denote  $\tstateelem_{\bm{\alpha}}^{(i_j)} = \tr\left[\rho_{i_j} C_{\bm{\alpha}} \right]$. We  collect them  in a (real) anti-symmetric rank-$m$ tensor $\tstate^{(i_j)}$.                 Analogously, the coefficients $\tobselem_{\bm{\mu}}$ of the observable $\O$ are collected in another (real) anti-symmetric rank-$m$ tensor $\tobs$. 
      
                Next, we notice that $q_{\mu_l \alpha_{l}}=\Braket{\mu_l | Q |\alpha_{l}} $, where $|\mu_l\rangle$ and $|\alpha_{l}\rangle$ are canonical basis vectors of $\mathbb{R}^{2n}$ on which the fundamental representation of $\SO(2n)$ acts\footnote{It is important to not confuse the basis vectors $\ket{\mu_l}$ and $\ket{\alpha_l}$ corresponding to $\SO(2n)$ with computational (qubit) basis states, as the basis is not binary $\{\ket{0},\ket{1}\}$ but composed of $2n$ basis elements $\{C_1, \ldots, C_{2n}\}$.}. This allows us to write
                \begin{align}
                    q_{\mu_1 \alpha_1} \cdots q_{\mu_m \alpha_m} 
                    &= \prod_{l=1}^m q_{\mu_l \alpha_{l}} 
                    = \prod_{l=1}^m \Braket{\mu_l | Q |\alpha_l}\nonumber \\
                    &= \Tr\left[\left( \bigotimes_{l=1}^m \bra{\mu_l} \right)  Q^{\otimes m} \left( \bigotimes_{l=1}^m \ket{\alpha_l} \right)\right] \label{eq:product_trace_trick}\\
                    &=  \Braket{\bm{\mu} | Q^{\otimes m} | \bm{\alpha}}\,, \label{eq:product_q_elems_vectorized}
                \end{align}
                where Eq.~\eqref{eq:product_trace_trick} can be seen as a consequence of the trace trick $a = \tr[a]$ for any scalar $a$, and the                 identity $\tr[A \otimes B] = \tr[A]\tr[B]$. Furthermore, we introduced the notation $\ket{\vec{\mu}}=\bigotimes_{l=1}^{m} \ket{\mu_l}$ (and analogously for $\ket{\vec{\alpha}}$).
                Plugging Eq.~\eqref{eq:product_q_elems_vectorized} into~\eqref{eq:multi_moment_mulit_majo_spinor_to_SO} leads to 
               \begin{align}
                    \E_{U\sim \SPIN(2n)}\left[ X_{i_1} X_{i_2} \cdots X_{i_k} \right] &=  \E_{Q\sim \SO(2n)}\left[ \prod_{j=1}^k \left(\sum_{\bm{\mu}, \bm{\alpha} \in [2n]^m} \tobselem_{\bm{\mu}} \, q_{\mu_1 \alpha_1} \cdots q_{\mu_m \alpha_m} \rho^{(i_j)}_{\bm{\alpha}}\right) \right] \nonumber \\ 
                    &= \E_{Q\sim \SO(2n)}\left[ \prod_{j=1}^k \left(\sum_{\bm{\mu}, \bm{\alpha} \in [2n]^m} \tobselem_{\bm{\mu}} \Braket{\bm{\mu} | Q^{\otimes m} | \bm{\alpha}} \rho^{(i_j)}_{\bm{\alpha}}\right)\right] \nonumber \\ 
                    & =  \E_{Q\sim \SO(2n)}\left[ \prod_{j=1}^k \left(\left(\sum_{\bm{\mu}\in [2n]^m}  \tobselem_{\bm{\mu}} \bra{\bm{\mu}} \right)  Q^{\otimes m}  \left( \sum_{\bm{\alpha}\in [2n]^m} \rho^{(i_j)}_{\bm{\alpha}}  \ket{\bm{\alpha}} \right)\right)\right] \nonumber \\ &= \E_{Q\sim \SO(2n)}\left[  \left(\sum_{\bm{\mu}\in [2n]^m}  \tobselem_{\bm{\mu}} \bra{\bm{\mu}} \right)^{\otimes k}  Q^{\otimes mk}  \left( \bigotimes_{j=1}^k \sum_{\bm{\alpha}\in [2n]^m} \rho^{(i_j)}_{\bm{\alpha}}  \ket{\bm{\alpha}} \right)\right]\,.\label{eq:vectorization_y_tensor_single_X} 
                \end{align}

                Finally, using the notation $\ket{\tobs}\!\rangle=\sum_{\bm{\mu}}  \tobselem_{\bm{\mu}} \Ket{\bm{\mu}}$ and $\ket{\tstate^{(i_j)}}\!\rangle=\sum_{\bm{\alpha}} \tstateelem^{(i_j)}_{\bm{\alpha}}  \ket{\bm{\alpha}}$ for the vectorization of the tensors $\tobs$ and $\tstate^{(i_j)}$, respectively, we end up with
                \begin{equation}
                    \E_{U\sim \SPIN(2n)}\left[ X_{i_1} X_{i_2} \cdots X_{i_k} \right] = \E_{Q\sim \SO(2n)}\left[ \langle\!\bra{\tobs}^{\otimes k} Q^{\otimes mk}\bigotimes_{j=1}^{k} \ket{\tstate^{(i_j)}}\!\rangle \right] = \ \langle\!\bra{\tobs}^{\otimes k} E_{Q\sim \SO(2n)}\left[Q^{\otimes mk}\right]\bigotimes_{j=1}^{k} \ket{\tstate^{(i_j)}}\!\rangle \,.\nonumber
                \end{equation}
            \end{proof}

            \noindent
            In Supp.~Fig.~\ref{fig:tensor_network_diagram_vectorization} we present a tensor-network diagram visualization of Eq.~\eqref{eq:multi_majo_multi_moment_op_vectorizations}, which will be very useful for our purposes.
            \begin{figure}[t]
                \centering
                \includegraphics[width=0.75\linewidth]{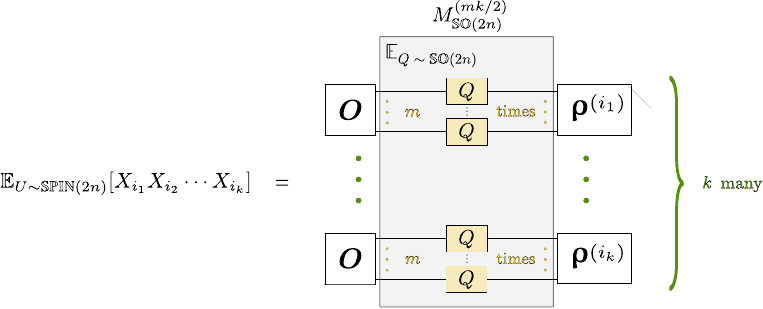}
                \caption{\textbf{Tensor network diagram for moment expression.} Depiction in vectorized form with Haar-random $Q \sim \SO(2n)$ as in Eq.~\eqref{eq:multi_majo_multi_moment_op_vectorizations}, which yields the moment operator as in Eqs.~\eqref{eq:supp:expectation_via_moment_op} and ~\eqref{supp:eq:moment_operator_definition}.}
                \label{fig:tensor_network_diagram_vectorization}
            \end{figure}

            Next, we show that the moments in Eq.~\eqref{eq:multi_majo_multi_moment_op_vectorizations} either vanish (when $mk$ is odd) or else correspond to the action of the (vectorized) $\frac{mk}{2}$-th moment operator in $\SO(2n)$ as defined in Eq.\eqref{eq:app:moment-operator}.
  
            \begin{lemma}\label{lemma:vanish_vs_moment_op}
                The $k$-th moments~\eqref{eq:app:k_moment} either \emph{a)} vanish
                \begin{equation}
                    \E_{U\sim \SPIN(2n)}\left[ X_{i_1} X_{i_2} \cdots X_{i_k} \right] = 0 \qquad \text{\emph{if $mk$ is odd}}\,,
                \end{equation}
                or \emph{b)} can be expressed through the (vectorized) $mk/2$-th twirl operator in $\SO(2n)$ w.r.t. the Haar measure,
                \begin{equation}\label{eq:supp:expectation_via_moment_op}
                    \E_{U\sim \SPIN(2n)}\left[ X_{i_1} X_{i_2} \cdots X_{i_k} \right] = \langle\!\bra{\tobs}^{\otimes k} M_{\SO(2n)}^{(mk/2)}\bigotimes_{j=1}^{k} \ket{\tstate^{(i_j)}}\!\rangle \qquad \text{\emph{if $mk$ is even}}\,,
                \end{equation}
                where                \begin{equation}\label{supp:eq:moment_operator_definition}
                    M_{\SO(2n)}^{(mk/2)} = 
                    \E_{Q\sim \SO(2n)}\!\left[ Q^{\otimes mk/2} \otimes Q^{\otimes mk/2} \right].
                \end{equation}
            \end{lemma}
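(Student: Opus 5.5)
The proof starts from the previous lemma, which already gives
\begin{equation}
\E_{U\sim \SPIN(2n)}\left[ X_{i_1} \cdots X_{i_k} \right] = \langle\!\bra{\tobs}^{\otimes k} \E_{Q\sim \SO(2n)}\left[Q^{\otimes mk}\right]\bigotimes_{j=1}^{k} \ket{\tstate^{(i_j)}}\!\rangle .
\end{equation}
Both cases then reduce to statements about the operator $\E_{Q}[Q^{\otimes mk}]$ on $(\mathbb{R}^{2n})^{\otimes mk}$. The observable and state tensors only enter through the outer contraction and play no role.

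\textbf{Case a), $mk$ odd.} Here we need $\E_Q[Q^{\otimes \ell}]=0$ for odd $\ell=mk$. The natural tool is invariance of the Haar measure on $\SO(2n)$ under multiplication by $-\openone_{2n}$. This matrix lies in $\SO(2n)$ because $\det(-\openone_{2n})=(-1)^{2n}=1$, which uses that the dimension $2n$ is even. Left-invariance then gives
\begin{equation}
\E_Q[Q^{\otimes \ell}]=\E_Q[(-Q)^{\otimes \ell}]=(-1)^{\ell}\,\E_Q[Q^{\otimes \ell}],
\end{equation}
so the operator vanishes for odd $\ell$, and hence so does the moment. Equivalently, each individual $X_i$ changes sign under $Q\mapsto -Q$ when $m$ is odd. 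One could also argue through parity of Majorana products, but the $-\openone$ argument is cleanest.

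\textbf{Case b), $mk$ even.} Write $mk=2\ell$ and regroup
\begin{equation}
Q^{\otimes 2\ell}=Q^{\otimes \ell}\otimes Q^{\otimes \ell}.
\end{equation}
Every $Q\in\SO(2n)$ is a real matrix, so $Q=Q^*$ and therefore $Q^{\otimes \ell}\otimes Q^{\otimes \ell}=Q^{\otimes \ell}\otimes (Q^*)^{\otimes \ell}$. By definition~\eqref{eq:app:moment-operator}, with $\G=\SO(2n)$ acting in its standard representation on $\mathbb{R}^{2n}$, the Haar average of this is exactly $M^{(\ell)}_{\SO(2n)}$. Substituting it into the previous lemma gives~\eqref{eq:supp:expectation_via_moment_op}.

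The only subtle point is that the vectorized moment operator here is defined for the $2n$-dimensional defining representation of $\SO(2n)$, not for the qubit representation of the spin group. The previous lemma has already pushed the matchgate average down to $\SO(2n)$, so this is bookkeeping rather than a real obstacle. The pushdown is legitimate because the Haar measure on $\SPIN(2n)$ maps to the Haar measure on $\SO(2n)$ under the covering homomorphism, and the integrand depends only on $Q$.

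The step that needs the most care is the odd case. One must check that $-\openone_{2n}\in\SO(2n)$, which is where the even dimension matters, and that the resulting sign is $(-1)^{mk}$ rather than $(-1)^k$.
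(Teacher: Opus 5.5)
Your proposal is correct and follows the paper's proof essentially step for step. In the odd case you use left-invariance of the Haar measure under $-\openone\in\SO(2n)$ to obtain the factor $(-1)^{mk}$, and in the even case you regroup $Q^{\otimes mk}=Q^{\otimes mk/2}\otimes (Q^*)^{\otimes mk/2}$ using that $Q$ is real; your explicit check that $\det(-\openone_{2n})=1$ is a detail the paper leaves implicit.
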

            \begin{proof}
                First, if both the moment $k$ and the number of Majoranas $m$ are odd, then  $mk$ is odd, and this immediately yields vanishing moments,
                \begin{align}
                    \E\left[ X_{i_1} X_{i_2} \cdots X_{i_k} \right]
                    &= \langle\!\bra{\tobs}^{\otimes k} \left( \int_{\SO(2n)} Q^{\otimes mk} d\mu_H(Q) \right)\bigotimes_{j=1}^{k} \ket{\tstate^{(i_j)}}\!\rangle\nonumber\\
                    &= \langle\!\bra{\tobs}^{\otimes k} \frac{1}{2}\left( \int_{\SO(2n)} Q^{\otimes mk} d\mu_H(Q) + \int_{\SO(2n)} Q^{\otimes mk} d\mu_H(Q) \right)\bigotimes_{j=1}^{k} \ket{\tstate^{(i_j)}}\!\rangle\nonumber\\
                    &= \langle\!\bra{\tobs}^{\otimes k} \frac{1}{2}\left( \int_{\SO(2n)} Q^{\otimes mk} d\mu_H(Q) + \int_{\SO(2n)} (-IQ)^{\otimes mk} d\mu_H(Q) \right)\bigotimes_{j=1}^{k} \ket{\tstate^{(i_j)}}\!\rangle\label{eq:multi_majo_odd_moment_multi_variate_left_invar}\\
                    &= \langle\!\bra{\tobs}^{\otimes k} \frac{1}{2}\left( \int_{\SO(2n)} Q^{\otimes mk} d\mu_H(Q) + \int_{\SO(2n)} (-1)^{mk}Q^{\otimes mk} d\mu_H(Q) \right)\bigotimes_{j=1}^{k} \ket{\tstate^{(i_j)}}\!\rangle\nonumber\\
                    &= \langle\!\bra{\tobs}^{\otimes k} \frac{1}{2}\left( \int_{\SO(2n)} Q^{\otimes mk} d\mu_H(Q) - \int_{\SO(2n)} Q^{\otimes mk} d\mu_H(Q) \right)\bigotimes_{j=1}^{k} \ket{\tstate^{(i_j)}}\!\rangle\label{eq:multi_majo_odd_moment_multi_k_odd}\\
                    &= 0\,.
                \end{align}
                Here, Eq.~\eqref{eq:multi_majo_odd_moment_multi_variate_left_invar} utilizes the left-invariance  of the Haar measure $\mu_H$ (see Eq.~\eqref{app:eq:left_right_invariance_haar}) under left multiplication by an element of the group, which is $-I \in \SO(2n)$ in this case. In Eq.~\eqref{eq:multi_majo_odd_moment_multi_k_odd}, we used that $mk$ is odd, so that $(-1)^{mk} = -1$. This proves case \emph{a)}.
    
                For case \emph{b)}, we notice that if at least the moment $k$ or the number of Majoranas $m$ is even,  $mk$ is even, and the expectation in Eq.~\eqref{eq:multi_majo_multi_moment_op_vectorizations} can be re-written as 
                \begin{equation}
                    \E_{Q\sim \SO(2n)}\!\left[ Q^{\otimes mk} \right] = \E_{Q\sim \SO(2n)}\!\left[ Q^{\otimes mk/2} \otimes (Q^*)^{\otimes mk/2} \right] = M_{\SO(2n)}^{(mk/2)},
                \end{equation}
                which is the $mk/2$-th moment (super-)operator, i.e., the vectorized version of the $mk/2$-th twirl~\cite{mele_IntroductionHaarMeasure_2024}. This is possible as $Q$ is real, i.e., $Q\in\SO(2n)$, and hence $Q^* = Q$.
            \end{proof}

        \subsection{Asymptotic moments via Weingarten calculus}\label{app:theory:2}

            Now that we have connected the moments in Eq.~\eqref{eq:app:random_variables} to the moment super-operator $M_{\SO(2n)}^{(mk/2)}$, we can leverage standard Weingarten-calculus techniques to compute them. Let us review the necessary machinery.

            First we recall that the moment super-operator $M_{\SO(2n)}^{(mk/2)}$ is a projector onto the $mk/2$-fold commutant of $\SO(2n)$, that is, the vector space spanned by the set of matrices 
            \begin{equation}
                \{M \;|\; MQ^{\otimes mk/2} = Q^{\otimes mk/2}M \quad\forall Q\in\SO(2n)\}\,.
            \end{equation}
            In particular, the $mk/2$-fold commutant of $\SO(2n)$ consists of a representation of the Brauer algebra $\mathfrak{B}_{mk/2}(2n)$~\cite{collins2017weingarten_refs} when $\frac{mk}{2} < n$. When $\frac{mk}{2}\geq n$, an additional generator  must be added to those of $\mathfrak{B}_{mk/2}(2n)$ to obtain the commutant~\cite{grood1999brauer}. Note that this additional generator does not appear in the commutant of the orthogonal group $\mathbb{O}(2n)$.

            The Brauer algebra $\mathfrak{B}_{mk/2}(2n)$ is a vector space spanned by a basis consisting of all possible pairings of a set of size $mk$, endowed with an additional bilinear operation (product). That is, given a set of $mk$ items, the basis elements of the Brauer algebra correspond to all possible ways of splitting them into pairs. 
            This implies that all the permutations  in the symmetric group $S_{mk/2}$ are also in $\mathfrak{B}_{mk/2}(2n)$, as these correspond to the pairings that can only connect the first $mk/2$ items to the remainder ones. A straightforward calculation reveals that there are  $\frac{(mk)!}{2^{mk/2} (mk/2)!}=(mk-1)!!$  elements in the aforementioned basis of the Brauer algebra. Here we also note that  every basis element $\sigma\in\mathfrak{B}_{mk/2}(2n)$ can  be completely specified by $mk/2$ disjoint pairs, as 
\begin{equation}
    \sigma=\{\{\lambda_1, \sigma(\lambda_1)\}\cup\dots\cup\{\lambda_{mk/2}, \sigma(\lambda_{mk/2})\}\}\,.
\end{equation}

In Supp.~Fig.~\ref{fig:brauer_diagrams}, we diagrammatically show all the elements $\sigma\in\mathfrak{B}_{mk/2}(2n)$ for $\frac{mk}{2}=1,2$ (as well as some for $\frac{mk}{2}=3$) using tensor representation. The product is bilinear and associative (so the Brauer algebra is an associative algebra), and can be defined graphically on the basis elements by concatenation, see Supp.~Fig.~\ref{fig:brauer_diagrams}. . When closed loops appear, the resulting basis element gets multiplied by $2n$ raised to the number of loops. Furthermore, the transpose of an element $\sigma$ of the Brauer algebra is defined as $\sigma^\top=\{\{\lambda_{1}+mk/2, \sigma(\lambda_{1})+mk/2\}\cup\dots\cup\{\lambda_{mk/2}+mk/2 , \sigma(\lambda_{mk/2})+mk/2\}\}$, where the sum is taken mod $mk/2$
 (graphically, $\sigma^\top$ is the mirror image of $\sigma$).

\begin{figure}
    \centering
    \includegraphics[width=1\linewidth]{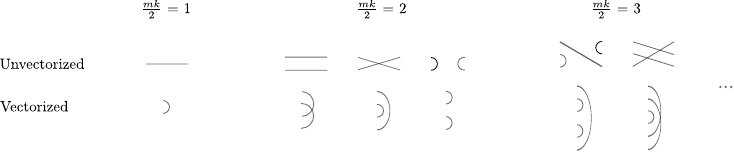}
    \caption{\textbf{Tensor representation of the basis elements of the Brauer algebra.} Tensor diagram showing the unvectorized and vectorized forms of all the elements $\sigma\in\mathfrak{B}_{mk/2}(2n)$ for $\frac{mk}{2}=1,2$, and some for $\frac{mk}{2}=3$.}
    \label{fig:brauer_diagrams}
\end{figure}

While the previous paragraph determines how the abstract Brauer algebra is defined, we still need to specify how its elements are represented and how they act on $\HC^{\otimes mk}$. In particular, we here consider the representation $F_{2n}:\mathfrak{B}_{mk/2}(2n)\rightarrow \BC(\HC^{\otimes mk/2})$ such that 
\begin{equation} \label{eq:rep-brauer_orig}
F_d(\sigma) = \sum_{i_1,\dots,i_{mk}=1}^{2n} \ket{i_{mk/2+1},i_{t+2},\dots,i_{mk}} \bra{i_1,i_2,\dots,i_{mk/2}} \,\prod_{\gamma=1}^{mk/2} \delta_{i_{\lambda_\gamma}, i_{\sigma(\lambda_\gamma)}}  \,, 
\end{equation} 
and its vectorization
\begin{equation}
    \ket{F_d(\sigma)}\!\rangle = \sum_{i_1,\dots,i_{mk}=1}^{2n} \ket{i_{mk/2+1},i_{t+2},\dots,i_{mk}} \ket{i_1,i_2,\dots,i_{mk/2}} \,\prod_{\gamma=1}^{mk/2} \delta_{i_{\lambda_\gamma}, i_{\sigma(\lambda_\gamma)}}   \,.  \label{eq:rep-brauer}
\end{equation}
Note that the transpose $\sigma^\top$ in the vectorization $\ket{F_d(\sigma^\top)}\!\rangle$ is graphically achieved by exchanging the top and bottom half indices/legs of the vectorization $\ket{F_d(\sigma)}\!\rangle$.

Given these definitions, the following important lemma holds.

            \begin{lemma}[Asymptotic moment operator]\label{supp:cor:asymp_moment_operator}
                The (vectorized) moment operator as defined in Eq.~\eqref{supp:eq:moment_operator_definition} is given by
                \begin{equation}\label{supp:eq:asymptotic_moment_operator_definition}
                    M_{\SO(2n)}^{(mk/2)} 
                    =
                    \frac{1}{(2n)^{mk/2}} \left( \sum_{\sigma \in \mathfrak{B}_{mk/2}} \ket{ F_{2n}(\sigma^\top)}\!\rangle\!\langle\!\bra{F_{2n}(\sigma) } 
                    + 
                    \!\!\!\sum_{\sigma, \pi \in \mathfrak{B}_{mk/2}}\!\!\! w_{\sigma, \pi} \ket{F_{2n}(\sigma)}\!\rangle\!\langle\!\bra{F_{2n}(\pi) } \right) \,,
                \end{equation}
                when $mk<2n$, with $w_{\sigma, \pi} \in \mathcal{O}\left(\frac{1}{n}\right)$. 
            \end{lemma}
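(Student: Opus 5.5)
The plan is to use the standard Weingarten formula for a Haar moment operator that projects onto a commutant spanned by a known, possibly non-orthogonal, family of vectors. Set $t=mk/2$. Since $mk<2n$, we have $t<n$, so the commutant of $\SO(2n)$ acting on $(\mathbb{R}^{2n})^{\otimes 2t}$ is spanned exactly by $\{\ket{F_{2n}(\sigma)}\!\rangle\}_{\sigma\in\mathfrak{B}_t}$. The extra determinant-type generator only appears when $t\geq n$. Because $M^{(t)}_{\SO(2n)}$ is the orthogonal projector onto this span (invariance of the Haar measure gives $M^2=M$, and $M^\top=M$ since $Q\mapsto Q^\top$ preserves Haar measure), we can write
\begin{equation}
M^{(t)}_{\SO(2n)}=\sum_{\sigma,\pi\in\mathfrak{B}_t}\mathrm{Wg}_{\sigma,\pi}\,\ket{F_{2n}(\sigma)}\!\rangle\!\langle\!\bra{F_{2n}(\pi)}\,,\qquad \mathrm{Wg}=G^{+},
\end{equation}
where $G_{\sigma,\pi}=\langle\!\langle F_{2n}(\sigma)|F_{2n}(\pi)\rangle\!\rangle$ is the Gram matrix. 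For $t<n$ the vectors are in fact linearly independent, so the pseudo-inverse is a true inverse. This is the first step; it is routine and follows Refs.~\cite{collins2017weingarten_refs,garcia2023deep}.

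Next I compute the Gram matrix combinatorially. The overlap $G_{\sigma,\pi}$ is obtained by gluing the two pairings into a collection of closed loops, so $G_{\sigma,\pi}=(2n)^{\ell(\sigma,\pi)}$, where $\ell$ counts loops. The number of loops is at most $t$, and it equals $t$ exactly when the two pairings coincide as pairings of the $2t$ legs. The diagonal is therefore $(2n)^t$ and every off-diagonal entry is at most $(2n)^{t-1}$. Writing $G=(2n)^t(\mathbb{1}+E)$, each entry of $E$ is $\mathcal{O}(1/n)$ and $E$ has fixed size $(2t-1)!!$, independent of $n$. A Neumann series then gives
\begin{equation}
G^{-1}=(2n)^{-t}\bigl(\mathbb{1}-E+E^2-\cdots\bigr)=(2n)^{-t}\bigl(\mathbb{1}+W\bigr),
\end{equation}
where $W$ has entries $w_{\sigma,\pi}\in\mathcal{O}(1/n)$. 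This holds with constants depending only on $t$, since the dimension of $E$ is $n$-independent. Substituting back, the identity part gives the diagonal sum and $W$ gives the correction term.

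The remaining step is bookkeeping: reconcile the paper's labelling $\ket{F_{2n}(\sigma^\top)}\!\rangle\!\langle\!\bra{F_{2n}(\sigma)}$ with the $\sigma=\pi$ diagonal. Exchanging the top and bottom halves of the legs is how the vectorization of the operator $F_{2n}(\sigma)$ appears on the ket side of the moment operator, given the convention in Eq.~\eqref{eq:rep-brauer}. Since transposition is a bijection on $\mathfrak{B}_t$, relabelling one index of the double sum absorbs it, and the correction matrix $w$ is just $W$ reindexed.

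I expect the main obstacle to be the first step: justifying the exact spanning set, and showing it is linearly independent, for $\SO(2n)$ rather than $\mathbb{O}(2n)$ at $t<n$. Here I would invoke the first fundamental theorem of invariant theory for $\SO$. Any $\SO(2n)$-invariant in $(\mathbb{R}^{2n})^{\otimes 2t}$ that is not an $\mathbb{O}(2n)$-invariant must involve the Levi-Civita tensor, which needs $2n$ legs; with only $2t<2n$ legs, none exists. Linear independence then follows from the Gram matrix being diagonally dominant for large $n$, which is consistent with the asymptotic statement of the lemma.
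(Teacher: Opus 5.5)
Your Steps 1 and 2 are correct. They are exactly the content the paper does not prove but cites: its proof quotes the asymptotic $\mathbb{O}(2n)$ Weingarten expansion from Supplementary Theorem~6 of Ref.~\cite{garcia2023deep}, and the equality of the $\SO(2n)$ and $\mathbb{O}(2n)$ commutants for $mk<2n$ from Ref.~\cite{grood1999brauer}. Your projector and Gram-matrix argument reproduces that route in a self-contained way. The key ingredients are the loop count $G_{\sigma,\pi}=(2n)^{\ell(\sigma,\pi)}$ with $\ell<t$ off the diagonal, the Neumann series on a matrix of $n$-independent size $(2t-1)!!$, and the first fundamental theorem for $\SO$ (no Levi-Civita invariant on $2t<2n$ legs).

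The final ``bookkeeping'' step, however, fails. What Steps 1--2 actually give is
\[
M^{(t)}_{\SO(2n)}=\frac{1}{(2n)^{t}}\sum_{\sigma,\pi\in\mathfrak{B}_t}(\id+W)_{\sigma,\pi}\,\kket{F_{2n}(\sigma)}\llangle F_{2n}(\pi)|\,,
\]
whose leading part is the diagonal $\sum_\sigma\kket{F_{2n}(\sigma)}\llangle F_{2n}(\sigma)|$.

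Reindexing cannot turn this into $\sum_\sigma\kket{F_{2n}(\sigma^\top)}\llangle F_{2n}(\sigma)|$. Substituting $\sigma\to\sigma^\top$ in the latter only yields $\sum_\sigma\kket{F_{2n}(\sigma)}\llangle F_{2n}(\sigma^\top)|$. With the vectorization of Eq.~\eqref{eq:rep-brauer}, one has $\kket{F_{2n}(\sigma^\top)}=S\kket{F_{2n}(\sigma)}$, where $S$ swaps the first and last $t$ tensor factors. So the displayed leading term is $S$ times yours.

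For $t\le 2$ every Brauer diagram is self-transpose and the two forms coincide. For $t\ge 3$ they do not: e.g.\ $\sigma=\{\{1,2\},\{3,4\},\{5,6\}\}$ has $\sigma^\top=\{\{4,5\},\{6,1\},\{2,3\}\}$. Because the $\kket{F_{2n}(\sigma)}$ are linearly independent, the coefficient matrix of $M$ in the basis $\kket{F_{2n}(\sigma)}\llangle F_{2n}(\pi)|$ is unique and equals $G^{-1}$. The transposed form instead has leading coefficients $\delta_{\sigma,\pi^\top}$. The difference has entries $\pm 1$ and cannot be hidden in $w_{\sigma,\pi}\in\mathcal{O}(1/n)$. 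Your own Step~1 exposes this: $M$ fixes $\kket{F_{2n}(\tau)}$, while the transposed form sends it to $\kket{F_{2n}(\tau^\top)}$ up to $\mathcal{O}(1/n)$ corrections.

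The right way to close the argument is not a relabelling but the following observation. For even $k$ one has $t=m\cdot k/2$, so $S$ merely exchanges the first $k/2$ and last $k/2$ identical copies of $\tobs$. Hence $\llangle\tobs|^{\otimes k}S=\llangle\tobs|^{\otimes k}$, and therefore $\llangle\tobs|^{\otimes k}\kket{F_{2n}(\sigma^\top)}=\llangle\tobs|^{\otimes k}\kket{F_{2n}(\sigma)}$. The two forms thus give identical values in Eq.~\eqref{eq:supp:moment_via_Brauer_moment_op} and in everything built on it in Lemma~\ref{thm:supp:general_asym_even_moments}.

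State the operator identity in the untransposed form you actually proved, and add this remark to justify the paper's $\sigma^\top$ placement where it is used. Do not claim that placement follows by reindexing, since as an operator identity it does not hold for $t\ge 3$.
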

            \begin{proof}
                The moment operator for the orthogonal group can be found in e.g., Ref.~\cite{garcia2023deep} (see Supplementary Theorem 6),                  and the proof that it is equal to that of the special orthogonal group when $mk<2n$ in Ref.~\cite{grood1999brauer}.
                Writing it in vectorized form readily yields Eq.~\eqref{supp:eq:asymptotic_moment_operator_definition}.
            \end{proof}

            In conclusion, the moment operator in Lemma~\ref{supp:cor:asymp_moment_operator} expresses the $k$-th moments (for $mk$ even) over $\mathbb{SO}(2n)$ in terms of Brauer algebra representations. Combining this result with Lemma~\ref{lemma:vanish_vs_moment_op}, we arrive at 
          \begin{align}\label{eq:supp:moment_via_Brauer_moment_op}
    \E\left[ X_{i_1} \cdots X_{i_k} \right] 
    = \frac{1}{(2n)^{mk/2}} \Biggl(&
    \sum_{\sigma \in \mathfrak{B}_{mk/2}} \langle\!\braket{ \tobs |^{\otimes k}| F_{2n}(\sigma^\top)}\!\rangle\!\langle\!\braket{F_{2n}(\sigma) | \bigotimes_{j=1}^{k} |\tstate^{(i_j)}}\!\rangle  \nonumber\\
   &+ 
    \!\sum_{\sigma, \pi \in \mathfrak{B}_{mk/2}} w_{\sigma, \pi} \langle\!\braket{ \tobs | ^{\otimes k}| F_{2n}(\sigma)}\!\rangle\!\langle\!\braket{F_{2n}(\pi) | \bigotimes_{j=1}^{k} |\tstate^{(i_j)}}\!\rangle  \Biggr)\,,
            \end{align}            
            as obtained from direct substitution of Eq.~\eqref{supp:eq:asymptotic_moment_operator_definition} in Eq.~\eqref{eq:supp:expectation_via_moment_op}.
            A tensor-network diagram of this  moment operator expression is provided in Supp.~Fig.~\ref{fig:tensor_network_Brauer_moment}, analogous to the previous $\SO(2n)$ Haar-random diagram in Supp.~Fig.~\ref{fig:tensor_network_diagram_vectorization}. 
            This diagrammatic approach will become useful to prove the main theorem in the subsequent section.

            \begin{figure}
                \centering
                \includegraphics[width=0.75\linewidth]{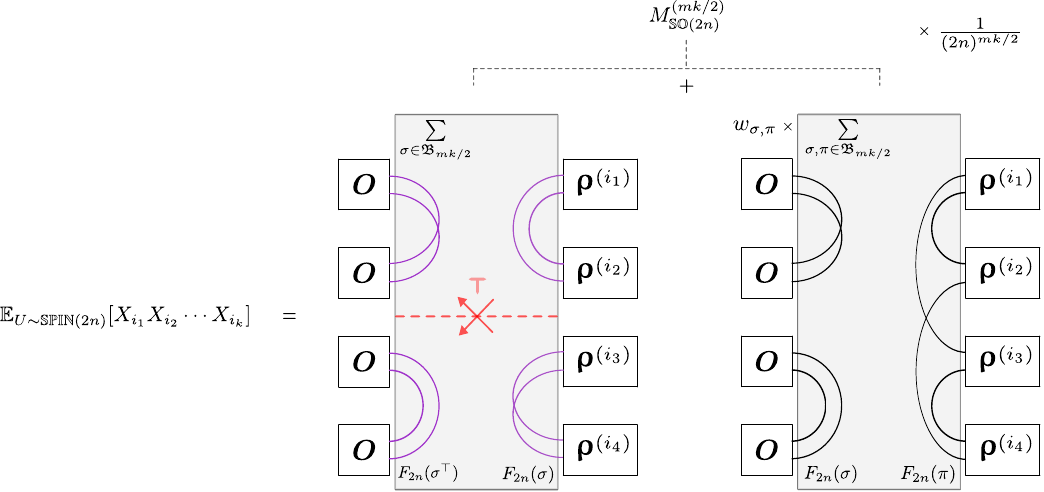}
                \caption{\textbf{Tensor-network diagram for asymptotic Weingarten-calculus moment expression.} Depiction in vectorized form with sums over basis elements of the Brauer algebra $\mathfrak{B}_{mk/2}$ representation, $F_{2n}(\sigma)$, as in Eq.~\eqref{eq:supp:moment_via_Brauer_moment_op}. While the first sum applies Brauer elements $F_{2n}(\sigma^T)$ and $F_{2n}(\sigma)$ to the observable and state tensors $\tobs$ and $\tstate^{(i_j)}$, related via transposition, independent Brauer elements $\sigma,\pi$ act on them in the second, asymptotically subordinate, sum. One specific pair of Brauer elements is represented for each sum.}
                \label{fig:tensor_network_Brauer_moment}
            \end{figure}

    \subsection{From moments to Gaussian processes}\label{app:theory:3}

        Next, we determine conditions on the observable $\O$ and states $\rho_{i_j}$ under which Haar-random matchgate transformations induce quantum Gaussian processes in the large-$n$ limit.
        This connection is established through Isserlis's theorem \cite{isserlis1918formula}, which provides the moments of a \ac{gp} (multivariate Gaussian distribution) of any order $k$. Consider a zero-mean multivariate Gaussian random vector $(X_i)_{i=1}^N \sim \NC(0, \Sigma)$. If $k$ is odd, the moment vanishes
        \begin{equation}
            \E\left[ X_{i_1} X_{i_2} \cdots X_{i_k} \right] = 0\,,   
        \end{equation}
        and if $k$ is even, the moments are uniquely determined by the covariances (second-order moments) of possible partitionings of the $k$ random variables into pairs. This reads as
        \begin{equation}\label{eq:isserlis_even_moments_form}
            \E\left[ X_{i_1} X_{i_2} \cdots X_{i_k} \right] 
            = \sum_{\sigma \in \mathfrak{B}_{k/2}} \prod_{(j, j') \in \sigma} \Cov\left[X_{i_j}, X_{i_{j'}}\right]\,,
        \end{equation}
        where $\Cov\left[X_{i}, X_{i'}\right] = \E\left[ X_{i} X_{i'} \right]$.

        To establish the conditions and proofs for the emergence of GPs, we analyze the moments in terms of Brauer algebra contractions of the observable and state tensors, $\tobs$ and $\tstate^{(i_j)}$, respectively. This analysis relies on the following contraction property:
        \begin{definition}[Pairwise tensor matchings]
            Consider a contraction over $k$ rank-$m$ tensors $\lbrace \bm{A}^{(j)}\rbrace_{j=1}^k$ with index pairings indicated by $\sigma \in \mathfrak{B}_{mk/2}$, denoted as 
            \begin{equation}
                \operatorname{contr}_{\sigma}\left(\bigotimes_{j=1}^k \bm{A}^{(j)} \right).
            \end{equation}
            We say that $\sigma$ provides a \emph{pairwise tensor matching} if all the indices of each tensor are contracted with the indices of exactly one other tensor. Formally, $\sigma$ can be decomposed into a tensor-level pairing, described by $\bm{\sigma} \in \mathfrak{B}_{k/2}$, and an index-level permutation $\pi_{j,j'} \in S_m$ for each of the $k/2$ tensor pairs $(j, j') \in \bm{\sigma}$, such that
            \begin{equation}
            \label{supp:eq:brauer_perfect_match_decomp}
                \sigma = \bm{\sigma} \circ \bigotimes_{(j, j')\in\bm{\sigma}} \pi_{j,j'}\,.
            \end{equation}
            Otherwise, $\sigma$ is referred to as a \emph{non-pairwise tensor matching}\footnote{Also denoted as $\neg$pairwise, $\neg$pair, or $\neg\,\sigma\,\mathrm{pairwise}$.}. 
        \end{definition}
        \noindent
        Supplemental~Fig.~\ref{fig:perfect_imperfect_brauer_decomp} illustrates tensor network diagrams for both pairwise (a) and non-pairwise (b) matchings, as well as the decomposition (c) into tensor-level pairings and index-level permutations as per Eq.~\eqref{supp:eq:brauer_perfect_match_decomp}.
        
        We now state two formal conditions for the formation of matchgate quantum Gaussian processes:
        \begin{condition}[Full module support of observable] \label{cond:full_support}
            The observable must have full support on the module $\BC_m$ ($m \neq 0, 2n$), such that
            \begin{equation} 
                \O \in \mathcal{B}_m
                \quad\iff\quad
                \left\lVert \O \right\rVert^2_{\mathcal{B}_m} 
                = 
                \left\lVert \O \right\rVert^2_{\mathcal{B}}\,.
            \end{equation}
        \end{condition}
        This first condition simplifies the analysis by constraining the observable to a single module. The next condition guarantees that in the asymptotic limit of large Hilbert-space dimension, the moments converge to those of a multivariate Gaussian distribution.
        
        \begin{condition}[Gaussian pairings dominate] \label{cond:gaussian_dominate}
             For any even $k\geq 4$ and all multisets $\lbrace \rho_{i_j}\rbrace_{j=1}^k$, let 
            \begin{alignat}{2}
                \Sperf &= \quad \sum_{\mathclap{\bm{\sigma} \in \mathfrak{B}_{k/2}}} && \left\lVert O \right\rVert_{\mathcal{B}_m}^k \prod_{(j,j') \in \bm{\sigma}} \braket{ \rho_{i_{j}} \,, \rho_{i_{j'}} }_{\BC_m}, \label{eq:Sperf} \\
                \Simp &= \quad \sum_{\mathclap{\substack{\sigma \in \mathfrak{B}_{mk/2}\\\neg\,\sigma\,\mathrm{pairwise}}}} \quad && \operatorname{contr}_{\sigma^\top}\!\left( \tobs^{\otimes k} \right) \operatorname{contr}_{\sigma}\!\left( \bigotimes_{j=1}^k \tstate^{\left(i_j\right)}\right)\,, \label{eq:Simp} \\
                \Smix &= \quad \sum_{\mathclap{\sigma, \pi \in \mathfrak{B}_{mk/2}}} && w_{\sigma, \pi}\operatorname{contr}_{\sigma}\!\left( \tobs^{\otimes k} \right) \operatorname{contr}_{\pi}\!\left( \bigotimes_{j=1}^k \tstate^{\left(i_j\right)}\right) 
                \,, \label{eq:Smix}
            \end{alignat}
            with $w_{\sigma, \pi} \in \OC\left(\frac{1}{n}\right)$ as in Eq.~\eqref{eq:supp:moment_via_Brauer_moment_op}.

            \noindent
            In the limit $n\to\infty$, the sum $\Sperf$ must then asymptotically (strictly) dominate\footnote{Recall that the asymptotic bound applies to the magnitude; i.e., $f(n) \in o(g(n)) \iff |f(n)| \in o(g(n))$.} the sums $\Simp$ and $\Smix$, i.e.,
            \begin{equation}
                \frac{\Simp}{\Sperf} \in o(1) \,,
                \qquad 
                \text{and}
                \qquad
                \frac{\Smix}{\Sperf} \in o(1) \,
                .
            \end{equation}
            
        \end{condition}
        \noindent 
        This is the formal statement of the informal ``Gaussian pairings dominate'' condition in the main text. 
        Note that Supp.~Info.~\ref{app:theory:4} provides simpler alternative, both equivalent and sufficient, formulations to Condition~\ref{cond:gaussian_dominate}.

            \begin{figure}[tbp]
                \centering
                \includegraphics[width=0.6\linewidth]{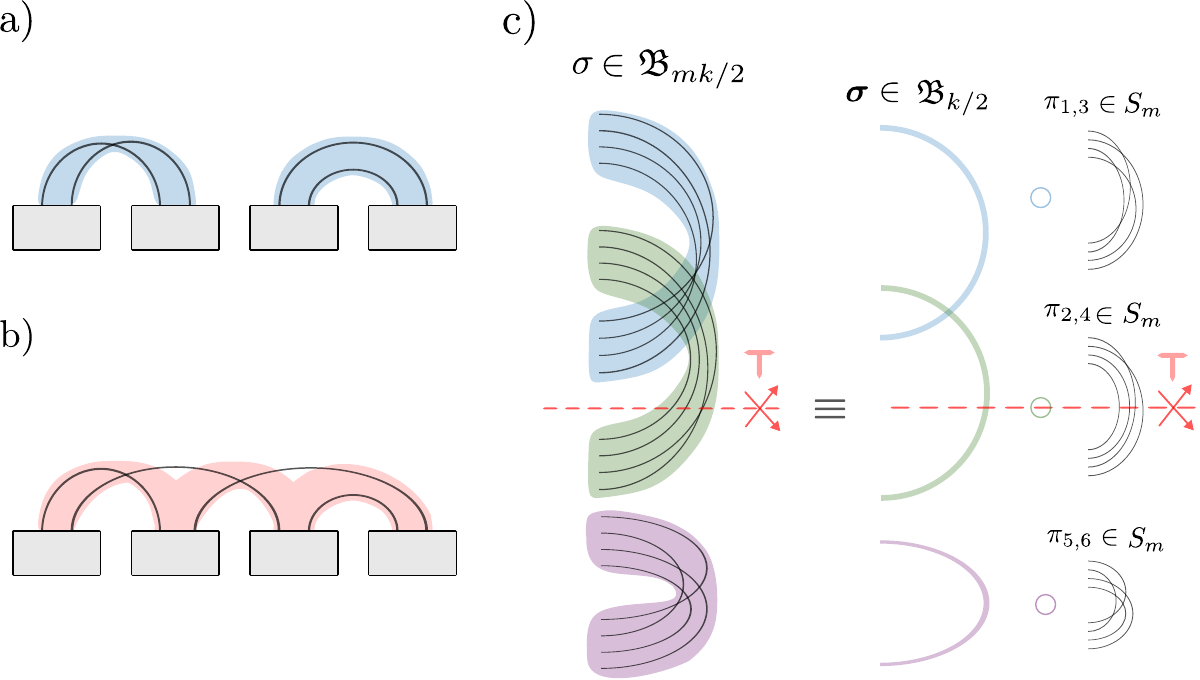}
                \caption{
                \textbf{Pairwise and non-pairwise tensor matchings and pairwise matching decomposition.} Lines indicate connected tensor legs at the index level, whereas shaded backgrounds represent tensor-level matchings. (a, b) Index pairing examples ($m=2$, $k=4$) yielding (a) a pairwise tensor matching between the first and last tensor pairs, and (b) a non-pairwise matching that connects all four tensors without pairwise factorization.
                (c) Decomposition of a Brauer element $\sigma \in \mathfrak{B}_{mk/2}$ representing a pairwise tensor matching ($m=4$, $k=6$), as defined in Eq.~\eqref{supp:eq:brauer_perfect_match_decomp}. The horizontal dashed red line and arrows indicate the action of the transposition, which swaps the top and bottom halves of the indices. Transposing $\sigma \in \mathfrak{B}_{mk/2}$ inverts only those index-level permutations $\pi_{j,j'} \in S_m$ with corresponding tensor pairings $j,j' \in \mathfrak{B}_{k/2}$ that cross this line.}
                \label{fig:perfect_imperfect_brauer_decomp}
            \end{figure}

        Under these conditions, we derive explicit expressions for the moments, distinguishing between even and odd moments.
        \begin{lemma}[Even moments]\label{thm:supp:general_asym_even_moments}
            Under Conditions~\ref{cond:full_support} and~\ref{cond:gaussian_dominate}, and in the large-$n$ limit, the even $k$-th moments are
            \begin{equation}\label{eq:general_asymptotic_k_moment}
                \E_{U\sim \SPIN(2n)}\left[ X_{i_1} X_{i_2} \cdots X_{i_k} \right] 
                = 
                \sum_{\bm{\sigma} \in \mathfrak{B}_{k/2}} \prod_{(j, j') \in \bm{\sigma}} \underbrace{\frac{m!}{(2n)^m} \left\lVert \O \right\rVert^2_{m} \left\langle \rho_{i_j} , \rho_{i_{j'}} \right \rangle_{m}}_{= \Cov(X_{i_j}, X_{i_{j'}})}\,.
            \end{equation}
        \end{lemma}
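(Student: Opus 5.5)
Starting from Eq.~\eqref{eq:supp:moment_via_Brauer_moment_op}, which is valid for even $k$ with $mk<2n$ and hence for every fixed $k$ as $n\to\infty$, I would split the leading Brauer sum over $\sigma\in\mathfrak{B}_{mk/2}$ into pairwise and non-pairwise tensor matchings. The non-pairwise part of the leading sum is exactly $\Simp$. The Weingarten-correction part is exactly $\Smix$. By Condition~\ref{cond:gaussian_dominate}, both are $o(\Sperf)$. The proof then reduces to showing that the pairwise part equals
\begin{equation}
\frac{1}{(2n)^{mk/2}}\sum_{\bm{\sigma}\in\mathfrak{B}_{k/2}}\prod_{(j,j')\in\bm{\sigma}} m!\,\lVert O\rVert_m^2\,\langle\rho_{i_j},\rho_{i_{j'}}\rangle_m .
\end{equation}
This is $(2n)^{-mk/2}\,m!^{k/2}\,\Sperf/\lVert O\rVert_m^k\cdot\lVert O\rVert_m^k$, which has the form of the claim. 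The dominance condition is stated relative to $\Sperf$, and the prefactor $(2n)^{-mk/2}$ is common to all three sums, so the relative errors vanish.

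For the pairwise part, I would use the decomposition \eqref{supp:eq:brauer_perfect_match_decomp}. Each pairwise $\sigma$ corresponds bijectively to a tensor-level pairing $\bm{\sigma}\in\mathfrak{B}_{k/2}$ together with $k/2$ permutations $\pi_{j,j'}\in S_m$. The contractions then factorize over pairs:
\begin{equation}
\operatorname{contr}_\sigma\Bigl(\bigotimes_j \tstate^{(i_j)}\Bigr)=\prod_{(j,j')}\sum_{\bm{\alpha}}\rho^{(i_j)}_{\bm{\alpha}}\rho^{(i_{j'})}_{\pi_{j,j'}(\bm{\alpha})} .
\end{equation}
By full antisymmetry of the frame tensors, each factor equals $\operatorname{sgn}(\pi_{j,j'})\langle\rho_{i_j},\rho_{i_{j'}}\rangle_m$, using the Parseval/overlap identity \eqref{supp:eq:module_overlap_coeffs} for the real coefficients. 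On the observable side, $\operatorname{contr}_{\sigma^\top}(\tobs^{\otimes k})$ is again a pairwise matching with the same tensor pairing. Its index permutations are $\pi_{j,j'}$ or $\pi_{j,j'}^{-1}$ (see Supp.~Fig.~\ref{fig:perfect_imperfect_brauer_decomp}(c)), and these have the same sign. Each factor is therefore $\operatorname{sgn}(\pi_{j,j'})\lVert O\rVert_m^2$, where Condition~\ref{cond:full_support} ensures $\lVert O\rVert_m=\lVert O\rVert_2$.

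Multiplying the two sides, the signs square to one. Summing over the $m!$ choices of each $\pi_{j,j'}$ gives a factor $m!$ per pair. This yields $\sum_{\bm{\sigma}}\prod (m!/(2n)^m)\lVert O\rVert_m^2\langle\rho_{i_j},\rho_{i_{j'}}\rangle_m$, as claimed.

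Finally, I would check consistency at $k=2$. In that case every $\sigma\in\mathfrak{B}_m$ pairing the $2m$ legs of two tensors is either pairwise, or it contracts two legs of the same tensor; the latter vanishes by antisymmetry. The covariance is therefore the stated kernel up to the $w_{\sigma,\pi}$ correction, which is $O(1/n)$ relative and asymptotically negligible, consistent with $k=2$ being excluded from Condition~\ref{cond:gaussian_dominate}.

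The main obstacle I expect is the bookkeeping of the transpose $\sigma^\top$. I need to verify that the index-level permutations on the observable side have the same signs as those on the state side, so that no cancellation occurs. I also need to confirm that the map from $(\bm{\sigma},\{\pi_{j,j'}\})$ to $\sigma$ is a bijection onto the pairwise matchings, with no double counting from orientation within a tensor pair. I would settle both by fixing, for each pair $(j,j')$, the convention that legs of $j$ are matched to legs of $j'$ via $\pi_{j,j'}$, and by using $\operatorname{sgn}(\pi^{-1})=\operatorname{sgn}(\pi)$.
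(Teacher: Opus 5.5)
Your proposal is correct and follows essentially the same route as the paper's proof: split the leading Brauer sum into pairwise and non-pairwise matchings, cancel the signs via $\operatorname{sgn}(\pi)=\operatorname{sgn}(\pi^{-1})$, collect the $(m!)^{k/2}$ count per tensor-level pairing, and invoke Condition~\ref{cond:gaussian_dominate} to make $\Simp$ and $\Smix$ subleading. Your separate $k=2$ check is a worthwhile addition, because Condition~\ref{cond:gaussian_dominate} is only stated for $k\geq 4$ and the paper's proof does not address that case. There, $\Simp$ vanishes identically by antisymmetry and $\Smix$ is suppressed by a relative factor $\mathcal{O}(1/n)$.
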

        \begin{proof}
            Given Condition~\ref{cond:full_support}, the observable $O$ is fully supported on the module $\BC_m$ and, hence, allows us to apply Lemma~\ref{lemma:vanish_vs_moment_op} to the even $k$-th moments $\E_{U\sim \SPIN(2n)}\left[ X_{i_1} X_{i_2} \cdots X_{i_k} \right]$. Specifically, since $k$ is even, $mk$ is even, such that Lemma~\ref{lemma:vanish_vs_moment_op} applies to express the moment through the moment operator,
            \begin{equation}
                \E\left[ X_{i_1} \cdots X_{i_k} \right] 
                = \langle\!\bra{ \tobs }^{\otimes k} \; M_{\SO(2n)}^{(mk/2)} \; \bigotimes_{j=1}^{k} \ket{\tstate^{(i_j)}}\!\rangle
                .
            \end{equation}
            Recall that $\ket{\tobs}\!\rangle^{\otimes k}$ and $\bigotimes_{j=1}^{k} \ket{\tstate^{(i_j)}}\!\rangle$ are the vectorized module ($\mathcal{B}_m$) coefficient tensors for the observable $O$ and the states $\rho_{i_j}$, respectively.
            As we are interested in the large-$n$ limit, we further replace the moment operator $M_{\SO(2n)}^{(mk/2)}$ by its asymptotic expansion given by Lemma~\ref{supp:cor:asymp_moment_operator}, as
            \begin{align} 
                \E\left[ X_{i_1} \cdots X_{i_k} \right] 
                &= \frac{1}{(2n)^{mk/2}} \Biggl( \phantom{{}+{}}
                \sum_{\substack{\sigma \in \mathfrak{B}_{mk/2}\\\sigma \, \mathrm{pairwise}}} \langle\!\braket{ \tobs |^{\otimes k}| F_{2n}(\sigma^\top)}\!\rangle\!\langle\!\braket{F_{2n}(\sigma) | \bigotimes_{j=1}^{k} |\tstate^{(i_j)}}\!\rangle \label{eq:supp:asymptotic_moment:Sperf} \\
                &\phantom{{}= \frac{1}{(2n)^{mk/2}} \Biggl(} +
                \sum_{\substack{\sigma \in \mathfrak{B}_{mk/2}\\\neg\,\sigma \, \mathrm{pairwise}}} \langle\!\braket{ \tobs |^{\otimes k}| F_{2n}(\sigma^\top)}\!\rangle\!\langle\!\braket{F_{2n}(\sigma) | \bigotimes_{j=1}^{k} |\tstate^{(i_j)}}\!\rangle \label{eq:supp:asymptotic_moment:Simp}
                \\
                &\phantom{{}= \frac{1}{(2n)^{mk/2}} \Biggl(} + 
                \!\sum_{\sigma, \pi \in \mathfrak{B}_{mk/2}} w_{\sigma, \pi} \langle\!\braket{ \tobs | ^{\otimes k}| F_{2n}(\sigma)}\!\rangle\!\langle\!\braket{F_{2n}(\pi) | \bigotimes_{j=1}^{k} |\tstate^{(i_j)}}\!\rangle \Biggr)\,,
                \label{eq:supp:asymptotic_moment:Smix}
            \end{align}
            where we additionally split the single summation over the Brauer algebra $\mathfrak{B}_{mk/2}$ into a sum with only pairwise tensor matchings on both $\kket{\tobs}^{\otimes k}$ and $\bigotimes_{j=1}^k \kket{\tstate^{(i_j)}}$ (Eq.~\eqref{eq:supp:asymptotic_moment:Sperf}), and another sum for the remaining matchings (Eq.~\eqref{eq:supp:asymptotic_moment:Simp}), which are non-pairwise.
            
            The main part of the proof focuses on showing that such split is possible and that the sum in Eq.~\eqref{eq:supp:asymptotic_moment:Sperf} matches, up to a factor $m!$, $\Sperf$ in Condition~\ref{cond:gaussian_dominate}, Eq.~\eqref{eq:Sperf}.             Simultaneously, this sum is shown to pick up the correct pre-factor to precisely match the even $k$-th moment in Eq.~\eqref{eq:general_asymptotic_k_moment}, which is the objective of this proof. To then conclude the proof, we apply Condition~\ref{cond:gaussian_dominate} to asymptotically strictly bound the other two sums in Eqs.~\eqref{eq:supp:asymptotic_moment:Simp} and~\eqref{eq:supp:asymptotic_moment:Smix} by $\Sperf$. Note that the dimension of the Brauer algebra, which determines the number of terms involved in these sums, does not depend on $n$ and, hence, cannot impact the asymptotic order.

            We note that for any Brauer algebra element $\sigma \in \mathfrak{B}_{mk/2}$ that induces a pairwise tensor matching in $\bigotimes_{j=1}^{k} \ket{\tstate^{(i_j)}}\!\rangle$ its transpose $\sigma^\top$ simultaneously induces a pairwise tensor matching in $\ket{\tobs}\!\rangle^{\otimes k}$. Otherwise, $\sigma$ and $\sigma^\top$ simultaneously lead to non-pairwise tensor matchings.             To show this property, consider the decomposition of a pairwise tensor matching $\sigma \in \mathfrak{B}_{mk/2}$ into a tensor-level pairing $\bm{\sigma} \in \mathfrak{B}_{k/2}$ and index-level permutations $\pi_{j,j'} \in S_m$ for each tensor pair $(j, j') \in \bm{\sigma}$ in Eq.~\eqref{supp:eq:brauer_perfect_match_decomp}. 
            The transposition $\sigma^\top$ affects the decomposition as follows
            \begin{equation}
                \sigma^\top = \bm{\sigma}^\top 
                \circ 
                \bigotimes_{(j, j')\in\bm{\sigma}^\top}\bar{\pi}_{j,j'}\,,
            \end{equation}
            with
            \begin{align}
                \bm{\sigma}^\top &= \left\lbrace (j + k/2, j' + k/2)\!\!\!\! \mod{k} \; \middle| \; (j,j') \in \bm{\sigma} \right\rbrace\,
                , \\ 
                \bar{\pi}_{j,j'} &= \begin{cases}
                    \pi_{j,j'} \quad\text{if} \quad j,j' \leq k/2 \; \lor j,j' > k/2 \; \\
                    \pi_{j,j'}^{-1} \quad \text{else}
                \end{cases} \,
                , \label{supp:eq:index_level_permutation_inversion}
            \end{align}
            maintaining the structure of a pairwise matching of tensors because a decomposition is still possible. Supplemental Fig.~\ref{fig:perfect_imperfect_brauer_decomp} further exemplifies the effect of the transposition within the decomposition of the pairwise tensor matching.  

            Furthermore, multiple Brauer elements $\sigma \in \mathfrak{B}_{mk/2}$ can resemble the same pairwise tensor matching $\bm{\sigma} \in \mathfrak{B}_{k/2}$, differing only in their index-level permutations $\pi_{j,j'} \in S_m$. Remarkably, the contribution of any such $\sigma$ in Eq.~\eqref{eq:supp:asymptotic_moment:Sperf} is invariant under the choice of $\pi_{j,j'}$. 
            To see why, recall that the contracted tensors are anti-symmetric, meaning an index permutation $\pi_{j,j'}$ introduces a (parity) sign, $\mathrm{sign}(\pi_{j,j'})$. However, the overall contribution depends on both $\sigma$ and its transpose $\sigma^\top$. As established in Eq.~\eqref{supp:eq:index_level_permutation_inversion}, the transpose acts on the index level by applying $\bar{\pi}_{j,j'}$, which is either $\pi_{j,j'}$ or its inverse $\pi_{j,j'}^{-1}$. Because the sign of a permutation is invariant under inversion ($\mathrm{sign}(\pi) = \mathrm{sign}(\pi^{-1})$), the combined sign from both contractions always trivially cancels as $\mathrm{sign}(\pi_{j,j'}) \mathrm{sign}(\bar{\pi}_{j,j'}) = 1$. 
            
            Consequently, the permutation signs vanish entirely, and every choice of index-level permutations collapses identically to the non-permuted (id) contraction:
            \begin{align}
                \operatorname{contr}_{\pi_{j,j'}}(\tobs, \tobs)
                \operatorname{contr}_{\bar{\pi}_{j,j'}}(\tstate^{({i_j})}, \tstate^{({i_{j'}})}) 
                &= 
                \operatorname{contr}_{\mathrm{id}}(\tobs, \tobs)
                \operatorname{contr}_{\mathrm{id}}(\tstate^{({i_j})}, \tstate^{({i_{j'}})}) \nonumber \\
                &= 
                \left(\sum_{\bm{\nu}} \lvert \tobselem_{\bm{\nu}} \rvert^2\right) 
                \left( \sum_{\bm{\nu}} (\tstateelem_{\bm{\nu}}^{(i_j)})^* \tstateelem_{\bm{\nu}}^{(i_{j'})} \right) \nonumber \\
                &= 
                \left\lVert \O \right\rVert^2_{m} \langle \rho_{i_j} , \rho_{i_{j'}}  \rangle_{m} \,\label{eq:supp:single_perfect_matching_contribution}
                .
            \end{align}
            
            This scalar contribution occurs independently for each of the $k/2$ pairs $(j,j') \in \bm{\sigma}$. Since the symmetric group $S_m$ governs the index pairings, there are exactly $m!$ valid choices for each $\pi_{j,j'}$. Summing over all such permutation choices for a fixed tensor pairing $\bm{\sigma}$ naturally scales the contribution by $(m!)^{k/2}$,
            \begin{equation}
                (m!)^{k/2} \prod_{(j,j') \in \bm{\sigma}} \left\lVert \O \right\rVert^2_{m} \langle \rho_{i_j} , \rho_{i_{j'}}  \rangle_{m}\,
                .
            \end{equation}
            Including the global pre-factor $1/(2n)^{mk/2}$ from Eq.~\eqref{eq:supp:asymptotic_moment:Sperf}, the total sum over all pairwise tensor matching contributions perfectly matches $\Sperf$ (up to the scaling factor) and precisely recovers the even $k$-th moment given in Eq.~\eqref{eq:general_asymptotic_k_moment} as
            \begin{equation}
                \frac{1}{(2n)^{mk/2}} \sum_{\bm{\sigma} \in \mathfrak{B}_{k/2}} (m!)^{k/2} \prod_{(j,j') \in \bm{\sigma}} \left\lVert \O \right\rVert^2_{m} \langle \rho_{i_j} , \rho_{i_{j'}}  \rangle_{m}
                = 
                \sum_{\bm{\sigma} \in \mathfrak{B}_{k/2}}  \prod_{(j,j') \in \bm{\sigma}} \frac{m!}{(2n)^{m}} \left\lVert \O \right\rVert^2_{m} \langle \rho_{i_j} , \rho_{i_{j'}}  \rangle_{m}\,
                .
            \end{equation}

            To conclude the proof, observe that by the definition of the Brauer algebra representation $F_{2n}$ in Eq.~\eqref{eq:rep-brauer_orig} and its vectorization $\kket{F_{2n}}$ in Eq.~\eqref{eq:rep-brauer}, the vectorized inner product evaluates to $\bbrakket{\bm{A}}{F_{2n}(\sigma)} = \operatorname{contr}_{\sigma}(\bm{A})$ for any rank-$mk$ $2n$-dimensional tensor (network) $\bm{A}$ and $\sigma \in \mathfrak{B}_{mk/2}$. This equivalence is also visually apparent in the tensor network diagram of Suppl.~Fig.~\ref{fig:tensor_network_Brauer_moment}. 
            Therefore, the remaining sums, in Eqs.~\eqref{eq:supp:asymptotic_moment:Simp} and~\eqref{eq:supp:asymptotic_moment:Smix}, constituting the $k$-th even moment map directly to $\Simp$ and $\Smix$, respectively. By Condition~\ref{cond:gaussian_dominate}, these terms are asymptotically strictly dominated by $\Sperf$. Thus, in the large-$n$ limit, the contributions from $\Simp$ and $\Smix$ vanish faster, leaving only the $\Sperf$ term of leading order and allowing us to recover the general asymptotic form of the even $k$-th moments in Eq.~\eqref{eq:general_asymptotic_k_moment}.

        \end{proof}

An important corollary can be obtained from Lemma~\ref{thm:supp:general_asym_even_moments}, regarding the covariance matrix of the variables $\{X_i\}_{i=1}^n$.
        
        \begin{corollary}[Covariances]
            The covariance of $X_{i}$ and $X_{i'}$ is
            \begin{equation}
                \Sigma_{i, i'} = \Cov\left(X_{i}, X_{i'}  \right) = \frac{m!}{(2n)^m} \lVert O \rVert^2 \left\langle \rho_{i} , \rho_{{i'}} \right \rangle_{m}\,.
            \end{equation}
        \end{corollary}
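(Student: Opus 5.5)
The plan is to specialize Lemma~\ref{thm:supp:general_asym_even_moments} to $k=2$ and then recall that the mean vanishes. First I would note that the first moment $\E[X_i]$ vanishes. If $m$ is odd, this follows from Lemma~\ref{lemma:vanish_vs_moment_op}, since $mk=m$ is odd. If $m$ is even, the moment operator $M^{(m/2)}_{\SO(2n)}$ projects onto the span of Brauer elements, and contracting an anti-symmetric rank-$m$ tensor $\tobs$ along any Brauer pairing joins two of its own indices. Anti-symmetry then forces every such contraction $\operatorname{contr}_{\sigma}(\tobs)$ to vanish. Hence $\E[X_i]=0$ and $\Cov(X_i,X_{i'})=\E[X_iX_{i'}]$.

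Next I would compute the second moment directly. Here $mk/2=m$, and I need the exact $k=2$ case rather than only the asymptotic one. Using Lemma~\ref{supp:cor:asymp_moment_operator} and splitting as in Eqs.~\eqref{eq:supp:asymptotic_moment:Sperf}--\eqref{eq:supp:asymptotic_moment:Smix}, the Brauer elements of $\mathfrak{B}_m$ fall into two classes. The first class consists of elements that contract indices of the same tensor, i.e.\ of $\tobs$ with itself or $\tstate^{(i)}$ with itself. Each of these gives zero by anti-symmetry, and this holds for every such term, including those in the $w_{\sigma,\pi}$ sum. The second class consists of pure permutations $\pi\in S_m$ connecting the two tensors. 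These are exactly the pairwise tensor matchings with the single tensor-level pairing $\bm{\sigma}=\{(1,2)\}$.

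For the second class I would reuse the sign-cancellation argument around Eq.~\eqref{eq:supp:single_perfect_matching_contribution}. It shows that each of the $m!$ permutations contributes $\|O\|_m^2\langle\rho_i,\rho_{i'}\rangle_m$. With the prefactor $1/(2n)^m$ this gives the claimed leading term. The remaining Weingarten corrections $w_{\sigma,\pi}$, restricted to permutation pairs, only rescale this same quantity by $1+\OC(1/n)$, since the signs of $\sigma$ and $\pi$ multiply to a product of contractions equal to $\pm$ the same overlap. Alternatively, one could invoke the exact Schur orthogonality on the irreducible module $\BC_m$, which gives $\E[X_iX_{i'}]=\|O\|_m^2\langle\rho_i,\rho_{i'}\rangle_m/\binom{2n}{m}$. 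This agrees with $\frac{m!}{(2n)^m}$ at leading order in $n$.

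The main obstacle is fixing the exact versus asymptotic prefactor, since $\binom{2n}{m}^{-1}$ and $m!/(2n)^m$ differ at order $1/n$. I would therefore state the corollary in the large-$n$ sense of Lemma~\ref{thm:supp:general_asym_even_moments}, which applies directly with $k=2$: Condition~\ref{cond:gaussian_dominate} is not needed here, because the non-pairwise matchings vanish identically by anti-symmetry. Finally, I would remark that $\|O\|^2=\|O\|_m^2$ by Condition~\ref{cond:full_support}, which matches the statement.
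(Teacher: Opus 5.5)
Your proposal is correct and follows the paper's own route: the paper's proof is simply the specialization of Lemma~\ref{thm:supp:general_asym_even_moments} to $k=2$, where the tensor-level algebra $\mathfrak{B}_1$ contains only the single pairing $\{(1,2)\}$ and each of the $m!$ index permutations contributes $\lVert O\rVert_m^2\langle\rho_i,\rho_{i'}\rangle_m$. Your additional checks make explicit what the paper's one-line proof leaves implicit: the mean vanishes, so the second moment is the covariance; at $k=2$ non-pairwise matchings vanish identically by anti-symmetry, so Condition~\ref{cond:gaussian_dominate} (stated only for $k\geq 4$) is not needed, and the $w_{\sigma,\pi}$ corrections are proportional to the same overlap; and Schur orthogonality gives the exact prefactor $1/\binom{2n}{m}$, consistent with the paper's exact $\BC_1$ and $\BC_2$ covariances.
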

        \begin{proof}
            Direct implication of Lemma~\ref{thm:supp:general_asym_even_moments} with $k=2$, leading to the trivial Brauer algebra $\mathfrak{B}_{1}$.
        \end{proof}

    Next, we present a proposition that automatically guarantees zero odd moments for some values of $m$ and $k$.

        \begin{proposition}[Guarantees for vanishing odd moments]\label{prop:supp:vanish_odd_moments}
            Under Condition~\ref{cond:full_support}, odd $k$ moments vanish ($ \E_{U\sim \SPIN(2n)}\left[ X_{i_1} \cdots X_{i_k} \right] = 0 $) in any of the following cases:
            \begin{enumerate}[label=\emph{(\alph*)}]
                \item When  $k = 1$ (first moment).
                \item When $m$ is odd.
                \item When $k=3$  (third moment) and $m \equiv 2 \pmod{4}$   (i.e., $m$ is even but not a multiple of four).
            \end{enumerate}
        \end{proposition}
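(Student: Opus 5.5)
Under Condition~\ref{cond:full_support}, Lemma~\ref{lemma:vanish_vs_moment_op} applies, and I would handle the three cases in order. Cases (a) and (b) need no Weingarten calculus and can be done exactly, not only asymptotically. The real content is in (c).

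For (b), if $m$ and $k$ are both odd then $mk$ is odd, and case \emph{a)} of Lemma~\ref{lemma:vanish_vs_moment_op} gives a vanishing moment directly; this is the $-I\in\SO(2n)$ invariance argument. For (a), $k=1$ and $m$ may be even. The first moment is $\langle\!\langle \tobs | M^{(m/2)}_{\SO(2n)} | \tstate\rangle\!\rangle$. Rather than expand this operator, I would use a direct representation-theoretic argument. Since $\BC_m$ is an invariant module carrying a nontrivial irrep (for $m\neq 0,2n$), the map $\E_U[U\rho U^\dagger]$ projected onto $\BC_m$ commutes with the group action. By Schur it is an intertwiner from $\BC_m$ to the trivial representation, so it is zero. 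Concretely, $\E_Q[Q^{\otimes m}]$ projects onto $\SO(2n)$-invariant tensors. Every invariant in $(\mathbb{R}^{2n})^{\otimes m}$ with $m<2n$ is a combination of products of Kronecker deltas (Brauer vectors), and each of these is symmetric in at least one index pair. Contracting with the fully antisymmetric $\tstate$ therefore gives zero. The case $m>2n$ does not occur, and $m=2n$ is excluded. So $\E[X_i]=0$.

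Case (c) is the substantive one. For $k=3$ and $m$ even, the relevant objects are $3m/2$ pairings contracting three antisymmetric rank-$m$ tensors. By antisymmetry, no pairing may connect two legs of the same tensor. With three tensors of $m$ legs each, the number of legs shared by each pair is forced to be $m/2$. So every surviving Brauer term, on both the $\tobs$ side and the $\tstate$ side, is a ``triangle'' contraction. Up to signs from index-level permutations, which I would absorb using antisymmetry as in the proof of Lemma~\ref{thm:supp:general_asym_even_moments}, each such term equals a constant times $\Tr[A B C]$. Here $A,B,C$ are the $(2n)^{m/2}\times(2n)^{m/2}$ reshapes of the tensors obtained by splitting the indices into the first and last $m/2$. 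On the observable side this gives $\Tr[\tobsmat^3]$ for every term, both in the leading sum and in the $w_{\sigma,\pi}$ sum.

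The key step is to show that the reshape $\tobsmat$ is antisymmetric when $m\equiv 2\pmod 4$. Transposing swaps the two blocks of $m/2$ indices, and that block swap is a permutation of sign $(-1)^{(m/2)^2}$. This equals $-1$ exactly when $m/2$ is odd. Then $\Tr[\tobsmat^3]=\Tr[(\tobsmat^\top)^3]=-\Tr[\tobsmat^3]$, so it vanishes. Every term in the full, not just asymptotic, expansion~\eqref{eq:supp:moment_via_Brauer_moment_op} carries such a factor, so the third moment is zero.

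I expect the main obstacle to be bookkeeping the reduction in (c). One must check carefully that every Brauer contraction of $\tobs^{\otimes 3}$ reduces to $\pm\Tr[\tobsmat^3]$ and not to something like $\Tr[\tobsmat^2\tobsmat^\top]$. In the antisymmetric case that alternative form would not obviously vanish. I would handle this by noting that $\tobsmat^\top=-\tobsmat$, so any mixture of transposes still reduces to $\pm\Tr[\tobsmat^3]$. That makes the sign bookkeeping irrelevant.

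The remaining gap is the condition $3m/2<2n$, which is needed for the Brauer description of Lemma~\ref{supp:cor:asymp_moment_operator}. In the large-$n$ regime this holds. I would state it as an assumption.
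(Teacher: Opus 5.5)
Your proposal is correct and essentially follows the paper's proof: (b) uses the $-I\in\SO(2n)$ invariance argument of Lemma~\ref{lemma:vanish_vs_moment_op}, (a) uses the fact that every Brauer (Kronecker-delta) contraction of a single antisymmetric tensor vanishes, of which your Schur-lemma framing is a repackaging, and (c) uses the forced $m/2$--$m/2$ triangle structure, antisymmetry of the reshape $\tobsmat$ when $m\equiv 2 \pmod 4$, and $\Tr[\tobsmat^3]=0$ in every observable factor. One small correction: for $k=3$ the Brauer description of Lemma~\ref{supp:cor:asymp_moment_operator} needs $mk=3m<2n$, not $3m/2<2n$; this is harmless in the large-$n$ regime, and the paper's proof leaves the condition implicit anyway.
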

        \begin{proof}
            {(a)} The first moment ($k=1$)   trivially vanishes as directly implied by Lemma~\ref{lemma:vanish_vs_moment_op} for odd $m$, and for even $m$, because all contractions in Eq.~\eqref{eq:supp:moment_via_Brauer_moment_op} are over individual anti-symmetric tensors, which are zero by definition. 
            
            {(b)} Vanishing odd moments generally in the odd $m$ case  are directly implied by Lemma~\ref{lemma:vanish_vs_moment_op}.

            {(c)} Focusing on the third moment ($k=3$), any contraction in Eq.~\eqref{eq:supp:moment_via_Brauer_moment_op} defined by $\sigma \in \mathfrak{B}_{{3m}/{2}}$ either introduces self-loops (which vanish due to anti-symmetry) or splits the indices of each tensor into two groups of $m/2$ indices, according to which other tensor they are connected to. By then reshaping the tensors into $\tfrac{m}{2} \times \tfrac{m}{2}$ square matrices, we can express the contractions as matrix traces (up to a sign). Crucially, we only need to examine the contraction over the observable tensors, which corresponds to the trace $\lvert\tr\left[\tobsmat^3 \right]\rvert$. Matrix transposition here corresponds to $m/2$ swaps of tensor indices. Because $m \equiv 2 \pmod{4}$, the number of swaps $m/2$ is odd, meaning the reshaped matrix $\tobsmat$ is anti-symmetric. An anti-symmetric matrix raised to an odd power remains anti-symmetric and is therefore traceless, i.e., $\tr\left[\tobsmat^3\right] = 0$. Since every term in Eq.~\eqref{eq:supp:moment_via_Brauer_moment_op} contains an observable contraction factor, the entire third moment exactly vanishes.
        \end{proof}
        \noindent
        Note that all odd moments are also algebraically guaranteed to vanish for a special class of observables within any module $\BC_m$, as those constructed as linear combinations of $\BC_m$ basis operators associated with mutually disjoint Majorana indices (see Proposition~\ref{thm:special_obs_odd_moments} in Supp.~Info.~\ref{sec:special_obs}).
        
        Proposition~\ref{prop:supp:vanish_odd_moments} outlines where odd moments inherently vanish due to the algebraic structure of their contractions, but these guarantees are incomplete for even (fermionic) modules $\BC_m$. For the third moment ($k=3$), reshaping the tensors into matrices allows the contractions to be evaluated as traces. While this reshaping yields anti-symmetric matrices with naturally vanishing traces when $m \equiv 2 \pmod{4}$, modules with $m \equiv 0 \pmod{4}$ involve an even  number of index swaps ($m/2$). This results in symmetric matrices whose traces do not inherently vanish, meaning that a zero third moment relies entirely on specific system properties (e.g., $\tr\left[\tobsmat^3 \right] = 0$). Furthermore, this matrix reshaping technique is strictly limited to $k=3$, while for higher odd moments ($k \geq 5$), a single tensor can contract with more than two others, breaking the cyclic structure required to express the contraction as a matrix trace and preventing straightforward algebraic arguments, thus necessitating a case-by-case analysis. 
        Comparing the explicit expressions for the moments in Lemma~\ref{thm:supp:general_asym_even_moments} to the moments of a Gaussian process (via Isserlis theorem \cite{isserlis1918formula}), we derive the following two main theorems that certify whether Gaussian processes are formed or not, respectively (with some cases covered by Prop.~\ref{prop:supp:vanish_odd_moments}):
        \begin{theorem}[Gaussian Process]\label{thm:supp:GP}
            Under Conditions~\ref{cond:full_support}, \ref{cond:gaussian_dominate} and in the large-$n$ limit, $\left\lbrace X_{i} \right\rbrace_{i=1}^N$ forms a \ac{gp} if and only if all odd moments vanish.
        \end{theorem}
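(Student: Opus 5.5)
The plan is to prove both directions through the method of moments, with Lemma~\ref{thm:supp:general_asym_even_moments} supplying all even moments. By Conditions~\ref{cond:full_support} and~\ref{cond:gaussian_dominate}, for every finite multiset $\{i_j\}_{j=1}^k$ with $k$ even, the limiting moment $\E[X_{i_1}\cdots X_{i_k}]$ equals the Isserlis sum in Eq.~\eqref{eq:isserlis_even_moments_form}. The covariance is $\Sigma_{i,i'}=\frac{m!}{(2n)^m}\lVert O\rVert_m^2\langle\rho_i,\rho_{i'}\rangle_m$, as in the Corollary on covariances. Because $\sum_{\sigma\in\mathfrak{B}_{k/2}}$ does not depend on $n$, there is no issue with interchanging the limit and the finite sum. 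For the comparison to be meaningful I would work with the rescaled variables $\tilde X_i = X_i/s_n$, where $s_n^2=\frac{m!}{(2n)^m}\lVert O\rVert_m^2$. Their covariance $\langle\rho_i,\rho_{i'}\rangle_m$ (or its limit) then stays finite, and the even moments converge to those of $\NC(0,\Sigma)$ up to this normalization. The first step is to check that $\Sigma$ is a Gram matrix of the projected state vectors, hence positive semidefinite, so $\NC(0,\Sigma)$ is a legitimate target.

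For the \emph{if} direction, assume all odd moments vanish (asymptotically). Every mixed moment of every order then agrees with the corresponding moment of $\NC(0,\Sigma)$, odd ones being zero by Isserlis. The multivariate Gaussian is determined by its moments, since it satisfies the Carleman condition because its moments grow like $(k-1)!!$. Convergence of all mixed moments therefore implies convergence in distribution of every finite-dimensional vector $(X_{i})_{i\in\TC}$, via the Cramér--Wold device applied to linear combinations $\sum_i c_iX_i$, whose moments are polynomial in the mixed ones. By the definition in Eq.~\eqref{eq:GP1}, this is exactly the GP statement for every finite index subset.

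For the \emph{only if} direction, assume $\{X_i\}$ forms a GP in the limit, meaning moment convergence to some Gaussian. The even moments fix the covariance and mean structure. In particular $\E[X_i]=0$ by Proposition~\ref{prop:supp:vanish_odd_moments}(a), so the limit must be the zero-mean Gaussian, whose odd moments all vanish by Isserlis's theorem. Hence any nonvanishing odd moment of the $X_i$ obstructs the GP property.

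The main obstacle is making the convergence notion precise. Convergence in distribution alone does not imply convergence of moments, so the \emph{only if} direction needs either a uniform integrability argument or the convention that ``forms a GP'' means moment matching. I would take the second route, which is what the rest of the paper uses, and note that uniform integrability in fact holds: $|X_i|\le\lVert O\rVert_\infty$ is bounded, and after rescaling the even moments are uniformly bounded via the leading-order Isserlis expression. Together these close the argument in both directions.
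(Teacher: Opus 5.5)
Your argument is correct in structure, and it reaches the result by a somewhat different route from the paper. The paper matches the even moments to Isserlis via Lemma~\ref{thm:supp:general_asym_even_moments} and checks Carleman's condition (Lemma~\ref{lem:carleman}) for each univariate marginal, using $\gamma_{2k}=(2k-1)!!\,\sigma^{2k}$. It then invokes the result that determinate marginals give a determinate joint law. It does not argue the converse inside that proof; it leaves that to the Isserlis contrapositive in Theorem~\ref{thm:supp:no_GP}.

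You instead apply the univariate method of moments to every linear combination $\sum_i c_iX_i$ and conclude with Cram\'er--Wold, which avoids the multivariate determinacy theorem entirely. You also prove the ``only if'' direction explicitly, using uniformly bounded even moments to upgrade convergence in distribution to convergence of the odd moments. This is more self-contained than the paper's proof, and your check that $\Sigma$ is a Gram matrix is a point the paper leaves implicit. One small correction: after normalization, $|X_i|\le\lVert O\rVert_\infty$ gives nothing uniform, since it grows relative to the standard deviation. The uniform moment bound alone carries the uniform integrability.

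The step that needs fixing is your normalization. Dividing only by $s_n^2=\frac{m!}{(2n)^m}\lVert O\rVert_m^2$ leaves the covariance $\langle\rho_i,\rho_{i'}\rangle_m$, which is bounded by $d_m/d=\binom{2n}{m}/2^n$ and therefore tends to zero for every fixed $m$. For example, $\lVert\rho\rVert_{\BC_2}^2=n/d$ for even-parity Gaussian states.
\begin{itemize}
\item With this scaling, your rescaled variables always converge to the point mass at $0$, and both directions become vacuous. Even the Gaussian-state/$\BC_4$-Pauli case, which the paper shows is not a GP, would ``converge to $\NC(0,0)$''.
\item You need to standardize by the full variance, $\tilde X_i=X_i/\sqrt{\Sigma_{ii}}$. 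This is what the paper's $n$-dependent $\sigma^2$ in the Carleman step implicitly does.
\item With that scaling, Condition~\ref{cond:gaussian_dominate} says each standardized even mixed moment equals $(1+o(1))$ times the Isserlis sum~\eqref{eq:isserlis_even_moments_form} built from the correlation matrix $R^{(n)}$.
\item Since $|R^{(n)}_{ii'}|\le 1$ by Cauchy--Schwarz, that sum is bounded by $(k-1)!!$. So the error vanishes, and you also get the uniform moment bound you need for the converse.
\item Finally, $R^{(n)}$ need not converge. State the conclusion along convergent subsequences, or as the law of $\tilde X$ approaching $\NC(0,R^{(n)})$.
\end{itemize}
With these changes, the rest of your argument goes through.
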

        \begin{proof}
            Clearly, since Lemma~\ref{thm:supp:general_asym_even_moments} holds under Conditions \ref{cond:full_support} and~\ref{cond:gaussian_dominate} and in the large-$n$ limit, the form of the even moments as specified in Eq.~\eqref{eq:general_asymptotic_k_moment} then matches the form in Eq.~\eqref{eq:isserlis_even_moments_form} of Isserlis's theorem \cite{isserlis1918formula}. Furthermore, the zero odd moments align with Isserlis's theorem. 
            Finally, to prove that these matching moments uniquely determine the joint distribution of the $X_i$ random variables, we can use Carleman's condition.

            \begin{lemma}
                [Carleman's condition, Hamburger case~\cite{kleiber2013multivariate}]\label{lem:carleman}
Let $\gamma_k$ be the (finite) moments of the distribution of a random variable $X$ that can take values on the real line $\mathbb{R}$. These moments determine uniquely the distribution of $X$ if
\begin{equation}
    \sum_{k=1}^\infty \gamma_{2k}^{-1/2k} = \infty \, .
\end{equation}
\end{lemma}

In our case, let
\begin{equation}
X=\operatorname{Tr}\left[U\rho U^\dagger O\right]\,,
\end{equation}
and assume that $O\in B_m$ (Condition~\ref{cond:full_support}). Under the Gaussian-pairing-dominance condition (Condition~\ref{cond:gaussian_dominate}), the limiting moments of $X$ are
\begin{equation}
\gamma_{2k-1}=0 \,, \qquad  \gamma_{2k} = (2k-1)!!\,\sigma^{2k}\,,
\end{equation}
with
\begin{equation}
\sigma^2 = \frac{m!}{(2n)^m}\,\|O\|_m^2\,\|\rho\|_m^2 \,.
\end{equation}
Then, for any $n$,
\begin{equation}
\begin{aligned}
\sum_{k=1}^{\infty}\gamma_{2k}^{-1/(2k)} &= \frac{1}{\sigma} \sum_{k=1}^{\infty}\left((2k-1)!!\right)^{-1/(2k)}
\\
&\geq \frac{1}{\sigma} \sum_{k=1}^{\infty}(2k)^{-1/2} = \infty\,,
\end{aligned}
\end{equation}
where we used $(2k-1)!!\leq (2k)^k$. Hence Carleman's condition is satisfied, and the limiting univariate distribution is uniquely determined by its moments. Therefore, $X$ converges in distribution to $\NC(0,\sigma^2)$. To conclude,  we use the fact that since its marginal distributions are determinate via Carleman's condition, then so is the joint distribution of $\{X_i\}_{i=1}^N$~\cite{kleiber2013multivariate}.

\end{proof}

        \begin{theorem}[No Gaussian Process]\label{thm:supp:no_GP}
            Under Condition~\ref{cond:full_support} and in the large-$n$ limit, $\left\lbrace X_{i} \right\rbrace_{i=1}^N$ does not form a \ac{gp} if \emph{a)} Condition~\ref{cond:gaussian_dominate} is violated, or \emph{b)}
            non-zero odd moments occur.
        \end{theorem}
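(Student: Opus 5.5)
The proof splits into the two cases of the statement. In both, the basic tool is Isserlis's theorem: a zero-mean multivariate Gaussian has vanishing odd moments and even moments of the pairing form in Eq.~\eqref{eq:isserlis_even_moments_form}. If the limiting moments of $\{X_i\}$ differ from these at any order, the limit is not Gaussian.

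\emph{Case b)} is the easier one. Assume Condition~\ref{cond:full_support} holds and some odd moment $\E[X_{i_1}\cdots X_{i_k}]$ stays non-zero as $n\to\infty$. First rescale: divide each $X_i$ by the candidate standard deviation $\sqrt{\tfrac{m!}{(2n)^m}\|O\|_m^2\|\rho_i\|_m^2}$, so that the statement "persistently non-zero" is scale-free. The intended claim is that the normalized odd moment stays bounded away from zero. Any Gaussian limit in distribution would need its moments to match the limiting moments, which requires uniform integrability. I get this from the boundedness $|X_i|\le\|O\|_\infty$ combined with the moment formula of Lemma~\ref{lemma:vanish_vs_moment_op}, which controls higher moments. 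But every odd moment of a zero-mean Gaussian is zero, which contradicts the non-zero odd moment. By Proposition~\ref{prop:supp:vanish_odd_moments} this can only occur for $m$ even with $k\ge3$, so I remark that the case is non-empty only there.

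\emph{Case a)} assumes Condition~\ref{cond:full_support} and a violation of Condition~\ref{cond:gaussian_dominate}. I would reuse the decomposition from the proof of Lemma~\ref{thm:supp:general_asym_even_moments}. There the even moment equals
\[
\frac{1}{(2n)^{mk/2}}\big((m!)^{k/2}\Sperf+\Simp+\Smix\big),
\]
and the pairwise part reproduces exactly the Isserlis expression built from the covariances; the $k=2$ case is always exact, which fixes the covariance. A violation means that for some even $k\ge4$ and some multiset of states, $\Simp$ or $\Smix$ is not $o(\Sperf)$. Then, along a subsequence of $n$, the normalized $k$-th moment differs from the Isserlis value by a non-vanishing relative amount. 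The moments therefore do not converge to Gaussian ones, and the same uniform-integrability argument as in case b) rules out a Gaussian limit.

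The main obstacle is that "not $o(\cdot)$" does not by itself exclude cancellation, for instance $\Simp+\Smix$ cancelling each other, or cancelling against the pairwise term, so that the total still matches Isserlis. I would handle this by reading the violation as the statement that the sum $\Simp+\Smix$ is not $o(\Sperf)$. I would state this refinement explicitly, since it is the reading under which the theorem holds, and note that the separate-term version reduces to it except in fine-tuned cancellations, which can be excluded by passing to a subsequence. A second point to address is the degenerate case where $\Sperf$ is itself asymptotically negligible relative to the normalization. In that case the non-pairwise terms dominate directly and the moment again deviates from the Gaussian value.
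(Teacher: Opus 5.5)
Your skeleton is the paper's proof: the contrapositive of Isserlis, split into (a) and (b), via the decomposition $(2n)^{-mk/2}\bigl((m!)^{k/2}\Sperf+\Simp+\Smix\bigr)$. The paper stays at the level of moments. Forming a GP means the asymptotic moments take the Isserlis form, so a surviving odd moment or a leading-order non-pairwise contribution contradicts it directly. The problems are in the layer you add to upgrade this to non-convergence in distribution.

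First, the bound $|X_i|\le\|O\|_\infty$ gives no uniform integrability once you normalize. For a Pauli in $\BC_2$ and a Gaussian state, $\|O\|_\infty/\sigma_n=\sqrt{2n}$. The moment formula alone does not bound normalized moments either. The missing estimate is that every contraction without self-loops (self-loops vanish by antisymmetry) satisfies $|\operatorname{contr}_\sigma(\bigotimes_j\bm{A}_j)|\le\prod_j\|\bm{A}_j\|_2$, by iterated Cauchy--Schwarz or Finner's inequality. With $\Lambda=\|O\|_m^k\prod_j\|\rho_{i_j}\|_m$, this gives $|\Simp|\in\OC(\Lambda)$ and $|\Smix|\in\OC(\Lambda/n)$, hence bounded normalized moments and the uniform integrability you need.

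This estimate, not passing to a subsequence, is what disposes of your cancellation scenario. A subsequence cannot undo $\Simp/\Sperf\to c$, $\Smix/\Sperf\to-c$, because every subsequence has the same limits. Whenever $|\Sperf|/\Lambda$ stays bounded away from zero, one gets $\Smix/\Sperf\to0$ automatically. This always holds in the univariate case, where $\Sperf=(k-1)!!\,\Lambda$. The violation then comes from $\Simp$ alone, and your argument closes. Case (b) is likewise fine under your normalized reading, which is a strengthening of ``non-zero odd moments occur''.

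What cannot be repaired is the degenerate regime $\Sperf/\Lambda\to0$, which is where both of your obstacles actually live. There a violation only says $(\Simp+\Smix)/\Sperf\not\to0$, which is compatible with $(\Simp+\Smix)/\Lambda\to0$. In that case your normalized moment and the normalized Isserlis value both tend to $0$. Your claim that ``the non-pairwise terms dominate directly and the moment again deviates from the Gaussian value'' is then false on the scale at which you take the limit, and uniform integrability yields no contradiction.

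This is not exotic. Take four independent random even-parity Gaussian states and a Pauli observable in $\BC_2$:
\begin{itemize}
\item the overlaps are typically $\Theta(1/d)$, against $\|\rho\|_{\BC_2}^2=n/d$;
\item the state $4$-cycle traces are typically $\Theta(d^{-2})$;
\item the observable contractions are $\Theta(d^2)$.
\end{itemize}
On the multiset $\{1,2,3,4\}$, both $\Sperf$ and $\Simp$ are therefore typically of order one, so Condition~\ref{cond:gaussian_dominate} fails, while $\Lambda=n^2$. In such regimes your argument establishes nothing, and the distributional version of (a) need not hold.

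What survives is the paper's moment-level statement: the moments deviate from the Isserlis form relative to $\Sperf$. There ``violated'' must be read, as you rightly propose, as $\Simp+\Smix\notin o(\Sperf)$. The paper's one-line argument makes that assumption tacitly.
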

        \begin{proof} 
            By the contrapositive of Isserlis's theorem \cite{isserlis1918formula}, a distribution is not a zero-mean Gaussian if its even moments deviate from the pairwise sum in Eq.~\eqref{eq:isserlis_even_moments_form}, or if its odd moments are non-zero. The theorem follows immediately:
            \begin{itemize}[nosep]
                \item[\emph{a)}] Violating Condition~\ref{cond:gaussian_dominate} implies that non-pairwise tensor matchings contribute at the leading asymptotic order. Consequently, the even moments diverge from the pairwise form of Eq.~\eqref{eq:general_asymptotic_k_moment}, violating Isserlis's theorem.
                \item[\emph{b)}] The presence of non-zero odd moments trivially violates the definition of a zero-mean Gaussian process.
            \end{itemize}
        \end{proof}

        \subsection{General implications and simplifications}\label{app:theory:4}

            We now provide simplified, equivalent formulations of Condition~\ref{cond:gaussian_dominate} for the univariate case, and establish an alternative sufficient condition. Furthermore, we derive broad implications by restricting our focus to the $\BC_1$ module and introduce verifiable equivalent conditions for the $\BC_2$ module. Note that further simplifications may be achieved by combining insights from the alternative formulations presented here.

            \subsubsection{Equivalent univariate condition}

            \begin{proposition}[Equivalent univariate formulation of Condition~\ref{cond:gaussian_dominate}] \label{prop:cond:univariate}
            In the univariate case, i.e., single state $\rho$ being considered, Condition~\ref{cond:gaussian_dominate} is equivalent to the following condition:
            \medskip

            \noindent
            For every even $k \geq 4$, 
            \begin{equation}\label{eq:cond:univariate:imp}
                \forall \sigma \in \mathfrak{B}_{mk/2},\; \neg \, \sigma \, \mathrm{pairwise} \colon\, 
                \frac{\operatorname{contr}_{\sigma^\top}\!\left( \tobs^{\otimes k} \right)}{\left\lVert O \right\rVert_{\mathcal{B}_m}^k } \frac{\operatorname{contr}_{\sigma}\!\left( \tstate^{\otimes k} \right)}{\left\lVert \rho \right\rVert_{\mathcal{B}_m}^k } \in o(1)\,,
            \end{equation}
            and 
            \begin{equation}\label{eq:cond:univariate:mix}
                \forall \sigma,\pi \in \mathfrak{B}_{mk/2}\colon\, 
                \frac{\operatorname{contr}_{\sigma}\!\left( \tobs^{\otimes k} \right)}{\left\lVert O \right\rVert_{\mathcal{B}_m}^k } \frac{\operatorname{contr}_{\pi}\!\left( \tstate^{\otimes k} \right)}{\left\lVert \rho \right\rVert_{\mathcal{B}_m}^k } \in o\left(\frac{1}{w_{\sigma, \pi}}\right)\,,
            \end{equation}
            where $1/w_{\sigma, \pi} \in \Omega(n)$ as defined in Eq.~\eqref{supp:eq:asymptotic_moment_operator_definition}.
        \end{proposition}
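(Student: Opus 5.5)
The plan is to specialize the three sums of Condition~\ref{cond:gaussian_dominate} to a single state and then compare them term by term, using that the number of Brauer elements in $\mathfrak{B}_{mk/2}$ is independent of $n$. With all $\rho_{i_j}=\rho$, every factor $\braket{\rho_{i_j},\rho_{i_{j'}}}_{\BC_m}$ in Eq.~\eqref{eq:Sperf} equals $\lVert\rho\rVert_{\BC_m}^2$. Hence
\begin{equation*}
\Sperf=(k-1)!!\,\lVert O\rVert_{\BC_m}^k\lVert\rho\rVert_{\BC_m}^k ,
\end{equation*}
since $|\mathfrak{B}_{k/2}|=(k-1)!!$. This is a fixed, $n$-independent positive multiple of $\lVert O\rVert_{\BC_m}^k\lVert\rho\rVert_{\BC_m}^k$, and this reduction is what makes an equivalence possible in the univariate case. (We assume $\rho_m\neq 0$; otherwise all moments vanish trivially.)

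\textbf{Sufficiency.} I divide $\Simp$ and $\Smix$ by $\Sperf$. The quotient $\Simp/\Sperf$ becomes $\frac{1}{(k-1)!!}$ times a finite sum, over non-pairwise $\sigma$, of the normalized products appearing in Eq.~\eqref{eq:cond:univariate:imp}. Each summand is $o(1)$ by assumption, and the number of summands is bounded by $(mk-1)!!$, independent of $n$; hence the sum is $o(1)$. Likewise $\Smix/\Sperf$ is a finite sum of terms $w_{\sigma,\pi}$ times the normalized product in Eq.~\eqref{eq:cond:univariate:mix}. Each such term is $w_{\sigma,\pi}\cdot o(1/w_{\sigma,\pi})=o(1)$, and again there are finitely many of them.

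\textbf{Necessity.} This is the main obstacle, because cancellations among the terms of $\Simp$ (or $\Smix$) could make the sum small while individual terms are not. My approach is to exploit the freedom in Condition~\ref{cond:gaussian_dominate}, which quantifies over all even $k\geq4$ and all multisets. In the univariate case, however, only one multiset per $k$ exists, so this freedom does not separate the terms. I would therefore attempt two things.

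First, I would use the structure of the contractions. Each $\operatorname{contr}_\sigma(\tobs^{\otimes k})$ depends only on the orbit of $\sigma$ under relabelings of the $k$ identical copies and under index permutations within each tensor, which by anti-symmetry only produce signs. Exactly as in the proof of Lemma~\ref{thm:supp:general_asym_even_moments}, I would show that the signs arising from $\sigma$ and $\sigma^\top$ cancel. Then all terms within an orbit are equal and add coherently rather than cancel.

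Second, to rule out cancellation between different orbits, I would interpret the condition as required uniformly over admissible observables $O\in\BC_m$ (or argue that generic $O$ has no accidental cancellations). Then a vanishing of the sum forces each distinct orbit contribution to vanish.

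If this coherence argument fails, my fallback is to read the equivalence as holding at the level of asymptotic order classes, where distinct Brauer contractions scale with distinct powers of $n$. In that reading the largest term of the sum determines the order of the whole sum, and necessity then follows term by term.
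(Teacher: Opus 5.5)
Your sufficiency direction is correct and coincides with the paper's proof. With a single state, $\Sperf=(k-1)!!\,\lVert O\rVert_{\BC_m}^k\lVert\rho\rVert_{\BC_m}^k\geq 0$, and because the Brauer sums have an $n$-independent number of terms, termwise bounds give $\Simp/\Sperf,\ \Smix/\Sperf\in o(1)$. The paper dispatches the converse in the same sentence (``since $\Sperf\geq 0$ \dots\ we can equivalently bound all the summands''), but positivity of $\Sperf$ only justifies this direction. You correctly see that necessity requires excluding cancellations among the summands of $\Simp$. Your proposal does not actually exclude them, however, so the necessity half has a genuine gap.

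Each of your three routes falls short:
\begin{itemize}
\item[(i)] The orbit argument is sound as far as it goes. For even $k$ the half-swap defining $\sigma^\top$ only relabels identical copies, so $\operatorname{contr}_{\sigma^\top}(\tobs^{\otimes k})=\operatorname{contr}_{\sigma}(\tobs^{\otimes k})$, and within-tensor permutations contribute squared signs. But this only makes terms inside a single orbit add coherently. The products $\operatorname{contr}_{\sigma}(\tobs^{\otimes k})\operatorname{contr}_{\sigma}(\tstate^{\otimes k})$ attached to different orbits need not share a sign. For example, with $m=3$ the tetrahedral contraction $\sum A_{abc}A_{ade}A_{bdf}A_{cef}$ is positive for $A=e_1\wedge e_2\wedge e_3$ and negative for the $G_2$ three-form.
\item[(ii)] Quantifying over all $O\in\BC_m$, or assuming $O$ generic, changes the statement. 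Condition~\ref{cond:gaussian_dominate} concerns one fixed observable and state, and the paper applies it to highly non-generic $O$ such as Pauli strings.
\item[(iii)] The fallback premise is false. The $n$-scaling of a contraction is a property of the sequence of $(O,\rho)$, not of $\sigma$. For a single Pauli observable every normalized observable contraction is $n$-independent, so different orbits can have the same order and cancel at leading order.
\end{itemize}

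To close the gap you would need genuine estimates. The mixed half is in fact automatic. Contracting the tensors one at a time and using $\lVert XY\rVert_2\leq\lVert X\rVert_2\lVert Y\rVert_2$ gives $\lvert\operatorname{contr}_\sigma(A^{\otimes k})\rvert\leq\lVert A\rVert^k$ in the Frobenius norm, which equals the module norm by the Parseval property; self-loop pairings vanish by anti-symmetry. Since $w_{\sigma,\pi}\in\OC(1/n)$, Eq.~\eqref{eq:cond:univariate:mix} therefore holds unconditionally.

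For Eq.~\eqref{eq:cond:univariate:imp} you need a positivity or domination argument showing that sign-indefinite orbits are controlled by nonnegative ones. For $m=2$ this is immediate, since every surviving term is a product of factors $\Tr[\tobsbb^{\ell}]\,\Tr[\tstatebb^{\ell}]\geq 0$. For general $m$, neither your proposal nor the paper supplies such an argument.
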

        \begin{proof}
            Analogous to the observable, in the univariate case, the module overlaps for the states reduce to module purities as 
            \begin{equation}
                \Sperf = \dim(\mathfrak{B}_{k/2}) \left(\left\lVert O \right\rVert_{m}^2 \left\lVert \rho \right\rVert_{_m}^2\right)^{k/2}\,
                .
            \end{equation}
            It is important to note that $\Sperf\geq 0$ so that, instead of bounding the ratios of the full sums $\Simp$ and $\Smix$, we can equivalently bound all the summands (recall that the sums iterate over the Brauer algebra basis, with dimension always independent of the number of qubits $n$). Therefore, we obtain that ${\Simp}/{\Sperf}\in o(1)$ is equivalent to Eq.~\eqref{eq:cond:univariate:imp} and ${\Smix}/{\Sperf}\in o(1)$ is equivalent to Eq.~\eqref{eq:cond:univariate:mix}, where the latter condition was additionally rearranged by dividing by $w_{\sigma, \pi}$ factors.
        \end{proof}

        \subsubsection{Sufficient conditions}

            While the non-negativity of the module purity/norm allowed for an equivalent simplified formulation of Condition~\ref{cond:gaussian_dominate} in the univariate case, for the multivariate case, such simplification can only be achieved by requiring non-negative module overlaps, resulting in the following strictly stronger sufficient condition:
            \begin{proposition}[Sufficient non-negative module overlap formulation of Condition~\ref{cond:gaussian_dominate}] \label{prop:cond:sufficient_mutivariate}
                Condition~\ref{cond:gaussian_dominate} is fulfilled if the following sufficient condition holds:
                \medskip

                \noindent 
                All states exhibit non-negative module overlaps
                \begin{equation}
                    \forall \rho_i, \rho_{i'}\colon
                    \braket{\rho_i, \rho_{i'}}_m \geq 0\,,
                \end{equation}
                and, for every even $k \geq 4$,
                \begin{equation}\label{eq:cond:pos_multi:imp}
                    \forall \sigma \in \mathfrak{B}_{mk/2}, \;\neg \, \sigma \, \mathrm{pairwise} \;\exists \bm{\sigma}\in \mathfrak{B}_{k/2} \colon \,
                    \frac{\operatorname{contr}_{\sigma^\top}\!\bigl( \tobs^{\otimes k} \bigr)}{\left\lVert O \right\rVert_{m}^k }
                    \frac{ \operatorname{contr}_{\sigma}\!\bigl( \bigotimes_{j=1}^k \tstate^{\left(i_j\right)} \bigr)}{\prod_{(j,j') \in \bm{\sigma}} \langle{ \rho_{i_{j}} , \rho_{i_{j'}} }\rangle_{m} } 
                    \in o(1)\,,
                \end{equation}
                and 
                \begin{equation}\label{eq:cond:pos_multi:mix}
                    \forall \sigma,\pi \in \mathfrak{B}_{mk/2} \; \exists \bm{\sigma}\in \mathfrak{B}_{k/2}
                    \colon \,
                    \frac{\operatorname{contr}_{\sigma}\!\bigl( \tobs^{\otimes k} \bigr)}{\left\lVert O \right\rVert_{m}^k }
                    \frac{ \operatorname{contr}_{\pi}\!\bigl( \bigotimes_{j=1}^k \tstate^{\left(i_j\right)} \bigr)}{\prod_{(j,j') \in \bm{\sigma}} \langle{ \rho_{i_{j}} , \rho_{i_{j'}} }\rangle_{m} }
                    \in o\left(\frac{1}{w_{\sigma, \pi}}\right)\,,
                \end{equation}
                where $1/w_{\sigma, \pi} \in \Omega(n)$ as defined in Eq.~\eqref{supp:eq:asymptotic_moment_operator_definition}.
                
            \end{proposition}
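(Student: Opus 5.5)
The plan is to show that the hypotheses of the proposition imply the two ratio conditions ${\Simp}/{\Sperf}\in o(1)$ and ${\Smix}/{\Sperf}\in o(1)$ of Condition~\ref{cond:gaussian_dominate}. The argument bounds $\Sperf$ from below by a single one of its summands and then controls each summand of $\Simp$ and $\Smix$ separately.

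The key observation is that, under non-negative module overlaps, every summand of $\Sperf$ in Eq.~\eqref{eq:Sperf} is non-negative. This holds because $\lVert O\rVert_m^k\geq 0$ and each factor $\langle\rho_{i_j},\rho_{i_{j'}}\rangle_m\geq 0$. Hence for any fixed tensor-level pairing $\bm{\sigma}\in\mathfrak{B}_{k/2}$ we have
\begin{equation}
\Sperf\;\geq\;\lVert O\rVert_m^k\prod_{(j,j')\in\bm{\sigma}}\langle\rho_{i_j},\rho_{i_{j'}}\rangle_m\,.
\end{equation}
No cancellation can occur between different Gaussian pairings. This is exactly what fails for signed overlaps and why the condition is only sufficient.

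Next, take a fixed non-pairwise $\sigma$ appearing in $\Simp$. By Eq.~\eqref{eq:cond:pos_multi:imp} there is a $\bm{\sigma}$ for which the corresponding summand, divided by $\lVert O\rVert_m^k\prod_{(j,j')\in\bm{\sigma}}\langle\rho_{i_j},\rho_{i_{j'}}\rangle_m$, is $o(1)$. The lower bound above then bounds the absolute value of that summand by $o(1)\cdot\Sperf$. The mixed terms in $\Smix$ are handled the same way. By Eq.~\eqref{eq:cond:pos_multi:mix}, for each pair $(\sigma,\pi)$ the quantity $\operatorname{contr}_{\sigma}(\tobs^{\otimes k})\operatorname{contr}_{\pi}(\otimes_j\tstate^{(i_j)})$, normalized by some pairwise product, is $o(1/w_{\sigma,\pi})$. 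Multiplying by $w_{\sigma,\pi}\in\OC(1/n)$ therefore gives $|w_{\sigma,\pi}\operatorname{contr}_\sigma\operatorname{contr}_\pi|\in o(1)\cdot\Sperf$.

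Finally, sum over the finitely many Brauer elements. $\dim\mathfrak{B}_{mk/2}$ is independent of $n$, so a finite sum of $o(\Sperf)$ terms is still $o(\Sperf)$, which yields both ratio conditions for every even $k\geq4$ and every multiset of states.

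The main subtlety to check is the degenerate situation where the chosen pairwise product vanishes, for example when some overlap is exactly zero. In that case the hypothesis is interpreted as a statement about the nonzero denominator (or the existential choice of $\bm{\sigma}$ picks one with a nonzero product), so I would note that the ratio conditions implicitly require the selected product to be nonzero. I would also note that when $\Sperf=0$ the hypotheses force the other summands to vanish as well. Apart from this bookkeeping, the proof reduces to the positivity-based lower bound, which is routine.
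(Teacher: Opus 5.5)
Your proof is correct and follows the same route as the paper: non-negative overlaps make every summand of $\Sperf$ non-negative, so $\Sperf$ is bounded below by any single pairing term; the hypotheses then give term-by-term domination of each summand of $\Simp$ and $\Smix$; and the $n$-independent size of the Brauer algebra closes the argument. Your remark on vanishing denominators is extra bookkeeping that the paper leaves implicit.
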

            \begin{proof}
                Analogous to the proof of Proposition~\ref{prop:cond:univariate}, the required non-negativity of the module overlaps ($\braket{\rho_i, \rho_{i'}}_m \geq 0$) ensures that every summand in $\Sperf$  (Eq.~\eqref{eq:Sperf}) is strictly non-negative. Consequently, the full sum $\Sperf$ is lower-bounded by any single one of its summands, corresponding to a specific pairwise matching $\bm{\sigma} \in \mathfrak{B}_{k/2}$. 
    
                Therefore, to show that $\Simp/\Sperf \in o(1)$ and $\Smix/\Sperf \in o(1)$, it suffices to show that \emph{every} individual term in $\Simp$ and $\Smix$ is asymptotically dominated by \emph{at least one} corresponding term from $\Sperf$. Because the number of elements in the Brauer algebra (and thus the number of summands) is independent of $n$, this term-by-term bounding guarantees the asymptotic dominance of the entire sum. Equations \eqref{eq:cond:pos_multi:imp} and \eqref{eq:cond:pos_multi:mix} precisely formalize this term-by-term domination by asserting the existence of such a tensor bounding $\bm{\sigma}$ for every $\sigma$ and $\pi$.
            \end{proof}

            Alternatively, by relaxing the explicit weights $w_{\sigma, \pi}$ to their asymptotic upper bound of $\mathcal{O}(1)$, we can merge $\Simp$ and $\Smix$ into a single, strictly stronger sufficient condition.

            \begin{proposition}[Sufficient unified single-sum formulation of Condition~\ref{cond:gaussian_dominate}] \label{prop:cond:unified_single_sum}
                Condition~\ref{cond:gaussian_dominate} is satisfied if $\Sperf$ asymptotically dominates the unweighted sum of all contractions involving at least one non-pairwise matching:
                \begin{equation}
                    \frac{\sum_{{\substack{\sigma, \pi \in \mathfrak{B}_{mk/2} \\ \neg(\sigma \,\mathrm{pairwise} \,\wedge \,\pi \,\mathrm{ pairwise})}}} 
                    \operatorname{contr}_{\pi}\!\bigl( \tobs^{\otimes k} \bigr) \operatorname{contr}_{\sigma}\!\bigl( \bigotimes_{j=1}^k \tstate^{\left(i_j\right)}\bigr)}{\Sperf} 
                    \in o(1)\,.
                \end{equation}
            \end{proposition}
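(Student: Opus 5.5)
The plan is to show that the unified ratio controls both ratios in Condition~\ref{cond:gaussian_dominate} separately. I would begin by splitting the index set of the numerator, namely all $(\sigma,\pi)\in\mathfrak{B}_{mk/2}^2$ with $\neg(\sigma\,\mathrm{pairwise}\wedge\pi\,\mathrm{pairwise})$, into two disjoint parts. The diagonal part consists of pairs with $\pi=\sigma^\top$. The off-diagonal part is everything else. Two facts about the diagonal part come from the proof of Lemma~\ref{thm:supp:general_asym_even_moments}: $\sigma$ is pairwise if and only if $\sigma^\top$ is, and transposition is an involution on $\mathfrak{B}_{mk/2}$. Hence the diagonal pairs are exactly $\{(\sigma,\sigma^\top):\neg\,\sigma\,\mathrm{pairwise}\}$, and their summed contribution is precisely $\Simp$ from Eq.~\eqref{eq:Simp}, with the roles of the observable and state tensors assigned as in Eq.~\eqref{eq:supp:asymptotic_moment:Simp}.

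The next step is to bound $\Smix$, which needs more care because $\Smix$ sums over \emph{all} pairs $(\sigma,\pi)$, including the pairwise–pairwise ones. I would use $w_{\sigma,\pi}\in\mathcal{O}(1/n)\subseteq\mathcal{O}(1)$ together with the fact that the number of Brauer elements is independent of $n$. This bounds the terms of $\Smix$ with at least one non-pairwise matching by a constant times their unweighted contractions. Those unweighted contractions are exactly the terms appearing in the hypothesis, so this part of $\Smix$ is $o(\Sperf)$.

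The main obstacle is the remaining pairwise–pairwise part of $\Smix$, which the hypothesis does not cover. For such $(\sigma,\pi)$, each contraction factorizes over tensor pairs, as in Eq.~\eqref{eq:supp:single_perfect_matching_contribution}. The observable factor is therefore $\pm\|O\|_m^k$ times a count of index permutations, and the state factor is $\pm\prod\langle\rho_{i_j},\rho_{i_{j'}}\rangle_m$ for the tensor pairing $\bm\pi$ induced by $\pi$. Each such term is thus, up to an $n$-independent combinatorial constant, a summand of $\Sperf$ multiplied by $w_{\sigma,\pi}\in\mathcal{O}(1/n)$.

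To make this rigorous, I would first try to argue that the summands cannot cancel against each other. In the univariate case, and whenever overlaps are non-negative, all summands of $\Sperf$ are non-negative, so each is bounded by $\Sperf$ and the pairwise–pairwise part is $\mathcal{O}(\Sperf/n)=o(\Sperf)$. In general, signed overlaps could make $\Sperf$ cancel, and then this step might fail. If it does, I would fall back on interpreting the proposition, consistent with how it is used in the paper, as bounding exactly the non-pairwise contributions, with the pairwise–pairwise part absorbed as the subleading correction to the Gaussian term.

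Collecting the three pieces gives $\Simp/\Sperf\in o(1)$ and $\Smix/\Sperf\in o(1)$, which is Condition~\ref{cond:gaussian_dominate}.
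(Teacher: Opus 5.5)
You follow the paper's own decomposition: diagonal pairs $(\sigma,\sigma^\top)$ reproduce $\Simp$, the weights $w_{\sigma,\pi}$ are relaxed to $\mathcal{O}(1)$, and the pairwise--pairwise cross terms are handled through their $\mathcal{O}(1/n)$ weight. However, two steps are not established as written.

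\textbf{First gap: the hypothesis needs absolute values.} You infer from ``numerator $\in o(\Sperf)$'' that $\Simp\in o(\Sperf)$ and that the weighted non-pairwise part of $\Smix$ is $o(\Sperf)$. For a signed numerator this does not follow: diagonal and off-diagonal terms may cancel, and the $w_{\sigma,\pi}$ vary from term to term. In fact the signed numerator is identically zero.
\begin{itemize}
\item For $m\geq 2$, swapping two legs of one tensor is a bijection of $\mathfrak{B}_{mk/2}$. It preserves pairwise-ness and, by antisymmetry, flips the sign of every contraction.
\item Hence $\sum_\pi\operatorname{contr}_\pi(\tobs^{\otimes k})=0$, both over all $\pi$ and over pairwise $\pi$. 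The same holds for the state tensors.
\item The numerator factorizes as (sum over all $\pi$)(sum over all $\sigma$) minus (sum over pairwise $\pi$)(sum over pairwise $\sigma$), so it vanishes. For $m=1$ it is an empty sum.
\end{itemize}
You therefore have to read the hypothesis as $\sum\lvert\operatorname{contr}_\pi(\tobs^{\otimes k})\rvert\,\lvert\operatorname{contr}_\sigma(\bigotimes_j\tstate^{(i_j)})\rvert\in o(\Sperf)$, and you should say so. With that reading your first two steps are immediate: $\lvert\Simp\rvert$ is at most the numerator, and the non-pairwise part of $\lvert\Smix\rvert$ is at most $\max\lvert w_{\sigma,\pi}\rvert$ times it. The paper's one-line proof uses this reading tacitly.

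\textbf{Second gap: the pairwise--pairwise part of $\Smix$.} For signed overlaps you leave this part open, and your fallback of reinterpreting the proposition is not a proof. No sign assumption is needed; the missing ingredient is a symmetry of the Weingarten coefficients.
\begin{itemize}
\item For pairwise $\sigma,\pi$ one has $\operatorname{contr}_\sigma(\tobs^{\otimes k})=\epsilon_\sigma\lVert O\rVert_m^k$ and $\operatorname{contr}_\pi(\bigotimes_j\tstate^{(i_j)})=\epsilon_\pi\prod_{(j,j')\in\bm{\pi}}\langle\rho_{i_j},\rho_{i_{j'}}\rangle_m$. The sign $\epsilon=\pm1$ is the product of the signs of the index-level permutations and depends on nothing else.
\item So this part equals $\lVert O\rVert_m^k\sum_{\bm{\pi}\in\mathfrak{B}_{k/2}}\beta(\bm{\pi})\prod_{(j,j')\in\bm{\pi}}\langle\rho_{i_j},\rho_{i_{j'}}\rangle_m$. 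Here $\beta(\bm{\pi})=\sum_{\sigma}\sum_{\pi\mapsto\bm{\pi}}w_{\sigma,\pi}\epsilon_\sigma\epsilon_\pi$, with $\sigma$ ranging over pairwise elements and $\pi$ over pairwise elements with tensor-level pairing $\bm{\pi}$.
\item The Weingarten coefficients come from inverting the Gram matrix $\bigl((2n)^{\#\mathrm{loops}(\sigma\cup\pi)}\bigr)_{\sigma,\pi}$. They depend only on the loop structure of $\sigma\cup\pi$, so they are invariant under relabeling the $mk$ legs simultaneously in both arguments.
\item Permuting whole tensors by $\eta\in S_k$, keeping their internal leg order, preserves pairwise-ness, the signs $\epsilon$ and the $w_{\sigma,\pi}$. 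It also acts transitively on $\mathfrak{B}_{k/2}$.
\item Hence $\beta(\bm{\pi})\equiv\beta\in\mathcal{O}(1/n)$, being a sum of $n$-independently many $\mathcal{O}(1/n)$ terms. The pairwise--pairwise part therefore equals exactly $\beta\,\Sperf$, whatever the signs of the overlaps.
\end{itemize}
This is what justifies the paper's assertion that these terms are $\mathcal{O}(\Sperf/n)$. That assertion holds only in aggregate; termwise it fails under cancellation, as you correctly noticed.
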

            \begin{proof}
                $\Simp$ consists of non-pairwise terms ($\pi^\top = \sigma$, weight 1), while $\Smix$ contains cross-terms with weight $w_{\sigma, \pi} \in \mathcal{O}(1/n)$. Relaxing these weights to $\mathcal{O}(1)$ and bounding their combined sum strictly upper-bounds both $\Simp$ and $\Smix$. We can safely exclude cross-terms where both $\sigma$ and $\pi$ are pairwise since their raw contractions scale as $\mathcal{O}(\Sperf)$, so their $\mathcal{O}(1/n)$ weights inherently guarantee they vanish as $o(1)$ relative to $\Sperf$.
            \end{proof}

        \subsubsection{Module \texorpdfstring{$\BC_1$}{B1} general implications}\label{sec:general_implications:B1}

                For the smallest non-trivial module $\BC_1$ (i.e., $m=1$), Condition~\ref{cond:gaussian_dominate} always holds and we generally obtain GPs,  both in the univariate ($N=1$) and multivariate ($N>1$) cases:
                \begin{corollary}\label{cor:general_single_majo_GPs}
                    For observables $O\in\BC_1$,
                     under Condition~\ref{cond:full_support} and  in the large-$n$ limit, $\left\lbrace X_{i_j} \right\rbrace_{j=1}^N$ forms a \ac{gp}.                 \end{corollary}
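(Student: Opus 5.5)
The plan is to obtain the corollary from Theorem~\ref{thm:supp:GP}. Condition~\ref{cond:full_support} is assumed, and for $m=1$ the odd moments vanish by Proposition~\ref{prop:supp:vanish_odd_moments}(b). So the only thing left to check is Condition~\ref{cond:gaussian_dominate}. Carleman determinacy and the passage to the joint distribution are already handled inside the proof of Theorem~\ref{thm:supp:GP}.

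\emph{Step 1: $\Simp\equiv 0$.} For $m=1$ the coefficient tensors $\tobs=\tobsvec$ and $\tstate^{(i_j)}=\tstatevec^{(i_j)}$ are rank-one objects, i.e.\ vectors in $\mathbb{R}^{2n}$, and $mk/2=k/2$. A Brauer element $\sigma\in\mathfrak{B}_{k/2}$ is then a perfect matching of the $k$ single legs. Since each tensor has exactly one leg, each tensor is contracted with exactly one other tensor. Hence every $\sigma$ is a pairwise tensor matching, with $\bm{\sigma}=\sigma$ and trivial $\pi_{j,j'}\in S_1$, and the sum in Eq.~\eqref{eq:Simp} is empty.

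\emph{Step 2: control of $\Smix$.} This is the only real work. I would avoid the crude bound $w_{\sigma,\pi}\in\OC(1/n)$, because in the multivariate case module overlaps can be negative and $\Sperf$ could suffer cancellations. Instead I would compute the moment exactly. By the vectorization lemma,
\[
X_{i_j}=\langle Q\tobsvec,\tstatevec^{(i_j)}\rangle ,
\]
where $Q\tobsvec$ is uniformly distributed on the sphere of radius $\lVert O\rVert_1$ in $\mathbb{R}^{2n}$ when $Q$ is Haar on $\SO(2n)$. I would then use the standard moment formula for a uniform unit vector $u\in S^{2n-1}$:
\[
\E[u_{a_1}\cdots u_{a_k}]=\frac{\sum_{\bm{\sigma}\in\mathfrak{B}_{k/2}}\prod_{(j,j')\in\bm{\sigma}}\delta_{a_j a_{j'}}}{2n(2n+2)\cdots(2n+k-2)} .
\]
One way to derive it is from Gaussian vectors via $g=\lVert g\rVert u$ with $\lVert g\rVert$ independent of $u$, together with Isserlis. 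This gives exactly
\[
\E[X_{i_1}\cdots X_{i_k}]=\frac{(2n)^{k/2}}{\prod_{l=0}^{k/2-1}(2n+2l)}\cdot\frac{\Sperf}{(2n)^{k/2}} .
\]
Comparing with Eq.~\eqref{eq:supp:moment_via_Brauer_moment_op} identifies the Weingarten correction as $\Smix=c_n\Sperf$ with $c_n=\prod_{l}(1+l/n)^{-1}-1\in\OC(1/n)$.

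Consequently $\Smix/\Sperf=c_n\in o(1)$ identically, regardless of signs or cancellations in the overlaps. This establishes Condition~\ref{cond:gaussian_dominate}. One caveat: wherever $\Sperf$ vanishes, the moment vanishes as well, so the even moments still take the Isserlis form. Theorem~\ref{thm:supp:GP} then yields the GP with covariance $\frac{1}{2n}\lVert O\rVert_1^2\langle\rho_i,\rho_{i'}\rangle_1$.

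The main obstacle is Step 2: handling possibly negative overlaps in the multivariate case. The exact sphere-moment identity, which makes $\Smix$ proportional to $\Sperf$, is what I would rely on to get around it.
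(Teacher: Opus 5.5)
Your proof is correct. Its skeleton matches the paper's: check Condition~\ref{cond:gaussian_dominate}, take the odd moments from Proposition~\ref{prop:supp:vanish_odd_moments}(b), and invoke Theorem~\ref{thm:supp:GP}. Your Step~1 is in fact the paper's entire argument. The paper declares Condition~\ref{cond:gaussian_dominate} ``trivially fulfilled'' because for $m=1$ every Brauer element is a pairwise matching, so $\Simp=0$.

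The genuine difference is your Step~2. The paper never addresses the $\Smix$ half of the condition and implicitly leans on $w_{\sigma,\pi}\in\OC(1/n)$. That bound alone does not control $\Smix/\Sperf$ in the multivariate case. Module overlaps can be negative (the paper itself exhibits $\langle\rho_+,\rho_-\rangle_{\BC_1}=-\tfrac12$), so $\Sperf$ can cancel while $\Smix$ need not.

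Your exact computation closes this. You use the uniform distribution of the rotated observable vector on the sphere; pointwise this vector is $Q^\top\tobsvec$ rather than $Q\tobsvec$, which does not matter in distribution since $Q^\top$ is also Haar. This shows $\Smix=c_n\Sperf$ identically in the data. The $\OC(1/n)$ corrections therefore act as a common multiplicative factor rather than reweighting individual pairings, which is exactly why cancellations in $\Sperf$ are harmless. Identifying $\Smix$ by subtraction is legitimate because Lemma~\ref{supp:cor:asymp_moment_operator} is an exact identity once $k<2n$.

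You also get two by-products. Your argument handles the degenerate case $\Sperf=0$. And since $c_n=0$ for $k=2$, you recover the paper's separate remark that the $\BC_1$ covariance $\frac{1}{2n}\lVert O\rVert_{\BC_1}^2\langle\rho_t,\rho_{t'}\rangle_{\BC_1}$ is exact at finite $n$.

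What the paper's one-line route buys is brevity and uniformity with its general Brauer bookkeeping, at the price of leaving the $\Smix$ part unjustified. Your spherical-moment identity is special to $m=1$, and via the module isomorphism to $m=2n-1$. For $2\le m\le 2n-2$ the $\SO(2n)$-orbit of the observable tensor is not a full sphere.
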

                \begin{proof}
                    First, we note that Condition~\ref{cond:gaussian_dominate} (which must be fulfilled for Lemma~\ref{thm:supp:general_asym_even_moments} to apply) is trivially fulfilled since any contraction from $\mathfrak{B}_{mk/2}$ for $m = 1$ over $k$ tensors results only in pairwise tensor matchings as per the definition of the Brauer algebra $\mathfrak{B}_{mk/2}$.  This is valid for both distinct tensors and copies of the same tensor, i.e., multivariate and univariate scenarios, respectively. 
                    Specifically, when viewing these rank-1 tensors $\tobs$ and $\tstate^{(i_j)}$ as vectors,                     it is clear that any contraction leads to scalar products of pairs of such tensors.
                    Second, the odd number of Majoranas ($m=1$) makes the odd moments trivially vanish as per Prop.~\ref{prop:supp:vanish_odd_moments}. Therefore, Theorem~\ref{thm:supp:GP} generally applies and predicts \acp{gp}.
                \end{proof}

                Indeed, we  show that the asymptotic kernel function (covariance) matches the exact non-asymptotic expression for $\BC_1$. Instead of using the asymptotic moment operator in Eq.~\eqref{supp:eq:asymptotic_moment_operator_definition}, we use the exact Weingarten coefficient of $1/(2n)$ for the corresponding moment operator over the orthogonal group $M_{\mathbb{O}(2n)}^{(1)} = M_{\mathbb{SO}(2n)}^{(1)}$, as provided in Eq.~(D7) in the Supplemental Information of Ref.~\cite{garcia-martin_QuantumNeuralNetworks_2025}. The result reads
               \begin{equation}\label{eq:single_maj_cov_exact}
                    \Sigma_{t,t'} = \kappa \left( \rho_t, \rho_{t'} \right)
                    = 
                    \frac{1}{2n} 
                    \left\lVert O \right\rVert_{\mathcal{B}_1}^2 \left\langle \rho_{t} , \rho_{{t'}} \right \rangle_{\mathcal{B}_1}\,,
                \end{equation}
                and matches the covariance in Eq.~\eqref{eq:SI_theo:1-main}, which was derived from the asymptotic tools.

            \subsubsection{Module \texorpdfstring{$\BC_2$}{B2} general implications}

                For $m = 2$, the rank-$m$ coefficient tensors of the projection onto the module $\BC_2$ of the observables and states, $\tobs$ and $\tstate^{(i_j)}$ can be reshaped as $2n \times 2n$ real square matrices, denoted as $\tobsmat$ and $\tstatemat^{(i_j)}$. The anti-symmetry is directly inherited, as $A = -A^\top$.                 Here, we leverage the fact that the ratios in the univariate equivalent formulation of Condition~\ref{cond:gaussian_dominate} in Proposition~\ref{prop:cond:univariate} simplify to Schatten 1- vs 2-norm ratios (denoted as $\lVert\cdot\rVert_2$ and $\lVert\cdot\rVert_1$, respectively) in this case:
                
                \begin{proposition}[Equivalent univariate formulation of Condition~\ref{cond:gaussian_dominate} for $\BC_2$]\label{supp:cor:two_majo_simplified_contraction_condition}
                    In the univariate $\BC_2$ case, Condition~\ref{cond:gaussian_dominate} is equivalent to
                    \begin{equation}\label{supp:eq:two_majo_uni_simplified_contraction_condition}
                        \frac
                        {\left\lVert \tobsbb \right\rVert_{2} }
                        {\left\lVert \tobsbb  \right\rVert_{1} }
                        \cdot
                        \frac
                        {\left\lVert {\tstatebb} \right\rVert_{2} }
                        {\left\lVert {\tstatebb} \right\rVert_{1} }
                        \in o(1) \,,
                    \end{equation}
                    with $\tobsbb  = \tobsmat^\top \tobsmat$ and ${\tstatebb} = \tstatemat^\top \tstatemat$.
                    Since each fraction alone is at most one,
                    strict asymptotic vanishing, i.e., $o(1)$, for one fraction suffices for Eq.~\eqref{supp:eq:two_majo_uni_simplified_contraction_condition} to hold.

                \end{proposition}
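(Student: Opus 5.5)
The plan is to start from the univariate reformulation in Proposition~\ref{prop:cond:univariate} and evaluate every Brauer contraction explicitly in matrix language. For $m=2$ each tensor is a real antisymmetric $2n\times 2n$ matrix ($\tobsmat$ or $\tstatemat$), and each $\sigma\in\mathfrak{B}_{k}$ contracts $k$ such matrices along their two legs. Every leg is joined to exactly one other leg, so each contraction network splits into closed cycles. A cycle of length $\ell$ that passes through the matrices evaluates to $\pm\tr[A^\ell]$, with transposes that only change the sign because $A^\top=-A$. Self-contractions $\tr[A]$ vanish by antisymmetry. For odd $\ell$, $A^\ell$ is again antisymmetric, so $\tr[A^\ell]=0$. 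Hence every nonzero contraction is, up to sign, a product $\prod_c \tr[A^{2\ell_c}]$ with $\sum_c 2\ell_c=k$. Since $A^{2}=-A^\top A=-\mathbb{A}$, this equals $\pm\prod_c\tr[\mathbb{A}^{\ell_c}]=\pm\prod_c\lVert\mathbb{A}\rVert_{\ell_c}^{\ell_c}$, where the last equality uses $\mathbb{A}\succeq 0$.

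Next I classify the networks. A pairwise matching consists only of 2-cycles, giving $\tr[\mathbb{A}]^{k/2}=\lVert A\rVert_2^{k}$, which is exactly the normalization $\lVert O\rVert_m^k$ (respectively $\lVert\rho\rVert_m^k$) up to the factor $2$ from the frame convention. A non-pairwise matching must contain at least one cycle with $\ell_c\geq 2$. The normalized contraction is therefore $\prod_c \lVert\mathbb{A}\rVert_{\ell_c}^{\ell_c}/\lVert\mathbb{A}\rVert_1^{\ell_c}$. By monotonicity of Schatten norms each factor is at most $1$, and each factor with $\ell_c\geq2$ is at most $(\lVert\mathbb{A}\rVert_2/\lVert\mathbb{A}\rVert_1)^{\ell_c}\leq \lVert\mathbb{A}\rVert_2/\lVert\mathbb{A}\rVert_1$, because $\lVert\mathbb{A}\rVert_\ell\leq\lVert\mathbb{A}\rVert_2$ for $\ell\ge 2$ and the ratio is at most one. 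The minimal non-pairwise configuration, a single 4-cycle through the $\ell=2$ factor (available at $k=4$), attains exactly $(\lVert\mathbb{A}\rVert_2/\lVert\mathbb{A}\rVert_1)^2$. For the observable side with $\sigma^\top$ the same cycle structure appears, since transposition preserves cycle lengths. Each term of Eq.~\eqref{eq:cond:univariate:imp} is thus bounded by a product of the two ratios (up to a power), and at $k=4$ a term equal to exactly that product (squared) exists.

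For the $\Smix$ part, Eq.~\eqref{eq:cond:univariate:mix}, every normalized contraction is bounded by $1$ in absolute value by the argument above, while $1/w_{\sigma,\pi}\in\Omega(n)$. That condition therefore holds automatically, and Condition~\ref{cond:gaussian_dominate} reduces to the $\Simp$ part alone.

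Equivalence follows in both directions. For sufficiency, if the product of ratios is $o(1)$, every non-pairwise normalized term is $o(1)$ for all even $k$. For necessity, the $k=4$ four-cycle term must be $o(1)$, and that term is the square of the product of ratios. The final remark in the statement follows because each ratio is at most $1$.

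The main obstacle is the bookkeeping of cycles under $\sigma$ versus $\sigma^\top$: a cycle on the state side can correspond to a differently arranged cycle on the observable side. I would handle this by bounding each side independently by its own worst cycle factor, using only that both sides are non-pairwise simultaneously, as established in the proof of Lemma~\ref{thm:supp:general_asym_even_moments}. Signs, and the exact claim that the $k=4$ witness gives the product rather than a mixed power, also need care; any power of the product suffices for the $o(1)$ equivalence.
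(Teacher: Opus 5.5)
Your proposal is correct and follows essentially the same route as the paper. Both reduce to Proposition~\ref{prop:cond:univariate} and write each rank-$2$ contraction as a product of cycle traces, with self-loops and odd cycles vanishing by antisymmetry; both then identify the $k=4$ four-cycle term with the squared product of Schatten-norm ratios, bound all higher-$k$ terms via norm monotonicity, and dispose of the mixed terms using $1/w_{\sigma,\pi}\in\Omega(n)$.

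Two small remarks. First, the module frame is Parseval, so $\lVert O\rVert_m^2=\Tr[\tobsbb]$ holds exactly and there is no extra factor of $2$. Second, for even $k$ the map $\sigma\mapsto\sigma^\top$ merely relabels whole tensors ($j\mapsto j+k/2$). The observable and state sides therefore always share the same cycle structure and signs, which makes every term of $\Simp$ non-negative and removes the bookkeeping concern you raise.
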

                \begin{proof}

                    A contraction over rank-2 tensors forms cycles of tensors, where such a cycle can always be expressed as the trace of a product of matrices. Disjoint contractions may occur, i.e., disjoint cycles, which relate to products of such traces of matrix products.

                    First, consider $k = 4$ for Condition~\ref{cond:gaussian_dominate} in the formulation of Proposition~\ref{prop:cond:univariate}, the ratio of Eq.~\eqref{eq:cond:univariate:imp} simplifies to
                    \begin{equation}\label{eq:app:two_majo_simplified_contraction_condition_traces}
                        \left\lvert
                        \frac
                        {\Tr\left[\tobsmat^\top \tobsmat \tobsmat^\top \tobsmat \right]}
                        {\Tr\left[\tobsmat^\top \tobsmat\right]^2}
                        \cdot
                        \frac
                        {\Tr\left[\tstatemat^\top \tstatemat \tstatemat^\top \tstatemat \right]}
                        {\Tr\left[\tstatemat^\top \tstatemat \right]^2}
                        \right\rvert
                    \end{equation}
                    for any non-pairwise $\sigma \in \mathfrak{B}_{mk/2}$. 
                    Note that, without loss of generality, the transposition can be applied and fixed arbitrarily since the matrices are invariant up to a sign due to the anti-symmetry, and the sign is irrelevant for the absolute value of the products and fractions. Moreover, since any non-zero disjoint contraction over $k=4$ tensors directly corresponds to two pairwise tensor matchings, such contractions are excluded in the numerator.
                    Let 
                    $\tobsbb  = \tobsmat^\top \tobsmat$
                    and ${\tstatebb} = \tstatemat^\top \tstatemat$
                    , which are  positive semidefinite.
                    Due to this positive semidefiniteness, $\lvert \mathbb{A} \rvert = \mathbb{A}$ holds, where $\vert\cdot\rvert$ is defined with respect to the operator square root as $\vert \mathbb{A} \rvert = \sqrt{\mathbb{A}^\dagger \mathbb{A}}$. 
                    Therefore, the traces in Eq.~\eqref{eq:app:two_majo_simplified_contraction_condition_traces} match the definition of ($p$-th powers of) the Schatten $p$-norm
                    \begin{equation}
                        \left\lVert \mathbb{A} \right\rVert_p^p = \Tr\left[ \left\lvert \mathbb{A}\right\rvert^p \right]\,,
                    \end{equation}
                    for $p = 1$ (trace norm) and $p = 2$ \ac{hs} norm. In terms of Schatten norms Eq.~\eqref{eq:app:two_majo_simplified_contraction_condition_traces} reads then as
                    \begin{equation}\label{eq:app}
                        \frac
                        {\left\lVert \tobsbb  \right\rVert_2^2}
                        {\left\lVert \tobsbb  \right\rVert_1^2}
                        \cdot
                        \frac
                        {\left\lVert {\tstatebb} \right\rVert_2^2}
                        {\left\lVert {\tstatebb} \right\rVert_1^2}\,
                        ,
                    \end{equation}
                    and, therefore, directly yields Eq.~\eqref{supp:eq:two_majo_uni_simplified_contraction_condition} as an equivalent condition for the case $k=4$ in Eq.~\eqref{eq:cond:univariate:imp} in Proposition~\ref{prop:cond:univariate}.

                    It remains to show that any higher (even) $k > 4$ for Condition~\ref{cond:gaussian_dominate} and equivalently its univariate formulation in Proposition~\ref{prop:cond:univariate} automatically holds if this condition (Eq.~\eqref{supp:eq:two_majo_uni_simplified_contraction_condition}) is satisfied, i.e., for $k = 4$. 
                    The (Schatten) norm monotonicity property is relevant for the following derivation, which is formally
                    decreasing monotonicity in $p$, i.e., $\left\lVert A \right\rVert_1 \geq \left\lVert A \right\rVert_p \geq \left\lVert A \right\rVert_{p'} \geq \left\lVert A \right\rVert_\infty$ in $1 \leq p \leq p' \leq \infty$. 
                    Now, the ratios of Eq.~\eqref{eq:cond:univariate:imp} in Proposition~\ref{prop:cond:univariate} for any even $k > 4$ can again be expressed in terms of Schatten norm ratios and related to the simplified $k=4$ condition in Eq.~\eqref{supp:eq:two_majo_uni_simplified_contraction_condition}. Considering contractions in the numerators that match all $k$ tensors, we recover the $k=4$ condition as
                    \begin{equation}\label{eq:B2_contraction_proof_bound_higher_k}
                        \left\lvert
                        \frac
                        {\Tr\left[\tobsbb^{k/2}\right]}
                        {\Tr\left[\tobsbb\right]^{k/2}}
                        \frac
                        {\Tr\left[{\tstatebb}^{k/2}\right]}
                        {\Tr\left[{\tstatebb}\right]^{k/2}}
                        \right\rvert
                        = 
                        \frac
                        {\left\lVert \tobsbb \right\rVert_{k/2}^{k/2}}
                        {\left\lVert \tobsbb \right\rVert_1^{k/2}}
                        \frac
                        {\left\lVert {\tstatebb} \right\rVert_{k/2}^{k/2}}
                        {\left\lVert {\tstatebb} \right\rVert_1^{k/2}}
                        = 
                        \left(
                        \frac
                        {\left\lVert \tobsbb \right\rVert_{k/2}}
                        {\left\lVert \tobsbb \right\rVert_1}
                        \frac
                        {\left\lVert {\tstatebb} \right\rVert_{k/2}}
                        {\left\lVert {\tstatebb} \right\rVert_1}
                        \right)^{k/2}
                        \leq
                        \left(
                        \frac
                        {\left\lVert \tobsbb \right\rVert_{2}}
                        {\left\lVert \tobsbb \right\rVert_1}
                        \frac
                        {\left\lVert {\tstatebb} \right\rVert_{2}}
                        {\left\lVert {\tstatebb} \right\rVert_1}
                        \right)^{k/2}
                        \in o(1)\,
                        .
                    \end{equation}
                    Otherwise, if the contractions are still non-pairwise tensor matchings but disjoint, the ratios of $k$ tensor contractions split into the product of ratios of $k_l< k$. Hence, such contractions yield products of traces of $k_l$ powers of (anti-symmetric) matrices. If any $k_l$ is odd, the entire ratio must vanish as an odd power of an anti-symmetric matrix is anti-symmetric and, hence, traceless.
                    For only even $k_l$, apply Eq.~\eqref{eq:B2_contraction_proof_bound_higher_k} for each $k_l$ instead of $k$ individually. Overall, any even $k > 4$ ratio of Condition~\ref{cond:gaussian_dominate} is always upper bounded by the Schatten 2 vs 1-norm ratio in Eq.~\eqref{supp:eq:two_majo_uni_simplified_contraction_condition}.

                    The final note that each fraction is bounded by $1$ follows directly from the same monotonicity ($\left\lVert \mathbb{A} \right\rVert_p \leq \left\lVert \mathbb{A} \right\rVert_1$). Finally, this $\mathcal{O}(1)$ upper bound on all contraction ratios automatically satisfies the mixed term condition in Eq.~\eqref{eq:cond:univariate:mix}, as the constant left-hand side is trivially strictly dominated by the inverse weights $1/w_{\sigma, \pi} \in \Omega(n)$.

                \end{proof}

                In addition, for the special case of $m=2$, the general odd moment statement of Proposition~\ref{prop:supp:vanish_odd_moments} can be extended as follows. 
                \begin{corollary}[Vanishing odd moments]\label{cor:B2_vanishing_odd_moments} 
                    In the $\BC_2$ case and under Condition~\ref{cond:full_support}, all odd moments vanish.
                \end{corollary}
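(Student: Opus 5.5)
Our plan is to show that for $m=2$ every contraction appearing in the odd-$k$ moment expression vanishes identically. We work directly with the exact moment operator, not only its asymptotic form. This makes the claim valid at every $n$ where the Brauer description applies ($mk<2n$), and therefore also in the large-$n$ limit.

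By Lemma~\ref{lemma:vanish_vs_moment_op}, since $mk=2k$ is even, the moment equals $\langle\!\bra{\tobs}^{\otimes k} M_{\SO(2n)}^{(k)} \bigotimes_j \ket{\tstate^{(i_j)}}\!\rangle$. By Lemma~\ref{supp:cor:asymp_moment_operator}, this is a finite linear combination of products $\operatorname{contr}_{\sigma}(\tobs^{\otimes k})\,\operatorname{contr}_{\pi}(\bigotimes_j \tstate^{(i_j)})$ with $\sigma,\pi\in\mathfrak{B}_{k}$. It therefore suffices to show that $\operatorname{contr}_{\sigma}(\tobs^{\otimes k})=0$ for every Brauer pairing $\sigma$ of the $2k$ legs when $k$ is odd. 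Vanishing of the observable factor alone kills every term, whatever the states are.

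The key step is a graph argument. View the $k$ reshaped $2n\times 2n$ matrices $\tobsmat$ as vertices of degree two. A pairing of their $2k$ legs then yields a 2-regular multigraph, which is a disjoint union of cycles, possibly including self-loops. The contraction factorizes into a product over these cycles. A self-loop gives $\Tr[\tobsmat]=0$ by anti-symmetry. A cycle of length $\ell$ gives, up to sign, $\Tr[\tobsmat^{\pm1\cdots}]$. Using $\tobsmat^\top=-\tobsmat$, we can reorient every factor so that it reads $\pm\Tr[\tobsmat^{\ell}]$. Since $k$ is odd and the cycle lengths sum to $k$, at least one cycle has odd length $\ell$. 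For that cycle, $\tobsmat^{\ell}$ is anti-symmetric, because $(\tobsmat^\ell)^\top=(-1)^\ell\tobsmat^\ell$. Its trace therefore vanishes, and so does the whole product.

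The only delicate point is the bookkeeping of transposes and signs when converting a cycle of index contractions into a matrix trace. Each edge may join a first index to a first index, and so on, so the orientation matters. Anti-symmetry turns any transposed factor into $-\tobsmat$, so every orientation choice only contributes an overall sign and the trace is always $\pm\Tr[\tobsmat^\ell]$. We would also note that the argument extends verbatim to the $\SO(2n)$ extra commutant generator for $k\ge n$, or we restrict to the large-$n$ regime where it is absent. In either case, odd moments vanish, which together with Proposition~\ref{prop:supp:vanish_odd_moments}(a) completes the corollary.
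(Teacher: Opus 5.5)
Your core argument is correct and is essentially the paper's proof. For $mk=2k<2n$ the moment is a combination of Brauer contractions. Each observable factor splits into cycle traces $\pm\Tr[\tobsmat^{\ell}]$, odd $k$ forces an odd cycle, and that trace vanishes by anti-symmetry, whatever the states are.

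One claim in your last paragraph is false and should be deleted: the argument does not ``extend verbatim'' to the extra $\SO(2n)$ invariant for $k\geq n$.

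\begin{itemize}
\item That invariant carries a Levi-Civita tensor $\epsilon$. The components of the contraction graph attached to $\epsilon$ are open chains, not closed cycles.
\item A chain of $\ell$ copies inserts $\pm\tobsmat^{\ell}$ into two antisymmetric slots of $\epsilon$. It survives exactly when $\ell$ is odd.
\item The surviving terms are Pfaffian-type, for example $\epsilon_{a_1b_1\cdots a_nb_n}\tobsmat_{a_1b_1}\cdots\tobsmat_{a_nb_n}\propto\mathrm{Pf}(\tobsmat)$.
\end{itemize}

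These terms genuinely produce nonzero odd moments when $k\geq n$ and $n$ is odd. Take $n=k=3$, $O=\sum_j Z_j$ and $\rho=\ketbra{000}{000}$. The group $\SPIN(6)\cong SU(4)$ acts as all of $SU(4)$ on the four-dimensional even-parity sector. Hence $X=4p-1$ with $p\sim\mathrm{Beta}(1,3)$, and $\E[X^3]=2/5\neq 0$.

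So the corollary holds exactly only for $k<n$, that is, for each fixed $k$ in the large-$n$ limit. That is the regime the paper implicitly works in. Keep only the ``restrict to the large-$n$ regime'' branch of your final remark.
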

                \begin{proof}
                    The moment operator involves linear combinations of contractions of an odd number $k$ of module frame coefficient tensors $\lbrace \tobs \rbrace^k$ and $\lbrace \tstate^{(i_j)} \rbrace_{j=1}^k$, respectively. As opposed to general $m$, for $m=2$ as in the previous Corollary, any contraction of the anti-symmetric rank-$2$ tensors can be decomposed into products of traces of anti-symmetric matrices through re-shaping of the tensors. This product contains at least one trace with an odd number of such matrices. Therefore, this trace vanishes since the product of an odd number of identical (hence commuting) anti-symmetric matrices is an anti-symmetric matrix with zero trace generally. This is the case for both the univariate and multivariate cases, as the matrix associated with the observable occurs identically in the product multiple times. Hence, any such contraction vanishes, which concludes the proof.
                \end{proof}

                Concluding the case of module $\BC_2$, we provide the exact covariance expression, which resembles the (exact) kernel function to build the quantum Gaussian process model. Instead of using the asymptotic moment operator in Eq.~\eqref{supp:eq:asymptotic_moment_operator_definition} of Lemma~\ref{supp:cor:asymp_moment_operator} to calculate the (asymptotic) second moment, we use the exact Weingarten coefficients for the corresponding moment operator over the orthogonal group $M_{\SO(2n)}^{(2)} $ in Eq.~(D10) in the supplemental material of Ref.~\cite{garcia-martin_QuantumNeuralNetworks_2025}, yielding
                \begin{equation}\label{eq:two_maj_cov_exact}
                    \Sigma_{t,t'} = \kappa \left( \rho_t, \rho_{t'} \right)
                    = 
                    \frac{1}{n(2n-1)} 
                    \left\lVert O \right\rVert_{\mathcal{B}_2}^2 \left\langle \rho_{t} , \rho_{{t'}} \right \rangle_{\mathcal{B}_2}
                    \,.
                \end{equation}
                Note that, as opposed to the case of $\BC_1$, we now observe a deviation from the asymptotic expression, cf. Eq.~\eqref{eq:SI_theo:1-main}. However, as expected for the second moments and the absence of non-pairwise tensor contractions as discussed in the proof of Lemma~\ref{thm:supp:general_asym_even_moments}, this deviation manifests itself purely in a scaling factor. 
                As expected, the exact factor is simply the inverse of the module dimension $\dim(\BC_2) = \binom{2n}{2} = n(2n-1)$, while the asymptotic factor relies on the inverse of the upper bound $\dim(\BC_2) \leq 2n^2$. Overall, the relative deviation is $1-1/(2n)$.

            \subsubsection{Module isomorphism}
                We can exploit the module isomorphism $\BC_{2n-m} \simeq \BC_m$ to directly transfer all \ac{gp} statements between the modules $\BC_{2n-m}$ and $\BC_m$, i.e., between products of $2n-m$ and $m$ Majorana operators. In particular, the isomorphic map between modules, denoted as $\Gamma(\cdot)=\cdot\, \Gamma$,  is achieved through multiplication by $\Gamma = i^n C_1 C_2\cdots C_{2n}$. It is easy to see that $C_{i_1}C_{i_2}\cdots C_{i_m} \Gamma$ gives a product of Majoranas in $\BC_{2n-m}$ (up to a phase), and that  multiplication by $\Gamma$ is its own inverse, i.e., $\Gamma^2 =\id$. This readily identifies $\BC_{2n-m}$ and $\BC_m$ as vector spaces. But more importantly, $\Gamma(\cdot)$ identifies $\BC_{2n-m}$ and $\BC_m$ as $\mathbb{SPIN}(2n)$ isomorphic modules: indeed, since $[U,\Gamma]=0$ for all matchgate unitaries $U$, it follows that $\Gamma \left(UC_\mu U^\dagger\right) = U\Gamma\left(C_\mu\right)U^\dagger$, so $\Gamma(\cdot)$ is equivariant and hence compatible with the $\mathbb{SPIN}(2n)$ group action. 

                For the analysis of quantum Gaussian processes, we need to establish how the moments transform under $\Gamma$. Considering the transformed observable and state components $O\Gamma$ and $\rho_{i_j}\Gamma$, then the moments are identical,
                \begin{equation}
                    \mathbb{E}_{U\sim\mathbb{SPIN}(2n)}\left[ \prod_{j=1}^k\Tr \left[\Gamma O  U \rho_{i_j} \Gamma U^\dagger \right]\right] =  \mathbb{E}_{U\sim\mathbb{SPIN}(2n)} \left[\prod_{j=1}^k\Tr \left[ O  U \rho_{i_j}  U^\dagger \right]\right]\,,
                \end{equation}
                where we simply used that $\Gamma$ is self-adjoint ($\Gamma = \Gamma^\dagger)$ and commutes with $U^\dagger$. Consequently, all statements regarding the convergence to Gaussian processes in the module $\BC_m$ can be automatically translated to the module $\BC_{2n-m}$.
                
\clearpage
        \section{Examples and special families of matchgate Gaussian processes}\label{app:examples_families_GPs}

            In this section, we verify common scenarios under which the conditions for the matchgate \ac{qgp} apply. This is specifically Condition~\ref{cond:gaussian_dominate}. Moreover, since violations of these conditions rule out \ac{gp} behavior through Theorem \ref{thm:supp:no_GP}, we identify scenarios on a case-by-case basis for which an asymptotic deviation from \acp{gp} is provable.

            \subsection{Special states and observables preliminaries}

                Because certain families of states and observables play a central role in the subsequent analyses, we first outline their key properties here.

                \subsubsection{Gaussian states}

                    Pure (fermionic) Gaussian states are quantum states that can be prepared by a matchgate (free-fermionic) transformation $U$ of a computational basis state $\ket{i}, i \in [2^n]$, i.e.,
                    \begin{equation}
                        \ket{\psi} = U\ket{i}
                        .
                    \end{equation}
                    More narrowly defined, all Gaussian states of even (odd) parity can be prepared by applying a free-fermionic transformation to the all-zero basis state $\ket{0}^{\otimes n}$ (respectively, to the first single-excitation state $\ket{10\ldots0} = \ket{1}\ket{0}^{\otimes n-1} = X_1\ket{0}^{\otimes n} = C_1\ket{0}^{\otimes n}$). That is
                    \begin{equation}
                        \ket{\psi_{\rm even}} = U\ket{0}^{\otimes n}
                        \quad
                        \text{and}
                        \quad
                        \ket{\psi_{\rm odd}} = U\ket{10\ldots0}
                        .
                    \end{equation}
                    Gaussian states of even and odd parity are also referred to as Gaussian states in the even and odd parity sectors, respectively. The even parity sector is spanned by (the union of) all even modules, while the odd parity sector is spanned by (the union of) all odd modules \cite{diaz2023showcasing}. Gaussian states are of high interest due to their connection to classical efficient simulability, which is that matchgate (free-fermionic) transformations of Gaussian states can be classically simulated in polynomial time \cite{jozsa2008matchgates}. Hence, as a necessary condition for classical intractability, states must be non-Gaussian.

                    We provide some further properties for even parity Gaussian states through their connection to the all-zero state $\rho_0 = \left({\ket{0}\!\bra{0}}\right)^{\otimes n}$. First, $\rho_0$ decomposes over products of the Pauli-Z strings as
                    \begin{equation}\label{eq:zero_pauli_decomposition}
                        \rho_0 =\left(\frac{\id+Z}{2}\right)^{\otimes n}=\frac{1}{d}\left(\id+\sum_i Z_i+\sum_{i<j} Z_i Z_j+\sum_{i<j<k} Z_i Z_j Z_k+\ldots\right)\,,
                    \end{equation}
                    with $d = 2^n$.
                    One can now show that the frame coefficient tensor for any even module $\BC_m$ (with $m$ even) of $\rho_0$ can be expressed via direct sums of generalized (anti-symmetric) Kronecker delta tensors as
                    \begin{equation}
                        \tstate_0 = \frac{1}{\sqrt{m!d}}
                            \bigoplus^n \delta_{\alpha_1\phantom{'} \alpha_2 \dots \alpha_m}^{\alpha'_1 \alpha'_2 \dots \alpha'_m}\,,
                    \end{equation}
                    with the ordered reference indices $\alpha'_1 < \alpha'_2 < \cdots < \alpha'_m$.

                    \paragraph{Properties of Gaussian states in \texorpdfstring{$\BC_2$}{B2}.}
                    As the case of $m=2$ is of particular interest in the subsequent section, we explicitly find the following $2n \times 2n$ frame coefficient matrix in a $2 \times 2$ block-diagonal form
                    \begin{equation}\label{supp:eq:Gaussian_state_Y_matrix_form}
                        \tstatemat_0 =
                        \frac{1}{\sqrt{2d}}J\,, \qquad J= \bigoplus_{i=1}^n \begin{pmatrix}
                            0 & 1 \\
                            -1 & 0
                        \end{pmatrix}\,,
                    \end{equation}
                    for $J$ the symplectic identity.
                    Notice that each $i$-th block is associated with the Pauli-Z on the $i$-th qubit (cf. second sum in the right-hand side in Eq.~\eqref{eq:zero_pauli_decomposition}) as the result of the product of two successive Majorana operators $Z_i = -iC_{2i-1}C_{2i}$. More precisely, the left Majorana operator in the product is indicated by the columns, while the rows indicate the right Majorana operator in the product.
                    This matrix $\tstatemat_0$ obeys a specific standard form for anti-symmetric matrices, i.e., any anti-symmetric matrix of even dimension can be brought into this $2 \times 2$ block-diagonal matrix under a special orthogonal transformation. Therefore, the eigenvalues of $\tstatemat_0$ are $\pm i / \sqrt{2d}$ according to the spectral theorem and the special form of $\tstatemat_0$, which directly allows for reading off the eigenvalues from the off-diagonal values in the blocks. Notably, all eigenvalues are of magnitude $1/\sqrt{2d}$.
                    Transitioning from the all-zero state to any Gaussian state of even parity, which are related by a matchgate (free-fermionic) transformation. Through the matchgate representation $R$, the effect of this transformation is reflected in a special orthogonal transformation of the anti-symmetric coefficient matrix $\tstatemat_0$. Hence, the coefficient matrix $\tstatemat$ for any Gaussian state is equivalent to the matrix $\tstatemat_0$ of the all-zero state under a special orthogonal transformation and, hence, has the same eigenvalues.

                    This relation allows one to characterize the Hilbert-Schmidt overlap $\braket{\tstatemat_j,\tstatemat_k}_{\rm HS}$ between the $m=2$ matrices of $\tstatemat_j, \tstatemat_k$ of two even-parity Gaussian states
                    \begin{equation}
                        \ket{\psi_j} = U_j \ket{0}^{\otimes n}\,,\qquad \ket{\psi_k} = U_k \ket{0}^{\otimes n}.
                    \end{equation}
                    Let $Q_j,Q_k \in \SO(2n)$ denote the corresponding orthogonal transformations acting over the Majorana indices, namely
                    \begin{equation}
                        U_j^\dagger C_\alpha U_j = \sum_{\beta=1}^{2n} [Q_j]_{\alpha\beta} C_\beta\,,\qquad
                        U_k^\dagger C_\alpha U_k = \sum_{\beta=1}^{2n} [Q_k]_{\alpha\beta} C_\beta\,.
                    \end{equation}
                    Then their coefficient matrices satisfy
                    \begin{equation}
                        \tstatemat_j = Q_j \tstatemat_0 Q_j^\top\,,\qquad
                        \tstatemat_k = Q_k \tstatemat_0 Q_k^\top\,.
                    \end{equation}
                    Now, introducing the relative rotation
                    \begin{equation}
                        R_{jk} = Q_j^\top Q_k \in \SO(2n)\,,
                    \end{equation}
                    the Hilbert-Schmidt product can be written as
                    \begin{align}
                        \braket{\tstatemat_j,\tstatemat_k}_{\rm HS}
                        &= \Tr[\tstatemat_j^\top \tstatemat_k] \\
                        &= \frac{1}{2d}\Tr\left[J^\top R_{jk} J R_{jk}^\top\right] \\
                        &= - \frac{1}{2d}\Tr\left[J R_{jk} J R_{jk}^\top\right]\,.
                    \end{align}
                    with $J$ as defined in Eq.~\eqref{supp:eq:Gaussian_state_Y_matrix_form}.
                    Thus, the overlap depends only on the relative orthogonal transformation $R_{jk}$. Now, for completely random states $\ket{\psi_j},\ket{\psi_k}$, we can only derive the trivial upper bound via the Cauchy-Schwarz inequality
                    \begin{equation}
                        \left|\braket{\tstatemat_j,\tstatemat_k}_{\rm HS}\right|
                        \leq
                        \left\lVert 
                        \tstatemat_j\right\rVert_{\rm HS}
                        \left\lVert \tstatemat_k\right\rVert_{\rm HS}
                        =\frac{n}{d}\,.
                    \end{equation}
                    However, we can derive average bounds by leveraging the orthogonal nature of $R_{jk}$. Indeed, if $U_j$ and $U_k$ are independently Haar-random in $\SPIN(2n)$, then $R_{jk}$ is Haar-random in $\SO(2n)$, and the matrix
                    \begin{equation}
                        S = R_{jk} J R_{jk}^\top
                    \end{equation}
                    is a random orthogonal complex structure. By $\SO(2n)$-invariance and anti-symmetry, when averaging over the random $R_{jk}$ its second moments take the form
                    \begin{equation}
                        \mathbb{E}\!\left[S_{\mu\nu} S_{\alpha\beta}\right]
                        =
                        \frac{1}{2n-1}
                        \left(
                            \delta_{\mu\alpha}\delta_{\nu\beta}
                            -
                            \delta_{\mu\beta}\delta_{\nu\alpha}
                        \right)\,,
                    \end{equation}
                    where the prefactor is fixed by the identity
                    \begin{equation}
                        \braket{S,S}_{\rm HS} = \braket{J,J}_{\rm HS} = 2n\,.
                    \end{equation}
                    It then follows that
                    \begin{equation}
                        \mathbb{E}\!\left[\braket{\tstatemat_j,\tstatemat_k}_{\rm HS}\right] = 0\,,
                    \end{equation}
                    and
                    \begin{align}
                        \Var\!\left[\braket{\tstatemat_j,\tstatemat_k}_{\rm HS}\right] &=  
                        \mathbb{E}\!\left[\braket{\tstatemat_j,\tstatemat_k}_{\rm HS}^2\right] \\
                        &= \frac{1}{4d^2}\,
                           \mathbb{E}\!\left[\braket{J,S}_{\rm HS}^2\right] \\
                        &= \frac{1}{4d^2}\,
                           \frac{4n}{2n-1} \\
                        &= \frac{n}{(2n-1)d^2}
                        \in \Theta\left(\frac{1}{d^2}\right)
                    \end{align}
                    Hence, for two independent Haar-random even-parity Gaussian states, one finds the large-$n$ upper and lower bounds
                    \begin{equation}\label{eq:HS_overlap_Yjk_scaling}
                        \mathrm{Std}\left[\braket{\tstatemat_j,\tstatemat_k}_{\rm HS}\right]
                        \in \Theta\left(\frac{1}{d}\right)\,,
                    \end{equation}
                    where ${\rm Std}[\cdot]$ denotes the standard deviation, meaning that the typical size of the inner product between the coefficient matrices of two random Gaussian states is $1/d$.
                    In other words, although all Gaussian states have the same norm in $\BC_2$, two generic such states become exponentially orthogonal in the large-$n$ limit.

                \subsubsection{Pauli observables}\label{supp:sec:Pauli_obs} 
    
As previously mentioned, Pauli observables are not only fully contained in a single module $\mathcal{B}_m$ but are also equivalent to a single element $C_\mathcal{M}$ of the module basis defined in Eq.~\eqref{supp:eq:module_basis} up to a phase. Let $\mathcal{M} = \lbrace \mu_1, \ldots, \mu_m \rbrace \subseteq \lbrace 1, \ldots, 2n\rbrace^m  $ denote the index set of the $m$ distinct Majorana operators composing the product associated with the Pauli observable. Similarly, the individual ($m!$ many) elements in the module frame (Eq.~\eqref{supp:eq:module_frame}), corresponding to permutations of the Majorana products of this module basis element, are equally equivalent to the Pauli observable.
                    Therefore, the coefficient in the tensors $\tobs$ of the projection of a Pauli observable $\hat{O}$ onto the module $\BC_m$ only exhibit $m!$ many non-zero components corresponding to the $m$ distinct Majorana operators referenced by the index set $\mathcal{M}$. 
                    Formally, with fixing the order $\mu_1 < \mu_2 \ldots < \mu_m$ in the unique index vector $\bm{\mu}$ for the Pauli observable, the only non-zero elements are
                    \begin{equation}
                        o_{\pi (\bm{\mu})} = \mathrm{sgn}(\pi) \sqrt{\frac{d}{m!}}
                    \end{equation}
                    for all index permutations $\pi(\bm{\mu})$ with $\pi \in S_m$ where $\mathrm{sgn}(\pi) \in \lbrace -1, 1 \rbrace$ denotes the sign of the permutation $\pi$.
                    Importantly, this precisely recovers the (rank-$m$) generalized (anti-symmetric) Kronecker delta tensor for the non-zero part 
                    \begin{equation}
                        o_{\mu_{1} \mu_{2} \ldots \mu_{m}} = \sqrt{\frac{d}{m!}} \delta_{\mu_1\phantom{'} \mu_2 \dots \mu_m}^{\mu'_1 \mu'_2 \dots \mu'_m}\,,
                    \end{equation}
                    with the ordered reference indices $\mu'_1 < \mu'_2 < \cdots < \mu'_m$.
                    Any other choice of indices $\bm{\nu}$, which is not equivalent to $\bm{\mu}$ up to permutation, results in a zero component $o_{\bm{\nu}} = 0$. Effectively, the rank-$m$ frame coefficient tensors $\tobs$ for a Pauli observable reduce from size $2n$ to size $m$ per dimension.

            \subsection{Special GP families in \texorpdfstring{$\BC_2$}{B2}}
                In this section, we classify various families of observables, states, and their combinations based on whether they induce GPs within the $\BC_2$ module.
            
                \subsubsection{GPs from Gaussian states in \texorpdfstring{$\BC_2$}{B2}}

                    In the univariate case of a single (pure) Gaussian state $\rho$, checking Condition~\ref{cond:gaussian_dominate}, which was simplified for $\BC_2$ to a condition on fractions of norms of ${\tstatebb} = \tstatemat^\top \tstatemat$, as stated in Eq.~\eqref{supp:eq:two_majo_uni_simplified_contraction_condition} of Proposition~\ref{supp:cor:two_majo_simplified_contraction_condition}, becomes trivial. 
                    Since Gaussian states lead to the special form as defined in Eq.~\eqref{supp:eq:Gaussian_state_Y_matrix_form}, we use ${\tstatebb} = \tstatemat^\top \tstatemat = - \tstatemat^2 = 1/(d\cdot m!) \id $. Hence, the eigenvalues of ${\tstatebb}$ are all $1/(d\cdot m!)$ as the negative squares of the purely imaginary eigenvalues $\pm i /\sqrt{d\cdot m!}$ of $\tstatemat$. Because the norm ratio is scale-invariant, the norms of the identity differ by a factor of $n$:
                    \begin{equation}
                        \frac
                        {\lVert {\tstatebb} \rVert_2^2} 
                        {\lVert {\tstatebb} \rVert_1^2}
                        =
                        \frac
                        {\lVert \id \rVert_2^2} 
                        {\lVert \id \rVert_1^2}
                        =
                        \frac
                        {2n} 
                        {\left(2n\right)^2}
                        = 
                        \frac{1}{2n}
                        \in \OC\left(\frac{1}{n}\right) \subset o(1)
                        .
                    \end{equation}
                    Since further, for any observable $\O \in \mathcal{B}_2$, the corresponding norm fraction ${\lVert \tobsbb \rVert_2^2}{\lVert \tobsbb \rVert_1^2} \leq 1$, Condition~\ref{cond:gaussian_dominate} is fulfilled. The remaining Condition~\ref{cond:full_support} is trivially fulfilled by the assumption of an arbitrary observable $\O$ in the module $\mathcal{B}_2$.  Therefore, Theorem \ref{thm:supp:GP} holds, and  matchgate transformations induce univariate \acp{gp} from any Gaussian state and any observable $\O \in \mathcal{B}_2$.

                    The univariate case is clearly special, since Proposition~\ref{supp:cor:two_majo_simplified_contraction_condition} reduces Condition~\ref{cond:gaussian_dominate} to the asymptotic vanishing of a ratio of Schatten norms of $\tstatebb=\tstatemat^\top \tstatemat$, which we have easily shown to always hold. In the multivariate case, however, the relevant quantity is no longer a single positive ratio, but rather the full sum $\Sperf$ of pairwise matching contributions appearing in Eq.~\eqref{eq:Sperf}. This makes it harder in general to check for the emergence of a provable GP. In the following however, we show that for a dataset of positively correlated Gaussian states one can derive a condition on the covariances that guarantees the satisfaction of Condition~\ref{cond:gaussian_dominate}.

                    \begin{proposition}[A multivariate QGP in $\BC_2$]\label{thm:multivariate_gp_b2_gaussian}
                        Given a dataset of even-parity Gaussian states $\rho_i$, and an observable $O\in\BC_2$, the condition
                        \begin{equation}
                            \Braket{\rho_{i},\rho_{j}}_{\BC_2}
                            \in 
                            \omega\left(\frac{n^{2/3}}{d}\right)
                            \qquad \forall\, i\neq j\,\,
                        \end{equation}
                        is enough to satisfy Condition~\ref{cond:gaussian_dominate} and hence ensure the emergence of a multivariate GP.
                    \end{proposition}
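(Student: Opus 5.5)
The plan is to verify Condition~\ref{cond:gaussian_dominate} through the sufficient formulation of Proposition~\ref{prop:cond:sufficient_mutivariate}. Odd moments are already handled: they vanish by Corollary~\ref{cor:B2_vanishing_odd_moments}, and Condition~\ref{cond:full_support} holds by assumption. Theorem~\ref{thm:supp:GP} then gives the multivariate GP.

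\textbf{Step 1: positivity.} The hypothesis $\braket{\rho_i,\rho_j}_{\BC_2}\in\omega(n^{2/3}/d)$ makes every off-diagonal overlap strictly positive for large $n$. The diagonal overlaps are module purities $\lVert\rho_i\rVert_{\BC_2}^2=\lVert\tstatemat_i\rVert_{\rm HS}^2=n/d$, which is also $\omega(n^{2/3}/d)$. Hence the non-negativity requirement of Proposition~\ref{prop:cond:sufficient_mutivariate} holds. It remains to show that every non-pairwise (and every mixed) contraction is dominated term by term by a suitably chosen pairwise product.

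\textbf{Step 2: cycle structure of $m=2$ contractions.} As in the proof of Proposition~\ref{supp:cor:two_majo_simplified_contraction_condition}, any $\sigma\in\mathfrak{B}_{k}$ acting on $k$ antisymmetric $2n\times 2n$ matrices factorizes, up to sign, into a product of traces over disjoint cycles. Cycles of length $1$ vanish because the matrices are antisymmetric. A cycle of length $2$ on states $j,j'$ gives exactly $\pm\braket{\rho_{i_j},\rho_{i_{j'}}}_{\BC_2}$. A pairwise matching is precisely one in which all cycles have length $2$.

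For even-parity Gaussian states we have $\tstatemat_i=Q_i\tstatemat_0Q_i^\top$, so every $\tstatemat_i$ has all singular values equal to $1/\sqrt{2d}$. A state cycle of length $\ell\ge 3$ is therefore bounded by
\begin{equation}
\bigl|\Tr[\tstatemat_{i_1}\cdots\tstatemat_{i_\ell}]\bigr|\le 2n\,(2d)^{-\ell/2}.
\end{equation}
On the observable side, a cycle of length $\ell$ satisfies $|\Tr[\tobsmat^\ell]|\le\lVert\tobsmat\rVert_\ell^\ell\le\lVert\tobsmat\rVert_2^\ell$ by Schatten monotonicity. Consequently the full observable factor divided by $\lVert O\rVert_m^k$ is at most $1$.

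\textbf{Step 3: choosing the dominating $\bm{\sigma}$.} Given a non-pairwise $\sigma$, split its state cycles into $2$-cycles and long cycles (length $\ge3$); call the number of long cycles $c$ and their total length $L$. Since every long cycle has length at least $3$, we have $c\le L/3$. Choose $\bm{\sigma}$ to contain exactly the $2$-cycles of $\sigma$, so that those factors cancel identically, and complete it by an arbitrary pairing of the $L$ remaining states. The ratio in Eq.~\eqref{eq:cond:pos_multi:imp} is then bounded by
\begin{equation}
\frac{(2n)^{c}(2d)^{-L/2}}{\prod_{L/2\text{ pairs}}\braket{\rho,\rho'}_{\BC_2}}\in o\!\left(\frac{n^{c}\,d^{-L/2}}{n^{L/3}\,d^{-L/2}}\right)\subseteq o(1),
\end{equation}
using $c\le L/3$. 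This is exactly where the exponent $2/3$ enters: the worst case is all long cycles being triangles, where each cycle contributes one factor of $n$ against the $n^{2/3}$ per pair supplied by the hypothesis.

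For the mixed terms in Eq.~\eqref{eq:cond:pos_multi:mix}, apply the same construction to $\pi$. Again take $\bm{\sigma}$ to contain the $2$-cycles of $\pi$, including the case where $\pi$ is itself pairwise, in which case $\bm{\sigma}=\pi$. The ratio is then $\mathcal{O}(1)$, which lies in $o(1/w_{\sigma,\pi})=o(n)$.

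\textbf{Main obstacle.} The delicate step is the bookkeeping in Step~3 for contractions that mix $2$-cycles with long cycles. A naive bound, which counts all cycles as contributing factors of $n$, gives up to $n^{k/2-1}$. That exceeds $n^{k/3}$ and the argument fails. The fix is to absorb the $2$-cycles exactly into the choice of $\bm{\sigma}$, and to check carefully that the resulting sign ambiguities are harmless. They are, because the domination is asserted in absolute value and the retained overlaps are positive.
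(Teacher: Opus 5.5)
Your proposal is correct and follows the paper's proof essentially step for step. Both invoke the sufficient non-negative-overlap formulation (Proposition~\ref{prop:cond:sufficient_mutivariate}) and bound each state cycle of length $\ell\ge 3$ by $2n(2d)^{-\ell/2}$ using the uniform singular values $1/\sqrt{2d}$ of Gaussian-state coefficient matrices; both then cancel the $2$-cycles by choosing $\bm{\sigma}$ to contain them and use that long cycles have length at least $3$ (your $c\le L/3$) to reach the $n^{2/3}/d$ threshold, with observable ratios bounded by $1$ and mixed terms trivial — the only cosmetic difference being that the paper writes overlaps as $(n/d)\,r_{ij}$ and assumes non-negativity explicitly, whereas you read it off the $\omega$ hypothesis.
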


                    \begin{proof}
                        Let us first recall that, for an observable $O\in \BC_2$ we are guaranteed that the ratios of observable frame-coefficient contractions are at most 1 (see Schatten norm equivalence and note in Proposition~\ref{thm:B2_GP_ratio}), and hence to prove the emergence of a multivariate GP we can simply study the asymptotics of the states' contributions.
                        Now, let $\rho_i$ be even-parity Gaussian states, and let $\tstatemat^{(i)}$ be their corresponding $2n\times 2n$ frame-coefficient matrix. One readily finds that $\tstatemat^{(i)}=Q_i\tstatemat_0 Q_i^\top$, for $Q_i\in\SO(2n)$ and $\tstatemat_0=\frac{1}{\sqrt{2d}}J$, with $J$ being the symplectic identity. Hence, all $\tstatemat^{(i)}$ share the same eigenvalues as $\tstatemat_0$, which have magnitude $1/\sqrt{2d}$. In particular, we have
                        \begin{equation}
                            \|\tstatemat^{(i)}\|_\infty = \frac{1}{\sqrt{2d}}\,.
                        \end{equation}
                        Now consider, for any even $k\geq 4$ moment, a non-pairwise contraction appearing in $\Simp$ in Condition~\ref{cond:gaussian_dominate}, associated to the Brauer pairing $\sigma$. Since the frame-coefficient tensors in $\BC_2$ are anti-symmetric matrices $\tstatemat^{(i_j)}$, the only Brauer pairings $\sigma$ yielding non-automatically vanishing contractions due to self-loops are those that can be interpreted as products of cycles between the $k$ frame-coefficient matrices involved. Let us denote by ${\rm cycles}(\sigma)$ the set of such cycles, by $\CC$ one such cycle, and by $\ell_\CC$ the length of $\CC$. The Brauer pairing $\sigma$ further contains permutations of the indices of the indices of each individual frame-coefficient matrix. However, the latter can only induce transpositions of the $\tstatemat^{(i_j)}$, that is a sign since they are anti-symmetric. On the other hand, each cycle $\CC\in {\rm cycles}(\sigma)$ contributes a trace of the product of the frame-coefficient matrices picked up by $\CC$.
                        For later convenience, let us split the cycles of $\sigma$ into those of length $2$, whose set we denote by $\mathrm{cycles}_{2}(\sigma)\subset {\rm cycles}(\sigma)$, and those of length strictly larger than $2$, whose set we similarly denote by $\mathrm{cycles}_{>2}(\sigma)\subseteq {\rm cycles}(\sigma)$. 
                        Let us denote by $|\mathrm{cycles}_{>2}(\sigma)|$ the number of the latter appearing in ${\rm cycles}(\sigma)$. Note that since by definition $\sigma$ is non-pairwise it must be that $|\mathrm{cycles}_{>2}(\sigma)|\geq1$.
                        The full state contribution magnitude coming from $\sigma$ then factorizes as
                        \begin{equation}
                            \left\lvert\operatorname{contr}_{\sigma}\!\left( \bigotimes_{j=1}^k \tstatemat^{(i_j)}\right) \right\rvert
                            =
                            \left\lvert
                            \prod_{(j, j')\in\mathrm{cycles}_{2}(\sigma)}
                            \Braket{\tstatemat^{(i_j)},\tstatemat^{(i_{j'})}}_{\rm HS}
                            \cdot
                            \prod_{\CC \in \mathrm{cycles}_{>2}(\sigma)}
                            \left(
                            \tr\!\left[
                                \prod_{i_j\in \CC} \tstatemat^{(i_j)}
                            \right]
                            \right)
                            \right\rvert
                            \,.
                        \end{equation}
                        For each cycle $\CC\in \mathrm{cycles}_{>2}(\sigma)$, the corresponding state contribution satisfies
                        \begin{equation}\label{eq:new_b2_upperbound}
                            \left|
                            \tr\!\left[
                                \prod_{i_j\in \CC} \tstatemat^{(i_j)}
                            \right]
                            \right|
                            \leq
                            \operatorname{rank}\!\left(
                                \prod_{i_j\in \CC} \tstatemat^{(i_j)}
                            \right)
                            \left\|
                                \prod_{i_j\in \CC} \tstatemat^{(i_j)}
                            \right\|_\infty
                            \leq
                            2n \prod_{i_j\in \CC} \|\tstatemat^{(i_j)}\|_\infty
                            =
                            \frac{2n}{(2d)^{\ell_\CC/2}}\,.
                        \end{equation}
                        Using this we can bound the full state contribution associated to $\sigma$ by
                        \begin{equation}\label{eq:b2_nonpair_upperbound_refined}
                            \left|
                            \operatorname{contr}_{\sigma}\!\left( \bigotimes_{j=1}^k \tstatemat^{(i_j)}\right)
                            \right|
                            \leq
                            \prod_{(j, j')\in\mathrm{cycles}_{2}(\sigma)}
                            \Braket{\tstatemat^{(i_j)},\tstatemat^{(i_{j'})}}_{\rm HS}
                            \frac{(2n)^{|\mathrm{cycles}_{>2}(\sigma)|}}{(2d)^{\sum_{\CC\in\mathrm{cycles}_{>2}(\sigma)}\ell_\CC/2}}\,.
                        \end{equation}
    
                        On the other hand, each state contribution term in $\Sperf$ associated to a pairwise matching $\bm{\sigma}$ is of the form
                        \begin{equation}\label{eq:b2_parirings_denominator}
                            \prod_{(j,j')\in \bm{\sigma}}
                            \Braket{\rho_{i_j},\rho_{i_{j'}}}_{\BC_2}
                            =
                            \prod_{(j,j')\in \bm{\sigma}}
                            \Braket{\tstatemat^{(i_j)},\tstatemat^{(i_{j'})}}_{\rm HS}\,.
                        \end{equation}
                        
                        We now use the bound from Eq.~\eqref{eq:b2_nonpair_upperbound_refined} to study the asymptotics of the ratio
                        \begin{equation}\label{eq:b2_ratio_gaussian}
                            \frac{\left|
                            \operatorname{contr}_{\sigma}\!\left( \bigotimes_{j=1}^k \tstatemat^{(i_j)}\right)
                            \right|}{\prod_{(j,j')\in \bm{\sigma}}
                            \Braket{\tstatemat^{(i_j)},\tstatemat^{(i_{j'})}}_{\rm HS}}\,,
                        \end{equation}
                        Let us start by assuming that all pairwise module overlaps are non-negative, so as to avoid cancellations in $\Sperf$ and allows us to analyze the ratios in Eq.~\eqref{eq:b2_ratio_gaussian} independently as per Proposition~\ref{prop:cond:sufficient_mutivariate}, sufficient to fulfill Condition~\ref{cond:gaussian_dominate}. We then compare a non-pairwise contribution with a pairwise matching contribution associated to some $\bm{\sigma}$ chosen so as to contain all the $2$-cycles already appearing in $\sigma$ as Proposition~\ref{prop:cond:sufficient_mutivariate} requires only the existence of such a pairwise matching $\bm{\sigma}$. In this way, the pairwise factors already present in Eq.~\eqref{eq:b2_nonpair_upperbound_refined} also appear in the chosen denominator in Eq.~\eqref{eq:b2_ratio_gaussian} and can be canceled out.
                        Now, writing
                        \begin{equation}
                            \Braket{\rho_i,\rho_j}_{\BC_2}
                            =
                            \frac{n}{d}\,r_{ij},
                        \end{equation}
                        where $n/d$ is the largest possible module overlap inside $\BC_2$ \cite{deneris2025analyzing} and $0\leq r_{ij}\leq 1$ denotes the corresponding normalized overlap, the remaining pairwise matching contribution reads
                        \begin{equation}
                            \prod_{(j,j')\in \bm{\sigma}\setminus\mathrm{cycles}_{2}(\sigma)}
                            \Braket{\rho_{i_j},\rho_{i_{j'}}}_{\BC_2}
                            =
                            \left(\frac{n}{d}\right)^{k/2 - |\mathrm{cycles}_{2}(\sigma)|}
                            \prod_{(j,j')\in \bm{\sigma}\setminus\mathrm{cycles}_{2}(\sigma)}
                            r_{i_j i_{j'}}\,.
                        \end{equation}
                        Note that $k/2 - |\mathrm{cycles}_{2}(\sigma)|=\sum_{\CC\in\mathrm{cycles}_{>2}(\sigma)}\ell_\CC/2$.
                        For the corresponding pairwise matching term to dominate the non-pairwise contribution, namely, the ratio in Eq.~\eqref{eq:b2_ratio_gaussian} being $o(1)$, the strict asymptotic lower bound
                        \begin{equation}
                            \prod_{(j,j')\in \bm{\sigma}\setminus\mathrm{cycles}_{2}(\sigma)}
                            r_{i_j i_{j'}}
                            \in \omega \left(
                            (2n)^{|\mathrm{cycles}_{>2}(\sigma)|-\sum_{\CC\in\mathrm{cycles}_{>2}(\sigma)}\ell_\CC/2} \right)\,
                        \end{equation}
                        must hold.
                        If we now further impose a uniform lower bound on the normalized overlaps, namely
                        \begin{equation}
                            r_{ij}\geq r_*(n)
                            \qquad \forall\, i\neq j,
                        \end{equation}
                        then a sufficient condition is obtained by requiring
                        \begin{equation}
                            r_*(n)^{\sum_{\CC\in\mathrm{cycles}_{>2}(\sigma)}\ell_\CC/2}
                            \in \omega\left(
                            n^{|\mathrm{cycles}_{>2}(\sigma)|-\sum_{\CC\in\mathrm{cycles}_{>2}(\sigma)}\ell_\CC/2}\right)\,.
                        \end{equation}
                        Equivalently, this reads
                        \begin{equation}
                            r_*(n)
                            \in \omega\left(
                            n^{\frac{2|\mathrm{cycles}_{>2}(\sigma)|}{\sum_{\CC\in\mathrm{cycles}_{>2}(\sigma)}\ell_\CC}-1}\right)\,.
                        \end{equation}
                        Now, since every cycle in $\mathrm{cycles}_{>2}(\sigma)$ has length at least $3$, one has
                        \begin{equation}
                            \sum_{\CC\in\mathrm{cycles}_{>2}(\sigma)}\ell_\CC
                            \geq
                            3|\mathrm{cycles}_{>2}(\sigma)|\,,
                        \end{equation}
                        and thus
                        \begin{equation}
                            \frac{2|\mathrm{cycles}_{>2}(\sigma)|}{\sum_{\CC\in\mathrm{cycles}_{>2}(\sigma)}\ell_\CC}-1
                            \leq
                            -\frac{1}{3}\,.
                        \end{equation}
                        Hence, a sufficient condition on the individual normalized overlaps is the following strict asymptotic lower bound
                        \begin{equation}\label{eq:overlap_bound_multi_B2}
                            r_{ij}\in \omega\left(n^{-1/3}\right)
                            \qquad \forall\, i\neq j\,.
                        \end{equation}
                        Thus, assuming non-negative module overlaps, it is sufficient that all normalized overlaps $r_{ij}$ decay more slowly than $n^{-1/3}$ in order for the pairwise matching contributions to dominate the non-pairwise ones. Since as we discussed in the beginning, we are guaranteed that observable contributions are at most 1, this overall fulfills the sufficient formulation of Condition~\ref{cond:gaussian_dominate} via Proposition~\ref{prop:cond:sufficient_mutivariate} by satisfying Eq.~\eqref{eq:cond:pos_multi:imp} and trivially Eq.~\eqref{eq:cond:pos_multi:mix}. Consequently, with ensured zero odd moments in $\BC_2$ (Corollary~\ref{cor:B2_vanishing_odd_moments}), a set of Gaussian states with bounded module overlaps as in Eq.~\eqref{eq:overlap_bound_multi_B2}, which is the statement of Proposition~\ref{thm:multivariate_gp_b2_gaussian} once the normalization factor $n/d$ is reinserted, is guaranteed to form multivariate QGPs for any observable $O \in \BC_2$.
                    \end{proof}

                \subsubsection{Pauli observables (in \texorpdfstring{$\BC_2$}{B2}) alone cannot lead to GPs}

                    Since any Pauli observable is fully contained in only a single module (see Sec.~\ref{supp:sec:Pauli_obs}), Condition~\ref{cond:full_support} is trivially fulfilled.
                    As we are furthermore considering the $m=2$ Majorana case, Proposition~\ref{supp:cor:two_majo_simplified_contraction_condition} provides an equivalent simplification for Condition~\ref{cond:gaussian_dominate} such that it suffices to check the scaling of the Schatten norm fractions 
                    $
                        {\left\lVert \tobsbb \right\rVert_{2} }
                        /
                        {\left\lVert \tobsbb \right\rVert_{1} }  
                    $ associated with the observable.

                    Here, $\tobsbb = O^\top O$ where $O$ denotes the matrix $2n \times 2n$ reflecting the rank-$2$ tensors $\tobs$ of the frame coefficients when projecting the observable $\hat{O}$ onto the module $\mathcal{B}_2$.
                    
                    To demonstrate that Pauli observables alone cannot lead to GPs, we show that these fractions remain constant in scaling $n$, which leaves the strict asymptotic vanishing (Condition~\ref{cond:gaussian_dominate}) to the specific choice of states. 
                    As highlighted in Sec.~\ref{supp:sec:Pauli_obs}, considering Pauli observables reduces to the study of a single rank-$m$ generalized (anti-symmetric) Kronecker delta tensor up to a factor.
                    Hence, reshaped into a $2\times 2$ matrix, we uncover the $2\times 2$ identity as
                    \begin{equation}
                        \tobsbb 
                        =
                        O^\top O
                        =
                        \sqrt{\frac{d}{m!}}
                        \begin{pmatrix}
                            0 & 1 \\
                            -1 & 0
                        \end{pmatrix}
                        \sqrt{\frac{d}{m!}}
                        \begin{pmatrix}
                            0 & -1 \\
                            1 & 0
                        \end{pmatrix}
                        = {\frac{d}{m!}} \id
                        .
                    \end{equation}
                    Then, we simply find the two norms as
                    \begin{align}
                        \left\lVert \tobsbb \right\rVert_{1}^2 
                        &= \left({\frac{d}{m!}}\right)^2 \left\lVert \id \right\rVert_{1}^2 
                        = 
                        \left({\frac{d}{m!}}\right)^2 (m!)^2
                        =
                        {d}^2, \\
                        \left\lVert \tobsbb \right\rVert_{2}^2 
                        &= {\frac{d}{2}} \left\lVert \id \right\rVert_{2}^2 
                        = \left({\frac{d}{m!}}\right)^2 m!
                        = 
                        {\frac{d^2}{m!}}
                        .
                    \end{align}
                    Their ratio is constant
                    \begin{equation}
                        \frac{\left\lVert \tobsbb \right\rVert_{2} }
                        {\left\lVert \tobsbb \right\rVert_{1} }   
                        = \frac{1}{\sqrt{m!}}
                        \in \Theta(1)
                        \nsubseteq
                        o(1)
                        ,
                    \end{equation}
                    i.e., does not strictly vanish asymptotically as functions in $o(1)$,
                    since $m=2$ does not scale with $n$.

                    The scaling of the Pauli observable norm fractions cannot meet Eq.~\eqref{supp:eq:two_majo_uni_simplified_contraction_condition} in Proposition~\ref{supp:cor:two_majo_simplified_contraction_condition} (equivalent simplification of Condition~\ref{cond:gaussian_dominate}).

                    Hence, the statement transfers that a Pauli observable alone is not sufficient to obtain a GP, and a state family must be chosen to satisfy Condition~\ref{cond:gaussian_dominate}. Proposition~\ref{supp:cor:two_majo_simplified_contraction_condition} again provides a simplified condition in terms of Schatten norm ratios of the state module coefficients for the univariate case.

                \subsubsection{GPs from random two Majorana product observables                }

                    We study how inside the module $\BC_2$, a random observable can yield a Gaussian process without imposing strong conditions on the dataset of states. In Supp.~Fig.~\ref{fig:random_o_2majos} we show the numerical analysis that led us to prove that the ratios $(\vert\vert \tobsbb \vert\vert_\ell / \vert\vert \tobsbb \vert\vert_1)^\ell$ scale as $\OC(1/(2n)^{\ell-1})$ in the large-$n$ limit.
                    \begin{figure}[tbp]
                        \centering
                        \includegraphics[width=\linewidth]{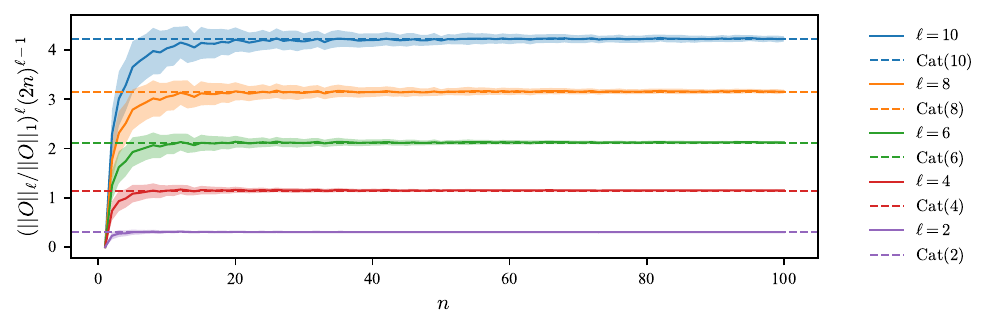}
                        \caption{\textbf{Numerical verification of asymptotic GP conditions satisfaction.} Per Theorem~\ref{thm:B2_GP_ratio}, an observable $\hat{O} \in \BC_2$ with random i.i.d. $\NC(0, 1)$ coefficients shows vanishing coefficient tensor norm ratios $(\vert\vert O \vert\vert_\ell / \vert\vert O \vert\vert_1)^\ell$ towards $\mathrm{Cat}(\ell) / (2n)^{\ell-1}$, guaranteeing GPs in the large-$n$ limit. 
                        The solid lines and associated shaded regions represent the mean and standard deviation of $(\vert\vert O \vert\vert_\ell / \vert\vert O \vert\vert_1)^\ell (2n)^{\ell-1}$ in log-space over $100$ samples of $\hat{O}$. The dashed lines instead represent the theoretically predicted convergence value ${\rm Cat}(\ell)$. The colors correspond to different values of $\ell$, as stated in the legend.}
                        \label{fig:random_o_2majos}
                    \end{figure}
                    We consider observables $\hat{O}=-i\sum_{\mu,\nu} \tobselem_{\mu\nu} \, C_\mu C_\nu$ where the coefficients $\tobselem_{\mu,\nu}\sim\NC(0,1)$ are sampled i.i.d. from a Gaussian distribution of zero mean and unit variance. For increasing system sizes $n=1,\dots,100$ we sample $100$ independent observables $\hat{O}$ and use their $(2n \times 2n)$ coefficient matrices $\tobsmat$ and construct $\tobsbb=O^\top O$. We then compute the random variable $(\vert\vert \tobsbb \vert\vert_\ell / \vert\vert \tobsbb \vert\vert_1)^\ell$ for $\ell=2,4,\dots,10$ and finally compute its mean and standard deviation. Supp.~Fig.~\ref{fig:random_o_2majos} clearly shows that after an initial decaying phase at small $n$, these ratios converge to an exact multiple of $(2n)^{\ell-1}$, and in the following we prove why.

                    \begin{proposition}\label{thm:B2_GP_ratio}
                        Let $O \in \BC_2$ be a random observable with i.i.d. $\NC(0, 1)$ coefficients in the $\BC_2$ basis, defined in Eq.~\eqref{supp:eq:module_basis}.
                        In the large-$n$ limit, the ratio $(\vert\vert \tobsbb \vert\vert_\ell / \vert\vert \tobsbb \vert\vert_1)^\ell$ converges almost surely\footnote{That is, with overwhelming probability over the sampling of the random operator $O$.} to
                        \begin{equation}\label{eq:theorem:random_B2_obs_vanish}
                            (\vert\vert \tobsbb \vert\vert_\ell / \vert\vert \tobsbb \vert\vert_1)^\ell \longrightarrow \frac{\rm{Cat}(\ell)}{(2n)^{\ell-1}}\,,
                        \end{equation}
                        where $\rm{Cat}(\ell)=\frac{1}{\ell +1}\binom{2\ell}{\ell}$ is the $\ell$-th Catalan number and constant in $n$.
                    \end{proposition}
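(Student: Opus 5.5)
The plan is to recognize $\tobsmat$ as a (scaled) real antisymmetric Gaussian random matrix and reduce the statement to classical random-matrix spectral limits. With i.i.d. $\NC(0,1)$ coefficients $o_{\idxset}$ in the orthonormal basis, the frame matrix has entries $\tobsmat_{\mu\nu} = o_{\{\mu,\nu\}}/\sqrt{2}$ (with sign) for $\mu<\nu$, antisymmetric and zero diagonal. Hence $\tobsmat = G/\sqrt{2}$, where $G$ is a $2n\times 2n$ real antisymmetric matrix with i.i.d. standard Gaussian upper-triangular entries. Then $iG$ is Hermitian with independent (up to symmetry) centered entries of unit variance, i.e.\ a Wigner-type matrix. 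Since the ratio $(\lVert\tobsbb\rVert_\ell/\lVert\tobsbb\rVert_1)^\ell$ is scale invariant, overall constants are irrelevant, and
\begin{equation}
\tobsbb = \tobsmat^\top\tobsmat = -\tobsmat^2 \propto (iG)^2 .
\end{equation}
Its eigenvalues are therefore the squares $\lambda_i^2$ of the eigenvalues of $iG$.

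Next, I will write both norms as normalized spectral moments. Set $N=2n$ and $W=iG/\sqrt{N}$. Then
\begin{equation}
\lVert\tobsbb\rVert_\ell^\ell \propto \sum_i \lambda_i^{2\ell} = N^{\ell+1}\,\tfrac{1}{N}\Tr[W^{2\ell}], \qquad \lVert\tobsbb\rVert_1 \propto \sum_i\lambda_i^2 = N^2\,\tfrac1N\Tr[W^2],
\end{equation}
which gives
\begin{equation}
\frac{\lVert\tobsbb\rVert_\ell^\ell}{\lVert\tobsbb\rVert_1^\ell} = N^{1-\ell}\,\frac{\tfrac1N\Tr[W^{2\ell}]}{\left(\tfrac1N\Tr[W^2]\right)^\ell}.
\end{equation}
By Wigner's semicircle law for real antisymmetric Gaussian matrices (times $i$), the empirical moments converge almost surely: $\tfrac1N\Tr[W^{2\ell}]\to \mathrm{Cat}(\ell)$ and $\tfrac1N\Tr[W^2]\to 1$. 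The denominator limit is in fact elementary, since $\tfrac{1}{N^2}\sum_{\mu\neq\nu}G_{\mu\nu}^2 \to 1$ by the strong law of large numbers, the diagonal being zero and contributing only $O(1/N)$. This yields the claimed limit $\mathrm{Cat}(\ell)/(2n)^{\ell-1}$.

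The main obstacle is justifying almost-sure convergence of the higher moment $\tfrac1N\Tr[W^{2\ell}]$ for this particular (antisymmetric, zero-diagonal) ensemble. I would handle it in two steps.

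\textbf{Expectation.} Compute $\E[\tfrac1N\Tr W^{2\ell}]$ by the standard moment method: expand over closed walks and note that only walks traversing each edge exactly twice along a tree (non-crossing pairings) survive at leading order. Each such pairing contributes $+1$. The antisymmetry signs cancel, because each edge is crossed once in each direction, giving $(iG_{\mu\nu})(iG_{\nu\mu}) = |G_{\mu\nu}|^2$. The number of such pairings is $\mathrm{Cat}(\ell)$, and all other walks are $O(1/N)$.

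\textbf{Concentration.} Show $\Var[\tfrac1N\Tr W^{2\ell}] = O(1/N^2)$, either by the same combinatorics or by Gaussian concentration, since the map is Lipschitz on bounded-norm events. Borel--Cantelli then upgrades convergence in probability to almost-sure convergence.

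The ratio converges by the continuous mapping theorem, because the denominator limit is nonzero. Finally, I will note that this is the result feeding Proposition~\ref{supp:cor:two_majo_simplified_contraction_condition}: the $\ell=2$ case gives $\lVert\tobsbb\rVert_2/\lVert\tobsbb\rVert_1 \sim \sqrt{2/(2n)} \in o(1)$.
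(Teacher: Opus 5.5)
Your proposal is correct and follows the paper's proof. Both use the same reduction of $\tobsbb=-\tobsmat^{2}$ to the Wigner-type Hermitian matrix $H\propto i\tobsmat/\sqrt{2n}$, and the same scale-invariant identity writing the ratio as $(2n)^{1-\ell}\,\frac{1}{2n}\Tr(H^{2\ell})/\bigl(\frac{1}{2n}\Tr(H^{2})\bigr)^{\ell}$.

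The only difference is how the limit of $\frac{1}{2n}\Tr(H^{2\ell})$ is obtained. You get $\mathrm{Cat}(\ell)$ combinatorially from non-crossing pairings (the antisymmetry signs cancel), and you secure almost-sure convergence via an $O(1/n^{2})$ variance bound plus Borel--Cantelli. The paper instead invokes the semicircle law and evaluates $\int x^{2\ell}\rho_{\mathrm{sc}}(x)\,dx$ through a Beta function. Your route actually supplies a step the paper leaves implicit: weak convergence of the spectral measure alone does not give convergence of the unbounded moments $x^{2\ell}$.
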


                    \begin{proof}

                        Let us recall that $\tobsbb=O^\top O$ for $O$ the random anti-symmetric $2n\times 2n$ matrix whose independent entries are sampled i.i.d. from $\NC(0,1)$.
                        Let us now define the Hermitian matrix
                        \begin{equation}
                            H = \frac{i}{\sqrt{2n}} O\,.
                        \end{equation}
                        Notice that by construction the diagonal entries of $H$ vanish, while its off-diagonal entries have mean zero and variance $1/2n$, so $H$ is a Wigner-type matrix.
                        
                        In terms of $H$, the ratios can be readily found to read
                        \begin{equation}
                        \label{ap-eq:ratio_from_wigner}
                            (\vert\vert \tobsbb \vert\vert_\ell / \vert\vert \tobsbb \vert\vert_1)^\ell
                            =
                            \frac{1}{(2n)^{\ell-1}}\frac{\frac{1}{2n}\mathrm{Tr}(H^{2\ell})}{\left(\frac{1}{2n}\mathrm{Tr}(H^2)\right)^\ell}\,.
                        \end{equation}
                        We explicitly factorized out the $(2n)^{\ell-1}$ factor, this way the whole problem has reduced to understanding the normalized traces $\frac{1}{2n}\mathrm{Tr}(H^{2\ell}) $ as $n \to \infty$.
                        
                        Now, let $\lambda_1^{(n)},\dots,\lambda_{2n}^{(n)}$ be the eigenvalues of $H$:
                        \begin{equation}
                            \frac{1}{2n}\mathrm{Tr}(H^{2\ell})
                            =
                            \frac{1}{2n}\sum_{j=1}^{2n} \left(\lambda_j^{(n)}\right)^{2\ell}\,.
                        \end{equation}
                        By the Wigner semicircle law, the eigenvalue distribution of $H$ converges, in the large-$n$ limit, almost surely to the semicircle density
                        \begin{equation}
                            \rho_{\mathrm{sc}}(x) =
                            \frac{1}{2\pi}\sqrt{4-x^2}\,\mathbf{1}_{[-2,2]}(x)\,,
                        \end{equation}
                        for $\mathbf{1}_{[-2,2]}(x)$ the indicator function on the interval $[-2,2]$, namely the function
                        \begin{equation}
                            \mathbf{1}_{[-2,2]}(x) 
                            =
                            \begin{cases}
                                1\quad &{\rm if}\,x\in[-2,2]\\
                                0&{\rm else}
                            \end{cases}\,.
                        \end{equation}
                        Hence, for any integer $\ell \geq 1$,
                        \begin{equation}
                            \frac{1}{2n}\mathrm{Tr}(H^{2\ell})
                            \underset{\rm a.s.}{\longrightarrow}
                            \int_{-2}^{2} x^{2\ell}\rho_{\mathrm{sc}}(x)\,dx\,.
                        \end{equation}
                        Using this, we see that all we need are the even moments of the semicircle law.
                        
                        Let us denote the latter as
                        \begin{equation}
                            \mu_{2\ell}
                            =
                            \int_{-2}^{2} x^{2\ell}\rho_{\mathrm{sc}}(x)\,dx
                            =
                            \frac{1}{2\pi}\int_{-2}^{2} x^{2\ell}\sqrt{4-x^2}\,dx\,.
                        \end{equation}
                        We can use the fact that the integrand is even to write
                        \begin{equation}
                            \mu_{2\ell}
                            =
                            \frac{1}{\pi}\int_{0}^{2} x^{2\ell}\sqrt{4-x^2}\,dx\,.
                        \end{equation}
                        Now let us perform the change of variable $x=2\sqrt{t}$, this leads to
                        \begin{equation}
                            \mu_{2\ell}
                            =
                            \frac{2^{2\ell+1}}{\pi}
                            \int_0^1 t^{\ell-\frac12}(1-t)^{\frac12}\,dt\,.
                        \end{equation}
                        One can then recognize in the expression above the Beta function,
                        \begin{equation}
                            \mu_{2\ell}
                            =
                            \frac{2^{2\ell+1}}{\pi}
                            B\left(\ell+\frac12,\frac32\right)\,.
                        \end{equation}
                        Then, using $B(a,b)=\frac{\Gamma(a)\Gamma(b)}{\Gamma(a+b)}$ and $\Gamma(3/2)=\sqrt{\pi}/2$, we obtain
                        \begin{equation}
                            \mu_{2\ell}
                            =
                            \frac{2^{2\ell}}{\sqrt{\pi}}
                            \frac{\Gamma\left(\ell+\frac12\right)}{\Gamma(\ell+2)}\,.
                        \end{equation}
                        Lastly, by using
                        \begin{equation}
                            \Gamma\left(\ell+\frac12\right)
                            =
                            \frac{(2\ell)!}{4^\ell \ell!}\sqrt{\pi}
                            \quad\text{and}\quad
                            \Gamma\left(\ell+2\right) = \left(\ell + 1\right)!,
                        \end{equation}
                        we find
                        \begin{equation}
                            \mu_{2\ell}
                            =
                            \frac{(2\ell)!}{\ell!(\ell+1)!}
                            =
                            \frac{1}{\ell+1}\binom{2\ell}{\ell}
                            =
                            \mathrm{Cat}(\ell)\,.
                        \end{equation}
    
                        Plugging this back in Eq.~\eqref{ap-eq:ratio_from_wigner}, we conclude the proof.

                        Finally, we note that fluctuations over the convergence scale as~\cite{anderson2010introduction}
                        \begin{equation}
                            (\vert\vert \tobsbb \vert\vert_\ell / \vert\vert \tobsbb \vert\vert_1)^\ell \sim \frac{\rm{Cat}(\ell)}{(2n)^{\ell-1}}+\mathcal{O}\left(\frac{1}{n^l}\right)\,,
                        \end{equation}
                    which implies through a standard upper tail bound that probability of deviation decays exponentially with $n$.
        
                    \end{proof}

                    In conclusion, the vanishing Schatten norm ratio in Eq.~\eqref{eq:theorem:random_B2_obs_vanish} for the case $\ell = 2$ suffices already to guarantee that Condition~\ref{cond:gaussian_dominate} is met via Proposition~\ref{supp:cor:two_majo_simplified_contraction_condition} in the univariate case. Therefore, GPs emerge in the large-$n$ limit with overwhelming probability when sampling such random $\BC_2$ observables $\hat{O}$

                \subsubsection{GPs from magic states in \texorpdfstring{$\BC_2$}{B2}}

                    Consider now the family of magic states, known as the extent state \cite{diaz2023showcasing},
                    \begin{equation}
                        \ket{\psi_\theta}
                        =
                        \left(
                        \cos\left(\frac{\theta}{4}\right)\ket{0000}
                        +
                        \sin\left(\frac{\theta}{4}\right)\ket{1111}
                        \right)^{\otimes n/4}\,,
                    \end{equation}
                    for $n$ divisible by $4$, and let $\rho_\theta = \ket{\psi_\theta}\!\bra{\psi_\theta}$. In the univariate case of a single such state, checking Condition~\ref{cond:gaussian_dominate} for the module $\BC_2$ is again immediate via Proposition~\ref{supp:cor:two_majo_simplified_contraction_condition}. Indeed, by the module frame conventions, the entries of the two-Majorana coefficient matrix $\tstatemat$ are
                    \begin{equation}
                        \tstateelem_{\alpha\beta}
                        =
                        \tr\left[C_{\alpha\beta}\rho_\theta\right]
                        =
                        \frac{-i}{\sqrt{2d}}\tr\left[C_\alpha C_\beta\rho_\theta\right]\,.
                    \end{equation}
                    Hence, it suffices to characterize the two-point Majorana correlators of $\rho_\theta$.

                    Let us first consider a single four-qubit block
                    \begin{equation}
                        \ket{\phi_\theta}
                        =
                        \cos\left(\frac{\theta}{4}\right)\ket{0000}
                        +
                        \sin\left(\frac{\theta}{4}\right)\ket{1111}\,.
                    \end{equation}
                    In the Jordan-Wigner representation, for each qubit $q$ in the block one has
                    \begin{equation}
                        C_{2q-1}C_{2q} = i Z_q\,,
                    \end{equation}
                    and therefore
                    \begin{equation}
                        \bra{\phi_\theta} C_{2q-1}C_{2q}\ket{\phi_\theta}
                        =
                        i\bra{\phi_\theta} Z_q \ket{\phi_\theta}
                        =
                        i\left(
                        \cos^2\left(\frac{\theta}{4}\right)
                        -
                        \sin^2\left(\frac{\theta}{4}\right)
                        \right)
                        =
                        i\cos\left(\frac{\theta}{2}\right)\,.
                    \end{equation}
                    On the other hand, if the two Majorana operators act on different qubits, then the corresponding Pauli string contains two $X/Y$ terms and therefore it maps both $\ket{0000}$ and $\ket{1111}$ to computational basis states that are orthogonal to each other. Thus, all such matrix elements vanish. Since the full state $\ket{\psi_\theta}$ is a tensor product of identical four-qubit blocks, the same argument also implies that all mixed two-point functions between Majoranas belonging to different blocks vanish. It follows that the full two-Majorana coefficient matrix is of the form
                    \begin{equation}
                        \tstatemat
                        =
                        \frac{\cos\left(\frac{\theta}{2}\right)}{\sqrt{2d}}\,J\,,
                    \end{equation}
                    where $J$ is the standard symplectic form on $2n$ Majorana modes as defined in Eq.~\eqref{supp:eq:Gaussian_state_Y_matrix_form}. Consequently,
                    \begin{equation}\label{eq:magic_state_pbb}
                        \tstatebb
                        =
                        \tstatemat^\top \tstatemat
                        =
                        \frac{\cos^2\left(\frac{\theta}{2}\right)}{2d}\,\id
                        =
                        \frac{\cos^2\left(\frac{\theta}{2}\right)}{d\cdot 2!}\,\id\,.
                    \end{equation}

                    Therefore, for all $\theta$ such that $\cos(\theta/2)\neq 0$, the eigenvalues of $\tstatebb$ are all equal and we obtain
                    \begin{equation}
                        \frac
                        {\lVert {\tstatebb} \rVert_2^2} 
                        {\lVert {\tstatebb} \rVert_1^2}
                        =
                        \frac
                        {\lVert \id \rVert_2^2} 
                        {\lVert \id \rVert_1^2}
                        =
                        \frac
                        {2n} 
                        {\left(2n\right)^2}
                        = 
                        \frac{1}{2n}
                        \in \OC\left(\frac{1}{n}\right) \subset o(1)
                        \,.
                    \end{equation}
                    Since further, for any observable $\O \in \mathcal{B}_2$, the corresponding norm fraction ${\lVert \tobsbb \rVert_2^2}/{\lVert \tobsbb \rVert_1^2} \leq 1$, Condition~\ref{cond:gaussian_dominate} is fulfilled. The remaining Condition~\ref{cond:full_support} is trivially fulfilled by the assumption of an arbitrary observable $\O$ in the module $\mathcal{B}_2$. Therefore, Theorem \ref{thm:supp:GP} holds also for the family of product states $\rho_\theta$, for all $\theta$ such that $\cos(\theta/2)\neq 0$, and Haar-random matchgate unitaries induce univariate \acp{gp} from these states and any observable $\O \in \mathcal{B}_2$.

                    In Supp.~Fig.~\ref{fig:magic} we show the empirical GP distribution for magic states $\rho_\theta$ for different values of $\theta$ and a random Pauli measurement in $\BC_2$, together with the expected Gaussian curve, which using Eq.~\eqref{eq:two_maj_cov_exact} and Eq.~\eqref{eq:magic_state_pbb} can be readily found to be $\NC(0,|\cos(\theta/2)|/\sqrt{2n-1})$.

                    \begin{figure}
                        \centering
                        \includegraphics[width=0.6667\linewidth]{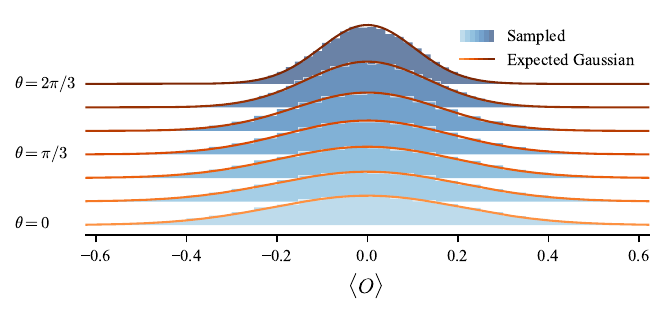}
                        \caption{\textbf{Numerical verification of a univariate GP for magic states and Pauli observable in $\BC_2$.} We consider a system of $n=12$ qubits. For each of the seven equispaced values of $\theta\in[0,2\pi/3]$, we take $5\cdot10^4$ samples of the random variable given by the expectation value of $O=Z_1$ over the state $\sigma_\theta= U\rho_\theta U^\dagger$, with $U\sim \SPIN(2n)$. The blue histograms show the numerical distribution of the samples, while the orange curves represent the expected Gaussian distribution $\NC(0,|\cos(\theta/2)|/\sqrt{2n-1})$. Different shades of the colors correspond to different values of $\theta$, increasing from lighter to darker. Note that while a $\BC_2$ Pauli observable alone is insufficient to induce a GP, the magic state $\rho_\theta$ successfully drives the asymptotic convergence.}
                        \label{fig:magic}
                    \end{figure}

            \subsection{Special GP families in \texorpdfstring{$\BC_4$}{B4}}

                In this section, we analyze various families of states and observables for the module $\BC_4$ with respect to their suitability for forming GPs. A particular focus is on Pauli observables, which alone cannot lead to GPs in $\BC_4$. More interestingly, we identify that, when combined with Gaussian states, they can never lead to GPs, while providing numerical evidence that a class of random states can help to achieve GPs with $\BC_4$ Pauli observables.

                \subsubsection{Gaussian states alone cannot lead to GPs in \texorpdfstring{$\BC_4$}{B4}}

                    The following analysis of the behavior of Gaussian states when $\O \in \BC_4$ relies on a particular subset of ratios in Eq.~\eqref{eq:cond:univariate:imp} of the equivalent univariate formulation of Proposition~\ref{prop:cond:univariate} that are analogous to the simplified $\BC_2$ version of Condition~\ref{cond:gaussian_dominate} in Eq.~\eqref{supp:eq:two_majo_uni_simplified_contraction_condition} of Proposition~\ref{supp:cor:two_majo_simplified_contraction_condition}. Specifically, these are ratios that correspond to Schatten norm comparisons. Since these ratios naturally occur as a subset of the full condition for $\BC_4$ (Proposition~\ref{prop:cond:univariate}), we prove that Condition~\ref{cond:gaussian_dominate} cannot be fulfilled, provided that the ratio associated with the observable does not vanish sufficiently fast.
                    Particularly, we study the cases of Proposition~\ref{prop:cond:univariate} (simplified $\BC_2$ condition) where the contraction over four replications of the (rank-4) frame coefficient tensor $\tstate$ of a Gaussian state arranges them on a cycle where each tensor is connected to two other tensors via two (index) legs each. As in Proposition~\ref{supp:cor:two_majo_simplified_contraction_condition}, the state coefficient tensor $\tstate$ can then be reshaped into square matrices $\tstatemat$ and their contractions (magnitudes) are expressed as the (absolute) trace of matrix powers $\left\lvert \tr\left[\tstatemat^4\right] \right\rvert$ in the numerator and  $\left\lvert \tr\left[\tstatemat^2\right] \right\rvert$ in the denominator. Furthermore, the relation to Schatten 2- and 1-norms then still holds as $\left\lvert \tr\left[\tstatemat^4\right] \right\rvert = \left\lVert {\tstatebb} \right\rVert^2_2$ and $\left\lvert \tr\left[\tstatemat^2\right] \right\rvert = \left\lVert {\tstatebb} \right\rVert_1$, respectively. Any Gaussian state with non-zero support in $\BC_4$ (which implies even parity) can be prepared by a matchgate transformation applied to the all-zero state $\ket{0}^{\otimes n}$, which is related to a special orthogonal transformation of the all-zero state $\BC_4$ frame coefficient matrix $\tstatemat_0$. As the traces/norms are invariant under such an orthogonal basis change, the ratio for any Gaussian state matches the ratio associated with the all-zero state, i.e., 
                    $
                        \left\lVert {\tstatebb} \right\rVert_2/\left\lVert {\tstatebb} \right\rVert_1 = \left\lVert {\tstatebb}_0 \right\rVert_2/\left\lVert {\tstatebb}_0 \right\rVert_1
                    $
                    with ${\tstatebb}_0 = \tstatemat^{\top}_0 \tstatemat_0$.
                    Therefore, studying the asymptotic behavior of this ratio for the all-zero state allows for drawing conclusions about all Gaussian states. Hence, we provide a partial violation of Condition~\ref{cond:gaussian_dominate} for any Gaussian state in the $\BC_4$ case. As long as this scaling is not counteracted by the ratio with respect to the observable $\O \in \BC_4$ to yield an asymptotic vanishing as required by Condition~\ref{cond:gaussian_dominate}, a Gaussian process cannot be formed as per Theorem~\ref{thm:supp:no_GP}.
                    
                    The following Theorem shows that when Gaussian states are considered, the ratio $\frac{\|{\tstatebb}_0\|_2}{\|{\tstatebb}_0\|_1}$ converges to a constant in the large-$n$ limit. 
                    \begin{proposition}\label{thm:gaussian_ratio_b4}
                        The ratio of the Schatten 2 and 1-norms of the squared, reshaped tensor of coefficients $\tstatemat_0$ of the reference Gaussian state $\ket{0}^{\otimes n}$, i.e., ${\tstatebb}_0 = \tstatemat^{\top}_0 \tstatemat_0$ is
                        \begin{equation}
                            \left(\frac{\|{\tstatebb}_0\|_2}{\|{\tstatebb}_0\|_1}\right)^2
                            =
                            \frac{n^2-3n+5}{9n(n-1)}\,.
                        \end{equation}
                        In the large-$n$ limit, it hence holds
                        \begin{equation}
                            \left(\frac{\|{\tstatebb}_0\|_2}{\|{\tstatebb}_0\|_1}\right)^2 \longrightarrow \frac{1}{9}\,.
                        \end{equation}
                    \end{proposition}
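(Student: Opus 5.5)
The plan is to write $\tstatemat_0$ down explicitly and reduce both Schatten norms to traces that can be counted. By Eq.~\eqref{eq:zero_pauli_decomposition}, the only part of $\rho_0$ supported in $\BC_4$ is $\frac{1}{d}\sum_{i<j}Z_iZ_j$. Using $Z_q = -iC_{2q-1}C_{2q}$, each term $Z_iZ_j$ equals, up to a sign, the Majorana product $C_{2i-1}C_{2i}C_{2j-1}C_{2j}$. The frame tensor $\tstate_0$ is therefore supported exactly on index vectors that permute $\{2i-1,2i,2j-1,2j\}$ for some $i<j$. On that support every entry has the same magnitude $c=1/\sqrt{24\,d}$, since the module frame carries the factor $1/\sqrt{m!\,d}$ with $m!=24$, and the sign is given by the generalized Kronecker delta of Eq.~\eqref{eq:app:levi-civita}. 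I reshape $\tstate_0$ into the $(2n)^2\times(2n)^2$ matrix $\tstatemat_0$, with rows labelled by $(\alpha_1,\alpha_2)$ and columns by $(\alpha_3,\alpha_4)$, exactly as in the third-moment argument of Proposition~\ref{prop:supp:vanish_odd_moments}. Since $m/2=2$ is even, $\tstatemat_0$ is symmetric. This gives $\|\tstatebb_0\|_1=\Tr[\tstatemat_0^\top\tstatemat_0]$ and $\|\tstatebb_0\|_2^2=\Tr[\tstatemat_0^4]$.

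The denominator is a direct count. There are $\binom{n}{2}$ unordered mode pairs, each contributing $4!$ nonzero entries of size $c^2$, so $\|\tstatebb_0\|_1=24\binom{n}{2}c^2=\binom{n}{2}/d$.

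For the numerator I split the row and column index pairs into two types.

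\emph{Intra-mode pairs.} Here the pair is $\{2i-1,2i\}$, in either order.

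\emph{Cross-mode pairs.} Here the pair takes one index from mode $i$ and one from mode $j\neq i$.

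A nonzero entry either has both its row pair and its column pair intra-mode, with different modes $i\neq j$, or has both cross-mode on the same two modes $\{i,j\}$. Hence $\tstatemat_0=A\oplus B$ is block diagonal with respect to this split.

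The intra-mode block $A$ factorizes, up to a fixed signed $2\times2$ structure on the ordering within each mode, as $c\,(\epsilon\otimes\epsilon)\otimes(\mathbf{1}\mathbf{1}^\top-\id)_{n\times n}$. Here $\epsilon$ is the $2\times 2$ antisymmetric sign matrix, and the sign conventions have to be checked carefully. Using that the $n\times n$ matrix $\mathbf{1}\mathbf{1}^\top-\id$ has eigenvalues $n-1$ once and $-1$ with multiplicity $n-1$, we get $\Tr[A^4]=2^4c^4\big[(n-1)^4+(n-1)\big]$, up to a factor of $2$ from the Kronecker factors that I would pin down.

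The cross-mode block $B$ further splits into independent small blocks, one for each unordered pair $\{i,j\}$ of modes. Each such block acts on the $8$ cross-mode ordered pairs between modes $i$ and $j$. It maps a pair to its complementary pair with a sign, so it is a signed permutation-like matrix, and $\Tr[B^4]$ is $\binom{n}{2}$ times a small constant times $c^4$.

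Summing the two contributions, dividing by $\|\tstatebb_0\|_1^2=\binom{n}{2}^2/d^2$, and simplifying should give the rational function $\frac{n^2-3n+5}{9n(n-1)}$. The factor $9$ should arise from $(24)^2/\big(16\cdot 4\big)$-type constants. The main obstacle is the sign bookkeeping. The relative signs between entries of $A$, which come from the reordering of $C_{2i-1}C_{2i}C_{2j-1}C_{2j}$ and the prefactor $h_4$, determine whether the $n\times n$ factor really is $\mathbf{1}\mathbf{1}^\top-\id$ rather than a signed variant. The exact subleading constants $-3n+5$ depend on this.

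To control this, I would first verify the block structure symbolically for $n=2$ and $n=3$. These should give ratios $7/18$ and $5/54$, which fixes the constants. I would then read off the limit $1/9$ directly from the dominant eigenvalue $n-1$.
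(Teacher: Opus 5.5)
Your strategy is valid, and it differs from the paper's. The paper does not work in Majorana coordinates. It uses Wick's theorem to write $[\tstatemat_0]_{(ij),(kl)}\propto J_{ij}J_{kl}-J_{ik}J_{jl}+J_{il}J_{jk}$, restricts to $\Lambda^2\mathbb{R}^{2n}$, and shows $\tstatemat_0A=\langle J,A\rangle J-JAJ^\top$. The map $T(A)=JAJ^\top$ is an involution. Its $(+1)$-eigenspace is the antisymmetric matrices commuting with $J$ (dimension $n^2$, containing $J$). Its $(-1)$-eigenspace is those anticommuting with $J$ (dimension $n(n-1)$). The full spectrum $\{n-1\}\cup\{-1\}^{[n^2-1]}\cup\{1\}^{[n(n-1)]}$ then follows without any counting.

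Your block decomposition reproduces exactly this spectrum, scaled by $2c$; the $2$ comes from your $\sum_{k,l}$ versus the paper's $\sum_{k<l}$. The intra-mode block gives $2c(n-1)$ on the $J$ direction and $-2c$ on the $n-1$ vectors $\sum_q\lambda_q\,e_{2q-1}\wedge e_{2q}$ with $\sum_q\lambda_q=0$. Each of the $\binom{n}{2}$ cross-mode blocks contributes two eigenvalues $+2c$ and two $-2c$. The paper's argument is basis-free, explains the spectrum structurally, and transfers verbatim to any Gaussian state via $J\mapsto QJQ^\top$. Yours is more elementary: it uses only the sparsity of $\tstatemat_0$ and tiny blocks, and needs only $\Tr[\tstatemat_0^4]$ rather than the full spectrum.

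To turn the proposal into a proof, you need to close three loose ends.

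(i) The signs are not an obstacle. We have $h_4C_{2i-1}C_{2i}C_{2j-1}C_{2j}=Z_iZ_j$, and bilinears on disjoint modes commute. So every intra-mode entry with both pairs in ascending order equals $+c$, and the $n\times n$ factor is exactly $\mathbf{1}\mathbf{1}^\top-\id$. The accompanying $2\times 2$ ordering factor is not an antisymmetric matrix (and $\epsilon\otimes\epsilon$ would be $4\times 4$). It is the rank-one matrix $vv^\top$ with $v=(1,-1)^\top$, so $A=c\,(\mathbf{1}\mathbf{1}^\top-\id_n)\otimes vv^\top$. Its nonzero eigenvalue $2$ gives $\Tr[A^4]=16c^4[(n-1)^4+(n-1)]$ exactly, with no further factor to pin down.

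(ii) The cross-mode constant must be computed, not fitted. Write the eight cross pairs of modes $i,j$ as (unordered pair)$\otimes$(order). Then each block is $B_{ij}=c\,M\otimes vv^\top$, where $M$ is the symmetric signed permutation matrix sending each unordered cross pair to its complement. Hence $M^2=\id_4$, $B_{ij}$ has eigenvalues $\pm 2c$ (twice each), and $\Tr[B_{ij}^4]=64c^4$. Summing gives $\Tr[\tstatemat_0^4]=16c^4[(n-1)^4+(2n+1)(n-1)]$. Dividing by $\|\tstatebb_0\|_1^2=144\,n^2(n-1)^2c^4$ yields $\frac{n^2-3n+5}{9n(n-1)}$.

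(iii) Your $n=2$ check value is wrong: the formula gives $3/18=1/6$, not $7/18$. This must be so, because for $n=2$ the $\BC_4$ component of $\rho_0$ is the single Pauli $Z_1Z_2$, and Proposition~\ref{thm:pauli_ratio_b4} gives $1/6$. Calibrating to $7/18$ would have forced $192c^4$ per cross block. That contradicts your (correct) value $5/54$ at $n=3$.
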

                        
                    \begin{proof}
                        By direct computation, one readily finds
                        \begin{equation}
                            [\tstatemat_0]_{(i j),(k l)} = J_{ij}J_{kl} - J_{ik}J_{jl} + J_{il}J_{jk}\,,
                        \end{equation}
                        from elements of the symplectic form $J \in \mathbb{R}^{2n \times 2n}$ as defined in Eq.~\eqref{supp:eq:Gaussian_state_Y_matrix_form}.
                        By grouping arbitrary index pairs $(i,j)$ and $(k,l)$, we have constructed $\tstatemat_0$ as a matrix of size $(2n)^2\times(2n)^2$. However, since $[\tstatemat_0]_{(ij),(kl)}$ is anti-symmetric under $i\leftrightarrow j$ and under $k\leftrightarrow l$, it annihilates the symmetric sector of $(\mathbb{R}^{2n})^2$ and acts non-trivially only on the anti-symmetric one. Therefore, without loss of generality, we can restrict $\tstatemat_0$ to the space of $2$-forms\footnote{We recall that in general, a $k$-form is a completely anti-symmetric $k$-linear map acting on $k$ copies of a vector space.} $\Lambda^2 \mathbb{R}^{2n}$, whose dimension is
                        \begin{equation}
                            \dim(\Lambda^2 \mathbb{R}^{2n})=\binom{2n}{2}=n(2n-1)\,.
                        \end{equation}
                        Then, by the symmetry of $\tstatemat_0$
                        \begin{equation}
                            {\tstatebb}_0 = \tstatemat_0^2\,.
                        \end{equation}
                        The elements $A, B\in \Lambda^2 \mathbb{R}^{2n}$ can be identified with anti-symmetric matrices $A=[A_{ij}]$ and $B=[B_{ij}]$, and we can equip them with the inner product
                        \begin{equation}
                            \langle A,B\rangle = \sum_{i<j} A_{ij}B_{ij}\,.
                        \end{equation}
                        The symplectic identity $J$ is also an element of $\Lambda^2 \mathbb{R}^{2n}$, and its norm induced by the previous inner product reads
                        \begin{equation}
                            \langle J,J\rangle = n\,.
                        \end{equation}

                        Now, we want to characterize the spectrum of $\tstatemat_0$, and hence of ${\tstatebb}_0$, so that we compute the needed Schatten norms.
                        To do so, let us understand how $\tstatemat_0$ acts on an element $A$ of $\Lambda^2 \mathbb{R}^{2n}$. Expanding the left product, we have
                        \begin{equation}
                            [\tstatemat_0 A]_{ij}
                            =
                            \sum_{k<l} [\tstatemat_0]_{(ij),(kl)} A_{kl}
                            =
                            \sum_{k<l} \left(J_{ij}J_{kl} - J_{ik}J_{jl} + J_{il}J_{jk}\right) A_{kl}\,.
                        \end{equation}
                        The first term is immediately
                        \begin{equation}
                            \sum_{k<l} J_{ij}J_{kl}A_{kl}
                            =
                            J_{ij}\langle J,A\rangle\,.
                        \end{equation}
                        For the remaining two terms, using anti-symmetry of $A$ and $J$, one can check that
                        \begin{equation}
                            \sum_{k<l} \bigl(-J_{ik}J_{jl}+J_{il}J_{jk}\bigr)A_{kl}
                            =
                            -[JAJ^\top]_{ij}\,.
                        \end{equation}
                        Hence
                        \begin{equation}
                            \tstatemat_0 A = \langle J,A\rangle J - T(A)
                            \quad \text{with} \quad
                            T(A)=JAJ^\top\,.
                        \end{equation}
                        The operator $T$ is an involution, i.e., it is its own inverse. This can be readily seen from the fact that $J^\top=-J$ and $J^2=-\id$. This implies that $T$ has eigenvalues $\pm 1$.
                        Now, we can decompose $\Lambda^2 \mathbb{R}^{2n}$ into the eigenspaces of $T$. Using again the properties of $J$ we can immediately see that $J$ is an eigenvector of eigenvalue $1$ of $T$. Hence the space $\mathbb{R}J$ is $T$-invariant.
                        Let us then decompose $\Lambda^2 \mathbb{R}^{2n}$ as 
                        \begin{equation}
                            \Lambda^2 \mathbb{R}^{2n} = \mathbb{R}J \oplus W_+ \oplus W_-\,,
                        \end{equation}
                        where $\mathbb{R}J \oplus W_+$ is the eigenspace of eigenvalue $+1$ of $T$, with $W_+ \perp J$, and $W_-$ is the eigenspace of eigenvalue $-1$. We can notice that the positive eigenspace of $T$ is the space of anti-symmetric matrices $A$ that commute with $J$, since $JAJ^\top=A\to JA=AJ$ using $J^2=-I$. Similarly, the negative eigenspace is given by the anti-symmetric matrices that anticommute with $J$. We can use this to compute the dimensions of $W_+$ and $W_-$. Indeed, write $A$ as
                        \begin{equation}
                            A = \begin{pmatrix}
                                P & Q \\
                                -Q^\top & S
                            \end{pmatrix}\,,
                        \end{equation}
                        for $P,S$ anti-symmetric $n\times n$ real matrices and $Q$ an $n\times n$ real matrix. 
                        Notice that in this block decomposition 
                        \begin{equation}
                            J=\begin{pmatrix}
                                0 & \openone \\
                                -\openone & 0
                            \end{pmatrix}\,.
                        \end{equation}
                        If we now impose $[J,A]=0$ we get the condition
                        \begin{equation}
                            JA = \begin{pmatrix}
                                -Q^\top & S \\
                                -P & -Q
                            \end{pmatrix}
                            =
                            \begin{pmatrix}
                                -Q & P \\
                                -S & -Q^\top
                            \end{pmatrix}
                            = AJ\,,
                        \end{equation}
                        which implies $P=S$ and $Q=Q^\top$. Hence the elements of the positive eigenspace of $J$ look like
                        \begin{equation}
                            A = \begin{pmatrix}
                                P & Q \\
                                -Q & P
                            \end{pmatrix}\,,
                        \end{equation}
                        for $P$ anti-symmetric and $Q$ symmetric. Since the space of $n\times n$ anti-symmetric matrices has dimension $\dim(P)=\frac{n(n-1)}{2}$, while that of symmetric ones is $\dim(Q)=\frac{n(n+1)}{2}$, we readily get
                        \begin{equation}
                            \dim(W_+)+1=\frac{n(n-1)}{2} + \frac{n(n+1)}{2} = n^2\,,
                        \end{equation}
                        where we remember that we already separated the positive eigenspace of $J$ as $\mathbb{R}J \oplus W_+$.
                        Since the space $\Lambda^2 \mathbb{R}^{2n}$ has dimension $\dim(\Lambda^2 \mathbb{R}^{2n})=n(2n-1)$, we get
                        \begin{equation}
                            \dim(W_-)=n(n-1)\,.
                        \end{equation}
                        
                        Now let us study the action of $\tstatemat_0$ on each of $\mathbb{R}J$, $W_+$ and $W_-$.
                        First, on $J$ we quickly see
                        \begin{equation}
                            \tstatemat_0 J=\langle J,J\rangle J - T(J)= nJ - J = (n-1)J\,.
                        \end{equation}
                        Next, let $A\in W_+$. Then, since $W_+\perp J$, $\langle J,A\rangle=0$ and $T(A)=A$, hence
                        \begin{equation}
                            \tstatemat_0 A = 0\cdot J - A = -A\,.
                        \end{equation}
                        Lastly, let $A\in W_-$. Then we again have $\langle J,A\rangle=0$ and now $T(A)=-A$, so
                        \begin{equation}
                            \tstatemat_0 A = 0\cdot J - (-A) = A\,.
                        \end{equation}
                        
                        From this, we get that the spectrum of $\tstatemat_0$ is
                        \begin{equation}
                            \operatorname{spec}(\tstatemat_0)
                            =
                            \{n-1\}^{[1]}
                            \cup
                            \{1\}^{[n(n-1)]}
                            \cup
                            \{-1\}^{[n^2-1]}\,.
                        \end{equation}
                        Squaring, we obtain the spectrum of ${\tstatebb}_0=\tstatemat_0^2$
                        \begin{equation}
                            \operatorname{spec}({\tstatebb}_0)
                            =
                            \{(n-1)^2\}^{[1]}
                            \cup
                            \{1\}^{[n(n-1)]}
                            \cup
                            \{1\}^{[n^2-1]}\,.
                        \end{equation}
                        Since ${\tstatebb}_0$ is positive semidefinite, its Schatten $1$-norm and Schatten $2$-norm satisfy
                        \begin{equation}
                            \|{\tstatebb}_0\|_1 = \Tr[{\tstatebb}_0]\,,
                            \qquad
                            \|{\tstatebb}_0\|_2^2 = \Tr[{\tstatebb}_0^2]\,.
                        \end{equation}
                        Using the spectrum above,
                        \begin{equation}
                            \|{\tstatebb}_0\|_1
                            =
                            (n-1)^2 + \bigl(n(2n-1)-1\bigr)
                            =
                            3n(n-1)\,,
                        \end{equation}
                        and
                        \begin{equation}
                            \|{\tstatebb}_0\|_2^2
                            =
                            (n-1)^4 + \bigl(n(2n-1)-1\bigr)\,.
                        \end{equation}
                        Putting all together and simplifying we arrive at
                        \begin{equation}
                            \left(\frac{\|{\tstatebb}_0\|_2}{\|{\tstatebb}_0\|_1}\right)^2
                            =
                            \frac{n^2-3n+5}{9n(n-1)}\,.
                        \end{equation}
                        Lastly, taking $n\to\infty$ gives
                        \begin{equation}
                            \left(\frac{\|{\tstatebb}_0\|_2}{\|{\tstatebb}_0\|_1}\right)^2 \longrightarrow \frac{1}{9}\,,
                        \end{equation}
                        which concludes the proof.
                    \end{proof}

                \subsubsection{Pauli observables (in \texorpdfstring{$\BC_4$}{B4}) alone cannot lead to GPs}

                    Analogously to the case of Gaussian states, we here show that Pauli observables in $\BC_4$ cannot fulfill the equivalent univariate formulation of Condition~\ref{cond:gaussian_dominate} in Proposition~\ref{prop:cond:univariate}. Hence, this partial violation of Condition~\ref{cond:gaussian_dominate} can only be resolved by the state choice and either leads to GPs or prevents their formation as per Theorems~\ref{thm:supp:GP} and~\ref{thm:supp:no_GP}, respectively. Specifically, we show that some contractions yield ratios, namely the ones that form Schatten 2 vs 1-norm ratios under the observable coefficient tensor $\tobs$ being reshaped into a matrix $\tobsmat$, that do not vanish asymptotically. The following theorem proves that these kinds of ratios are constant. 

                    \begin{proposition}\label{thm:pauli_ratio_b4}
                        Given a Pauli observable $\hat{O}=-C_aC_bC_cC_d\in\BC_4$, let $\tobs$ be the completely anti-symmetric tensor of its frame coefficients over $\BC_4$, and $\tobsmat = \tobs_{(ij),(kl)}$ the corresponding symmetric matrix obtained by grouping the first two indices of $\tobs$ as the row index and the last two as the column one.
                        Then, for $\tobsbb=\tobsmat^2$ we have
                        \begin{equation}
                            \left(\frac{\|\tobsbb\|_2}{\|\tobsbb\|_1}\right)^2=\frac{1}{6}\,.
                        \end{equation}
                    \end{proposition}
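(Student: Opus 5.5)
The plan is to compute the spectrum of $\tobsbb$ exactly. The ratio $\left(\|\tobsbb\|_2/\|\tobsbb\|_1\right)^2$ is invariant under rescaling of $\tobsbb$, so all normalization prefactors from the frame can be dropped. By Sec.~\ref{supp:sec:Pauli_obs}, the frame coefficient tensor of a Pauli observable $\hat O=-C_aC_bC_cC_d$ is, up to a global constant $c=\pm\sqrt{d/4!}$, the generalized anti-symmetric Kronecker delta supported on the index set $\mathcal{M}=\{a,b,c,d\}$. After relabeling, this is the Levi-Civita symbol $\varepsilon_{ijkl}$ on $\{1,2,3,4\}$, and it vanishes whenever any index lies outside $\mathcal{M}$. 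The reshaped $(2n)^2\times(2n)^2$ matrix $\tobsmat$ therefore has nonzero entries only in the $16\times16$ block indexed by ordered pairs from $\mathcal{M}$. Hence it suffices to study
\begin{equation}
M_{(ij),(kl)}=\varepsilon_{ijkl},\qquad i,j,k,l\in\{1,2,3,4\},
\end{equation}
and all zero rows and columns contribute nothing to either Schatten norm.

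The key step is the contraction identity
\begin{equation}
\sum_{k,l}\varepsilon_{ijkl}\varepsilon_{klmn}=2\left(\delta_{im}\delta_{jn}-\delta_{in}\delta_{jm}\right).
\end{equation}
I will derive it from the standard identity for $\sum_{kl}\varepsilon_{ijkl}\varepsilon_{mnkl}$ together with the symmetry $\varepsilon_{klmn}=\varepsilon_{mnkl}$, which holds because moving a pair past a pair is an even permutation. In operator form this reads $M^2=4P_{\wedge}$, where $P_{\wedge}=(\openone-\mathrm{SWAP})/2$ is the projector onto $\Lambda^2\mathbb{R}^4$ inside $\mathbb{R}^4\otimes\mathbb{R}^4$. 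This is the Hodge star squaring to the identity on $2$-forms in four dimensions. Since $M$ is symmetric, as noted in Proposition~\ref{prop:supp:vanish_odd_moments} because $m=4\equiv0\pmod 4$, we get $\tobsbb=\tobsmat^\top\tobsmat=c^2M^2=4c^2P_\wedge$.

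The spectrum of $\tobsbb$ therefore consists of the single nonzero value $4c^2$ with multiplicity $\operatorname{rank}P_\wedge=\binom{4}{2}=6$, and zeros elsewhere. (Equivalently, $M$ has eigenvalues $\pm2$, each with multiplicity $3$, namely the self-dual and anti-self-dual $2$-forms.) It follows that $\|\tobsbb\|_2^2=6\,(4c^2)^2$ and $\|\tobsbb\|_1^2=36\,(4c^2)^2$, whose ratio is $1/6$, independent of $n$.

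The only point requiring care is the bookkeeping between ordered pairs and the unordered $2$-form space, since they differ by factors of $2$. Because the final quantity is a scale-invariant ratio of norms of a multiple of a projector, these factors cancel, and only the rank, $6$, matters. As a cross-check, this matches the $\BC_2$ computation in Sec.~\ref{supp:sec:Pauli_obs}, where the corresponding ratio was $1/\mathrm{rank}=1/2$.
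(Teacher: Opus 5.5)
Your proof is correct and takes essentially the paper's route. Both arguments show that $\tobsbb$ is a multiple of a rank-$6$ projector onto the $2$-forms built from $\{a,b,c,d\}$, i.e.\ the Hodge star on $\Lambda^2\mathbb{R}^4$ squaring to the identity, so the ratio equals $1/\mathrm{rank}=1/6$. The paper shows that $\tobsmat^2$ is proportional to a projector by writing $\tobsmat$ explicitly as three symmetrized outer products that pair complementary $2$-forms ($E_{ab}$ with $E_{cd}$, $E_{ac}$ with $E_{bd}$, $E_{ad}$ with $E_{bc}$); you obtain the same fact in one line from the $\varepsilon\varepsilon$ contraction identity, and your bookkeeping of the factors of $2$ between ordered pairs and $i<j$ is slightly more careful than the paper's.
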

                    
                    \begin{proof}
                        Let us denote by $\{e_1,\dots,e_{2n}\}$ the canonical basis of $\mathbb{R}^{2n}$. For any four distinct indices $a<b<c<d\in[2n]$, one has
                        \begin{equation}
                            \tobs = e_a\wedge e_b\wedge e_c\wedge e_d\,.
                        \end{equation}
                        Indeed, by definition the exterior product $\wedge$ returns the completely anti-symmetric combination of the four delta-like tensors associated to each Majorana index.
                        Then, the matrix entries of $\tobsmat$ read as
                        \begin{equation}
                            \tobselem_{(ij),(kl)} = [\tobs]_{ijkl}\,.
                        \end{equation}
                        Notice that, grouping arbitrary pairs $(i,j)$ and $(k,l)$, then $\tobsmat$ is a matrix of size $(2n)^2\times(2n)^2$. However, since $\tobs$ is anti-symmetric under $i\leftrightarrow j$ and under $k\leftrightarrow l$, the corresponding operator vanishes on the symmetric sector of $(\mathbb{R}^{2n})^2$ and acts non-trivially only on the anti-symmetric one. Therefore, without loss of generality, we may restrict $\tobsmat$ to the space of $2$-forms $\Lambda^2(\mathbb{R}^{2n})$, whose natural basis is given by the vectors
                        \begin{equation}
                            E_{ij}=e_i\wedge e_j\,,\qquad i<j\,.
                        \end{equation}
                        Now, we can make the components of $\tobsmat$ explicit by introducing the generalized (anti-symmetric) Kronecker delta
                        \begin{equation}
                            \delta_{ij}^{pq} = \delta_i^p\delta_j^q-\delta_i^q\delta_j^p = \left(e_i\wedge e_j\right)_{p,q} \,.
                        \end{equation}
                        From here, we have
                        \begin{equation}
                            \tobselem_{(ij),(kl)}
                            =
                            \delta_{ij}^{ab}\delta_{kl}^{cd}
                            -\delta_{ij}^{ac}\delta_{kl}^{bd}
                            +\delta_{ij}^{ad}\delta_{kl}^{bc}
                            +\delta_{ij}^{cd}\delta_{kl}^{ab}
                            -\delta_{ij}^{bd}\delta_{kl}^{ac}
                            +\delta_{ij}^{bc}\delta_{kl}^{ad}.
                        \end{equation}
                        Equivalently, introducing the six basis vectors
                        \begin{equation}
                            u_1 = E_{ab}\,,\qquad
                            u_2 = E_{ac}\,,\qquad
                            u_3 = E_{ad}\,,\qquad
                            u_4 = E_{bc}\,,\qquad
                            u_5 = E_{bd}\,,\qquad
                            u_6 = E_{cd}\,,
                        \end{equation}
                        one has
                        \begin{equation}
                            \tobsmat
                            =
                            u_1u_6^\top+u_6u_1^\top
                            -u_2u_5^\top-u_5u_2^\top
                            +u_3u_4^\top+u_4u_3^\top.
                        \end{equation}
                        Here, the $E_{ij}$ are orthonormal with respect to the standard inner product on 2-forms induced by the standard inner product on $\mathbb{R}^{2n}$, namely
                        \begin{equation}
                            \langle u\wedge v,\; x\wedge y\rangle
                            =
                            \langle u,x\rangle \langle v,y\rangle
                            -
                            \langle u,y\rangle \langle v,x\rangle\,.
                        \end{equation}
                        
                        Now, since $u_1,\dots,u_6$ are orthonormal, we can directly compute
                        \begin{equation}
                            \tobsbb = \tobsmat^2
                            =
                            (u_1u_6^\top+u_6u_1^\top)^2
                            +(-u_2u_5^\top-u_5u_2^\top)^2
                            +(u_3u_4^\top+u_4u_3^\top)^2\,.
                        \end{equation}
                        Each square is the projector onto its $2$-dimensional support
                        \begin{align}
                            (u_1u_6^\top+u_6u_1^\top)^2 &= u_1u_1^\top+u_6u_6^\top\,, \\
                            (-u_2u_5^\top-u_5u_2^\top)^2 &= u_2u_2^\top+u_5u_5^\top\,, \\
                            (u_3u_4^\top+u_4u_3^\top)^2 &= u_3u_3^\top+u_4u_4^\top \,.
                        \end{align}
                        Namely
                        \begin{equation}
                            \tobsbb=\sum_{r=1}^6 u_ru_r^\top \,.
                        \end{equation}
                        Thus, $\tobsbb$ is the orthogonal projector onto the $6$-dimensional subspace spanned by $E_{ab}, E_{ac},E_{ad},E_{bc},E_{bd},E_{cd}$. This readily implies that its spectrum is
                        \begin{equation}
                            \operatorname{spec}(\tobsbb)=\{1\}^{[6]}\cup \{0\}^{[\binom{2n}{2}-6]}.
                        \end{equation}
                        Hence
                        \begin{equation}
                        \|\tobsbb\|_1=\operatorname{Tr}(\tobsbb)=6,
                        \qquad
                        \|\tobsbb\|_2^2=\operatorname{Tr}(\tobsbb^2)=6,
                        \end{equation}
                        and so
                        \begin{equation}
                        \left(\frac{\|\tobsbb\|_2}{\|\tobsbb\|_1}\right)^2
                        =
                        \frac{6}{6^2}
                        =
                        \frac{1}{6}\,,
                        \end{equation}
                        completing the proof.
                    \end{proof}

            \subsubsection{Pauli observables (in \texorpdfstring{$\BC_4$}{B4}) and Gaussian states do not form GPs, but random states may help}

                    As Theorems~\ref{thm:gaussian_ratio_b4} and~\ref{thm:pauli_ratio_b4} in the previous two sections identify, neither Gaussian states nor Pauli observables in $\BC_4$ alone can lead to GPs. Therefore, combining both definitely fails to meet Condition~\ref{cond:gaussian_dominate}. This violation implies, as per Theorem~\ref{thm:supp:no_GP}, that Gaussian states combined with Pauli observables in $\BC_4$ cannot lead to GPs. Although Theorems~\ref{thm:gaussian_ratio_b4} and~\ref{thm:pauli_ratio_b4} explicitly demonstrate this violation for the univariate case, it naturally follows that multivariate GPs are likewise impossible. Because the set of multivariate moments strictly contains all univariate moments, the univariate failure inherently precludes any multivariate generalization. Supplemental Fig.~\ref{fig:rand_state_single_Pauli_gp_b4}b illustrates this violation from a $12$-qubit simulation.

                    Theorem~\ref{thm:pauli_ratio_b4} implies that if the observable $O$ is a Pauli operator in $\BC_4$, the emergence of a GP can only be obtained through the state choice, which Gaussian states fail to achieve. Therefore, GPs from a Pauli observable in $\BC_4$ require strong conditions on the family of states. While analytically characterizing such states is generally challenging, we provide numerical evidence in Supp.~Fig.~\ref{fig:rand_state_single_Pauli_gp_b4}a. Specifically, we show numerically that a random fermionic state, i.e., a random positive eigenstate of the parity operator $Z^{\otimes n}$, could yield a univariate GP. 
                    It is worth noting that Proposition~\ref{thm:special_obs_odd_moments} and the subsequent remark rigorously establish that all odd moments are zero for any single Pauli observable in $\BC_4$. 
                    
                    \begin{figure}[htbp]
                        \centering
                        \includegraphics[width=1\linewidth]{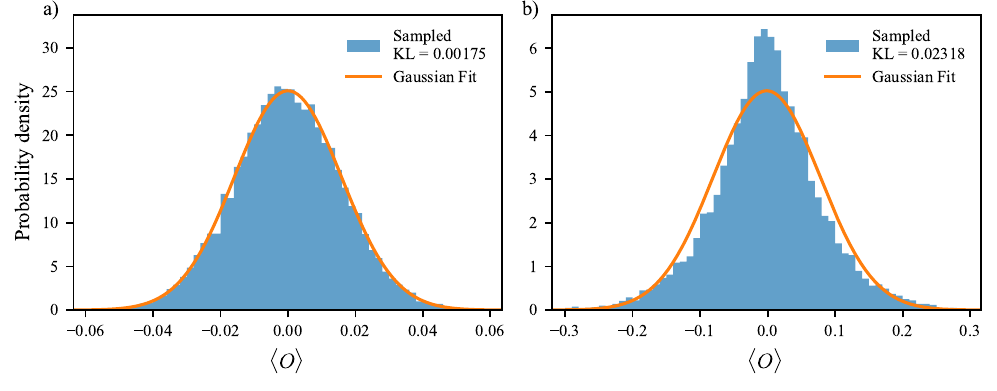}
                        \caption{\textbf{A plausible univariate GP vs a provable violation for a Pauli observable in $\BC_4$.} We compare the empirical statistics of the expectation values of a random Pauli drawn from $\BC_4$ when: a) the input state is a random fermionic state, b) the input state is Gaussian. In both cases, the state is then evolved under a Haar-random matchgate (free-fermionic) transformation before being measured.
                        The system size considered in this analysis is $n=12$, and the histograms were generated from $10^4$ samples for both panels. On top of the blue histograms, we plot the fitted Gaussian curve (in orange) obtained from the empirical mean and variance of the samples. Furthermore, in the legend we report the Kullback-Leibler divergence (KL) between the fitted Gaussian distribution and the samples' one. Both the KL values and the visual agreement between the histogram and the fitted curve suggest that random fermionic states could yield a GP, while Gaussian states do not.}
                        \label{fig:rand_state_single_Pauli_gp_b4}
                    \end{figure}

            \subsection{GPs from special observables in any module\texorpdfstring{ $\BC_m$}{}} \label{sec:special_obs}
            
            We here consider a special class of observables $\O$ that can be constructed in any module $\BC_m$, which provably leads to moments required to obey GPs. These observables form linear combinations of $\BC_m$ basis operators associated with mutually disjoint Majorana indices, which is formally defined in the following. This observable is based on a (partial) partition of the $2n$ Majorana operator indices into mutually disjoint index sets (of size $m$), i.e.,
            \begin{equation}
                \bm{\VC} = \left\lbrace \VC^{(j)} \subseteq [2n] \mid j \le 1 \le \lfloor 2n/m \rfloor, \, \lvert\VC^{(j)}\rvert = m \right\rbrace
                \quad \text{such that} \quad
                \forall j \neq j'\colon \VC^{(j)} \cap \VC^{(j')} = \varnothing \,.
            \end{equation}
            Hence, each of these index subsets $\VC^{(j)}$ labels one module basis operator $C_{\VC^{(j)}}$ of the module $\BC_m$ such that these operators, which are products of $m$ distinct Majorana operators, never share the same Majorana operator. 
            Clearly, $|\bm{\VC}|=\lfloor 2n/m \rfloor$. Then, the partition $\bm{\VC}$ completely covers all $2n$ Majorana indices if $2n$ is a multiple of $m$, and, otherwise, is a partial partition, which leaves $2n \bmod m$ indices uncovered.
            
            Therefore, the form of such module partition observables $O$ reads as 
            \begin{equation}\label{eq:special_obs}
                \O = \sum_{j=1}^{\lfloor 2n/m \rfloor} \omega_{j} C_{\VC^{(j)}} \,,
            \end{equation}
            where $\omega_j$ are non-zero real coefficients that we assume to be constant $\Theta(1)$ in the number of qubits $n$.
            Crucially, this is in contrast to summing over all possible indices and corresponding module basis operators $C_{\VC}$, which would lead to any element of $\BC_m$.
          
            These observables have the nice property that their associated tensors $\tobs$ can be expressed as a sum of mutually orthogonal generalized (anti-symmetric) Kronecker deltas
            \begin{equation}\label{eq:special_obs_tensor}
                \tobselem_{\alpha_1, \ldots, \alpha_m} =\Tr\left[\hat{O}C_{\alpha_1, \ldots, \alpha_m}\right] = 
                \sum_{j=1}^{\lfloor 2n/m \rfloor} \omega_j \delta^{(j)}_{\alpha_1\dots \alpha_m}\,.
            \end{equation}
            Here, the index $j$ associates the Kronecker delta with the index subset $\VC^{(i)}$. Let $\nu_1 < \dots < \nu_m$ be the strictly ordered indices of $\VC^{(j)}$, we can then define $\delta^{(j)}$ via the standard notation for the generalized Kronecker delta as
            \begin{equation}
                \delta^{(j)}_{\alpha_1\dots \alpha_m} = \delta^{\nu_1 \dots \nu_m}_{\alpha_1\dots \alpha_m} \,,
            \end{equation}
            and further defined in Eq.~\eqref{eq:generalized_antisym_kronecker},
            which automatically enforces total anti-symmetry and evaluates to zero unless the lower indices $\alpha_1 \dots \alpha_m$ are a permutation of the subset $\VC^{(j)}$. 
            
            By the generalized Kronecker deltas being \emph{mutually orthogonal} we mean that any possible contraction involving multiple tensors $\delta^{(j)}$ vanishes identically unless every tensor shares the exact same subset index $j$. This follows directly from the fact that the subsets $\VC^{(j)}$ are mutually disjoint ($\VC^{(j)} \cap \VC^{(j')} = \varnothing$ for $j \neq j'$). Because a component of $\delta^{(j)}$ is non-zero only when its indices are exclusively from $\VC^{(j)}$, any contraction between tensors with different subset indices forces a mismatch, evaluating to zero.
            Consequently, when evaluating the tensor contractions required for the moments according to Eq.~\eqref{eq:supp:moment_via_Brauer_moment_op}, there are no cross-terms and the contributions from each index subset $\VC^{(j)}$ decouple completely. Intuitively, if one simply orders the indices so that the elements of each subset $\VC^{(j)}$ are grouped together, the observable tensor $\tobs$ reveals a strictly block-diagonal structure consisting of isolated blocks of size $m$.
            
            With these useful properties established, we now study the odd and even moments.

            We prove that for the class of observables here considered, the odd moments always vanish, non-trivially because any possible contraction of an odd number $k$ of module frame coefficient tensors $\tobs$ vanishes:
            \begin{proposition}[Odd moments for special observable]\label{thm:special_obs_odd_moments}
                For the special case of observables $O$ defined in Eq.~\eqref{eq:special_obs},
                all odd moments vanish.
            \end{proposition}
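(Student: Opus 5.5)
The plan is to show something stronger than needed: for odd $k$, every full contraction $\operatorname{contr}_{\sigma}(\tobs^{\otimes k})$ with $\sigma\in\mathfrak{B}_{mk/2}$ vanishes identically. If $m$ is odd, $mk$ is odd and Lemma~\ref{lemma:vanish_vs_moment_op} already gives a zero moment, so I only treat even $m$. Then $mk$ is even, and by Lemma~\ref{lemma:vanish_vs_moment_op} together with the Weingarten expansion of Lemma~\ref{supp:cor:asymp_moment_operator}, the moment is a finite linear combination of products of an observable factor $\langle\!\langle \tobs|^{\otimes k}|F_{2n}(\sigma)\rangle\!\rangle=\operatorname{contr}_\sigma(\tobs^{\otimes k})$ and a state factor. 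This holds for both the diagonal sum and the $w_{\sigma,\pi}$-weighted sum, and more generally for any expansion of the projector onto the commutant in Brauer elements. So it suffices to kill the observable factor for every $\sigma$, which is independent of the states and hence also covers the multivariate case.

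The first step is to insert Eq.~\eqref{eq:special_obs_tensor} and expand multilinearly:
\begin{equation}
\operatorname{contr}_\sigma(\tobs^{\otimes k})=\sum_{j_1,\dots,j_k}\omega_{j_1}\cdots\omega_{j_k}\,\operatorname{contr}_\sigma\bigl(\delta^{(j_1)}\otimes\cdots\otimes\delta^{(j_k)}\bigr).
\end{equation}
Next I use the mutual orthogonality argument already given above. Consider the graph on the $k$ tensor slots whose edges are index-level pairings of $\sigma$. Any two tensors joined by an edge must carry the same label $j$, otherwise the summand is zero. Hence each nonvanishing summand factorizes over the connected components of this graph. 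A component with $r$ tensors contributes a full contraction of $r$ copies of the single tensor $\delta^{(j)}$.

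The key step, and the only nontrivial one, is to show that any full contraction of an odd number $r$ of copies of $\delta^{(j)}$ vanishes. After restricting indices to $\VC^{(j)}$, the tensor $\delta^{(j)}$ is exactly the Levi-Civita tensor $\epsilon$ on $\mathbb{R}^m$. Full contractions along Kronecker-delta pairings are invariant under $\epsilon\mapsto g^{\otimes m}\epsilon$ for any $g\in\mathbb{O}(m)$, because the contraction pattern is built from $\delta$'s, which are $\mathbb{O}(m)$-invariant. On the other hand, $g^{\otimes m}\epsilon=\det(g)\,\epsilon$. Choosing a reflection with $\det g=-1$ therefore gives $c=(-1)^r c$, so $c=0$ for odd $r$. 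Self-loop contractions vanish anyway by antisymmetry, so they need no separate treatment.

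Finally, $k$ is odd and the component sizes sum to $k$, so at least one component has odd size. Every summand therefore contains a vanishing factor, so $\operatorname{contr}_\sigma(\tobs^{\otimes k})=0$ for all $\sigma$, and every odd moment vanishes. I expect the main obstacle to be the bookkeeping of the decoupling step, namely checking that $\sigma$ induces a well-defined component structure on the tensors so that the per-component factorization is legitimate. The reflection argument itself is clean.
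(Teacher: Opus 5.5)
Your proof is correct. The reduction steps match the paper's: odd $m$ via Lemma~\ref{lemma:vanish_vs_moment_op}, then reducing to the observable factor $\operatorname{contr}_\sigma(\tobs^{\otimes k})$ of the Brauer expansion, then decoupling distinct labels $j$ via disjointness of the $\VC^{(j)}$. The key step, however, is genuinely different.

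The paper argues term by term. Restricted to one block, antisymmetry forces each copy of $\delta^{(j)}$ to carry a permutation of $\VC^{(j)}$. So a fixed value such as $\nu_1$ appears exactly once per tensor, hence $k$ times in total. But the pairings force it to appear an even number of times, so no individual index assignment survives. You argue globally by symmetry instead: Brauer contractions are orthogonally invariant, while the Levi-Civita tensor picks up $\det g$, so a reflection gives $c=(-1)^r c$.

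The paper's count is fully elementary and gives the stronger fact that every summand vanishes with no cancellation needed. Its ``without loss of generality one $j$'' shortcut, however, silently relies on exactly your component decomposition, since the count must be applied to an odd-size component when labels differ across components.

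Your argument is more conceptual and extends verbatim to any coefficient tensor admitting an orthogonal $g$ with $g^{\otimes m}\tobs=-\tobs$. Used that way, it removes the component bookkeeping you flagged as the main obstacle. Take $g\in\mathbb{O}(2n)$ to be the product of one coordinate reflection inside each block $\VC^{(j)}$. Every nonzero entry of $\delta^{(j)}$ contains block $j$'s reflected coordinate exactly once and no other block's, so $g^{\otimes m}\tobs=-\tobs$. Invariance then gives $\operatorname{contr}_\sigma(\tobs^{\otimes k})=(-1)^k\operatorname{contr}_\sigma(\tobs^{\otimes k})$ directly for odd $k$.
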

            \begin{proof} 
                As per Lemma~\ref{lemma:vanish_vs_moment_op}, the odd moments trivially vanish for odd $m$ (also cf. Lemma~\ref{prop:supp:vanish_odd_moments}), and, hence, the main part of the proof focuses on modules $\BC_m$ with even $m$ and shows that vanishing odd $k$ moments, expressed via contractions determined by the Brauer algebra $\mathfrak{B}_{mk/2}$.
                
                Recall that if two tensors $\delta^{(j)}_{\alpha_1\dots \alpha_m},\delta^{(j')}_{\beta_1\dots \beta_m},$ with $j\neq j'$ are involved in a contraction the latter will vanish. Hence we can assume without loss of generality that $\tobselem_{\alpha_1\dots \alpha_m}=\omega_j\delta^{(j)}_{\alpha_1\dots \alpha_m}$ for one choice of index $j$ and, hence, index subset $\VC_j$.
                Now, effectively the indices $\alpha_1,\ldots,\alpha_m$ can be restricted to the $m$ possible values $\{\nu_1,\dots,\nu_m\}$, since any other value in $[2n]$ yields zero.
                Hence, we recover arbitrary contractions of $k$ copies of $\delta_{\alpha_1\dots \alpha_m}^{\nu_1\dots\nu_m}$, which corresponds to matching $mk/2$ indices and summing over all their possible values.
                However, the complete anti-symmetry of these tensors implies that only the combinations where the indices of every single tensor take different values survive.
                We can think of this as an instance of an edge-coloring graph problem, for an $m$-regular multigraph of $k$ vertices/nodes. Indeed, the $m$ possible values that the indices $\alpha_1,\ldots,\alpha_m$ can take can be associated with $m$ possible colors, while the edges of the multigraph are provided by the contraction at hand, i.e., the Brauer algebra element, and the vertices are the tensors.
                
                We highlight that if it were possible to edge-color an $m$-regular graph with exactly $m$ colors, each color would have to visit every vertex exactly once. In terms of the tensors, this means any specific index value (e.g., $\nu_1$) must appear exactly once within each of the $k$ copies of $\delta^{(j)}$.
                Consequently, this specific index value must appear a total of $k$ times across the entire expression. However, every contraction inherently links exactly two indices together. Since $k$ is odd, it is algebraically impossible to group $k$ occurrences into pairs. This contradiction dictates that no valid index assignment can simultaneously satisfy the contraction lines and distinct index values for the Kronecker delta tensors, meaning every term evaluates to zero, completing the proof.
            \end{proof}

            Notice that for a general observable $O$, the graph coloring argument for its tensor $\tobs$ corresponds to $2n$ available colors rather than $m$. This implies that the vanishing of odd moments cannot be deduced directly from anti-symmetry properties alone, but must instead stem from explicit cancellations between the tensor components.
            
            Furthermore, note that the proof for the vanishing of odd moments relies neither on the non-zero requirement of the weights in Eq.~\eqref{eq:special_obs}, nor on the large-$n$ limit or any associated scaling arguments. Instead, it strictly requires only that the observable is a linear combination of basis operators $C_{\VC^{(j)}}$ supported on mutually disjoint Majorana index subsets $\VC^{(j)} \subseteq [2n]$.

            All that is left to show is that the even moments coming from this class of observables converge to those of a Gaussian in the large-$n$ limit, that is, that Condition~\ref{cond:gaussian_dominate} is satisfied. 

            \begin{proposition}[Even moments for special observable]\label{thm:special_obs_even_moments}
                For the special class of observables $\O$ defined in Eq.~\eqref{eq:special_obs}
                \begin{equation}\label{eq:special_obs_even_moments}
                    \frac{\operatorname{contr}_{\sigma}\!\left( \tobs^{\otimes k} \right)} {\lVert O \rVert_{\BC_m}^k 
                    }\in 
                    \OC\left(\frac{1}{n}\right)\,,
                \end{equation}
                for any even $k$ and contraction $\sigma \in \mathfrak{B}_{mk/2}$ that does not correspond to pairwise matchings of $k$ tensor copies $\tobs$.
            \end{proposition}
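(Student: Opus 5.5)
The plan is to exploit the block (mutually orthogonal) structure of $\tobs$ established in Eq.~\eqref{eq:special_obs_tensor}, and to reduce every contraction to a sum over single blocks. First, the denominator. Since the $\delta^{(j)}$ have disjoint supports, they are mutually orthogonal in the frame inner product. Each one satisfies $\sum_{\bm\alpha}(\delta^{(j)}_{\bm\alpha})^2=m!$, up to the fixed frame normalization (which is common to numerator and denominator and cancels in the scale-invariant ratio). Hence $\lVert O\rVert_{\BC_m}^2=c\sum_{j}\omega_j^2$. Because there are $\lfloor 2n/m\rfloor\in\Theta(n)$ blocks with $\omega_j\in\Theta(1)$, this gives $\lVert O\rVert_{\BC_m}^k\in\Theta(n^{k/2})$.

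Second, the numerator. Expand $\operatorname{contr}_\sigma(\tobs^{\otimes k})$ multilinearly over block labels $(j_1,\dots,j_k)$. The orthogonality argument used in the proof of Proposition~\ref{thm:special_obs_odd_moments} shows that any term in which two tensors linked by an edge of $\sigma$ carry different labels vanishes. Consider the multigraph on the $k$ tensor copies whose edges are the index pairings of $\sigma$, and let $c(\sigma)$ be its number of connected components. Labels must then be constant on each component, so
\begin{equation}
\operatorname{contr}_\sigma(\tobs^{\otimes k})=\prod_{\text{components } \CC}\Bigl(\sum_j \omega_j^{|\CC|}\,\operatorname{contr}_{\sigma|_\CC}\bigl((\delta^{(j)})^{\otimes|\CC|}\bigr)\Bigr).
\end{equation}
For each component, the inner contraction is a fixed contraction of generalized Kronecker deltas on $m$ index values. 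It is therefore an $n$-independent constant (bounded by a combinatorial count such as $(m!)^{|\CC|}$), so each component contributes $\OC(n)$ and the whole numerator is $\OC(n^{c(\sigma)})$. Self-loops give zero by anti-symmetry, so every nonvanishing component has at least two vertices. The pairwise matchings are exactly the case where every component has size $2$, i.e.\ $c=k/2$.

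Third, conclude by counting. If $\sigma$ is not a pairwise matching, then either some component has size $\ge 3$, or some component of size $2$ is internally non-pairwise; the latter is impossible, since two $m$-valent vertices with no self-loops are joined by $m$ edges, i.e.\ a pairwise matching. Thus $c(\sigma)\le (k-1)/2$, and since $c$ is an integer and $k$ is even, $c(\sigma)\le k/2-1$. Therefore the ratio is $\OC(n^{k/2-1}/n^{k/2})=\OC(1/n)$.

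The main obstacle I expect is bookkeeping in the first and second steps. One must check that the frame normalization and sign conventions really make each block contraction $n$-independent, with no hidden $n$-dependence. One must also confirm that the $\omega_j\in\Theta(1)$ assumption prevents cancellations in the denominator; this holds automatically, since the denominator is a sum of squares. Only an upper bound is needed in the numerator, so no cancellation there matters.
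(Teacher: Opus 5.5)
Your proposal is correct and follows essentially the same route as the paper. Both arguments get a denominator of order $\Theta(n^{k/2})$ from the $\Theta(n)$ disjoint blocks, both let the numerator factorize over connected components (the paper's ``generalized cycles''), each locked to a single block label and contributing $\OC(n)$, and both note that a non-pairwise $\sigma$ has at most $k/2-1$ components. Your bookkeeping is slightly tighter: you need only an upper bound per component, and you justify explicitly that size-one components are self-loops and size-two components are pairwise. The paper instead asserts $\Theta(n)$ per component and discards odd-size components via Proposition~\ref{thm:special_obs_odd_moments}, which your counting argument makes unnecessary.
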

            \begin{proof}
                Let us start by computing the result of a full contraction between two copies of the tensor $\tobs$ governed by an arbitrary index permutation $\pi \in S_m$, denoted $\operatorname{contr}_{\pi}\!\left(\tobs^{\otimes 2}\right)$. This reads 
                \begin{equation}
                    \operatorname{contr}_{\pi}\!\left(\tobs^{\otimes 2}\right) = \sum_{j, j'}\omega_j \omega_{j'}\delta^{(j)}_{\alpha_1\dots \alpha_m}\delta^{(j')}_{\alpha_{\pi(1)}\dots \alpha_{\pi(m)}}\,,
                \end{equation}
                where Einstein summation over the indices $\{\alpha_1\dots \alpha_m\}$ is intended. The magnitude of these contractions arises as factors $\lVert O \rVert_{\BC_m}^2$ in the denominator in Eq.~\eqref{eq:special_obs_even_moments}. Using the mutual orthogonality of the anti-symmetric Kronecker delta tensors for $j\neq j'$, and their total anti-symmetry which yields $\delta^{(j)}_{\alpha_{\pi(1)}\dots \alpha_{\pi(m)}} = \operatorname{sgn}(\pi)\delta^{(j)}_{\alpha_1\dots \alpha_m}$, we find
                \begin{equation}
                    \operatorname{contr}_{\pi}\!\left(\tobs^{\otimes 2}\right) = \operatorname{sgn}(\pi)\sum_{j=1}^{\lfloor 2n/m \rfloor}\omega_j^2 \delta^{(j)}_{\alpha_1\dots \alpha_m}\delta^{(j)}_{\alpha_1\dots \alpha_m} = \operatorname{sgn}(\pi) m!\sum_{j=1}^{\lfloor 2n/m \rfloor} \omega_j^2 \in \Theta(n)\,,
                \end{equation}
                where we used the unpermuted full contraction $\delta^{(j)}_{\alpha_1\dots \alpha_m}\delta^{(j)}_{\alpha_1\dots \alpha_m}=m!$, 
                and the fact that the weights are asymptotically bounded as $\omega_j \in \Theta(1)$, ensuring the sum scales within linear bounds with qubit number $n$.
                Consequently, the denominator in Eq.~\eqref{eq:special_obs_even_moments}, corresponding to the desired contractions in the even $k$ moments achieved via pairwise tensor matchings, implies that each of the $k/2$ tensor pairs yields a factor $\lVert O \rVert_{m}^2 = \lvert\operatorname{contr}_{\pi}(\tobs^{\otimes 2})\rvert$ for any $\pi \in S_m$, resulting in an overall denominator scaling of $\Theta(n^{k/2})$.
                
                Now, the spurious contractions $\operatorname{contr}_{\sigma}( \tobs^{\otimes k} )$ with $\sigma \in \mathfrak{B}_{mk/2}$ restricted to a non-pairwise tensor matching necessarily involve tensor matchings of three or more copies of $\tobs$. While the previously proven Proposition~\ref{thm:special_obs_odd_moments} allows us to immediately discard any contraction involving an odd number of copies, we still need to address those contractions matching an even number, greater than 2, of them.
                However, whenever such a non-pairwise contraction arises and does not vanish, the mutual orthogonality of the generalized Kronecker deltas forces that any two contracted tensors must share the exact same subset index $j$. By transitivity, every tensor within a fully connected component is locked to the identical subset index. This rigidly collapses their independent multiple index summations into a single, shared summation over the index subsets in $\bm{\VC}$. Because neither the weights $\omega_j$ nor the generalized Kronecker delta tensors carry any dependence on $n$ (yielding only $\Theta(1)$ combinatorial factors in $m$), this single summation over the index subsets in $\bm{\VC}$ evaluates strictly to $\Theta(n)$.
                We thus understand that any generalized cycle\footnote{We use the term \emph{generalized cycle} to denote a fully connected, inseparable component of tensors within the overall contraction $\sigma$, which, in contrast to the case of tensors with rank $m=2$ (matrices), no longer form visual cycles in a tensor network diagram.} $\mathcal{C}$, no matter its length $|\mathcal{C}|$, between the $k$ copies of tensor $\tobs$ contributes a factor of $\Theta(n)$ to the total contraction $\operatorname{contr}_{\sigma}( \tobs^{\otimes k})$ which hence scales as
                \begin{equation}
                    \operatorname{contr}_{\sigma}\left( \tobs^{\otimes k} \right) = \prod_{\mathcal{C}\in \operatorname{cycles}(\sigma)} \Theta(n) = \Theta\left(n^{\lvert\operatorname{cycles}(\sigma)\rvert}\right)\,,
                \end{equation}
                where we defined $\operatorname{cycles}(\sigma)$ as the generalized cycles of $\sigma$ and, hence, $\lvert\operatorname{cycles}(\sigma)\rvert$ as the number of such cycles.
                Since this is exactly maximized by the contractions corresponding to the pairings, i.e., cycles of length 2, it follows that the spurious contractions vanish at least as $\mathcal{O}(1/n)$ relative to the pairwise denominator $\lvert\operatorname{contr}_{\pi}(\tobs^{\otimes 2})\rvert^{k/2}$, which concludes the proof.
            \end{proof}

            Given the vanishing of the odd moments and the partial fulfillment of Condition~\ref{cond:gaussian_dominate} for the even moments, this proves that for this special class of observables defined in Eq.~\eqref{eq:special_obs}, GPs emerge via Theorem~\ref{thm:supp:GP}. 
            Generally, the complete fulfillment of Condition~\ref{cond:gaussian_dominate} is subject to an appropriate scaling behavior of the states, which can be checked equivalently via Proposition~\ref{prop:cond:univariate} in the univariate case, or sufficiently via Proposition~\ref{prop:cond:sufficient_mutivariate} (or directly by probing Condition~\ref{cond:gaussian_dominate} or Proposition~\ref{prop:cond:unified_single_sum}).

            Furthermore, we note that the partitioning requirement can be relaxed such that the subsets do not need to cover all $2n$ indices. As long as the subsets $\VC^{(j)}$ remain mutually disjoint, which guarantees the exact vanishing of the odd moments, and the total number of partitions $|\bm{\VC}|$ scales linearly with $n$ (i.e., $|\bm{\VC}| = \Theta(n)$), the asymptotic domination of the Gaussian pairings in the even moments established in Proposition~\ref{thm:special_obs_even_moments} remains entirely valid. For instance, single Pauli observables in $\BC_m$ do not meet such $\lvert \bm{\VC}\rvert$ scaling. While the odd moments vanish since Proposition~\ref{thm:special_obs_odd_moments} even applies to this relaxed observable form, such single Pauli observables do not suffice to achieve the required even-moment scaling (Proposition~\ref{thm:special_obs_even_moments} does not apply), e.g., consistent with Proposition~\ref{thm:pauli_ratio_b4} for $\BC_4$.

\clearpage
\section{Efficient measurement protocol for matchgate kernel estimation}\label{app:kernel_estimation}

    We introduce a measurement protocol for the matchgate quantum kernel functions $\kappa(\rho, \rho')$ that avoids an indirect and naive estimation of the module projection coefficients (Eq.~\eqref{supp:eq:projection_via_basis}) for both states $\rho, \rho'$ and classically computing the kernel value based on them according to the module overlap definition in Eq.~\eqref{supp:eq:module_overlap_coeffs}. Instead, a direct estimation by simultaneously preparing $\rho \otimes \rho'$, gathering Bell basis measurement statistics, and classical (efficient) post-processing provides kernel estimates.
    The relevant quantity that has to be estimated from the quantum input states $\rho, \rho'$ is the module overlap $\left\langle \rho, \rho' \right\rangle_m$, while the remaining factors ${m!}/{(2n)^m}\lVert{{O}}\rVert^2_m$ are known exactly classically. Hence, we denote the matchgate kernel function explicitly as a function of $\rho, \rho'$ as $\kappa(\rho, \rho')$ here, as opposed to a function of state labels $t, t'$.

    It is more convenient to consider an equivalent form, introducing the natural scaling factor $d = 2^n$, as
    \begin{equation}\label{eq:matchgate_kernel_factors}
        \kappa(\rho, \rho')
        =
        \frac{m!}{(2n)^m}
        \left( d^{-1} \lVert{{O}}\rVert^2_m \right)\left( d \langle \rho, \rho' \rangle_m \right).
    \end{equation}
    We can think of these rescaled module overlaps and norms as the effective module quantities, which typically arise as follows.
    The (squared) effective module norm $d^{-1} \lVert{{O}}\rVert^2_m$, is computed classically, and typically corresponds to a quantity that, even after dividing by $d$, does not vanish exponentially with $n$ for typical observables $\O$, such as any Pauli strings in $\BC_m$. Furthermore, when expressing $\O$ directly as a linear combination of distinct Majorana products (or their corresponding Pauli strings) in $\BC_m$, i.e., without the normalization factor in Eq.~\eqref{supp:eq:module_basis}, the corresponding sum of squared coefficients yields $d^{-1} \lVert{{O}}\rVert^2_m$. The second term, $d \langle \rho, \rho' \rangle_m$ resembles an effectively measured variant of the module overlaps, whose estimation will be discussed in detail below.

\subsection{Direct matchgate kernel estimation}\label{app:ff_kernel_estimation_protocol}

    The direct estimation protocol for a matchgate kernel value $\kappa(\rho, \rho')$ consists of the following steps, which are detailed subsequently:
    \begin{enumerate}
        \item Prepare states $\rho$ and $\rho'$ coherently
        \item Perform measurements in the Bell basis
        \item Classically compute (in polynomial time) an unbiased estimate of $d\Braket{\rho, \rho'}_m$ and, hence, $\kappa(\rho, \rho')$ via Eq.~\eqref{eq:matchgate_kernel_factors}
    \end{enumerate}
    The module overlap can be written as
    \begin{equation}\label{eq:module_overlap_expectation}
        \Braket{\rho, \rho'}_{m}
        =
        \Tr\left[ 
        (\rho \otimes \rho') 
        \sum_{\idxset } C_{\idxset}^\dagger \otimes C_{\idxset} 
        \right]\,,
    \end{equation}
    where $\VC \subseteq \lbrace 1, \ldots, 2n\rbrace$ with $\lvert \VC \rvert = m$ denotes all subsets of $\lbrace 1, \ldots, 2n\rbrace$ of cardinality $m$, i.e., $\lbrace C_{\idxset}\rbrace_{\idxset}$ forms a basis for module $\BC_m$ as defined in Eq.~\eqref{supp:eq:module_basis}.
    Analogously, the rescaled variant, which is the quantity to be measured, reads as
    \begin{equation}\label{eq:rescaled_module_overlap_expectation}
        d\Braket{\rho, \rho'}_{m}
        =
        \Tr\left[ 
        (\rho \otimes \rho') 
        \sum_{\idxset } \sqrt{d}C_{\idxset}^\dagger \otimes \sqrt{d}C_{\idxset} 
        \right]
        =
        \Tr\left[
        (\rho \otimes \rho') 
        \sum_{\idxset } P_{\idxset}\otimes P_{\idxset} 
        \right]        ,
    \end{equation}
    where we denote $P_{\idxset}$ as the Majorana product that corresponds to the module basis operator $C_{\idxset}$ without the normalization factor\footnote{The factor $h_m$ in Eq.~\eqref{supp:eq:module_basis} cancels out due to its unit modulus.}. Hence, $P_{\idxset}$ is a Pauli string.
    For any states $\rho, \rho'$, the effective module overlaps can be bounded\footnote{Note that for certain modules, bounds are known to be tighter than the module dimension $d_m$, such as $n$ in the $\BC_2$ case \cite{deneris2025analyzing}.} as
    \begin{equation}\label{eq:rescaled_module_overlap_bound}
        \left\lvert d\Braket{\rho, \rho'}_{m}\right\rvert \leq d_m \quad \text{with} \quad d_m = \dim(\BC_m) = \binom{2n}{m}
        .
    \end{equation}
    
    The Bell basis $\lbrace \ket{P} \rbrace_{P\in \PC}$, with $P$ an $n$-qubit Pauli string and $\PC = \lbrace I, X, Y, Z \rbrace^{n}$ denoting the set of all $4^n$ such strings, is defined as
    \begin{equation}\label{eq:bell_basis_states}
        \ket{P} = \frac{1}{\sqrt{d}} P \otimes I_n \ket{\Omega}
    \end{equation}
    with $\ket{\Omega}$ denoting the un-normalized maximally entangled state over two $n$-qubit systems $\ket{\Omega} = \sum_{i=1}^{2^n} \ket{i} \otimes \ket{i}$.
    An important property is that all terms in the operator sums in Eqs.~\eqref{eq:module_overlap_expectation} and~\eqref{eq:rescaled_module_overlap_expectation} diagonalize in the Bell basis:
    \begin{proposition}
        For any module basis operator $C_{\idxset}$, the tensor product $C_{\idxset}^\dagger \otimes C_{\idxset}$ is diagonal in the Bell basis $\lbrace \ket{P} \rbrace_{P\in \PC}$.
    \end{proposition}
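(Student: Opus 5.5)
The argument reduces to a general fact: for any $n$-qubit Pauli string $P$, the operator $P^\ast\otimes P$ (equivalently $\bar P\otimes P$) is diagonal in the Bell basis. The plan has three steps.

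First, I will relate $C_{\idxset}^\dagger\otimes C_{\idxset}$ to $P_{\idxset}\otimes P_{\idxset}$. Under the Jordan--Wigner map in Eq.~\eqref{eq:app:Majorana_ops}, every Majorana monomial is, up to a phase, a Pauli string. With the prefactor $h_m$ chosen as in Eq.~\eqref{supp:eq:module_basis}, $C_{\idxset}$ is Hermitian, so $C_{\idxset}=\pm\tfrac{1}{\sqrt d}P_{\idxset}$ with $P_{\idxset}\in\PC$ a Hermitian Pauli string. The sign squares away, which gives
\begin{equation}
C_{\idxset}^\dagger\otimes C_{\idxset}=\tfrac1d\,P_{\idxset}\otimes P_{\idxset}.
\end{equation}
It therefore suffices to show that $Q\otimes Q$ is diagonal in $\{\ket P\}_{P\in\PC}$ for every Hermitian Pauli string $Q$.

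Second, I will compute the action of $Q\otimes Q$ on a Bell vector. I will use the transpose trick $(I\otimes A)\ket\Omega=(A^\top\otimes I)\ket\Omega$, which follows from the vectorization identity in Eq.~\eqref{eq:kroneker_vectorisation}. This gives
\begin{equation}
(Q\otimes Q)\ket P=\tfrac{1}{\sqrt d}(QP\otimes Q)\ket\Omega=\tfrac{1}{\sqrt d}(QPQ^\top\otimes I)\ket\Omega.
\end{equation}
Single-qubit Paulis satisfy $X^\top=X$, $Z^\top=Z$ and $Y^\top=-Y$, so $Q^\top=(-1)^{y(Q)}Q$, where $y(Q)$ is the number of $Y$ factors in $Q$. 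Pauli strings either commute or anticommute, so $QPQ=(-1)^{\langle Q,P\rangle}P$, with $\langle Q,P\rangle\in\{0,1\}$ the symplectic commutation indicator. Combining these,
\begin{equation}
(Q\otimes Q)\ket P=(-1)^{y(Q)+\langle Q,P\rangle}\ket P.
\end{equation}
This shows diagonality and identifies the eigenvalues as $\pm1$ (times $1/d$ for the normalized operators).

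Third, I will state the consequence, which is also what makes the protocol efficient. Every term in Eqs.~\eqref{eq:module_overlap_expectation} and~\eqref{eq:rescaled_module_overlap_expectation} is diagonal in the same basis, so a single Bell-basis measurement on $\rho\otimes\rho'$ returns an outcome $P$. From that outcome, the estimator $\sum_{\idxset}(-1)^{y(P_{\idxset})+\langle P_{\idxset},P\rangle}$ can be evaluated classically, and it is an unbiased estimate of $d\langle\rho,\rho'\rangle_m$.

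I do not expect a real obstacle here. The only delicate point is the bookkeeping of phases, namely the factor $h_m$ and the transpose sign from $Y$. These affect only the eigenvalues, not diagonality, so I will verify them carefully once and keep them in the stated eigenvalue formula.
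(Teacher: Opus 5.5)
Your proposal is correct and follows essentially the same route as the paper: you rewrite $C_{\idxset}^\dagger\otimes C_{\idxset}$ as $\tfrac{1}{d}P_{\idxset}\otimes P_{\idxset}$, move the second factor onto the first register with $(I\otimes A)\ket{\Omega}=(A^\top\otimes I)\ket{\Omega}$, and use $Y^\top=-Y$ together with the commute/anticommute dichotomy to get $QPQ^\top=\pm P$. Your explicit eigenvalue $(-1)^{y(Q)+\langle Q,P\rangle}$, which flips sign exactly when the two strings \emph{anticommute}, makes the paper's unspecified $\varphi\varphi'$ concrete and uses the right convention; the paper's later formula for $f_m(P)$ is written with the commuting indicator instead, which flips the overall sign.
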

    \begin{proof}
        Via Jordan-Wigner mapping and module basis definition in Eq.~\eqref{supp:eq:module_basis}, any tensor product of a product of $m$ Majorana operators and its adjoint $\sqrt{d} C_{\idxset}^\dagger \otimes \sqrt{d} C_{\idxset} = C_{n_1} C_{n_2} \cdots C_{n_m} \otimes C_{n_1} C_{n_2} \cdots C_{n_m}$ is represented by $P \otimes P$ with $P\in \PC$. 
        Note that $I, X, Z$ are symmetric while $Y$ is anti-symmetric, i.e., $I^\top = I, X^\top = X, Y^\top = -Y, Z^\top = Z$ and that any Pauli strings $P, Q$ either commute or anti-commute, leading to either $PQ = QP$ or $PQ = -QP$. As a special case of the latter, Pauli conjugation preserves the Pauli as either $PQP = Q$ or $PQP = -Q$.
        Together with the identity $A \otimes I \ket{\Omega} = I \otimes A^\top \ket{\Omega}$ \cite{nielsen2000quantum_refs} it follows that
        \begin{align}
            P \otimes P \ket{Q} 
            &= \tfrac{1}{\sqrt{d}} (P \otimes P) (Q \otimes I) \ket{\Omega} \\
            &= \tfrac{1}{\sqrt{d}}(PQ \otimes I)(I \otimes P) \ket{\Omega} \\
            &= \tfrac{1}{\sqrt{d}}PQP^\top \otimes I \ket{\Omega}\\
            &= \tfrac{1}{\sqrt{d}}\varphi PQP \otimes I \ket{\Omega}\\
            &= \tfrac{1}{\sqrt{d}}\varphi\varphi' Q\otimes I \ket{\Omega}\\
            &= \varphi\varphi' \ket{Q} 
        \end{align} 
        for any ($n$-qubit) Pauli strings $P, Q \in \PC$, where $\varphi = \pm 1$ and $\varphi' = \pm 1$ depending on (transposition) symmetry of $P$ and commutation relation of $P$ and $Q$, respectively.
        Thus, $\ket{Q}$ is an eigenstate of $P \otimes P$, concluding the proof.
    \end{proof}
    Formally, the diagonalization in the Bell basis can be written as $\forall P, Q \in \PC:$
        \begin{equation}
            \sum_{\idxset }  
            \Braket{ P |  P_{\idxset}\otimes P_{\idxset}  | Q} = \delta_{P, Q} f_m(P)\,,
        \end{equation}
        with
        \begin{equation}
            f_m(P) = \sum_{\idxset }  \Braket{ P |  P_{\idxset}\otimes P_{\idxset}  | P} \,,
        \end{equation}
        and $\delta$ denoting the Kronecker delta.
    For a constant (in $n$) module order $m \in \mathcal{O}(1)$, given an observation for $P$, $f_m(P)$ can be evaluated in polynomial\footnote{The order of the polynomial upper bound is $m$ since $\binom{2n}{m} \leq (2n)^m/m!$}
    time classically since $\binom{2n}{m} \in \mathcal{O}(\poly(n))$, which is the number of terms in $f_m(P)$ as it becomes clear from this equivalent form:
    \begin{equation}
        f_m(P) = \sum_{\idxset} (-1)^{\mathbf{1}_{\{0\}}([P, P_{\idxset}])} (-1)^{\#Y(P_{\idxset})}\,,
    \end{equation}
    where $\mathbf{1}_{\{0\}}([P,P_{\idxset}])$ denotes the indicator function that evaluates to one if $[P,P_{\idxset}]\in \{0\}$, i.e., iff $P$ and $P_{\idxset}$ commute, otherwise to zero. $\#Y(P_{\idxset})$ counts the number of Pauli-$Y$ operators in Pauli string $P_{\idxset}$.

    Hence, we can estimate the module overlap through $M$ Bell basis measurements on copies of $\rho \otimes \rho'$.
    Measuring $\rho \otimes \rho'$ in the Bell basis can be achieved via standard computational basis measurements after a change of basis, e.g., applying CNOT and Hadamard gates on each pair of qubits between $\rho$ and $\rho'$ \cite{hangleiter2024bell}.
    Given the $M$ Bell basis measurement outcomes $\lbrace P^{(j)} \rbrace_{j=1}^M$, an unbiased estimator is obtained  
    \begin{align}
        d\braket{\rho, \rho'}_m
        &=
        \sum_{P\in\PC} \underbrace{\Braket{P| \rho \otimes \rho' |P}}_{\Pr_{\rho \otimes \rho'}[P]} f_m(P) \\
        &\approx
        \frac{1}{M}\sum_{j=1}^{M} f_m(P^{(j)})\\
        &= \widetilde{d\braket{\rho, \rho'}}_m^{(M)}.
    \end{align}
    Here, Bell measurement outcomes $P^{(j)}$ were obtained according to sampling probabilities $\Pr_{\rho \otimes \rho'}\left[P^{(j)}\right]$ and used to estimate the expectation of $f_m$.
    Note that this scheme does not conceptually differ from a module purity estimation scheme, which is merely the special case of $\rho = \rho'$.
    Note that for any $P$, the module dimension $d_m$ bounds
    \begin{equation}
        \lvert f_m(P)\rvert \leq d_m
        ,
    \end{equation}
    which is a result of $f_m(P)$ being the sum over $d_m$ Pauli expectation values with respect to the (normalized) states $\ket{P}$. Such Pauli expectation values are at most 1 in magnitude. This bound aligns with the effective module overlap bound of Eq.~\eqref{eq:rescaled_module_overlap_bound}.
    Eventually, inserting this estimate for $d\braket{\rho, \rho'}_m$ in Eq.~\eqref{eq:matchgate_kernel_factors} provides an unbiased estimate of the kernel value $\kappa(\rho, \rho')$ based on $M$ samples.

    \subsection{Naive indirect matchgate kernel estimation from module projection coefficient estimates}

        A naive implementation of a measurement protocol for the kernel would be to estimate the module projection coefficients for each quantum state to indirectly obtain kernel value estimates, as outlined in the following steps and detailed subsequently:
        \begin{enumerate}
            \item Prepare states $\rho$ and estimate expectation values of all Pauli strings in the module $\BC_m$
            \item Prepare states $\rho'$ and estimate expectation values of all Pauli strings in the module $\BC_m$
            \item Classically compute an unbiased estimate of $d\Braket{\rho, \rho'}_m$ as the inner product of these coefficient estimate vectors and, hence, $\kappa(\rho, \rho')$ via Eq.~\eqref{eq:matchgate_kernel_factors}
        \end{enumerate}
        This not only defeats the conceptual purpose of using a kernel function as an implicit approach to evaluating inner products in the underlying (module) feature space, but it also introduces an estimator with significantly higher error.

        As in the introduced measurement protocol above, rescaled coefficients are considered to enable simple Pauli string measurements representing the distinct Majorana products in the module $\BC_m$ through the Jordan-Wigner mapping, i.e., basis operators up to factor $\sqrt{d}$ in Eq.~\eqref{supp:eq:module_basis}. 
        Hence, the rescaled coefficients, as Pauli string expectation values, have a range of 
        \begin{equation}
            -1 \leq c_i(\rho) \leq 1\,,
        \end{equation}
        and provide the rescaled module overlaps via Eq.~\eqref{eq:rescaled_module_overlap_expectation} as
        \begin{equation}\label{eq:rescaled_module_overlap_via_coeffs}
            d\Braket{\rho, \rho'}_m = \sum_{i=1}^{d_m}c_i(\rho)c_i(\rho') = \bm{c}(\rho)^\top\bm{c}(\rho')
        \end{equation}
        where $d_m = \binom{2n}{m}$ and $c_i(\rho), c_i(\rho')$ denote the $i$-th module coefficients of the states $\rho$ and $\rho'$, respectively.
        Let the module basis Pauli strings provide $d_m$ estimates of the module projection coefficients using $t$ shots each, denoted as
        \begin{equation}
            \left\lbrace \tr\left[\rho P_{\idxset}\right] \right\rbrace_{\idxset} \approx 
            \left\lbrace \widetilde{c_{i}(\rho)}^{(t)}\right\rbrace_{i = 1}^{d_m}\,.
        \end{equation}
        Inserting the estimates of $c_i(\rho)$ and $c_i(\rho')$ in Eq.~\eqref{eq:rescaled_module_overlap_via_coeffs} provides an estimation of the effective module overlap $d\braket{\rho, \rho'}_m$.
        Finally, from the estimate of $d\braket{\rho, \rho'}_m$, an unbiased estimate of the kernel value $\kappa(\rho, \rho')$ based on $T = 2d_mt$ samples is obtained through substitution in Eq.~\eqref{eq:matchgate_kernel_factors}.

    \subsection{Comparison of direct and indirect matchgate kernel protocol}

        To assess the estimation efficiency of the two protocols, we first derive upper bounds on their respective errors using standard techniques, such as Hoeffding's concentration bounds.
        Beyond merely demonstrating that the direct estimator has a tighter upper bound, we establish a clear performance separation between the two protocols.
        While the direct estimator exhibits a significantly more favorable upper bound, comparing upper bounds alone is insufficient to show a separation. To rigorously establish a strict separation in estimator efficiency, we achieve this by deriving a probabilistic lower bound for the indirect estimator. This definitively rules out the possibility that tighter mathematical bounding techniques could close the performance gap, proving that the naive indirect protocol is fundamentally sub-optimal.

        \subsubsection{Error bounds of direct estimator}

            We start with error bounds on the effective module overlap estimates $\widetilde{d\braket{\rho, \rho'}}_m^{(M)}$. 
            By Hoeffding's inequality, we obtain the following bound
            \begin{equation}
                \Pr\left(\left\lvert \widetilde{d\braket{\rho, \rho'}}_m^{(M)} - {d\braket{\rho, \rho'}_m} \right\rvert \geq \epsilon \right) \leq 2\exp\left( -\frac{1}{2} M \epsilon^2d_m^{-2}\right) = \delta
            \end{equation}
            for all $\epsilon > 0$.
            Therefore, to guarantee an error of less than $\epsilon$ with probability at least $1-\delta$,
            \begin{equation}
            M \geq 2\ln(2/\delta)\epsilon^{-2}d_m^2
            \end{equation}
            measurements suffice. Alternatively, for a fixed $M$ measurements, we can guarantee with probability at least $1-\delta$ that the error bound is
            \begin{equation}\label{eq:direct_total_error}
            \epsilon = d_m \sqrt{2\ln(2/\delta) / M} \in \OC(n^m),
            \end{equation}
            which is a polynomial error bound in the number of qubits $n$.
            
            To bound the final matchgate kernel value estimate, we directly scale the intermediate error of the effective module overlap estimation by substituting in Eq.~\eqref{eq:matchgate_kernel_factors}. Using $M$ measurements, we guarantee with a probability of at least $1-\delta$ that the final estimate satisfies the error bound
            \begin{equation}
            \left\lvert \widetilde{\kappa(\rho, \rho')}^{(M)} - \kappa(\rho, \rho') \right\rvert < \sqrt{\frac{2\ln(2/\delta)}{M}} d_m \frac{m!}{(2n)^m} d^{-1} \lVert O \rVert^2 = \varepsilon.
            \end{equation}
            Consequently, the number of measurements required to achieve this guarantee is
            \begin{equation}\label{eq:direct_protocol_measurement_bound}
            M \geq 2\ln(2/\delta) \frac{d_m^2}{\varepsilon^2} \left( \frac{m!}{(2n)^m} d^{-1} \lVert O \rVert^2 \right)^2.
            \end{equation}
            Even though a factor $d$ is present in this measurement number requirement, recall that typical observables yield a norm that introduces a canceling factor of $d$. Hence, a polynomially small error can be guaranteed with a polynomial number of measurements with probability at least $1-\delta'$.
        
            The efficiency of this kernel estimation protocol can be significantly improved by reducing the variance of the unbiased estimator. This high variance stems from the heavy-tailed distribution of the samples, in which rare, large-magnitude outcomes dominate the error. To address this, we can employ established variance-reduction techniques such as the median-of-means estimator or local Clifford twirling prior to the Bell measurements \cite{huang2020predicting}. More sophisticated approaches could instead leverage fermionic classical shadows~\cite{wan2022matchgate,zhao2021fermionic}.

    \subsubsection{Error bounds of indirect estimator}

        Let $T = 2d_mt$ denote the total budget given $t$ shots per module projection coefficient per state with $d_m = \binom{2n}{m}$. 
        The probabilistic error bound through Hoeffding's inequality on each coefficient reads as 
        \begin{equation}\label{supp:eq:coeff_error}
            \Pr \left[
            \left \lvert \widetilde{c_{i}(\rho)}^{(t)} - c_{i}(\rho) \right\rvert \geq \epsilon_i
            \right] 
            \leq 
            2 \exp \left( - \epsilon_i^2 t  / 2\right)
            = \delta_i
        \end{equation}
        since $c_i \in [-1, 1]$.
        Consequently, with probability at least $1-\delta_i$, the coefficient $c_i(\rho)$ is estimated with an (upper) error bound
        \begin{equation} \label{eq:app:coeff_measurement_error_bound}
            \epsilon_i 
            = \sqrt{\frac{2\ln(2/\delta_i)}{t}}
            .
        \end{equation}
        Furthermore, let $E_i$ and $E_i'$ denote the events of Eq.~\eqref{supp:eq:coeff_error}, which occurs with probability $\delta_i$ and $\delta_i'$, for the two states $\rho$ and $\rho'$, respectively. 
        As per the union bound, the probability that at least one event occurs is at most
        \begin{equation}\label{eq:app:coeff_est_union_bound}
        \delta = \mathrm{Pr}\left[\left(\bigcup_{i=1}^{d_m} E_i \right) \cup \left( \bigcup_{i=1}^{d_m} E_i'\right)\right]\leq \sum_{i=1}^{d_m} \left(\mathrm{Pr}(E_i) + \mathrm{Pr}(E_i') \right)= \sum_{i=1}^{d_m} (\delta_i + \delta_i')
        .
        \end{equation}
        Hence, the probability that the estimation errors of all $\widetilde{c_i(\rho)}^{(t)}$ and $\widetilde{c_i(\rho')}^{(t)}$ simultaneously do not exceed $\epsilon_i$ and $\epsilon_i'$, respectively, is at least $1-\delta$.

        Under this probability bound, the error propagates in the classical inner product calculation 
        \begin{equation}
            \xi^{(t)}(\rho, {\rho'}) =
            \left\lvert 
            \sum_{i=1}^{{d_m} } \widetilde{c_i(\rho)}^{(t)} \widetilde{c_i({\rho'})}^{(t)}
            -
            d\braket{\rho, \rho'}_m
            \right\rvert
            =
            \left\lvert 
            \sum_{i=1}^{{d_m} } \widetilde{c_i(\rho)}^{(t)} \widetilde{c_i({\rho'})}^{(t)}
            -
            \sum_{i=1}^{{d_m} } c_i(\rho) c_i({\rho'}) 
            \right\rvert
            \leq 
            \sum_{i=1}^{{d_m} } (\epsilon_i + \epsilon_i')\,,
        \end{equation}
        where $\epsilon_i$ and $\epsilon_i'$ denote the coefficient error thresholds for $\rho$ and $\rho'$, respectively. 
        This bound on the propagated error is derived via the triangle inequality first over the sum
        \begin{equation}
            \left\lvert 
            \sum_{i=1}^{{d_m} } 
            \left(
            \widetilde{c_i(\rho)}^{(t)} \widetilde{c_i({\rho'})}^{(t)}
            -
             c_i(\rho) c_i({\rho'}) 
             \right)
            \right\rvert
            \leq 
            \sum_{i=1}^{{d_m} } 
            \left\lvert 
            \left(
            \widetilde{c_i(\rho)}^{(t)} \widetilde{c_i({\rho'})}^{(t)}
            -
             c_i(\rho) c_i({\rho'}) 
             \right)
            \right\rvert\,,
        \end{equation}
        and then over the individual product error terms
        \begin{align}
            \left\lvert 
            \left(
            \widetilde{c_i(\rho)}^{(t)} \widetilde{c_i({\rho'})}^{(t)}
            \! -
             c_i(\rho) c_i({\rho'}) 
             \right)
            \right\rvert
            &=
            \left\lvert
            \frac{
            (\widetilde{c_i(\rho)}^{(t)} \! - c_i(\rho))
            (\widetilde{c_i({\rho'})}^{(t)} \! + c_i({\rho'}))
            +
            (\widetilde{c_i(\rho)}^{(t)} \! + c_i(\rho))
            (\widetilde{c_i({\rho'})}^{(t)} \! - c_i({\rho'}))
            }{2}
            \right\rvert \\
            &\leq 
            \frac{1}{2}\left\lvert\widetilde{c_i(\rho)}^{(t)} \! - c_i(\rho)\right\rvert
            \left\lvert\widetilde{c_i({\rho'})}^{(t)} \! + c_i({\rho'})\right\rvert
            +
            \frac{1}{2}
            \left\lvert\widetilde{c_i(\rho)}^{(t)} \! + c_i(\rho)\right\rvert
            \left\lvert\widetilde{c_i({\rho'})}^{(t)} \! - c_i({\rho'})\right\rvert
            \\
            &\leq 
            \epsilon_i(\rho) + \epsilon_i({\rho'})
            \label{eq:app:clipping_ineq}
            .
        \end{align}
        Equation~\eqref{eq:app:clipping_ineq} 
        bounds the sums by $2$ and substitutes the error bounds as defined in Eq.~\eqref{eq:app:coeff_measurement_error_bound}.

        By satisfying the union bound of Eq.~\eqref{eq:app:coeff_est_union_bound} via $\delta_i = \delta_i' = \delta / (2d_m)$, we obtain the uniform threshold $\epsilon' = \sqrt{2\ln(4d_m/\delta) / t}$. The total error $\xi^{(t)}(\rho, \rho')$ in the effective module overlaps from estimating $2d_m$ coefficients with $t$ measurements each, yielding a total shot budget of $T = 2d_mt$, is bounded by
        \begin{equation}
        \xi^{(t)}(\rho, \rho') \leq 2d_m\epsilon' = 2d_m \sqrt{ \frac{2\ln(4d_m/\delta)}{t} }\,.
        \end{equation}
        Substituting $t = T / (2d_m)$, we achieve the final error bound $\epsilon$ with respect to the total measurement budget $T$
        \begin{equation}\label{eq:indirect_total_error}
        \xi^{(t)}(\rho, \rho') \leq 2d_m \sqrt{ \frac{4d_m\ln(4d_m/\delta)}{T} } = 4d_m^{3/2} \sqrt{ \frac{\ln(4d_m/\delta)}{T} } = \epsilon \in \tilde{\mathcal{O}}(n^{3m/2})
        \end{equation}
        with probability at least $1-\delta$.

        We restrict the present error analysis to estimating a single kernel value. In practice, when constructing a full kernel matrix, one could theoretically reuse the module coefficient estimates of individual quantum states across multiple pairwise module overlaps. While this data reuse might justify allocating more measurement shots per state to further suppress variance, it inevitably introduces biased estimators. Most notably in the kernel matrix diagonal, i.e., module purities, where the finite-sampling variance positively biases the squared coefficient estimates. Because our comparison relies entirely on unbiased estimators, examining the trade-offs of such biased estimation strategies is beyond the scope of this work.

    \subsubsection{Strict polynomial separation favoring the direct estimator}
        
        A direct comparison of the error bounds in Eqs.~\eqref{eq:direct_total_error} and~\eqref{eq:indirect_total_error} reveals a clear difference in the scaling of the two estimators. Specifically, the error of the direct protocol scales strictly linearly with the module dimension as $\mathcal{O}(d_m) = \mathcal{O}(n^m)$, whereas the propagated error of the naive indirect protocol scales as $\tilde{\mathcal{O}}(d_m^{3/2}) = \tilde{\mathcal{O}}(n^{3m/2})$, which is a significant polynomial divergence of $\tilde{\mathcal{O}}(n^{m/2})$ with respect to the number of qubits $n$. However, to show a separation, we must rule out the possibility that the upper error bound of the indirect protocol is merely loose and could be improved. We show that it is, in fact, asymptotically tight in the module dimension $d_m$ (up to logarithmic factors) by establishing a lower bound on a specific (hard) instance. The error of the indirect protocol for this instance strictly exceeds the general upper bound of the direct estimator with respect to $d_m$.

        For the hard instance, we assume all projection coefficients vanish, i.e., $c_i(\rho) = c_i(\rho') = 0$ for all $i = 1,\dots, d_m$. We briefly provide examples to illustrate that such zero-coefficient scenarios arise naturally.
        For odd $m$, setting $\rho, \rho'$ to any computational basis states yields strictly zero projection coefficients for $\BC_m$. This lack of support occurs because computational basis states decompose exclusively into Pauli strings over $\{I, Z\}$. These strings belong solely to even-parity modules, as any product of an odd number of distinct Majorana operators necessarily introduces a Pauli-$X$ or $Y$ operator.
        
        For even $m$\footnote{Note that we neglect the trivial cases of $\BC_0$ and $\BC_{2n}$ and focus on intermediate modules $\BC_m$ of $1 < m < 2n$ with even $m$.}, such a scenario can be trivially realized using mixed states, such as $\rho, \rho'$ being the maximally mixed states of the even and odd fermionic parity sectors. For pure states, one can readily select $\rho, \rho'$ to be pure stabilizer states whose stabilizer group is constructed to contain no products of $m$ distinct Majorana operators to guarantee strictly zero support on $\BC_m$.

        To apply the Paley-Zygmund inequality for a lower bound on the indirect kernel estimation error for this hard instance of zero coefficients, we derive the estimator's mean, variance, and fourth moment. 
        Because all $t$ measurements per coefficient are independent, the coefficient estimation has a zero mean and a variance of
        \begin{equation}
            \forall i = 1,\dots, d_m: \Var[\widetilde{c_i(\rho)}] = \Var[\widetilde{c_i(\rho')}] = 1 / t,
        \end{equation}
        following an appropriately shifted and scaled binomial distribution\footnote{$\widetilde{c_i(\rho)} = \frac{2}{t}(B - t/2)$ where $B \sim \mathrm{Bin}(t, 1/2)$ for zero coefficient estimations.}.
        Furthermore, the products of the two $i$-th coefficient estimates have zero mean and variance
        \begin{equation}\label{eq:indirect_est_product_var}
            \Var\left[\widetilde{c_i(\rho)} \widetilde{c_i(\rho')}\right] = \Var[\widetilde{c_i(\rho)} ]
            \Var[\widetilde{c_i(\rho')} ] 
            = 1/t^2\,,
        \end{equation}
        such that the final module overlap estimate (as the scalar product of coefficient estimates)
        has a zero mean again and a variance of 
        \begin{equation}\label{eq:indirect_var} 
            \sigma^2 = \Var\left[\sum_{i=1}^{d_m} \widetilde{c_i(\rho)} \widetilde{c_i(\rho')} \right] = 
            \sum_{i=1}^{d_m}\Var\left[ \widetilde{c_i(\rho)} \widetilde{c_i(\rho')} \right]
            = d_m/t^2,
        \end{equation}
        all due to independence, leading to any cross-terms vanishing.
        For the fourth moments of a single coefficient estimate $\widetilde{c_i(\rho)}$, using the known $4$-th central moment of the Binomial distribution, we directly obtain 
        \begin{equation}
        \mathbb{E}[\widetilde{c_i(\rho)}^4] = \left(\frac{2}{t}\right)^4 \frac{t(3t-2)}{16} \le \frac{3}{t^2}.
        \end{equation}
        The same derivations apply for $\rho'$.
        Consequently, for the product of two independent coefficient estimates, 
        \begin{equation}\label{eq:indirect_est_product_4th}
            \E\left[\left(\widetilde{c_i(\rho)}\widetilde{c_i(\rho')}\right)^4\right] = \mathbb{E}[\widetilde{c_i(\rho)}^4]\mathbb{E}[\widetilde{c_i(\rho')}^4] \le \frac{9}{t^4}.
        \end{equation}
        Finally, the $4$-th moment for the indirect estimator obeys the decomposition
        \begin{equation}
            \E\left[\left(\sum_{i=1}^{d_m}\widetilde{c_i(\rho)}\widetilde{c_i(\rho')}\right)^4\right] = \sum_{i=1}^{d_m}\E\left[\left(\widetilde{c_i(\rho)}\widetilde{c_i(\rho')}\right)^4\right] + 3 \sum_{\substack{i,j=1\\i\neq j}}^{d_m} \E\left[\left(\widetilde{c_i(\rho)}\widetilde{c_i(\rho')}\right)^2\right]\E\left[\left(\widetilde{c_j(\rho)}\widetilde{c_j(\rho')}\right)^2\right]
            ,
        \end{equation}
        where all the other (odd power) cross-terms vanished due to the mean zero estimates $\E[\widetilde{c_i(\rho)}] = 0$. Substituting the coefficient product variance of Eq.~\eqref{eq:indirect_est_product_var} and $4$-th moment bound of Eq.~\eqref{eq:indirect_est_product_4th} yields a fourth moment bound of the indirect estimator as 
        \begin{equation}
            \E\left[\left(\sum_{i=1}^{d_m}\widetilde{c_i(\rho)}\widetilde{c_i(\rho')}\right)^4\right] \leq \frac{9d_m}{t^4} + 3\frac{d_m(d_m - 1)}{t^4} = \frac{3d_m^2 + 6d_m}{t^4},
        \end{equation}
        which is further expressed as multiples of the estimator variance squared $\sigma^4 = d_m^2/t^4$ as
        \begin{equation}\label{eq:indirect_4th}
            \E\left[\left(\sum_{i=1}^{d_m}\widetilde{c_i(\rho)}\widetilde{c_i(\rho')}\right)^4\right] \leq \left(3 + \frac{6}{d_m}\right)\sigma^4
            \leq 9 \sigma^4
            .
        \end{equation}
        Note that the last inequality is obtained via $d_m \geq 1$.

        To lower bound the estimation error $\xi^{(t)}(\rho, {\rho'})$ of the indirect estimator from the effective module overlap $d\braket{\rho, \rho'}_m = 0$, which vanishes in the cases where states lack support on the module $\BC_m$, we apply the Paley–Zygmund inequality for the square estimation error $\xi^{(t)}(\rho, {\rho'})^2$ first, which reads as
        \begin{equation}
            \Pr\left( 
            \xi^{(t)}(\rho, {\rho'})^2
            > \gamma \E\left[\xi^{(t)}(\rho, {\rho'})^2\right]
            \right)
            \geq (1-\gamma)^2 \frac{\E\left[\xi^{(t)}(\rho, {\rho'})^2\right]^2}{\E\left[\xi^{(t)}(\rho, {\rho'})^4\right]}
        \end{equation}
        for any $0 < \gamma < 1$.
        Considering that $\E\left[\xi^{(t)}(\rho, {\rho'})^2\right] = \sigma^2$ being the variance and $\E\left[\xi^{(t)}(\rho, {\rho'})^4\right]$ the $4$-th moment of the estimation error and substituting from Eqs.~\eqref{eq:indirect_var} and~\eqref{eq:indirect_4th}, respectively, yields
        \begin{equation}
            \Pr\left( 
            \xi^{(t)}(\rho, {\rho'})^2
            > \gamma \frac{d_m}{t^2}
            \right)
            \geq (1-\gamma)^2 \frac{1}{9}\,.
        \end{equation}
        Setting $\gamma = 0.5$, inserting the total shot budget $T = 2d_mt$, and applying the square root, we obtain the final lower error bound for the indirect estimator as
        \begin{equation}\label{eq:indirect_lower_bound}
            \Pr\left(\xi^{(t)}(\rho, {\rho'}) > \frac{\sqrt{2d_m^{3}}}{T}\right) \ge \frac{1}{36}
            .
        \end{equation}
        Note that this probabilistic lower bound holds with a constant probability of $1/36$.

        This proves that the (Hoeffding) upper bound $\tilde{\OC}(d_m^{3/2}/\sqrt{T})$ of the indirect estimator is asymptotically tight up to logarithmic factors with respect to the module dimension $d_m$ and fixed measurement budget $T$, as Eq.~\eqref{eq:indirect_lower_bound} yields the matching lower bound of $\Omega(d_m^{3/2}/T)$.
        The full physical consequence of this separation becomes clear when considering the estimation error (for fixed $T$) relative to the true scale of the effective module overlap $d\braket{\rho, \rho'}_m$. This quantity is proportional to the module dimension $d_m$, e.g., bounded by $d_m$. Hence, the estimation error relative to the range of the effective overlap to be estimated becomes independent of the module dimension $d_m$ for the direct estimator as $\OC(1/\sqrt{T})$. In stark contrast, the relative error of the indirect approach diverges with the module dimension $d_m$ as $\tilde{\Omega}(\sqrt{d_m}/T)$. 
        Consequently, to maintain a constant relative error as the system grows, the indirect protocol requires an increase in measurement budget by a factor of $\tilde{\mathcal{O}}(n^{m/2})$. By avoiding this dimensional scaling penalty entirely, the direct estimator establishes a strict polynomial separation in sample complexity.

\FloatBarrier
\clearpage
    \section{Numerical benchmarking}\label{app:bench}
        Here, we complement the numerical demonstrations of QGPs across various applications and tasks with systematic benchmarking in supervised learning (regression) and optimization settings. Moreover, the introduced kernel correction via Wigner's semicircle law is numerically validated against standard approaches.

\paragraph{Recap of matchgate QGP equations.}
For self-containment, we summarize the key equations of the matchgate QGP framework below, as introduced in the main text. The prior is defined by a zero mean, $\mu(t) = 0$, and the matchgate covariance kernel is
\begin{equation}\label{eq:SI_theo:1-main}
    \kappa(t,t')=\frac{m!}{(2n)^m} \lVert O \rVert^2_m \left\langle \rho(t) , \rho(t') \right \rangle_{m}.
\end{equation}
Evaluated over a finite set of points $\TC$, the random vector $\bm{f}_{\TC}$ follows a joint Gaussian distribution,
\begin{equation}\label{eq:SI_GP1}
    \bm{f}_{\TC} \sim \mathcal{N}\!\bigl( \bm{\mu}_\TC,\, \bm{\Sigma}_{\TC,\TC} \bigr)\,,
\end{equation}
where $\bm{\Sigma}_{\TC,\TC}$ is the covariance matrix with entries $\kappa(t,t')$, $t, t'\in \TC$. Given noisy training observations
\begin{equation}\label{eq:SI_GP2}
    \bm{y}_{\TC} \;=\; \bm{f}_{\TC} + \bm{\varepsilon},\qquad
    \bm{\varepsilon}\sim\mathcal{N}(\mathbf{0},\,R),
\end{equation}
with the (typically diagonal $R=\sigma^2\openone$) noise covariance matrix $R$, the joint Gaussian distribution for an unseen $t$ is
\begin{equation}\label{eq:SI_joint_GP}
    \begin{bmatrix}
    f(t) \\[2pt]
    \bm{y}_{\TC}
    \end{bmatrix}
    \sim
    \mathcal{N}\!\left(
    \begin{bmatrix}
    \mu(t) \\[2pt]
    \bm{\mu}_{\TC}
    \end{bmatrix},
    \begin{bmatrix}
    \kappa(t,t) & \bm{k}_{t\TC}^{\!\top} \\
    \bm{k}_{t\TC} & K_{\TC\TC}+R
    \end{bmatrix}
    \right)\,,
\end{equation}
with $\bm{k}_{t\TC} = \bigl[\kappa(t,s)\bigr]_{s\in\TC}$ and $K_{\TC\TC}=\bigl[\kappa(s,s')\bigr]_{s,s'\in\TC}$. Conditioning on the observations then yields the posterior predictive distribution for $f(t)$,
\begin{equation}\label{eq:SI_GP-posterior}
    \mathrm{Pr}\left(f(t) \middle| \bm{y}_{\TC} \right) = \mathcal{N}\left( 
    \mu_{\text{P}}(t),
    \sigma_{\text{P}}^{2}(t)
    \right),
\end{equation}
where the posterior mean and variance are given by
\begin{align}
    \mu_{\text{P}}(t) &= \mu(t) + \bm{k}_{t\TC}^{\!\top}\bigl(K_{\TC\TC}+R\bigr)^{-1} \bigl(\bm{y}_{\TC}-\bm{\mu}_{\TC}\bigr),\label{eq:SI_post-mean}\\
    \sigma_{\text{P}}^{2}(t) &= \kappa(t,t) - \bm{k}_{t\TC}^{\!\top}\bigl(K_{\TC\TC}+R\bigr)^{-1}\bm{k}_{t\TC}.\label{eq:SI_post-var}
\end{align}

    \subsection{Regression}\label{app:bench:regression}

        In this section we benchmark the  matchgate QGP regression application against standard baselines, validating the QGP's superior interpolation capabilities to achieve higher accuracy in regimes with fewer measurements than standard approaches. We compare QGPs against three counterparts:
        
        \paragraph{Classical simple baselines.} By interpolating the observations over the input state parameter $t$ without access to the actual state $\rho(t)$. The most basic baseline predicts the constant mean of the training data, while more refined yet simple baselines ``connect the training dots'' by linearly interpolating the training points or smoothly using cubic splines. Sufficiently accurate results from these classical baselines render the regression task trivial and mark a regime in which quantum methods are obsolete. Note that these simple baselines would be guaranteed to fail in extrapolation tasks. Constructing robust classical extrapolation models solely on input $t$, without observing any properties of the corresponding state $\rho(t)$, would generically be  exceedingly challenging.

        \paragraph{Parametric quantum learning method.}
        Due to the duality of kernel methods, which implicitly evaluate inner products in a feature space \cite{hofmann_KernelMethodsMachine_2008}, we construct a parametric model based on features extracted from the input states $\rho(t)$. For the matchgate kernels, these features correspond to the expectation values of the module $\BC_m$ basis operators, expressed as Pauli strings. Hence, the parametric model is a linear model
        \begin{equation}\label{eq:param_pred}
            y(t) = \bm{w}^\top\bm{c}(t)
        \end{equation}
        in the $\binom{2n}{m}$-dimensional feature vector $\bm{c}_t$ of state $\rho_t$. Here,  the least-squares solution on the training labels $\bm{y}_\TC$ and inputs $\TC$ with features arranged in a matrix $C_{\TC}$ yields the weight parameters 
        \begin{equation}
            \bm{w} = (C_{\TC}^\top C_{\TC} + \sigma_{\rm obs}^2 I)^{-1} C_{\TC}^\top \bm{y}_{\TC}.
        \end{equation}
        More precisely, this parametric method is ridge regression. By matching kernel and explicit feature space as well as the ridge regularizer being set to the observation variance $\sigma_{\rm obs}^2$, matching the Gaussian observation prior in Eq.~\eqref{eq:SI_GP2}\footnote{For heteroscedastic observation noise, where $R$ in Eq.~\eqref{eq:SI_GP2} is a diagonal matrix with distinct entries, the equivalent ridge regression formulation uses a per-point regularizer $ D = R^{-1} = \operatorname{diag}\left( {1}/{\sigma_t^2} \right)_{t \in \TC} $ to determine the weights $ \bm{w} = (C_{\TC}^\top D C_{\TC} + I)^{-1} C_{\TC}^\top D \bm{y}_{\TC} $ \cite{bishop2006patternCORRECT}.}, the predictions of the parametric model in Eq.~\eqref{eq:param_pred} match the predictive mean of the QGP in Eq.~\eqref{eq:SI_post-mean} \cite{rasmussen_GaussianProcessesMachine_2006}. While this holds for exact values, differences may arise with finite measurements for the kernel and feature estimates.

        \paragraph{Hamiltonian learning (HL).}
        As an alternative quantum learning method for this regression task, we also consider HL. Specifically, we recall that HL is given query access to the unknown (matchgate) transformation, enabling it to send arbitrary states and perform measurements in arbitrary Pauli bases. 
        Since we still retain the inductive bias of knowing that the unknown evolution $U$ belongs to the matchgate group, and assuming that an observable $O\in\BC_m$ is measured, the HL task can be reduced to learning the decomposition of $O_U=UOU^\dagger$ over an orthonormal basis of $\BC_m$. Namely, we want to learn $\left\langle O_U \right\rangle_k=\Tr[O_U P^{(k)}]/2^n$, for $\{P^{(k)}\}_{k=1}^{\dim(\BC_m)}$ a Pauli basis of $\BC_m$ with size $\dim(\BC_m) = \binom{2n}{m}$. 
        Notice that this matches the model in the parametric approach. Hence, we aim to determine the module projection coefficients for the measurement operator $O_U$. Once the latter is learned one can predict the outcomes of measuring $O_U$ over any quantum state $\rho$ by analogously determining the module projection coefficients of $\rho$, which does not require queries to $U$.
        In contrast to the parametric quantum learning method, we do not assume a dataset but query access to the unknown transformation $U$.
        In practice, we adopt the following learning scheme, which is based on learning a single column of the transfer matrix associated with the adjoint action on $O$. Let $P^{(k)}$ be a Pauli operator in $\BC_m$, we want to prepare the quantum state $\sigma=\frac{1}{2^n}(\openone+P^{(k)})$, send it through $U$ and measure the expectation value of $O$. This can be accomplished with the following steps, to be executed at each run of the experiment (where we will use $j$ to denote the index of the run and initialize the sign factor as $\eta_j = 1$), and to be followed for all qubits $i=1,\ldots, n$:
        \begin{itemize}
            \item If $P_i^{(k)}=\openone$ (i.e., the Pauli $P^{(k)}$ acts trivially on the $i$-th qubit), prepare the qubit in the state $\ket{0}$ or $\ket{1}$, with probability $1/2$.
            \item If $P_i^{(k)}\neq \openone$ (i.e., the Pauli $P^{(k)}$ acts non-trivially on the $i$-th qubit), prepare the qubit in the $\ket{+_i^{(k)}}$ or $\ket{-_i^{(k)}}$ eigenstates of $P_i^{(k)}$, with probability $1/2$. Then, update $\eta_j$ by multiplying it with $1$ if $\ket{+_i}^{(k)}$ was prepared or with $-1$ if $\ket{-_i^{(k)}}$ was prepared. 
        \end{itemize}
        Once the previous (pure) state is prepared on the $j$-th experiment, it is sent through $U$ and $O$ is measured, resulting in an estimate $\langle\widetilde{O}\rangle_k$ of $\langle O_U\rangle_k$.

        \begin{figure}
            \centering
            \includegraphics[width=\linewidth]{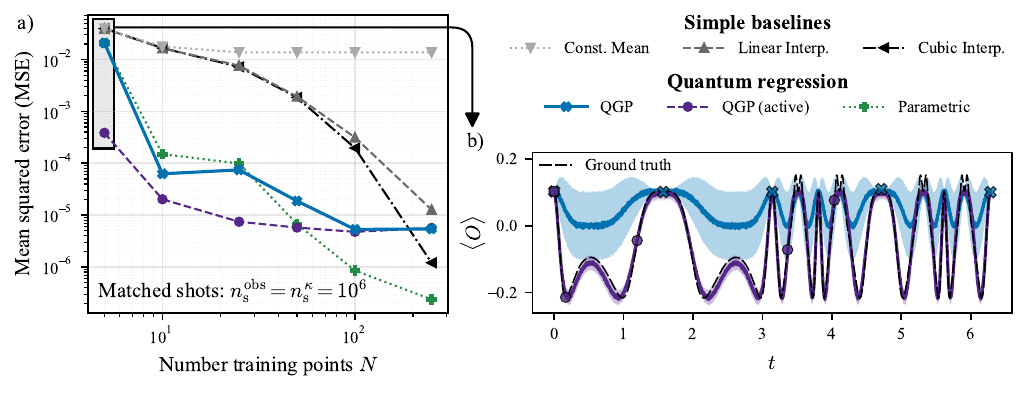}
            \caption{%
            \textbf{QGP regression benchmark against baselines and the advantage of active learning.} 
    \textbf{(a)} Mean squared error (MSE) evaluated across varying training set sizes $N$. To guarantee a fair comparison, the total shot budget is matched to the realistic counts $\shotsobs = \shotsker = 10^6$ across all models at a given $N$. The total shot budget scales with $N$. Unlike quantum regression models, simple baselines neglect observation noise, causing interpolations to fit the noisy observations exactly.
    \textbf{(b)}
    Regression results for a limited dataset of $N=5$ points as highlighted by the bounding box in (a). The plot contrasts the predictive distributions of the fixed-dataset QGP (blue) and the active learning QGP (purple) against the ground truth (dashed black line). Markers denote the selected training points. By iteratively querying points (purple markers) of maximum uncertainty, active learning captures the underlying dynamics entirely missed by the fixed equi-spaced dataset (blue markers).
            }
            \label{fig:bench:regression:over_ntp}
        \end{figure}

\medskip

Having defined the three methods against which we will compare QGPs, we now provide additional details of the implementation. We define quantum data input states $\rho_i$ as product states of $n=50$ qubits in which each qubit is rotated by $2\pi(\sin(t) +\cos(2t)/2)$ about both the x-axis and then about the y-axis, starting from an initialization in $\ket{0}$.         A random matchgate circuit is applied with a measurement in a single $\BC_2$ Pauli observable. These assumptions inform the quantum kernel to be $\BC_2$ matchgate for the QGP (and equivalently the quantum feature to be the expectation values of the $\BC_2$ basis operators in Pauli representation in the parametric formulation).
        The dataset comprises quantum states $\rho_i$ corresponding to $t_i$ equidistantly sampled from the interval $[0, 2\pi]$ (including samples on the boundaries).
        Note that the performance metric, mean squared error (MSE), is evaluated on an evenly spaced test set of size $9900 - N$ after removing the $N$ training points.

        Beyond standard QGP regression on the fixed, equidistant training dataset, we leverage a unique strength of uncertainty quantification to use an active learning QGP protocol that picks a training dataset (of matching size) dynamically throughout learning. Hence, in the parametric view, an active learning protocol cannot be directly devised due to the lack of inherent uncertainty quantification. More details on this method are provided in Sec.~\ref{sec:methods:active_learning}.

        We first establish the boundary between trivial and non-trivial performance by evaluating simple classical baselines under a realistic shot noise setting ($\shotsobs = \shotsker = 10^6$), as presented in Supp.~Fig.~\ref{fig:bench:regression:over_ntp}a. 
        The constant mean baseline reveals that achieving a mean squared error (MSE) within the order of $10^{-2}$ is trivial, regardless of the training set size $N$. Conversely, many closely-spaced data points can be accurately interpolated with classical linear and cubic splines, meaning that the non-trivial performance regime lies at MSEs below $10^{-3}$ with a training set of fewer than 100 points. 

        As expected, the active learning QGP consistently establishes a lower bound on the MSE compared to the fixed-dataset quantum regressors, highlighting a critical advantage in the extreme low-data regime. At the smallest evaluated dataset size ($N=5$), the fixed-dataset models fail to meaningfully surpass simple classical baselines. However, by strategically selecting training points to maximize information gain, the active learning QGP achieves a highly non-trivial MSE on the order of $10^{-4}$. This stark contrast is visualized in Supp.~Fig.~\ref{fig:bench:regression:over_ntp}b. Guided by the inherent QGP uncertainty quantification, the active learning strategy successfully aligns its predictive mean with the true signal and minimizes predictive variance, whereas the standard fixed-dataset approach fails to recover any overall dynamics. Consequently, compared to non-probabilistic models (such as the parametric formulation), QGPs offer a distinct algorithmic advantage, especially when quantum data acquisition is prohibitively expensive.
        
        Beyond the active learning paradigm, the standard fixed-dataset QGP and the parametric formulation demonstrate nearly identical performance across most dataset sizes, as expected from their theoretical equivalence. We suspect that the minor discrepancy in MSE observed in the trivially large-dataset regime arises purely from finite-precision numerical stability issues between the primal and dual implementations.

        \begin{figure}
            \centering
            \includegraphics[width=1\linewidth]{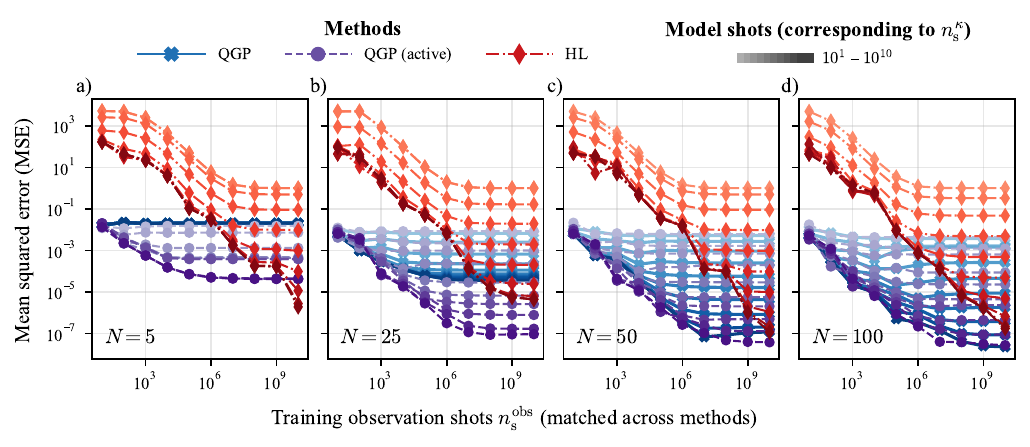}
            \caption{
            \textbf{Benchmarking QGP regression for varying training set sizes $N$ against Hamiltonian learning (HL).} 
            QGPs are evaluated using both a standard fixed-dataset approach and active learning, where the training dataset is built iteratively based on QGP uncertainty quantification. While the mean squared error (MSE) is plotted against the training observation shot budget $\shotsobs$, varying model shot counts (corresponding to the kernel entry shots $\shotsker$) are indicated by color brightness, ranging from $\shotsker = 10$ (bright) to $\shotsker = 10^{10}$ (dark). To guarantee a fair comparison, the total observation shot budget $N\shotsobs$ is evenly distributed to estimate the query outcomes of $U$ in HL. Similarly, the QGP kernel entry shots $\shotsker$ correspond to the shots used to represent the input states in the HL model and are evenly distributed across all states.
            }
            \label{fig:bench:regression:multiples}
        \end{figure}

        After setting the stage for a non-trivial regime, i.e., low MSE with few training data points, for QGP and, in particular, QGP with active learning, we benchmark against HL, an alternative, non-data-driven quantum method for regression (Supp.~Fig. \ref{fig:bench:regression:multiples}).
        As established previously, an MSE on the order of $10^{-2}$ or higher represents trivial performance across all training set sizes. Our results indicate that HL fails to surpass this trivial threshold unless both shot budgets are particularly large ($\shotsobs \geq 10^7$, varying slightly with $N$, and $\shotsker \geq 10^5$). 
        Note that for this comparison, the HL total shot budget is matched to $\shotsobs$ and $\shotsker$ for query outcome estimation of $U$ and state measurements for the model prediction, respectively, and rounded to the next higher integer in favor of HL.

        The only scenario in which HL surpasses QGPs occurs in an unrealistically high shot regime ($\shotsobs = 10^{10}$) when QGP is restricted to the smallest training set size benchmarked ($N=5$). In this extreme scenario, the limited data naturally caps the achievable regression accuracy, causing the active learning QGP to saturate at an MSE on the order of $10^{-4}$. Across all other evaluated scenarios, QGP regression typically outperforms HL substantially. For small training sets, i.e., $N=5$ and $N=25$ in Figs.~\ref{fig:bench:regression:multiples}a and ~\ref{fig:bench:regression:multiples}b, the active learning QGP provides significantly improved results over fixed-dataset QGP regression. This advantage stems from the inherent shot efficiency of QGPs, which is evident not only in low-data regimes but also in their robust performance despite low shot counts ($\shotsobs, \shotsker$) and correspondingly noisy observation and kernel estimates. 
        Furthermore, a fundamental limitation of HL is its strict requirement for query access of $U$. Hence, it can neither learn from a fixed dataset when query access is no longer available, nor is it suitable in settings where the cost of querying $U$ significantly outweighs the cost of measuring the input state. In the latter scenario, QGPs, as a data-driven approach, offer a viable trade-off that limits the queries to $U$ by reducing the number of training points.

        \begin{figure}[htp]
            \centering
            \includegraphics[width=1\linewidth]{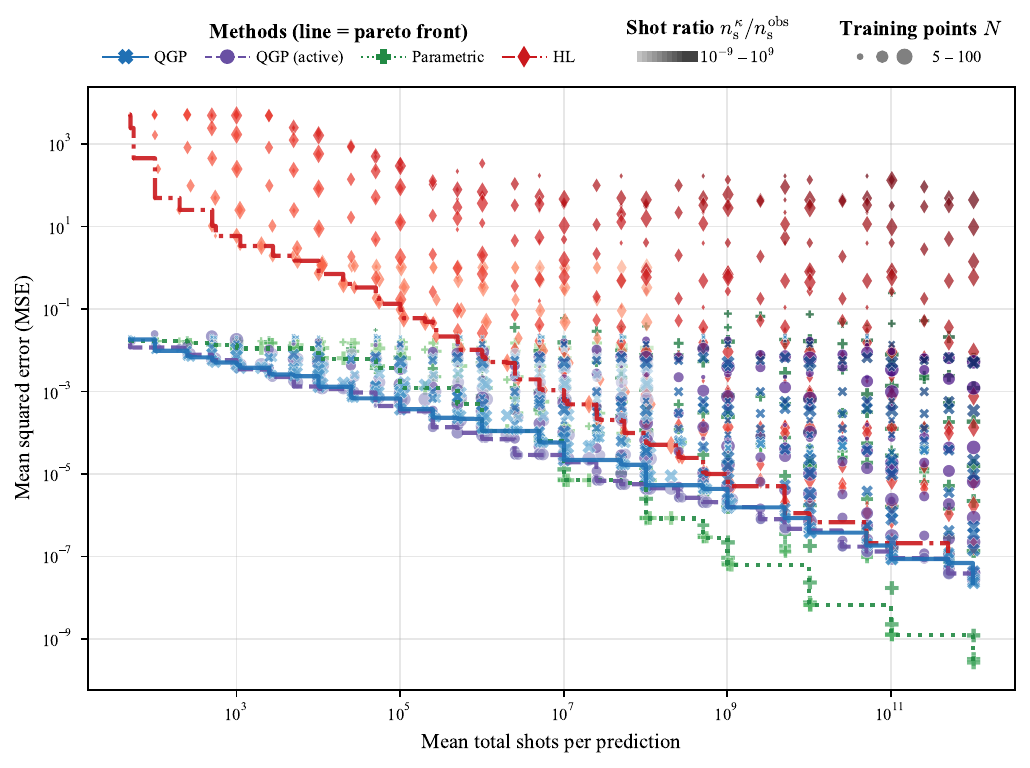}
            \caption{
            \textbf{Pareto front comparison of methods in regression benchmark.} The regression benchmark requires the simultaneous minimization of two quantities: MSE and total measurement shot budget. Consequently, the Pareto front for each method indicates the lowest achievable MSE for a given total shot budget. Therefore, any segment of a Pareto front that lies above and to the right of another represents a strictly sub-optimal regime.
            The total shot budget is divided by the number of test set predictions, which reflects the mean shot budget per prediction (including the training).
            Note that each experiment in this benchmark is presented as an individual marker, where the color brightness indicates the shot ratio $\shotsker/\shotsobs$, ranging from $10^{-9}$ (bright) to $10^{9}$  (dark), and the marker size indicates the number of training points $N$, ranging from 5 (small) to 100 (large). We exclude experiments with $N > 100$ training points, which were previously found trivial.
            }
            \label{fig:bench:regression:pareto}
        \end{figure}

        We conclude the QGP regression benchmark study with a joint comparison of the quantum regression methods in Supp.~Fig.~\ref{fig:bench:regression:pareto}.
        Crucially, we observe a strict separation between the Pareto fronts of HL and the Pareto fronts of QGP (analogously for QGP active learning and the parametric formulation). Therefore, for any given total shot budget, there exists a setting (i.e., shot ratio $\shotsker/\shotsobs$ and number of training points) in which QGP regression (as well as the active learning and parametric variants) outperforms HL in terms of MSE. This consistent Pareto dominance underscores the practical viability of data-driven QGPs over generic HL approaches.

    \subsection{Noisy kernel correction}\label{app:bench:kernel_correction}

        \begin{figure}
            \includegraphics[width=\linewidth]{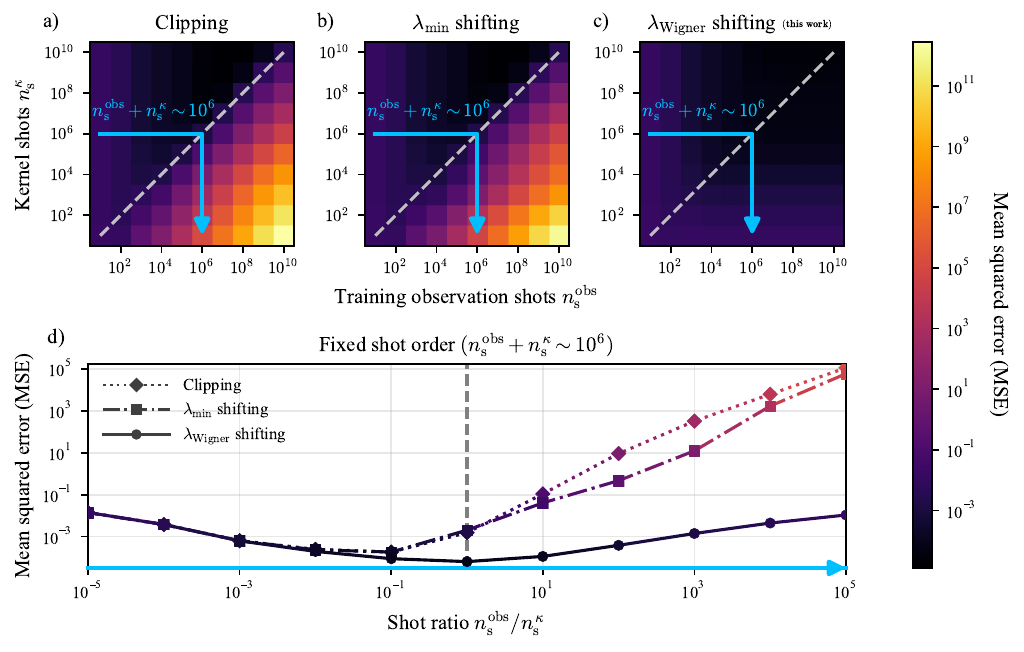}
            \caption{
            \textbf{Robustness of noisy kernel correction methods across measurement shot allocations.} Mean squared error (MSE) for a QGP using different noisy kernel correction methods on regression benchmark task ($N=10$ training points, $n=50$ qubits, $\BC_2$ kernel, as introduced in Supp.~Info.~\ref{app:bench:regression}) evaluated over a logarithmically spaced grid of shots per kernel entry ($\shotsker$) and shots per training observation ($\shotsobs$). 
            (a)--(c) Heatmaps comparing (a) standard negative eigenvalue clipping, (b) $\lambda_{\min}$ shifting, and (c) the proposed $\lambda_{\mathrm{Wigner}}$ shifting. The dashed diagonal line indicates a balanced shot allocation ($\shotsker = \shotsobs$). The solid blue path denotes a fixed total shot budget order $\shotsker + \shotsobs$ of $10^6$. 
            (d) MSE along the fixed-budget slice as a function of the shot ratio. While standard corrections cause an exploding MSE when high observation certainty is paired with noisy kernels (high ratio $\shotsobs / \shotsker \gg 1$), the $\lambda_{\mathrm{Wigner}}$ shift maintains robustness across all regimes and successfully recovers the global minimum error at the balanced allocation (dashed line).
            }
            \label{fig:bench:psd}
        \end{figure}

        Before discussing numerical benchmarking results for the three PSD correction approaches for noisy kernels, we state their definitions and supplement further details. The noisy kernel $\widetilde{K}$ is assumed to exhibit independent and identically distributed (i.i.d.) noise in its (symmetric) entries as $\tilde{k}_{ij} = \tilde{k}_{ji} = k_{ij} + \varepsilon_{ij}$ with $\varepsilon_{ij} \sim \NC(0, \sigma^2_{\kappa})$.

        First, the geometrically optimal correction via negative eigenvalue clipping reads as
        \begin{equation}\label{eq:psd_clipping}
            \bar{K} = \sum_{\lambda_i \geq 0} \lambda_i \bm{u}_i \bm{u}_i^\top,
        \end{equation}
        where $\lambda_i$ and $\bm{u}_i$ are the eigenvalues and eigenvectors of the eigendecomposition $\widetilde{K} = U \Lambda U^\top$.
        
        Second, the minimal uncertainty propagation is achieved through a diagonal shift by the lowest eigenvalue as
        \begin{equation}\label{eq:psd_min_shift}
            \bar{K} = \widetilde{K} - \lambda_{\min} \id\,,
        \end{equation}
        where $\lambda_{\min}$ denotes the lowest eigenvalue $\lambda_{\min} = \min_i \lambda_i$, which must be negative for a non-PSD noisy kernel $\widetilde{K}$.
        
        Finally, the proposed method in this work aims for robust uncertainty propagation by a diagonal shift according to Wigner's semicircle law and is defined as 
        \begin{equation}\label{eq:psd_wigner_shift}
            \bar{K} = \widetilde{K} - \lambda_{\rm Wigner} \id 
            \qquad \text{with} \qquad
            \lambda_{\mathrm{Wigner}} = 2\sqrt{N}\sigma_{\kappa}
            ,
        \end{equation}
        where $N$ denotes the number of training points, i.e., the dimension of the kernel matrices $K$, $\widetilde{K}$ and $\bar{K}$.
        In contrast to the standard correction methods, since the $\lambda_{\rm Wigner}$-shift, can be formulated before measuring the kernel matrix expressed as a diagonal part of the observation noise covariance matrix $R$ in Eq.~\eqref{eq:SI_GP-posterior}, it can be interpreted as part of the data generative prior defined in Eq.~\eqref{eq:SI_GP2}. Although the model should not affect noise in the data generation process, modeling the kernel uncertainty with this generative prior remains a sensible choice. Intuitively, this effective increase in observation noise prior compensates for the model's degraded access to the true relational geometry expressed by the kernel of the training labels.

        We benchmark our proposed noisy kernel correction via $\lambda_{\rm Wigner}$ shifting against standard approaches to identify precise regimes where the choice of method becomes critical. Most importantly, we aim to highlight the conditions under which standard approaches fail, and only $\lambda_{\rm Wigner}$ shifting prevents the QGP prediction from collapsing.
        Supplemental Fig.~\ref{fig:bench:psd} presents these benchmark results for the regression task introduced in Supp.~Info.~\ref{app:bench:regression}, utilizing $\BC_2$ matchgate kernels on $n=50$ qubits and a QGP trained on $N=10$ observations. We systematically vary the measurement shots allocated per kernel entry ($\shotsker$) and per training observation ($\shotsobs$) across a logarithmically spaced equidistant grid from $10^1$ to $10^{10}$.

        The mean squared error (MSE) heatmaps (top) reveal two distinct performance regimes for the standard clipping and $\lambda_{\min}$ shifting corrections. These methods succeed when kernel noise is negligible relative to observation noise, but fail entirely in the opposite regime (bottom-right), where $\shotsker$ is multiple orders of magnitude smaller than $\shotsobs$, causing the latent kernel noise to dominate the observation uncertainty. Beyond the presentation in the figure (for $N = 10$ training points), the exact transition boundary between these phases is found to depend on the number of training points $N$. This dependence is because $\shotsobs$ (and $\shotsker$) are shot numbers per training point (pair), and evaluating the full $N \times N$ kernel matrix requires $N$ times more shots than estimating the $N$ training observations. This scaling directly motivates the $N$-dependent definition of $\lambda_{\rm Wigner}$ in Eq.~\eqref{eq:psd_wigner_shift}.

        This failure manifests as a counter-intuitive behavior: For standard corrections, strictly increasing observation shots $\shotsobs$ while holding $\shotsker$ constant (moving horizontally rightward in the heatmaps) actually increases the MSE. A high $\shotsobs$ implies near-perfect training labels, causing the GP to inherently trust a highly corrupted kernel to fit the ``perfect'' data, leading to fatal overfitting and an exploding MSE. In this regime, the nominal observation uncertainty on the diagonal of $R$ drops orders of magnitude below the kernel noise variance. Because the standard clipping and $\lambda_{\min}$ shifting methods fail to effectively raise $R$ to compensate for this discrepancy, they cannot mitigate the GP failure. Conversely, $\lambda_{\rm Wigner}$ shifting properly injects this uncertainty into $R$, ensuring stable regression and the expected decrease in MSE as observation quality improves.

        This robustness is explicitly demonstrated by the fixed-budget slice (blue line and bottom plot in Supp.~Fig.~\ref{fig:bench:psd}), where shot allocations $\shotsker$ and $\shotsobs$ are constrained to a total order $\shotsker + \shotsobs$ of $10^6$. As the allocation shifts towards high observation certainty ($\shotsobs / \shotsker \gg 1$), clipping (no uncertainty propagation) and $\lambda_{\min}$ shifting (minimal propagation) severely diverge, with the former more than the latter as expected. In contrast, the $\lambda_{\rm Wigner}$ shift maintains a symmetric, robust MSE profile across the entire domain $\shotsobs / \shotsker$, successfully recovering the global minimum error at the balanced allocation ($\shotsker = \shotsobs$), marking the highest possible fixed-order allocation.

        Before concluding, we acknowledge two limitations of $\lambda_{\rm Wigner}$-shift PSD correction. Both arise from the Gaussian orthogonal ensemble (GOE) noise assumption, which invokes Wigner's semicircle law. While it matches the numerical simulation, the assumption is not satisfied in practice. 
        First, the GOE model assumes that the matrix elements (up to symmetry) are i.i.d., which fails to account for the correlations among kernel value estimates arising from the overlaps between input state pairs $\rho_i, \rho_j$.
        Second, the kernel noise does not obey a normal distribution but is rather multinomial and, in particular, heavy-tailed for the matchgate kernels, cf.~Supp.~Info.~\ref{app:ff_kernel_estimation_protocol}.
        However, the violation of the latter assumption behind Wigner's semicircle law may be remedied by the concept of universality in random matrix theory \cite{tao_RandomMatricesUniversality_2010}, which shows convergence to the semicircle distribution for other distributions.
        Ultimately, regardless of the true noise distribution, the specific value of the $\lambda_{\rm Wigner}$-shift should not be viewed as absolute. The broader takeaway is our methodological approach to deriving PSD corrections and a corresponding robust Bayesian prior directly from the physics of the kernel measurement process, rather than relying on a post-hoc kernel correction.

        In the classical machine learning literature, noisy kernels do not typically appear because kernels are deterministically computed up to numerical precision. 
        Conceptually, however, noisy kernel matrices relate closely to GPs with uncertain inputs \cite[ch.~9.5]{rasmussen_GaussianProcessesMachine_2006}. Algebraically, diagonal shifts mirror Tikhonov regularization (kernel ridge regression), acting as an additional regularization term that penalizes model complexity and improves numerical conditioning. Furthermore, this physics-driven approach naturally connects to robust Bayesian analysis by deriving the shift from an expected worst-case noise bound, which effectively acts as a minimax prior \cite{berger_StatisticalDecisionTheory_1985}.

    \subsection{Optimization}\label{app:bench:bo}

        \begin{figure}[tb]
            \centering
            \includegraphics[trim=0 0 0 40, clip,width=1\linewidth]{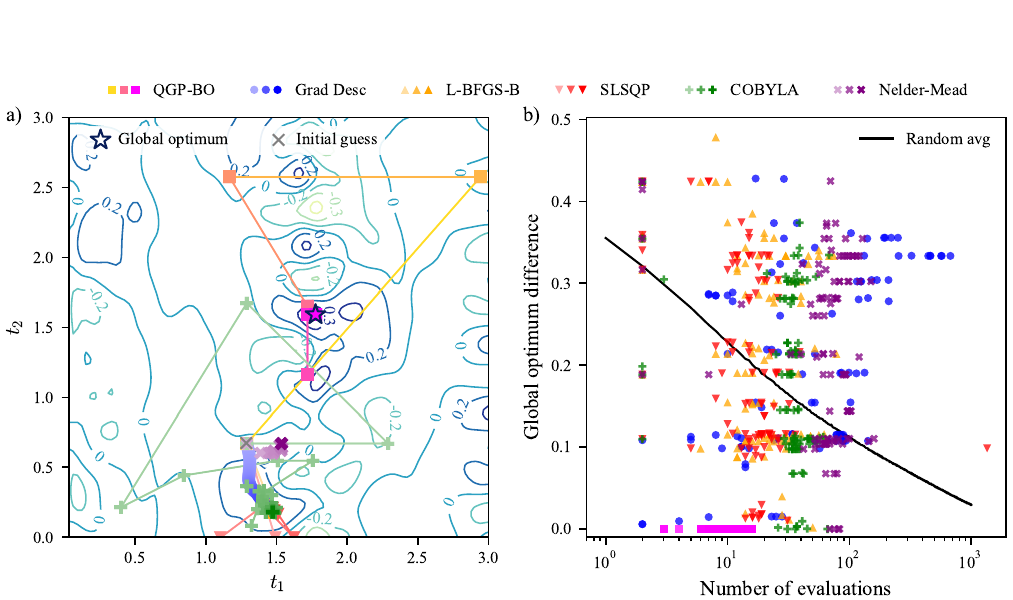}
            \caption{
            \textbf{Validating QGP-BO effectiveness in benchmark against standard optimizers on a synthetic two-dimensional problem.}
            The objective function and QGP model for 30 qubits match those presented in the main text. 
            (a) QGP-BO optimization trajectory visualized in the main text contrasted with example trajectories of the other optimizers, which converge only to local optima.
            (b) Optimization results over 100 independent runs. Regardless of the random initial guess, QGP-BO reaches the global optimum in fewer than 15 evaluations. In contrast, other optimizers cannot outperform a random sampling baseline (black line) on average for an equivalent evaluation budget.
            }
            \label{fig:bo:bench}
        \end{figure}

       Here we consider the same synthetic 2D optimization task used in the main text to benchmark QGP-BO against widely adopted optimization algorithms, including both gradient-based and gradient-free approaches. The gradient-based methods are standard gradient descent, Broyden-Fletcher-Goldfarb-Shanno (BFGS) \cite{broyden1970convergence,fletcher1970new,goldfarb1970family,shanno1970conditioning} in the standard limited-memory and bounded-domain variant, L-BFGS-B \cite{byrd1995limited}, and Sequential Least Squares Programming (SLSQP) \cite{kraft1988software}. The gradient-free methods are Constrained Optimization BY Linear Approximation (COBYLA) \cite{powell1994direct} and the Nelder-Mead simplex method \cite{nelder1965simplex}.
       For the QGP-BO benchmark, the compared optimizers are implemented via \texttt{SciPy} in Python~\cite{virtanen2020scipy} with default hyperparameter settings, except for gradient descent, which was implemented manually. Note that while these implementations intend to minimize, maximization is implicitly achieved by providing negated objective function values instead. Gradient descent uses the optimal fixed step size $1/{L_{\rm min}}$, i.e., the fastest guaranteed convergence, based on the minimal Lipschitz constant $L_{\rm min}$, which was numerically estimated as $45.866$. In practice, the minimal Lipschitz constant is typically not accessible. All gradient-based optimizers rely on numerical gradients via finite differences.

        A direct visual comparison of the trajectories of all employed optimizers from the same initial guess is provided in Supp.~Fig.~\ref{fig:bo:bench}(a). It is striking that all optimizers explore only a small region and settle in a nearby local optimum. BO, in contrast, jumps radically across the feasible region because it relies on the QGP surrogate, which it can analyze globally. For instance, gradient descent is highly local. Hence, gradient descent can only locate the global optimum if initialized in its close vicinity -- about $4 \%$ of the feasible area in the example studied. The ``curse of dimensionality'' renders local optimizers unsuitable for global optimization of such complex objective landscapes in higher dimensions.
        
        For a quantitative comparison, all optimizers are initialized uniformly at random over the feasible region $t_1, t_2 \sim \UC[0, 3]$ over 100 runs. Due to the complex, unstructured nature of the objective landscape, the benchmarking optimizers perform rather inefficiently. Supplemental Fig.~\ref{fig:bo:bench}b reveals this clearly in the bottom panel through the relation between the number of evaluations and deviation from the true maximum $\lvert f(t_{\rm opt}) - f^*\rvert$. Put into perspective, these optimizers cannot beat a naive baseline (cf.~black line) as the average deviation from $f*$ when using the same number of objective evaluations for mere uniform samples. 
        The relative frequencies that these optimizers outperform the naive sampling baseline are $39\%$ (gradient descent), $50\%$ (L-BFGS-B), $57\%$ (SLSQP), $51\%$ (COBYLA), and $31\%$ (Nelder-Mead).
        In contrast, BO always reaches the global optimum in less than 15 evaluations. The global optimum is rarely found by some of the other optimizers and, if so, typically by almost an order of magnitude more evaluations than BO.

        We attribute the superiority of QGP-BO to two key advantages of the QGP surrogate: First, it efficiently retains and reuses information, unlike the black-box optimizers compared here, which exhibit limited information reuse, if any. Second, and most critically, the inherent inductive bias enables the QGP to recover the underlying structure from scarce observations, which is difficult to replicate in non-Bayesian optimization methods.

\clearpage    
    \section{Supplemental numerical experiments and evaluation}

    \subsection{Extrapolation}\label{app:B1_extrapolation}

        Building on the $\BC_1$ free-fermionic QGP extrapolation presented in the Regression section of the main text, we here present a more detailed comparison of the two shot noise sources. Specifically, Supplemental Fig.~\ref{fig:B1_extrapolation:noise_impact} isolates the individual contributions of observation noise and kernel noise, determined by the observation ($\shotsobs$) and kernel shot budget ($\shotsker$), respectively.
        While kernel noise induces microscopic (per-point) jitter in the prediction, both noise sources have a fundamentally similar impact on the overall macroscopic extrapolation behavior: They broaden the predictive uncertainty and reduce the model's reliance on the training data. Consequently, kernel noise acts as an effective observation noise, as the underlying training information cannot be perfectly resolved when the kernel is noisy. We leveraged this interpretation for the kernel PSD correction by extending the data generation prior via the Wigner semicircle law.

        \begin{figure}
            \centering
            \includegraphics[width=1\linewidth]{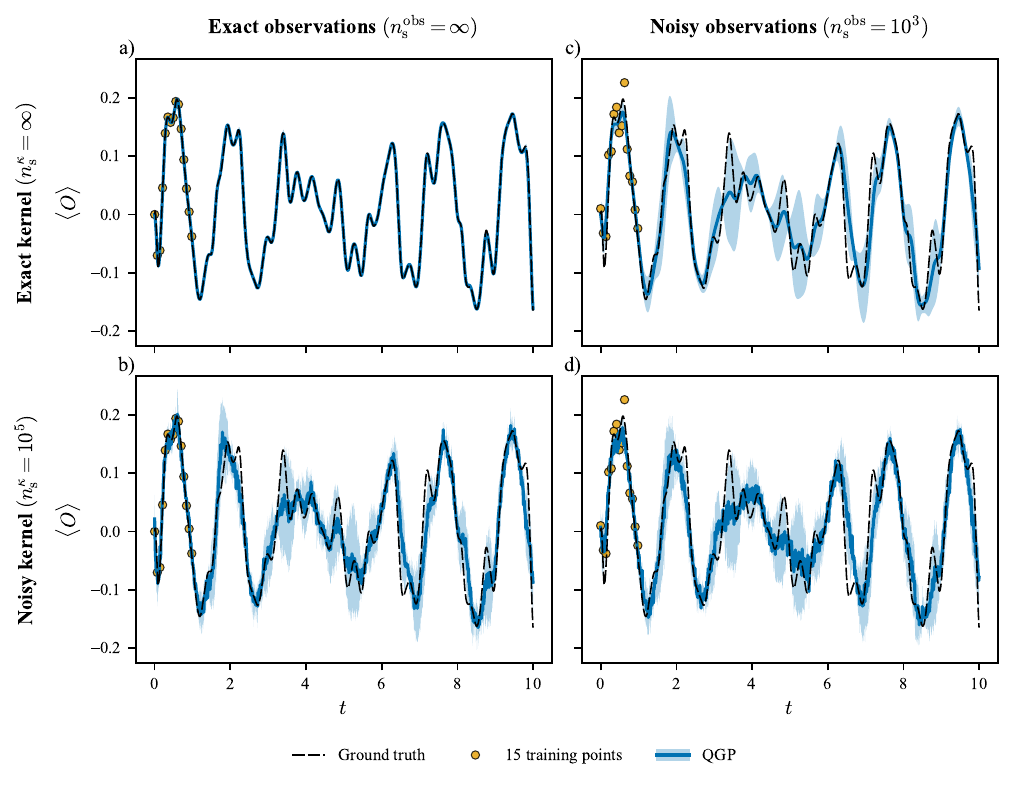}
            \caption{
            \textbf{Ablation of observation and kernel noise in QGP extrapolation.}
            The $2 \times 2$ panel grid isolates the effects of finite-shot noise on the predicted expectation $\braket{O}$ over $t$. The columns contrast exact (left, $\shotsobs = \infty$) with noisy (right, $\shotsobs = 10^3$) observations used for training. The rows contrast exact (left, $\shotsker = \infty$) with noisy (right, $\shotsker = 10^5$) kernel entries for the $\BC_1$ matchgate QGP. (a) and (d) are reproduced from the corresponding main text figure.
            In all panels, the QGP predictive mean and uncertainty (solid blue line and shaded region) are trained on 15 points (orange circles) and compared against the exact ground truth (dashed black line).
            }
            \label{fig:B1_extrapolation:noise_impact}
        \end{figure}

    \subsection{Kernel identification via marginal likelihood}\label{app:marginal_likelihood}

        Maximizing the marginal likelihood to optimize (unknown parameters in) the kernel is known as an empirical Bayes method, which, in contrast to standard Bayesian methods, no longer assumes a fixed prior distribution before any data is observed. Instead, empirical Bayes estimates the prior probability distribution from the data. For instance, while the form or family of the prior is still informed by task knowledge, its concrete instantiation is data-driven. This approach can be highly beneficial in situations where inductive bias is limited. Moreover, the closed-form and efficient access to the marginal likelihood offers a distinct advantage of Gaussian processes over kernel regression. Further details are provided in Sec.~\ref{sec:methods:marginal_likelihood}. In contrast, in kernel regression, the marginal likelihood is not directly available, and other approximate hyperparameter optimization techniques, such as cross-validation, are typically used. 
        
        We demonstrate kernel identification by means of a regression task. Here, we assume that it is known that the unknown transformation $U$ acts unitarily on a single qubit only, i.e., $U \in \mathbb{U}(2)$, followed by an observable $O$ on that qubit, but it is not known which of the $n = 50$ qubits is affected. Therefore, we can consider a non-negative linear combination of 50 local unitary kernels for subsystems of size $m=1$ as
        \begin{equation}
        \kappa(t, t') = \frac{1}{ n} \sum_{i=1}^n \vartheta_i \kappa_i(t, t') \qquad\text{with}\qquad 
        \bm{\vartheta} \in [0, 1]^n
        \quad \text{and} \quad
        \kappa_i(t,t') = \frac{1}{4} \lVert O_i\rVert^2
        \Tr\left[ \rho_i(t) \rho_i(t') \right]\,,
        \end{equation}
        which is guaranteed to form a valid kernel, i.e., it is PSD\footnote{While we formulate a constrained optimization problem ($[0,1]$-box constraints), kernel factors can be made non-negative by construction via $\vartheta_i^2$ (and a renormalization factor $1/\lVert \bm{\vartheta} \rVert^2$ may be introduced to limit their magnitude) to solve the marginal likelihood maximization as an unconstrained optimization problem.} for any non-negative choice of coefficients $\bm{\vartheta}$. 
        The subscript notation $\rho_i$ denotes the reduced density operator in the subsystem of the $i$-th qubit, with the corresponding reduced norm of the single-qubit Pauli (Pauli-$Z$ here) observable is $ \lVert O_i\rVert^2 = 2$.
        The quantum data states here are again defined as in the synthetic dataset used for the extrapolation in the matchgate circuit and $\BC_1$ observable setting (see the Methods section of the main text) but with a single preparation layer only.
        
        Supplemental Fig.~\ref{fig:marginal_likelihood} shows the effectiveness of optimizing the kernel combination parameters $\bm{\vartheta}$. Starting with a uniform initialization $\vartheta_i = 1$ for $i=1, \ldots,n$ reflects no a priori preference for any particular qubits. As evident from Supp.~Fig.~\ref{fig:marginal_likelihood}a, this inductive bias in the initial kernel leads to a weak QGP prior, insufficient to solve the extrapolation task from 15 training data points within $t\in [0, 1]$ to the larger region of $t\in [0, 10]$, as evidenced by Supp.~Fig.~\ref{fig:marginal_likelihood}a. The marginal likelihood maximization, however, identifies the correct local kernel by sending all other parameters in $\bm{\vartheta}$ to zero. As Supp.~Fig.~\ref{fig:marginal_likelihood}b clearly presents, the optimized kernel now suffices for the QGP to achieve perfect extrapolation.
        Note that throughout the marginal likelihood optimization of the combination parameters, no additional executions on the quantum device are required, as all calculations can be performed classically using the pre-evaluated local kernel matrices. We used the L-BFGS-B optimizer \cite{broyden1970convergence,fletcher1970new,goldfarb1970family,shanno1970conditioning,byrd1995limited} to enforce the box constraints, utilizing exact gradients and exact expectation values. Nevertheless, this framework can, in principle, be directly extended to incorporate finite-shot noise during marginal likelihood maximization, in which stochastic optimizers (such as Adam \cite{kingma2015adam} in the unconstrained formulation) are highly suitable.
        
        \begin{figure}
            \centering
            \includegraphics[width=\linewidth]{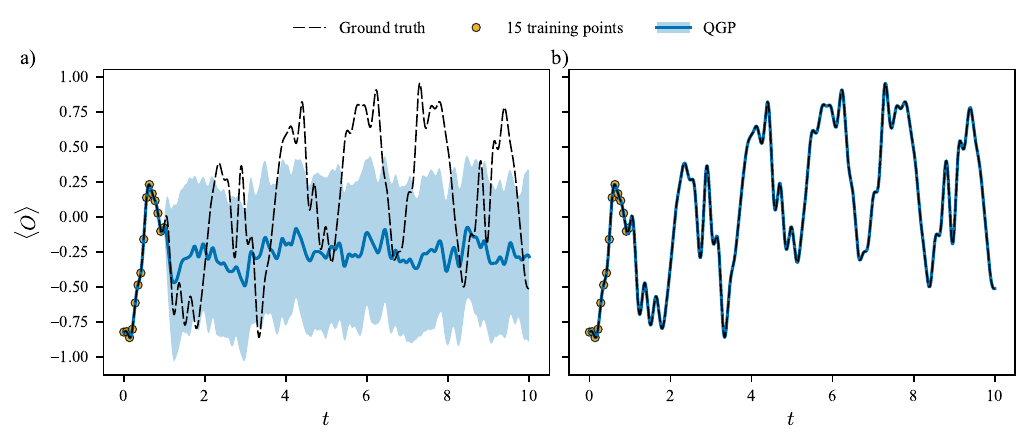}
            \caption{\textbf{Marginal likelihood optimization of combination coefficients of local unitary kernels.} An unknown (single qubit) local unitary transformation with Pauli measurement on corresponding qubits is assumed. 
            (a) Since there is no preference towards any of the $n=50$ qubits, the initial kernel is the average (uniform convex combination) over all local unitary kernels. As this initial kernel resembles only a weak prior, the QGP extrapolation capability is limited. 
            (b) After maximizing the marginal likelihood, which sends all kernel combination coefficients to zero except for the qubit the unknown unitary transformation (as well as the measurement) acts on, perfect extrapolation from the 15 training points to large timescales is achieved.
            The results presented are in the absence of shot noise.}
            \label{fig:marginal_likelihood}
        \end{figure}

    \clearpage
    \section{Extended quantum Gaussian process methodology}

        Due to their probabilistic nature, QGPs have the following key methodological advantages over, e.g., (kernel) ridge regression, which is equivalent to the QGP predictive mean but lacks uncertainty quantification and, hence, does not describe a distribution over possible solutions.

        \subsection{Active learning}\label{sec:methods:active_learning}

            In a scenario where we are not provided with a fixed quantum dataset but have access to acquire one, it is advantageous to actively select the $N$ training points that maximize model performance rather than relying on a predetermined dataset. Active learning achieves this by iteratively expanding the training set, sequentially querying the system at the input $t_{\rm next}$ that exhibits the highest predictive variance under the current QGP model:
            \begin{equation}
                t_{\rm next} = \underset{t}{\arg\max} \; \sigma_{\rm P}^2(t).
            \end{equation}
            While other selection strategies exist \cite{mackay1992information}, targeting the maximum variance is robust and straightforward: by continually querying the system where the model is most uncertain, we maximize information gain and rapidly refine the predictions.

        \subsection{Marginal likelihood in closed-form}\label{sec:methods:marginal_likelihood}

        The proposed QGP framework translates physical system assumptions into quantum kernels, naturally encoding inductive biases. While the table in the main text presents specific kernel definitions tailored to various groups of the evolution and observable types, prior knowledge of a given task may be insufficient to uniquely determine the optimal kernel. Instead, one might only be able to narrow the choice down to a set or combination of such candidate kernels. While non-probabilistic models typically require computationally expensive exhaustive search and cross-validation techniques, kernel and, more broadly, model selection within the QGP framework can be performed efficiently by maximizing the marginal likelihood. Concretely, we can optimize the non-negative coefficients of a linear combination of possible quantum kernels. More sophisticated compositional searches involving the addition and multiplication of base kernels can also be employed \cite{duvenaud2013structure}.

        The marginal likelihood quantifies the probability of the training observations $\bm{y}_{\TC}$ at inputs $\TC$ under a specific kernel choice $\kappa$, marginalized over the unobserved, noiseless function values $\bm{f}_\TC$:
        \begin{equation}\label{eq:marginal_likelihood_general}
            p(\bm{y}_{\TC} \mid \TC, \kappa) = \int p(\bm{y}_{\TC} \mid \bm{f}_{\TC}) \, p(\bm{f}_{\TC} \mid \TC, \kappa) \, d\bm{f}_{\TC}.
        \end{equation}
        The two distributions in the integrand correspond to the observation noise model and the QGP prior, established in Eqs.~\eqref{eq:SI_GP2} and \eqref{eq:SI_GP1}, respectively. Because we assume (i.i.d.) Gaussian observation noise, all distributions in Eq.~\eqref{eq:marginal_likelihood_general} are  Gaussian, resolving into the following closed-form expression for the log\footnote{
        Maximizing the logarithm of the marginal likelihood is standard practice to prevent numerical underflow when multiplying many potentially small probability values. Strict monotonicity of this transformation exactly preserves the locations of the optima.} marginal likelihood:
        \begin{equation}
            \log p(\bm{y}_\TC \mid \TC, \kappa) = -\frac{1}{2} \bm{y}_\TC^\top (K_{\TC\TC} + R)^{-1} \bm{y}_\TC - \frac{1}{2} \log |K_{\TC\TC} + R| - \frac{N}{2} \log 2\pi,
        \end{equation}
        For the sake of brevity, we omit $\bm{\mu}_\TC$, , i.e., typical zero-mean prior $\bm{\mu}_\TC = \bm{0}$.
        Provided the kernel $\kappa$ is differentiable with respect to its continuous (hyper-) parameters, such as the linear combination coefficients, gradient-based optimizers can quickly find accurate solutions that (at least locally) maximize the marginal likelihood. For a comprehensive derivation, we refer the reader to Refs.~\cite{rasmussen_GaussianProcessesMachine_2006,murphy_ProbabilisticMachineLearning_2023}.

\end{document}